\documentclass[11pt,letterpaper]{article}
\usepackage{amsmath, amsthm, amssymb,dsfont, mathrsfs}
\usepackage[pdftex,left=1in,top=1in,bottom=1in,right=1in]{geometry}
\usepackage{filecontents} \immediate\write18{bibtex \jobname}
\usepackage[mathscr]{euscript}
\usepackage{graphicx}
\usepackage{color}
\usepackage[inline]{enumitem}
\usepackage{etoolbox}
\newcommand{\Mnote}[1]{}

\newtoggle{full} 
\newtoggle{opt} 
\togglefalse{opt}

\toggletrue{full}
\usepackage{titlesec}
\usepackage[initials,alphabetic]{amsrefs}


\newcommand{\TextStyle}{\iftoggle{full}{}{\textstyle}}
\newcommand{\InLine}[1]{\iftoggle{full}{\begin{equation*} #1 \end{equation*}}{$#1$}}

\begin{filecontents}{\jobname.bib}
  @inproceedings{ref:Valiant, author="Valiant, Leslie G.",
    chapter="Graph-theoretic arguments in low-level complexity",
    title="Mathematical Foundations of Computer Science 1977:
    Proceedings, 6th Symposium, Tatransk{\'a} Lomnica September 5--9,
    1977", year="1977", pages="162--176" }

  @ARTICLE{ref:TZ04, author = {{Ta-Shma}, A. and Zuckerman, D.}, title
    = {Extractor Codes}, journal = {IEEE Transactions on Information
      Theory}, pages = {3015--3025}, number = 12, volume = 50, year =
    2004 }

  @ARTICLE{ref:GHSZ02, author={V. Guruswami and J. H{\aa}stad and
      M. Sudan and D. Zuckerman}, journal={IEEE Transactions on
      Information Theory}, title={Combinatorial bounds for list
      decoding}, year={2002}, volume={48}, number={5},
    pages={1021--1034} }

  @inproceedings{ref:APRY09, author = {Alon, Noga and Panigrahy, Rina
      and Yekhanin, Sergey}, title = {Deterministic Approximation
      Algorithms for the Nearest Codeword Problem}, booktitle =
    {Proceedings of the 12th International Workshop and 13th
      International Workshop on Approximation, Randomization, and
      Combinatorial Optimization. Algorithms and Techniques}, year =
    {2009}, pages = {339--351}, numpages = {13} }

  @ARTICLE{ref:bintrans, author = {Prodinger, H.}, title
    = {Some information about the binomial transform}, journal = {The Fibonacci Quarterly}, 
    pages = {412–-415}, volume = 32, year =
    1994 }

@Article{ref:Joh15,
author="Johansson, Fredrik",
title="Rigorous high-precision computation of the Hurwitz zeta function and its derivatives",
journal="Numerical Algorithms",
year="2015",
volume="69",
number="2",
pages="253--270"
}

@Article{ref:Has30,
author="Hasse, Helmut",
title="Ein Summierungsverfahren f{\"u}r die Riemannsche $\zeta$-Reihe",
journal="Mathematische Zeitschrift",
year="1930",
volume="32",
number="1",
pages="458--464"
}

@ARTICLE{ref:LM09, 
author={A. Lapidoth and S. M. Moser}, 
journal={IEEE Transactions on Information Theory}, 
title={On the Capacity of the Discrete-Time {Poisson} Channel}, 
year={2009}, 
volume={55}, 
number={1}, 
pages={303--322} 
}

@book{ref:DS12,
  title={Probability and Statistics},
  author={DeGroot, M.H. and Schervish, M.J.},
  isbn={9780321500465},
  edition={Fourth edition},
  year={2012},
  publisher={Addison-Wesley}
}

@BOOK{ref:cover,
      author={Cover, T.M. and Thomas, J.A.},
      title={Elements of Information Theory},
      year={2006},
      edition = {Second},
      publisher = {John Wiley and Sons}
}

@ARTICLE{ref:Sha90, 
author={S. Shamai}, 
journal={IEE Proceedings I - Communications, Speech and Vision}, 
title={Capacity of a pulse amplitude modulated direct detection photon channel}, 
year={1990}, 
volume={137}, 
number={6}, 
pages={424--430} 
}

  @inproceedings{ref:Ver99, author = {Verd\'u, S.}, 
  title = {Poisson Communication Theory}, booktitle =
    {The International Technion Communication Day in honor of Israel Bar-David}, 
    year = {1999},
    note= {Invited talk.}}

  @inproceedings{ref:Ver08, author = {Verd\'u, S.}, 
  title = {Shannon and Poisson}, booktitle =
    {Symposium on Foundations of Wireless Networks and Beyond}, 
    year = {2008}
}

@ARTICLE{ref:AW12, 
author={R. Atar and T. Weissman}, 
journal={IEEE Transactions on Information Theory}, 
title={Mutual Information, Relative Entropy, and Estimation in the {Poisson} Channel}, 
year={2012}, 
volume={58}, 
number={3}, 
pages={1302--1318}
}

  @inproceedings{ref:Gos78, author = {R.~W. Gosper}, 
  title = {Decision procedure for indefinite hypergeometric summation}, booktitle =
    {Proceedings of the {National Academy of Sciences}}, 
    pages={40--42},
    year = {1978}
}

@ARTICLE{ref:Nem11, 
author={Gerg\H{o} Nemes}, 
journal={Thai Journal of Mathematics}, 
title={More accurate approximations for the Gamma function}, 
year={2011}, 
volume={9}, 
pages={21--28}
}

@book{ref:EMOT53,
author = {A. Erd\'elyi and W. Magnus and F. Oberhettinger and F.~G. Tricomi},
title = {Higher Transcendental Functions},
volume=1,
publisher = {McGraw-Hill},
year = {1953}
}

@Article{ref:KP16,
author="Karp, D. B.
and Prilepkina, E. G.",
title="Completely Monotonic Gamma Ratio and Infinitely Divisible {H}-Function of {Fox}",
journal="Computational Methods and Function Theory",
year="2016",
volume="16",
number="1",
pages="135--153"
}

  @ARTICLE{ref:RD15, author = {Mojtaba Rahmati and Tolga M. Duman}, title
    = {Upper Bounds on the Capacity of Deletion
Channels Using Channel Fragmentation}, journal = {IEEE Transactions on Information
      Theory}, pages = {146--156}, number = 1, volume = 61, year =
    2015 }

@ARTICLE{ref:FD10, 
author={D. Fertonani and T. M. Duman}, 
journal={IEEE Transactions on Information Theory}, 
title={Novel Bounds on the Capacity of the Binary Deletion Channel}, 
year={2010}, 
volume={56}, 
number={6}, 
pages={2753--2765} 
}

@ARTICLE{ref:MD06, 
author={M. Mitzenmacher and E. Drinea}, 
journal={IEEE Transactions on Information Theory}, 
title={A Simple Lower Bound for the Capacity of the Deletion Channel}, 
year={2006}, 
volume={52}, 
number={10}, 
pages={4657--4660} 
}

@ARTICLE{ref:Dob67, 
author={R. L. Dobrushin}, 
journal={Probl. Peredachi Inf.}, 
title={{Shannon's} Theorems for Channels with Synchronization Errors}, 
year={1967}, 
volume={3}, 
number={4}, 
pages={18--36} 
}

@inproceedings{ref:Mit08,
author="Mitzenmacher, Michael",
title="A Survey of Results for Deletion Channels and Related Synchronization Channels",
bookTitle="Proceedings of the 11th {Scandinavian Workshop on Algorithm Theory (SWAT)}",
year="2008",
pages="1--3"
}

@ARTICLE{ref:Mit09, 
author="Mitzenmacher, Michael",
title="A Survey of Results for Deletion Channels and Related Synchronization Channels",
journal={Probability Surveys}, 
year={2009}, 
volume={6}, 
pages={1--33} 
}

  @inproceedings{ref:DG01, author = {S. Diggavi and M. Grossglauser}, 
  title = {On Transmission over Deletion Channels}, booktitle =
    {Proceedings of the 39th {Annual Allerton Conference on Communication,
Control, and Computing}}, 
    pages={573--582},
    year = {2001}}
    
    @ARTICLE{ref:DG06, 
author={S. Diggavi and M. Grossglauser}, 
journal={IEEE Transactions on Information Theory}, 
title={On information transmission over a finite buffer channel}, 
year={2006}, 
volume={52}, 
number={3}, 
pages={1226--1237} 
}

@ARTICLE{ref:DM07, 
author={E. Drinea and M. Mitzenmacher}, 
journal={IEEE Transactions on Information Theory}, 
title={Improved Lower Bounds for the Capacity of i.i.d. Deletion and Duplication Channels}, 
year={2007}, 
volume={53}, 
number={8}, 
pages={2693--2714} }

@INPROCEEDINGS{ref:KMS10, 
author={A. Kalai and M. Mitzenmacher and M. Sudan}, 
booktitle={Proceedings of the {IEEE International Symposium on Information Theory}}, 
title={Tight asymptotic bounds for the deletion channel with small deletion probabilities}, 
year={2010}, 
volume={}, 
number={}, 
pages={997--1001} 
}

@ARTICLE{ref:KM13, 
author={Y. Kanoria and A. Montanari}, 
journal={IEEE Transactions on Information Theory}, 
title={Optimal Coding for the Binary Deletion Channel With Small Deletion Probability}, 
year={2013}, 
volume={59}, 
number={10}, 
pages={6192--6219} 
}

@ARTICLE{ref:Zig69, 
author={K.S.~Zigangirov},
journal={Problems of Information Transmission}, 
title={Sequential decoding for a binary channel with dropouts
and insertions}, 
year={1969}, 
volume={5}, 
number={2}, 
pages={17--22} ,
note={"Translated from Problemy Peredachi Informatsii, vol.
5, no. 2, pp 23–30"}
}

@ARTICLE{ref:Gal61, 
author={K.S.~Zigangirov},
journal={Lincoln Lab. Group Report}, 
title={Sequential decoding for binary channels with noise and
synchronization errors}, 
year={1961}
}

@INPROCEEDINGS{ref:MDP07, 
author={S. Diggavi and M. Mitzenmacher and H. D. Pfister}, 
booktitle={Proceedings of the {IEEE International Symposium on Information Theory}}, 
title={Capacity upper bounds for the deletion channel}, 
year={2007}, 
pages={1716--1720} 
}

@ARTICLE{ref:AG93, 
author={K. A. S. Abdel-Ghaffar}, 
journal={Electronics Letters}, 
title={Capacity per unit cost of a discrete memoryless channel}, 
year={1993}, 
volume={29}, 
number={2}, 
pages={142--144}
}

@ARTICLE{ref:SZ99, 
author={L. J. Schulman and D. Zuckerman}, 
journal={IEEE Transactions on Information Theory}, 
title={Asymptotically good codes correcting insertions, deletions, and transpositions}, 
year={1999}, 
volume={45}, 
number={7}, 
pages={2552--2557} 
}

@inproceedings{ref:GNW12,
 author = {Guruswami, Venkatesan and Narayanan, Srivatsan and Wang, Carol},
 title = {List Decoding Subspace Codes from Insertions and Deletions},
 booktitle = {Proceedings of the 3rd {Innovations in Theoretical Computer Science Conference}},
 year = {2012},
 pages = {183--189}
} 

@ARTICLE{ref:GW17, 
author={V. Guruswami and C. Wang}, 
journal={IEEE Transactions on Information Theory}, 
title={Deletion Codes in the High-Noise and High-Rate Regimes}, 
year={2017}, 
volume={63}, 
number={4}, 
pages={1961--1970} 
}

@inproceedings{ref:BGZ16,
 author = {Brakensiek, Joshua and Guruswami, Venkatesan and Zbarsky, Samuel},
 title = {Efficient Low-redundancy Codes for Correcting Multiple Deletions},
 booktitle = {Proceedings of the Annual {ACM-SIAM Symposium on Discrete Algorithms}},
 year = {2016},
 pages = {1884--1892}
} 

@ARTICLE{ref:BGH17, 
author={B. Bukh and V. Guruswami and J. H{\aa}stad}, 
journal={IEEE Transactions on Information Theory}, 
title={An Improved Bound on the Fraction of Correctable Deletions}, 
year={2017}, 
volume={63}, 
number={1}, 
pages={93--103} 
}

@INPROCEEDINGS{ref:GL16, 
author={V. Guruswami and R. Li}, 
booktitle={Proceedings of the {IEEE International Symposium on Information Theory}}, 
title={Efficiently decodable insertion/deletion codes for high-noise and high-rate regimes}, 
year={2016}, 
volume={}, 
number={}, 
pages={620--624}
}

@InProceedings{ref:GL17,
  author =	{Venkatesan Guruswami and Ray Li},
  title =	{Efficiently Decodable Codes for the Binary Deletion Channel},
  booktitle =	{Proceedings of RANDOM},
  pages =	{47:1--47:13},
  year =	{2017},
  volume =	{81}
}

@InProceedings{ref:GR18,
  author =	{Venkatesan Guruswami and Ray Li},
  title =	{Coding against deletions in oblivious and online models},
 booktitle = {Proceedings of the Annual {ACM-SIAM Symposium on Discrete Algorithms}},
  year =	{2018}
}

@article{ref:KMM85,
author = {L. B. Klebanov and G. M. Maniya and I. A. Melamed},
title = {A Problem of {Zolotarev} and Analogs of Infinitely Divisible and Stable Distributions in a Scheme for Summing a Random Number of Random Variables},
journal = {Theory of Probability \& Its Applications},
volume = {29},
number = {4},
pages = {791--794},
year = {1985},
}

@INPROCEEDINGS{ref:Dal11, 
author={M. Dalai}, 
booktitle={Proceedings of the {IEEE International Symposium on Information Theory}}, 
title={A new bound on the capacity of the binary deletion channel with high deletion probabilities}, 
year={2011}, 
volume={}, 
number={}, 
pages={499--502} 
}

@ARTICLE{ref:LSVW11, 
author={A. Lapidoth and J. H. Shapiro and V. Venkatesan and L. Wang}, 
journal={IEEE Transactions on Information Theory}, 
title={The Discrete-Time {Poisson} Channel at Low Input Powers}, 
year={2011}, 
volume={57}, 
number={6}, 
pages={3260--3272} 
}

@article{ref:Che17,
  author    = {M. Cheraghchi},
  title     = {Expressions for the entropy of binomial-type distributions},
  journal   = {CoRR},
  volume    = {abs/1708.06394},
  year      = {2017},
  url       = {http://arxiv.org/abs/1708.06394}
}

@book{ref:Ab72,
 author = {Abramowitz, Milton and Stegun, Irene Ann},
 title = {Handbook of Mathematical Functions, With Formulas, Graphs, and Mathematical Tables},
 year = {1972},
 publisher = {Dover Publications, Incorporated},
 note={Tenth printing, December 1972}
} 

@ARTICLE{ref:FDE11, 
author={D. Fertonani and T. M. Duman and M. F. Erden}, 
journal={IEEE Transactions on Communications}, 
title={Bounds on the Capacity of Channels with Insertions, Deletions and Substitutions}, 
year={2011}, 
volume={59}, 
number={1}, 
pages={2--6}
}

@ARTICLE{ref:RD13, 
author={M. Rahmati and T. M. Duman}, 
journal={IEEE Transactions on Information Theory}, 
title={Bounds on the Capacity of Random Insertion and Deletion-Additive Noise Channels}, 
year={2013}, 
volume={59}, 
number={9}, 
pages={5534--5546}
}

@article{ref:Mar07,
	title={Spectral efficiency of optical direct detection},
	author={Martinez, Alfonso},
	journal={JOSA B},
	volume={24},
	number={4},
	pages={739--749},
	year={2007},
	publisher={Optical Society of America}
}

@article{ref:LM03,
	title={Capacity bounds via duality with applications to multiple-antenna systems on flat-fading channels},
	author={Lapidoth, Amos and Moser, Stefan M},
	journal={IEEE Transactions on Information Theory},
	volume={49},
	number={10},
	pages={2426--2467},
	year={2003},
	publisher={IEEE}
}

@book{ref:CK11,
title = "Information theory: {C}oding theorems for discrete memoryless systems",
author = "Imre Csisz{\'a}r and J{\'a}nos K{\"o}rner",
year = "2011",
edition="Second",
publisher = "Cambridge University Press"
}

@article{ref:CR18,
 author    = {Cheraghchi, M. and Ribeiro, J.},
 title     = {Improved Capacity Upper Bounds for the Discrete-Time {Poisson} Channel},
 journal   = {CoRR},
 volume    = {abs/1801.02745},
 year      = {2018}
}

@article{ref:DORS08,
author = {Yevgeniy Dodis and Rafail Ostrovsky and Leonid Reyzin and Adam Smith},
title = {Fuzzy Extractors: How to Generate Strong Keys from Biometrics and Other Noisy Data},
journal = {SIAM Journal on Computing},
volume = {38},
number = {1},
pages = {97--139},
year = {2008}
}

@INPROCEEDINGS{ref:BZ16, 
author={D. Belazzougui and Q. Zhang}, 
booktitle={57th Annual IEEE Symposium on Foundations of Computer Science (FOCS)}, 
title={Edit Distance: Sketching, Streaming, and Document Exchange}, 
year={2016}, 
volume={}, 
number={}, 
pages={51--60} 
}

\end{filecontents}

\title{Capacity Upper Bounds for Deletion-Type Channels\thanks{A preliminary version %
of this work appears in Proceedings of the
50th ACM Symposium on Theory of Computing (STOC~2018).
}}

\author{{\sc Mahdi Cheraghchi}\thanks{%
Email: $\langle$m.cheraghchi@imperial.ac.uk$\rangle$. }\\
Department of Computing \\
Imperial College London \\ London, UK 
}
\date{}

 \newcommand{\cD}{\mathcal{D}}
\newcommand{\cE}{\mathcal{E}} \newcommand{\cX}{\mathcal{X}}
\newcommand{\cY}{\mathcal{Y}} 
 
\newcommand{\cU}{\mathcal{U}} 
 
 \newcommand{\R}{\mathbb{R}}
\newcommand{\E}{\mathds{E}} \newcommand{\supp}{\mathsf{supp}}
 
\newcommand{\eps}{\epsilon} \newcommand{\innr}[1]{\langle #1 \rangle}
\newcommand{\cP}{\mathcal{P}} 
\newcommand{\N}{\mathds{N}}
\newcommand{\NN}{\mathds{N}^{\geq 0}}

\newtheorem{thm}{Theorem} \newtheorem{prop}[thm]{Proposition}
\newtheorem{claim}[thm]{Claim} \newtheorem{lem}[thm]{Lemma}
  \newtheorem{coro}[thm]{Corollary} \theoremstyle{definition}
 
\newtheorem{remark}[thm]{Remark} 
\newtheorem*{claimS}{Claim}

\newcommand{\bP}{\mathbf{P}}
\newcommand{\del}[2]{\frac{\partial{#1}}{\partial{#2}}}

\newcommand{\Bin}{{\mathsf{Bin}}}
\newcommand{\bZero}{\boldsymbol{0}}
\newcommand{\bOne}{\boldsymbol{1}}

\newcommand{\Iff}{{if and only if}}
\newcommand{\sX}{X^{\star}}
\newcommand{\sY}{Y^{\star}}
\newcommand{\KL}[2]{D_{\mathsf{KL}}({#1}\|{#2})}

\newcommand{\sC}{{\mathscr{C}}}

\newcommand{\capa}{{\mathsf{Cap}}}

\newcommand{\ch}{\mathsf{Ch}}
\newcommand{\ber}{{\mathsf{Ber}}}
\newcommand{\lam}{\lambda}
\newcommand{\Lam}{\Lambda}
\newcommand{\lsigma}{\underline{\sigma}}
\newcommand{\usigma}{\overline{\sigma}}
\newcommand{\li}{\mathrm{Li}}
\newcommand{\ibin}{\mathsf{InvBin}}
\newcommand{\nbin}{\mathsf{NegBin}}
\newcommand{\bDown}{\underline{\beta}}
\newcommand{\bUp}{\overline{\beta}}
\newcommand{\aDown}{\underline{\alpha}}
\newcommand{\aUp}{\overline{\alpha}}
\newcommand{\Ex}{\mathcal{E}}
\newcommand{\CBer}{C_{\mathsf{Ber}}}
\newcommand{\Exp}[1]{e^{#1}} 

\newcommand{\Dots}[3]{{#1},{#2},\ldots,{#3}}
\newcommand{\DotsII}[2]{{#1},{#2},\ldots}
\newcommand{\DotsAZ}[2]{{#1},\ldots,{#2}}
\newcommand{\DotsIII}[3]{{#1},{#2},{#3},\ldots}


\usepackage[hidelinks]{hyperref}

\iftoggle{full}{}{%
\titleformat{\part}
  {\normalfont\fontsize{14}{15}\bfseries}{Part \thepart}{1em}{}
\titleformat{\section}
  {\normalfont\fontsize{12}{15}\bfseries}{\thesection}{1em}{}
\titleformat{\subsection}
  {\normalfont\fontsize{11}{15}\bfseries}{\thesubsection}{1em}{}
\titleformat{\subsubsection}
  {\normalfont\fontsize{11}{15}\bfseries}{\thesubsubsection}{1em}{}
}
  
\begin{document}

\pagenumbering{roman} 

\maketitle

\begin{abstract}
We develop a systematic approach, based on convex programming and real analysis, 
for obtaining upper bounds on the capacity of the binary deletion
channel and, more generally, channels with i.i.d.\ insertions and deletions.
Other than the classical deletion channel, we give a special attention
to the Poisson-repeat channel introduced by Mitzenmacher and Drinea (IEEE Transactions
on Information Theory, 2006). 
Our framework can be applied to obtain capacity upper bounds for any repetition
distribution (the deletion and Poisson-repeat channels corresponding to the special cases
of Bernoulli and Poisson distributions).  Our techniques essentially reduce the task
of proving capacity upper bounds to maximizing a univariate, real-valued, and
often concave function over a bounded interval. The corresponding
univariate function is carefully designed according to the underlying distribution of repetitions
and the choices vary depending on the desired strength of the upper bounds as well as 
the desired simplicity of the function (e.g., being only efficiently computable versus 
having an explicit closed-form expression in terms of elementary, or common special, functions). 
Among our results, we show the following:
\begin{enumerate}
%
\item The capacity of the binary deletion channel with deletion probability $d$
is at most $(1-d) \log \varphi$ for $d \geq 1/2$, and, assuming the capacity
function is convex, is at most
$1-d \log(4/\varphi)$ for $d<1/2$, where $\varphi=(1+\sqrt{5})/2$
is the golden ratio. This
is the first nontrivial capacity upper bound for 
any value of $d$ outside the limiting case $d \to 0$
that is fully explicit and proved without 
computer assistance.

\item We derive the first set of capacity upper bounds for the Poisson-repeat
channel. Our results uncover further striking connections between this channel
and the deletion channel, and suggest, somewhat counter-intuitively, that the Poisson-repeat channel
is actually analytically simpler than the deletion channel and 
may be of key importance to a complete understanding of the deletion channel.

\item We derive several novel upper bounds on the capacity of the deletion channel.
All upper bounds are maximums of efficiently computable, and concave, univariate
real functions over a bounded domain. In turn, we upper bound these functions
in terms of explicit elementary and standard special functions, whose maximums
can be found even more efficiently (and sometimes, analytically, for example for $d=1/2$).
\end{enumerate}
Along the way, we develop several new techniques of potentially independent interest.
For example, %
\iftoggle{opt}{
we develop systematic techniques to study channels with integral inputs and outputs, 
which may be of interest to such problems as the Poisson channel
problem in information theory and optical communications.
}{we develop systematic techniques to study channels with mean constraints
over the reals. }
Furthermore, we motivate the study of novel probability distributions over non-negative
integers 
as well as novel special functions 
which could be of interest to mathematical analysis.\\

       \noindent {\em Keywords: Coding theory, Synchronization error-correcting codes, Channel coding.}

\end{abstract}

\newpage

\iftoggle{full}{\tableofcontents}{{\small\tableofcontents}}

\newpage

\pagenumbering{arabic} 

\iftoggle{full}{}{\part{Extended abstract}}

\section{Introduction}
\label{sec:intro}

The binary deletion channel is generally regarded as the simplest model
for communication in presence of synchronization errors.
In this model, a transmitter encodes messages as a
(potentially unbounded) stream of bits
which is then sent to a receiver over a communications channel.
The channel does not corrupt bits. However, each bit may be
independently discarded by the channel with a deletion probability
$d\in [0,1)$. The receiver receives the sequence of 
undiscarded bits, in their respective order, and has to reconstruct
the sent message with vanishing failure probability.
Despite the remarkable simplicity of this fundamental model of communication,
the capacity of the deletion channel; i.e., the maximum 
achievable transmission rate, remains unknown. 
Apart from the obvious significance in information, coding, and communications
theory, the problem has attracted significant attention from the theoretical
computer science community (e.g., \cite{ref:Mit09,ref:SZ99,ref:GNW12,ref:GW17,ref:BGZ16,ref:BGH17}). 
This is due to the problem's rich combinatorial structure
and its fundamental connection with the understanding of the distribution of
long subsequences in bit-strings, which, in turn, is of significance to such theory
problems as pattern matching, edit distance, longest common subsequence,
communication complexity problems involving the edit distance,  
the document exchange problem (cf.\ \cite{ref:BZ16}) 
or secure sketching in cryptography \cite{ref:DORS08}, to name a few.
It is also closely related to the algorithmic {trace reconstruction problem}
which, in turn, is of significance to real world applications 
ranging from sensor networks to computational biology \cite{ref:Mit09}.

\subsection{Previous work}

There is already a relatively vast literature on the deletion channel problem,
and we are only able to touch upon some of the major results most
relevant to this work.
Qualitatively, it is known that  
\begin{enumerate*}[label=\roman*)]
\item the capacity curve for this channel
is continuous, 
\item the capacity is positive for all $d \in [0,1)$ \cite{ref:DG01,ref:DG06},
\item the capacity is $1-\Theta(d \log d)$ (as in the binary
symmetric channel) when $d \to 0$ \cite{ref:KMS10,ref:KM13} and $\Theta(1-d)$
when $d \to 1$ (as in the binary erasure channel) \cite{ref:DM07,ref:MD06,ref:Dal11}.
\end{enumerate*}
Trivially, the capacity is at most the capacity of the binary
erasure channel; i.e., $1-d$. Nevertheless, the exact capacity
of the channel remains elusive. A related problem is to identify
the best achievable rate against adversarial, or oblivious, deletions;
for which significant progress has been recently made
\cite{ref:BGH17,ref:GR18}.
However, in this work we focus on the Shannon-type
capacity over random deletions (and, more generally, repetitions).
Much of the major known results on the subject as
well as the significance to theoretical computer science are discussed in 
Mitzenmacher's excellent survey \cite{ref:Mit09}. 

On the achievability side, Diggavi and Grossglauser \cite{ref:DG01,ref:DG06} were the first
to show that the capacity of the deletion channel is nonzero for all $d \in [0,1)$.
A more explicit capacity lower bound of (slightly better than) $(1-d)/9$, for all $d$, was proved by 
Drinea and Mitzenmacher \cite{ref:DM07,ref:MD06}, where in the latter work they also introduced
and motivated the \emph{Poisson-repeat channel}. This channel not only deletes
bits but may also insert replicated bits in the stream. More precisely, the channel 
is defined by a parameter $\lam$ and, given a bit, replicates the bit by
a number sampled from a Poisson distribution with mean $\lam>0$ (the bit
is deleted if the number of repetitions is zero). In \cite{ref:MD06}, the
authors establish a connection between the Poisson-repeat and the deletion channel.
Namely, they show that any \emph{lower bound} on the capacity of any Poisson-repeat channel
translates into a capacity lower bound for any deletion channel. Using
a first-order Markov chain for generating the input distribution,
and numerical computations
on the resulting capacity bound expressions for various choices of $\lam$,
they 
derive the claimed capacity lower bound of $(1-d)/9$ for the deletion channel.

For small deletion probability $d \to 0$, several results show that the deletion
capacity behaves similar to the symmetric channel. 
Combined with \cite{ref:Gal61,ref:Zig69,ref:DG06}, Kalai, Mitzenmacher, and Sudan~\cite{ref:KMS10}
show that in this regime that capacity is $1-h(d)(1-o(1))$, where $h(\cdot)$ is
the binary entropy function. Independently of this work, and 
based on a parameter continuity argument, 
Kanoria and Montanari \cite{ref:KM13} obtain a more refined asymptotic estimate 
in this regime that is correct up to the $O(d)$ term\footnote{We
remark that the constant behind the residual $O(d)$ term is not specified or bounded in 
\cite{ref:KM13}. Therefore, while this result sharply characterizes the limiting
behavior of the capacity curve, it cannot be used to obtain concrete,
numerical, bounds on the channel capacity for any nonzero value of $d$.
}.

Diggavi, Mitzenmacher and Pfister \cite{ref:MDP07} obtained capacity upper bounds
for all $d$, including the first nontrivial upper bound, 
of $0.7918(1-d)$, for $d \to 1$. To show the upper bounds, 
they consider a genie-aided decoder with access to side information about the
deletion process, and then upper bound the capacity of the channel
with side information (which is higher than the original capacity)
using a combination of classical information theoretic tools
and a computer-based distribution-optimization component.
Different sets of numerical capacity upper bounds were obtained in
\cite{ref:FD10} (and for more general channels in \cite{ref:FDE11}) 
based on several, carefully designed, genie-aided
decoders. These constructions essentially reduce the problem to upper bounding
the capacity of a finite variation of the deletion channel problem,
whose capacity is in turn numerically computed using the Arimoto-Blahut 
algorithm (which runs in exponential time in the finite number of bits).
Both \cite{ref:MDP07} and \cite{ref:FD10} thus cleverly identify a finite-domain
capacity problem, that is solved numerically, and then upper bound
the deletion capacity using the numerical results for the finite
problem. Such techniques cannot be readily extended to such problems as the
Poisson-repeat channel problem which are inherently infinite.

\subsection{Our main contributions}

Roughly speaking, the techniques of \cite{ref:MDP07} and \cite{ref:FD10}
pursue the following recipe: 
\begin{enumerate*}[label=\roman*)]
\item ``enhance'' the channel to one with a higher capacity
by carefully considering a ``genie-aided'' decoder that
receives auxiliary information from the channel; 
\item Heuristically extract a finite optimization problem
to upper bound the capacity of the enhanced (and thus the original)
channel;
\item Numerically solve the finite optimization problem
by a computer.
\end{enumerate*}
While the above general method results in very
strong capacity upper bounds, much of the mathematical
structure of the problem is pushed into the 
computationally intensive numerical optimization 
problem in the third step. It is thus unclear to what extent
can the methods be further developed 
towards a complete understanding of the channel capacity.
In this work, rather than setting our goal to improving the best known numerical capacity
upper bounds for the deletion channel,
we focus
on gaining deeper insights about the \emph{analytic structure} of the
problem (nevertheless, as a proof of concept, 
we are able to improve the best reported numerical
upper bounds for small deletion probability; e.g., for $d \leq 0.02$). 
We develop several tools that further the existing
intuitions on the deletion channel problem and may potentially serve as key steps
towards a full characterization of the capacity.
As a result, we are able, for the first time,
to develop a {single and systematic method} that results in
a capacity upper bound curve for the deletion channel which is smooth, convex-shaped,
non-trivial for all $d$, 
and simultaneously exhibits the correct behavior of $c(1-d)$, for a constant $c<1$,
at $d \to 1$ and $1-\Theta(h(d))$ at $d \to 0$ (see Figure~\ref{fig:DelPlot}).
The fact that our approach obtains the above features in a natural and
organic way suggests that the true capacity of the deletion channel might have
the same qualitative shape as what we obtain\footnote{In particular, we believe
that this observation further supports a conjecture of Dalai \cite{ref:Dal11} that the capacity 
curve is convex, towards which significant progress has already been made in \cite{ref:RD15}.
}.

As discussed above,
the best known reported capacity upper bounds for the deletion channel
\cite{ref:FD10,ref:MDP07,ref:RD15},
are based on identifying a finite, but as large as possible,
sub-problem (possibly by adding side information) and then 
searching for the optimum solution for the finite problem by a computer.
In contrast,
a key focus in our work is to avoid any computer-assisted components in the
proofs as much as possible, so as to gain as much intuition
about the mathematical structure of the problem as possible.
Our results, including all the involved distributions in the proofs,
are indeed fully analytical. Namely,
we upper bound the capacity of the deletion channel as
the maximum of a univariate real function, which is concave
and smooth, over the interval $(0,1)$ 
(depicted in Figure~\ref{fig:DelSlopes}). The function
to be maximized is explicitly defined in terms of exponentially decaying
sums, and is thus computable in polynomial time in the desired
accuracy. If desired, computation of the involved sums can
be avoided by using the sharp upper and lower bound estimates
on the function that we provide in terms of both elementary
and standard special functions (Figure~\ref{fig:DelSlopesUB}). 
The only numerical computation would thus involve finding 
the maximum value of an explicitly defined concave function over $(0,1)$.
Even this can be avoided in some cases, leading us to the
first fully explicit capacity upper bound
for the deletion channel that is nontrivial for all
deletion probabilities $d \in (0,1)$ and proved
without any numerical computation: \emph{The deletion
capacity is at most $(1-d) \log \varphi$ for $d \geq 1/2$, and,
under the plausible conjecture that the capacity function 
is convex \cite{ref:Dal11}, at most
$1-d \log(4/\varphi)$ for $d<1/2$, where $\varphi=(1+\sqrt{5})/2$
is the golden ratio.}
We remark that this, itself, is better than the bounds reported
in \cite{ref:MDP07} for all $d \geq 0.70$, while 
our numerical bounds improve those of \cite{ref:MDP07}
for all $d \geq 0.35$.

In addition to the classical deletion channel,
our methods are generally applicable to \emph{any} channel 
with independent insertions and deletions defined
by any given repetition rule. Namely, given an 
arbitrary (possibly infinite) distribution $\cD$ on non-negative integers,
our methods can be applied to upper bound the
capacity of a channel---what we call a \emph{$\cD$-repeat channel},
that replaces each input bit
independently with a number of repetitions sampled
from $\cD$ (where the outcome zero would cause
deletion of the bit). For the deletion channel, 
$\cD$ would be a known Bernoulli distribution.

For such problems as the Poisson-repeat channel problem,
introduced by Mitzenmacher and Drinea \cite{ref:MD06},
that are inherently infinite (even if only one bit is supplied
at the input), the known methods \cite{ref:FD10,ref:MDP07}
cannot be readily used, since it is not clear how to identify a
finite sub-problem that can be optimized
by a computer-based search.
In contrast, we show that our method easily applies 
to the Poisson-repeat channel
(where $\cD$ is Poisson with a known mean $\lam$),
and thus we obtain the first set of capacity upper bounds
for this channel. 
Our methods demonstrate striking connections between
the analytical structure of this channel and the deletion channel,
and suggest that understanding the Poisson-repeat channel
may be the key towards the ultimate characterization of
the capacity of the deletion channel. Even though the
Poisson-repeat channel may appear more complex than
the deletion channel (since it not only deletes, but
also inserts bits), our results suggest that the Poisson-repeat
channel may be simpler to analyze. This is mainly due to the
fact that an $x$-fold convolution of $\cD$ with itself
is a binomial distribution for the deletion channel, 
which is a more complex distribution than Poisson,
and indeed, contains the latter as a limiting special case.
In fact, we study the Poisson-repeat channel first, which
then naturally guides us towards our results for the deletion channel.
Our obtained bounds for both channels are
plotted in Figures \ref{fig:PoiPlot}~and~\ref{fig:DelPlot}
and tabulated in Tables \ref{tab:PoiCapData}~and~\ref{tab:DelCapData}.

To obtain our results, we develop a number of techniques along the way that 
may be of independent interest. We motivate a systematic
study of what we call \emph{general mean-limited} channels,
and their special case of \emph{convolution channels}.
These are channels, with input and output alphabets over the
reals, defined by a known probability transition rule
and a mean-constraint on their output distributions.
Special cases include the mean-limited binomial and Poisson
channels, that model how the deletion and Poisson-repeat
channels shrink consecutive runs of bits. 
The notion and our techniques can
be used to model physical channels studied outside the
context of deletion-type channels as well, a notable example
being the well-known Poisson channel that is of central importance
to optical communications systems \cite{ref:Sha90}.
Indeed, a subsequent work by the author \cite{ref:CR18}
successfully applies the techniques developed in this work to obtain improved
upper bounds on the capacity of the discrete-time Poisson channel.
Furthermore, our contributions in probability theory include motivating
novel distributions over non-negative
integers and a first study of them,
which may be of use in other contexts as well. 
This includes what we define as an ``inverse binomial''
distribution, as well as distributions obtained by
multiplying the probability mass function of the 
Poisson distribution $p(y)$ by $y^y$ or $\exp(y H_{y-1})$,
where $H$ denotes harmonic numbers (see \eqref{eqn:PoiYfirst}
and \eqref{eqn:PoiYre}). We introduce novel
special functions to study such distributions (e.g., 
generalizations of the log-gamma function; see \eqref{eqn:Lam})
which may be of independent interest to mathematical analysis.

\newcommand{\SecOrganization}{%
The rest of the article is organized as follows: 
\iftoggle{full}{%
Section~\ref{sec:highlevel} gives a high-level
exposition of the entire work, explaining the developed
techniques with a focus on intuitions and underlying 
insights rather than the technical details. %
}{}%
In Section~\ref{sec:meanlim}, we formally define the notion
of general mean-limited channels, as well as the special
case of convolution channels. We prove a duality-based
necessary and sufficient condition for achieving
the capacity of such channels, as well as the dual feasibility
criteria that certify upper bounds on the capacity.
Section~\ref{sec:general:repeat} formally defines
the general notion of $\cD$-repeat channels, and proves
a capacity upper bound for such channels based on 
the capacity of a mean-limited channel defined according
to $\cD$. We use the obtained tools in Section~\ref{sec:Poi}
to obtain our capacity upper bounds for the Poisson-repeat
channel. Towards this end, we construct two dual-feasible
solutions for the corresponding mean-limited channel
and estimate their parameters in terms of elementary
and standard special functions. 
In Section~\ref{sec:del}, guided by the result
obtained for the Poisson-repeat channel,
we prove our capacity upper bounds for the deletion
channel. We first introduce the notion of inverse binomial
distributions, and then show that it is dual feasible for
the mean-limited binomial channel. We estimate the parameters
of this distribution in terms of elementary and standard
spacial functions. We then apply a truncation technique,
that we first develop for Poisson-repeat channels,
to refine the dual feasible solution and obtain
improved capacity upper bounds.
}

\newcommand{\SecNotation}{%
Unless otherwise stated, all logarithms are taken to base $e$, and
the measure of information is converted from nats to bits only for the final
numerical estimates.
We denote the set of non-negative real numbers by $\R^{\geq 0}$
and the set of non-negative integers by $\NN$.
As is standard in information theory, we generally use capital letters
for random variables. When there is no risk of confusion, we may use
the same symbol for a random variable and its underlying distribution.
Support of a random variable $X$, denoted by $\supp(X)$, is the
set of the possible outcomes of $X$.
Calligraphic letters are used for several purposes; alphabets, distributions,
and probability transition rules. 
The entropy of a random variable $X$ is denoted by $H(X)$, and $h(p)$ denotes
the binary entropy function: \InLine{h(p):=-p \log p -(1-p) \log(1-p).}
The Kullback-Leibler (KL) divergence between the underlying distributions of random variables
$X$ and $Y$, denoted by $\KL{X}{Y}$, is defined as
\InLine{\KL{X}{Y} := \sum_{x \in \supp(X)} \Pr[X=x] \log(\Pr[X=x]/\Pr[Y=x]).} 
We use $I(X;Y)$ for the mutual information
between jointly distributed random variables $X$ and $Y$, and 
$I(X;Y|Z)$ for the conditional mutual information given a third variable $Z$.
We use the asymptotic notation $f(x) \sim g(x)$ to mean 
$\lim_{x \to \infty} f(x)/g(x) = 1$.
The binomial distribution with $x$ trials and success probability $p$
is denoted by $\Bin_{x,p}$, and the Bernoulli distribution with
mean $p$ is denoted by $\ber_p$. The capacity of a channel
$\ch$ is denoted by $\capa(\ch)$. We may use the binomial coefficient
$\binom{x}{y}$ over non-integers, in which case the definition
$\binom{x}{y} := \Gamma(1+x)/(\Gamma(1+y)\Gamma(1+x-y))$
should be used.
}

\iftoggle{full}{\subsection*{Organization} \SecOrganization}{}
\iftoggle{full}{\subsection*{Notation} \SecNotation}{}
 
\section{High-level exposition of the techniques and results}
\label{sec:highlevel}

In this work, we formalize and study the 
notion of ``general repeat channels'', 
which are binary input channels characterized 
by a given probability distribution $\cD$ over
non-negative integers. 
A \emph{$\cD$-repeat channel}, 
given a bit, draws an independent sample $D \geq 0$ from
$\cD$ and outputs $D$ copies of the received bit.
We define the deletion probability $d$ of the channel to
be the probability that $\cD$ assigns to zero, and
let $p = p(\cD) := 1-d$ be the \emph{retention probability}.
Thus, what we call the deletion channel 
corresponds to the case
where $\cD$ is the Bernoulli distribution
with mean $p$. On the other hand, if $\cD$ is
a Poisson distribution with mean $\lam$, we get
a Poisson-repeat channel with deletion probability
$d=e^{-\lam}$. We note that, in general, $\cD$
need not be uniquely determined by its deletion
parameter $d$, albeit this is the case for the class of 
deletion and Poisson-repeat channels.

\subsection{Reduction to mean-limited and convolution channels}
\label{sec:expo:reduction}

Suppose the input to a $\cD$-repeat channel
is a bit-sequence $X=\Dots{B_1}{B_2}{B_n}$ 
with $Y$ being the output bit-sequence. If $\mu:=\E[\cD]$,
the expected output length would be $\mu n$. By Shannon's
theorem, it can be seen (as in \cite{ref:Dob67}) that the capacity of the channel, that we denote
below by $\capa(\cD)$, is the supremum
of the normalized mutual information between $X$ and $Y$; i.e., 
$\capa(\cD) = \TextStyle \lim_{n \to \infty} {I(X;Y)}/n$.

A common technique in analyzing the deletion channel is
to consider how it acts on \emph{runs of bits}, rather than the
individual bits. Given a run of $x>0$ bits, the deletion 
channel outputs a run of $\Bin_{x,p}$ bits; i.e., a sample
from the binomial distribution with $x$ trials and success
probability $p$. For the Poisson-repeat channel, this would
be a run of length given by a Poisson sample with mean $\lam x$.
In general, the distribution of the output run-length would
be the $x$-fold convolution of $\cD$ with itself, that we 
denote by $\cD^{\oplus x}$.

Since $n$ grows to infinity, without loss of generality,
we can assume that the first input bit $B_1$ is zero.
This allows us to unambiguously think of $X$ as its
run-length encoding $X=\Dots{X_1}{X_2}{X_t}$, where each $X_i$
is a positive integer. Similarly, we may also think of $Y$
by its run-length encoding $Y=\Dots{Y_1}{Y_2}{Y_m}$, where
each $Y_i$ is a positive integer. This will identify
$Y$ up to a negation of all bits. Since the channel
has memory, the random variables
$Y_i$ (unconditionally)
are not necessarily independent. However, this would
be the case if the $X_i$ are independent and identically
distributed, in which case the $Y_i$ also become
identically distributed. Note that, given $X$,
the bit-length of $Y$ and the parameter $m$ are random variables determined
by the channel, and indeed the randomness of $m$ causes
technical difficulties that should be rigorously handled by a
careful analysis. However, in the informal exposition below we pretend that
$m$ is known and fixed a priori. 
One may now attempt to use the chain rule
and write 
\InLine{I(X;Y)
=I(X;Y_1)+I(X;Y_2|Y_1)+\cdots+I(X;Y_m|Y_1\ldots Y_{m-1}) 
\leq 
 \sum_{i=1}^m (H(Y_i)-H(Y_i|X,Y_1 \ldots Y_{m-1})).} 
A major difficulty in deriving the capacity of the
deletion channel is the fact that, unlike channels
with no synchronization errors, a certain $Y_i$ does
not only depend on the corresponding $X_i$, but rather, potentially any part of
$X$. Given $X$, we know that $Y_i$ has a binomial 
distribution with mean depending on the summation
$X_J + X_{J+2} + \cdots + X_{J+2K}$, for some
random variables $J$ and $K$ that are in general
difficult to analyze. Furthermore, even given a fixed
$X$, the random variables $Y_1, Y_2, \ldots$ 
do not become conditionally independent. 
Therefore, the above result of the chain rule 
cannot be upper bounded
by a simpler, single-letter, expression.
A natural idea, that has been pursued previously 
(e.g., \cite{ref:MDP07,ref:FD10}), 
is to consider ``genie-aided'' decoders that receive enhanced
side information by the channel. The side information is carefully
designed so as to reduce the problem to an i.i.d.\ channel
problem that can be analyzed more conveniently.
However, this approach generally comes at the expense
of effectively turning the channel into one with
strictly higher capacity, and consequently, obtaining
inherently sub-optimal capacity upper bounds.

\paragraph{The result of Diggavi, Mitzenmacher, and Pfister \cite{ref:MDP07}.} 
An elegant execution of the above idea, that in fact
inspires the starting point of our work, 
has been done in \cite{ref:MDP07}
which we briefly explain here. 
This result is based on the simple idea that, if
we imagine that the channel places a ``comma'' after each run
of bits, such that the commas are never deleted by 
the channel, this can only increase the capacity\footnote{%
For the limiting case $d \to 1$, the authors establish 
a different channel enhancement argument using 
carefully placed markers, that we do not discuss here.
}. Furthermore,
the enhanced channel is equivalent to an i.i.d.\ channel that
receives a stream of positive integers (i.e., the run-length
encoding of $X$) and passes each integer independently over a 
``run-length'' channel. The effect of the run-length channel,
given an input $x$, is to output a sample from $\Bin_{x,1-d}$;
i.e., the binomial distribution
with $x$ trials and success probability $p=1-d$.
Now, the capacity of the deletion channel can be upper bounded
by the capacity of the run-length channel, normalized by 
the number of input channel uses (i.e., the length of $X$).
Since the run-length channel is i.i.d., its capacity
is equivalent to the single-letter capacity. Thus, letting
$U$ and $V$ denote the input and output of a single use
of the run-length channel, 
capacity of the deletion channel is upper bounded by
$\sup_U I(U;V)/\E[U]$, where the supremum is over the distribution
of $U$ over positive integers\footnote{It is important to 
note that $U$ is defined over non-zero integers. Without this
consideration, the capacity upper bound would be infinite.
This can be easily seen by considering a distribution $U$ that
puts $1-\eps$ of the probability mass on the zero outcome, for
some $\eps>0$, and has mean $\Theta(\eps)$. 
A straightforward calculation shows that, in this case,
$I(U;V) = \Omega(\eps \log(1/\eps))$, and that the capacity
upper bound becomes $\Omega(\log(1/\eps))$, which can 
be made arbitrarily large by choosing a sufficiently small $\eps$.
}. At this point, a fundamental result of Abdel-Ghaffar \cite{ref:AG93}
on per-unit-cost capacity
can be used to, in turn, upper bound the resulting expression.

In the per-unit-cost capacity problem, a cost function $c(u)$ is
defined over the input domain $\cU$ of a channel with transition
rule $\cP(v|u)$ and the capacity is defined as $\sup I(U;V)/\E[c(U)]$,
where the supremum is over the distribution of $U$. For the above
application, the input domain is the positive integers and the
cost function is identity: $c(u)=u$. As proved in \cite{ref:AG93},
a necessary and sufficient condition for a pair $(U,V)$ to be
capacity achieving is the following: Letting
\InLine{C(V) := \sup_{u \in \cU}(\KL{V_u}{V}/c(u)),} where 
$V_u$ is the output distribution corresponding to a fixed input
$u$ and $\KL{\cdot}{\cdot}$ is the 
Kullback-Leibler (KL) divergence, the supremum is
attained for all $u$ on the support of $U$.
In this case, $C(V)$
is the per-unit-cost capacity.
Furthermore, for any distribution $V$ on the output domain 
(whether or not it corresponds to an input distribution),
the capacity is upper bounded by $C(V)$ as defined above.

There are two drawbacks with the above approach taken by \cite{ref:MDP07}.
First, the side information in fact genuinely increases the channel's
capacity, and therefore, the upper bounds resulting from this method
are inherently sub-optimal. One way to see this is to consider the case
where $p=1-d$ is small. Consider the run-length channel $(U,V)$ 
and a choice of $U$ that assigns a $1-p$
of the probability mass to $U=1$, and the rest to $U=2$.
In this case, one can see that $I(U;V)=\Omega(p \log(1/p))$,
and thus, the per-unit-cost capacity is also at least $\Omega(p \log(1/p))$.
This is while, by the trivial erasure channel upper bound,
the deletion capacity must be at most $p$.
Indeed, the numerical upper bounds reported by \cite{ref:MDP07}
exhibit this phenomenon 
at $d$ close to one (notice the distinctive concavity in this area
in Figure~\ref{fig:DelPlot}). 
The second drawback is that, while the result of
Abdel-Ghaffar is in principle powerful enough to characterize the
true per-unit-cost capacity upper bound, it may be
extremely difficult to work with this result analytically.
A way around this issue, undertaken in \cite{ref:MDP07}, 
is to employ a computer-based search. In order to do so, 
first a finite-domain distribution (supported up to an integer
$M$) for $U$ that maximizes the capacity
is constructed by a computer-based search,
and the corresponding KL divergence supremum, up to $M$,
is numerically computed.
Then, the resulting $V$ is truncated and the tail is
geometrically redistributed over the remaining (infinite)
input domain. The KL divergence at large values of $u$
can be accurately approximated by a linear function of $u$,
at which point \cite{ref:AG93} can be applied with the modified
choice of $V$, allowing the resulting capacity upper bound 
to be numerically computed.
While this approach indeed
achieves very strong numerical capacity upper bounds (especially
for small to moderate values of $d$), much of
the analytic structure of the problem is absorbed by
the computer-based search, 
making further progress elusive.
In this work, we develop a systematic, albeit technically demanding,
approach to overcome the barriers encountered by \cite{ref:MDP07}.

%

\paragraph{Equivalent reformulation of the channel.}

Consider any $\cD$-repeat channel $\ch$ with deletion probability $d=1-p$.
Rather than introducing side information, which, as demonstrated above, may result in a channel with
strictly larger capacity, we use a careful analysis to decompose the
action of the channel into two steps, forming a Markov chain $X-Z-Y$, such
that the resulting $Y$ from the two-step process has the \emph{exact} distribution
as the run-length encoding of the output of $\ch$. 
Then, we upper bound the capacity by $I(Z;Y)$ divided by the number
of channel uses in $X$. 
Assume, without loss of generality, that the first input bit in $X$ is
zero and this bit is promised not to be deleted by the channel (in particular,
the first bit ever output by the channel is also zero). 
Given the input $X$, the channel considers the run-length encoding 
$\Dots{X_1}{X_2}{X_t}$ of $X$. In order to produce the first run in the output,
the channel starts deleting bits from the even runs $\DotsII{X_2}{X_4}$ until
the first non-deletion event occurs, say at run $X_{2j}$. The odd runs
are then combined as $Z_1 = X_1 + X_3 + \cdots + X_{2j-1}$, and the process
continues from the first survived bit in the even runs until the input is fully
scanned. The resulting sequence $\Dots{Z_1}{Z_2}{Z_m}$ is then passed component-wise
through a channel $\ch'$, with integer input and output, defined
according to $\cD$, 
to produce the output sequence
$\Dots{Y_1}{Y_2}{Y_m}$ (for the case of the deletion channel, each $Y_i$ is formed simply by 
passing $Z_i-1$ through a binomial channel\footnote{A binomial channel,
given a non-negative integer input $x$, outputs a sample from the binomial
distribution with $x$ trials and success probability $p$.}, 
much similar to \cite{ref:MDP07}). 
We show that the resulting sequence has \emph{precisely} the same distribution as the 
run-length encoding of the output of $\ch$. By a delicate
analysis of this process, which is depicted in Figure~\ref{fig:processor}, we are able
to show that the capacity of $\ch$ is upper bounded as
\begin{equation}
\label{eqn:expo:capa:c}
\iftoggle{full}{%
\capa(\ch) \leq \sup_{U} \frac{I(U;V)}{1/p+\E[U]},
}{%
\capa(\ch) \leq \TextStyle \sup_{U} I(U;V)/(1/p+\E[U]),}
\end{equation}
where $U$ and $V$ are the input and output distributions of $\ch'$.
This constitutes the first technical building block in our capacity upper bound proofs.
The bound \eqref{eqn:expo:capa:c} has a similar flavor in form, 
but is strictly stronger, than the upper bound
expression in \cite{ref:MDP07} (especially for larger $d$). 

\paragraph{Mean-limited and convolution channels.}
Once \eqref{eqn:expo:capa:c} is available, one may attempt to
apply Abdel-Ghaffar's result \cite{ref:AG93} to obtain a
capacity upper bound, by using the cost function
$c(u)=u+1/p$. Indeed, any upper bound may in principle
be obtained by this result as it provides necessary 
and sufficient conditions for characterizing the quantity
on the right hand side of \eqref{eqn:expo:capa:c}.
However, as demonstrated by \cite{ref:MDP07}, an analytic
approach for obtaining a distribution $V$ that minimizes
the divergence fraction $\sup_{u \in \cU} \KL{V_u}{V}/(u+1/p)$,
or even gets sharply close to the minimum, may be extremely challenging.
Instead, \cite{ref:MDP07} uses a computer-assisted optimization
subroutine to construct a satisfactory $V$ and numerically upper bound
the capacity. 
While this exact numerical optimization subroutine applied on
\eqref{eqn:expo:capa:c} will strictly improve the numerical capacity upper
bounds reported in \cite{ref:MDP07} (since the cost function
$c(u)=u+1/p$ resulting from \eqref{eqn:expo:capa:c}
is strictly larger than the cost function $c(u)=u$ that is
used in \cite{ref:MDP07}), our aim is to obtain an analytic improvement
that avoids extensive numerical computations and provides 
deeper insights into the structure of the problem.
To overcome this difficulty, we observe that it 
would be much more
natural to break down the task of finding the best distribution for $V$
into two steps. First, we restrict the mean of $V$ to a fixed
parameter $\mu$ and optimize only over those $U$ such that $\E[V]=\mu$.
This fixes the denominator of the divergence fraction
to a constant and allows us to focus on optimizing the non-fractional
quantity $I(U;V)$ with respect to the fixed mean constraint.
Then, we take the supremum of the resulting bounds over the
choice of $\mu$ to upper bound \eqref{eqn:expo:capa:c}.
Note that the optimal $V$ for the two-step optimization
must satisfy the (necessary and sufficient) conditions
of \cite{ref:AG93} as well, so the two methods for characterizing
the capacity-achieving pair $(U,V)$ are technically
equivalent. However, factoring out the mean allows for a much
more natural and systematic derivation of the right distribution
for $V$, and is what allows us to achieve the desired analytic breakthroughs.

We thus obtain the abstraction of what we call a
``mean-limited channel''. Such a channel is defined with respect to certain
input and output domains over non-negative reals,
and a transition rule $\cP(y|x)$ for producing an 
output distribution over the output domain given an input distribution over the
input domain. Furthermore, the channel is given a mean parameter $\mu>0$
and only accepts those input distributions for which the corresponding
output distributions have mean $\mu$. The capacity of the channel is
determined in the standard sense of maximal mutual information
between admissible input and output distributions.

The abstraction is of general and independent interest to the study of communications
channels in presence of mean ``power constraints'', such as
the classical Poisson channel.
However, for our applications, it suffices to consider discrete
domains (in particular, non-negative integers) and 
the special case of transition rules that are defined by convolutions
of distributions, resulting in a special case that we call a 
``convolution'' channel.  
A convolution channel is defined with respect to a distribution
$\cD$ over non-negative reals, and is denoted by $\ch_\mu(\cD)$,
where $\mu>0$ is the output mean constraint.
Given an input $x$, the channel produces a sample from $\cD^{\oplus x}$
(i.e., the distribution defined by the $x$th power of the characteristic
function of $\cD$) in the output.
One may extend the notion of mean-limited channels to allow for
$m$ uses, and for a total mean constraint $\mu m$. Namely,
the channel now accepts an $m$-dimensional sequence
$U_1, \ldots, U_m$ at input and passes each $U_i$ through
an independent, identical, mean-limited channel to generate 
the output sequence $V_1, \ldots, V_m$. The mean-constraint
in this case would enforce the condition $\E[V_1 + \cdots + V_m]=\mu m$.
It is straightforward to show that the capacity of this channel is
achieved by a product distribution.

\subsection{Upper bounding the capacity of general mean-limited channels}

The appeal in reducing the capacity upper bound problems for 
$\cD$-repeat channels to that of general mean-limited channels is that,
for the latter, one may naturally use powerful tools from convex optimization to
obtain strong capacity upper bounds in a systematic and completely analytic fashion.
To this end, we prove an analogue of Abdel-Ghaffar's result \cite{ref:AG93}
for general mean-limited channels. However, we use a different, direct, proof. 
Namely, in Section~\ref{sec:capa:meanlim} we directly
write the mutual information maximization problem as a convex program,
form its dual and observe that strong duality holds. Hence, we may
write down the Karush-Kuhn-Tucker (KKT) conditions that provide the 
necessary and sufficient conditions for optimality. 

Characterization of channel capacity in terms of the optimum of a convex
program and the use of duality is a standard technique in information theory
(cf.\ \cite{ref:CK11}). 
Variations of this technique has been used,
for example, towards understanding the capacity of
multiple-antenna systems \cite{ref:LM03} and the
discrete time Poisson channel \cite{ref:Mar07,ref:LM09}. In this section,
we derive a variation tailored to our applications for
upper bounding the capacity of mean-limited channels\footnote{%
The variation developed in this section can be generally
applied to any mean-limited discrete or continuous channel.
In a subsequent work by the author \cite{ref:CR18}, the technique
has been successfully applied to obtain simple and improved upper bounds on
the capacity of the discrete-time Poisson channel. 
}.

Consider a general mean-limited channel $\ch$ with transition rule $\cP$,
input domain $\cX$, and 
mean-constraint $\mu$. 
With a slight overload of the notation, in this section consider an input
distribution $X$ for $\ch$ and let $Y$ denote the corresponding output distribution.
For any fixed input $x$, denote by $Y_x$ the output random variable when
the input is fixed to $x$. The KKT conditions imply that $X$ is capacity
achieving if and only if, for some real parameters $\nu_0$ and $\nu_1$,
we have
\begin{equation} \label{eqn:expo:dualKL}
(\forall x \in \cX)\colon\ \KL{Y_x}{Y} \leq \nu_1 \E[Y_x]+\nu_0,
\end{equation}
with equality for all $x \in \supp(X)$.
In this case,
the capacity is equal to $\nu_1 \mu+\nu_0$. 
Furthermore, if there is a distribution $Y$ over the output alphabet for which
\eqref{eqn:expo:dualKL} holds (and we call the distribution
``dual feasible''), then $\capa(\ch) \leq \nu_1 \mu+\nu_0$.
These results are summarized in Theorem~\ref{thm:meanlimited}.

Perhaps the most technically demanding aspect of this work is to obtain
fully analytical dual feasible solutions $Y$ that provably 
provide sharp, and explicit, or
at least efficiently computable, upper bound estimates on the
capacity of the mean-limited Poisson and binomial channels.
These, in turn, lead to capacity upper bounds
for the Poisson-repeat and deletion channels.
In both cases, we carefully construct dual feasible distributions parameterized
by a parameter $q \in (0,1)$ that controls the mean $\mu$ to
any arbitrary positive value.
Once the feasibility of these distributions are proved, we explicitly
write down the corresponding real parameters $\nu_0, \nu_1$, as well as
the resulting capacity upper bound $\nu_1 \mu + \nu_0$ (which requires writing
$\mu$ as a function of $q$), and plug in the resulting upper bound in 
\eqref{eqn:expo:capa:c}. This, in turn, results in an upper bound expression
for the capacity of the original $\cD$-repeat problem
(e.g., either the Poisson-repeat or deletion channel) as the maximum of
a uni-variate real function in $q$ (which turns out to be concave in $q$).
In all cases, this function is efficiently computable. 
In turn, we upper bound this function in terms of either explicit elementary functions,
or more sharply, in terms of the standard special functions. Thus, in particular,
we are able to reduce the problem of upper bounding the capacity of a
Poisson-repeat or deletion channel to finding the maximum of an elementary, 
concave, function of $q$. Numerical computation is then only applied,
if necessary, at
the very last stage for computing the maximizing value of $q$ for this
function, and the corresponding capacity upper bound. 

\paragraph{The Poisson-repeat channel.}
In order to obtain a capacity upper bound for the Poisson-repeat
channel, we use \eqref{eqn:expo:capa:c} combined with a capacity
upper bound for the corresponding mean-limited Poisson channel
using the KKT conditions described above. Suppose that the Poisson-repeat
channel replaces each bit with a number of bits sampled from a 
Poisson distribution with mean $\lam$. Therefore, the deletion 
probability of this channel (i.e., probability that a bit is replaced
by zero copies) is $d=\Exp{-\lam} =: 1-p$. The corresponding
mean-limited Poisson channel $\ch$ with mean constraint $\mu$ takes
a non-negative integer $X$ at input and outputs a fresh sample from
the Poisson distribution with mean $\lam X$. 
We use a convexity argument to show that the following distribution
over the non-negative integers,
parameterized by $q \in (0,1)$,
satisfies \eqref{eqn:expo:dualKL}
with $\nu_0=-\log y_0$ and $\nu_1=-\log q$: 
\begin{equation} \label{eqn:expo:PoiYfirst}
\Pr[Y=y] = y_0 {y ^y} (q/e)^y/y!,
\end{equation}
where $y_0$ is the normalizing constant, and $0^0:=1$. 
The values of $y_0$ and $\mu:=\E[Y]$ would in
turn be functions of 
$q$, and can numerically be computed
in polynomial time in the desired accuracy, given the exponential decay of \eqref{eqn:expo:PoiYfirst}
(see Table~\ref{tab:PoiMeanY}).
This results in a capacity
upper bound of $\sup_{q \in (0,1)}(-\mu \log q-\log y_0)$ for the mean-limited channel
(Theorem~\ref{thm:PoiChUpper}).
Furthermore, combined with \eqref{eqn:expo:capa:c}, we get the first
set of capacity upper bounds for the Poisson-repeat channel with deletion
probability $d$ (Theorem~\ref{thm:Poi:capa}):
\begin{equation} \label{eqn:expo:Poi:capa}
\iftoggle{full}{%
\capa \leq \TextStyle \sup_{q \in (0,1)} \frac{-\mu(q) \log q - \log y_0(q)}{-\mu(q)/\log d+1/(1-d)}.
}{%
\capa \leq \TextStyle \sup_{q \in (0,1)} (-\mu(q) \log q - \log y_0(q))/(-\mu(q)/\log d+1/(1-d)).}
\end{equation}
The function inside the supremum turns out to be concave, and the maximum
can efficiently be found by a simple search (Figure~\ref{fig:PoiSLope}). 
However, it is desirable to have sharp 
upper bound estimates on the function (in $q$) to be maximized. Note that,
from Stirling's approximation, the asymptotic behavior of \eqref{eqn:expo:PoiYfirst}
is $\Theta(q^y/\sqrt{y})$. Therefore, intuitively, it should be possible to 
estimate $y_0$ and $\mu$ in terms of 
the summations $\sum_{y=1}^\infty q^y/\sqrt{y}$ and 
$\sum_{y=1}^\infty \sqrt{y} q^y$, which may be 
expressed by the polylogarithm function \eqref{eqn:polylog:def}, a well studied
special function (however, obtaining upper and lower bounds requires more work).
This allows us to provide a remarkably sharp upper estimate on
the function in \eqref{eqn:expo:Poi:capa} in terms of the standard
special functions. This is made precise in Theorem~\ref{thm:PoiChUpperAnalytic}
and the quality of the approximation is depicted in Figure~\ref{fig:PoiMuLerch}.

We observe that the gap in \eqref{eqn:expo:dualKL} achieved by \eqref{eqn:expo:PoiYfirst}
is zero at $x=0$ but converges to an absolute constant (namely, $1/2$) as $x \to \infty$.
This results in sub-optimal capacity upper bounds. We rectify this issue
by replacing the $y^y=\exp(y \log y)$ term in \eqref{eqn:expo:PoiYfirst}
with $\exp(y \psi(y))$, where $\psi(y)$ is the digamma function (essentially
we are replacing the $\log y$ in the exponent with harmonic numbers, which have
the same asymptotic behavior); namely, we now use
what we call the \emph{digamma distribution}
\begin{equation} \label{eqn:expo:PoiYre}
\Pr[Y=y] = y_0 {\exp(y \psi(y))} (q/e)^y/y!,
\end{equation}
with $\Pr[Y=0]=y_0$. 
 Using the Newton series expansion of harmonic
numbers as well as the factorial moments of the Poisson distribution, 
we show that this alternative choice is also dual feasible, and in fact,
the gap in \eqref{eqn:expo:dualKL} offered by this choice is
precisely $\lam x E_1(\lam x)$, where $E_1(\cdot)$ is the
exponential integral function \eqref{eqn:Ei:def}. Thus the gap is zero at $x=0$ and 
exponentially vanishes as $x$ grows (Figure~\ref{fig:Ei}). This leads to a significant 
improvement in the resulting capacity upper bounds (Figure~\ref{fig:PoiPlot}).
We note that the digamma distribution 
\eqref{eqn:expo:PoiYre} still exhibits the same asymptotic behavior
as \eqref{eqn:expo:PoiYfirst}, and thus its parameters can be similarly approximated.
However, we show that the same asymptotic behavior is exhibited by the
well-studied negative binomial distribution \eqref{eqn:negbin} of order $r=1/2$. 
Since the parameters of a negative binomial distribution take remarkably simple forms,
we are able to obtain excellent upper and lower bound estimates on 
the parameters $y_0$ and $\mu$ of the digamma distribution \eqref{eqn:expo:PoiYre}, and in turn
the function inside the supremum in \eqref{eqn:expo:Poi:capa}, in terms of 
elementary functions. This is made precise in Section~\ref{sec:Poi:estimate:negBin}
(Corollary~\ref{coro:Poi:negBinApprox}). As a result, we obtain several upper bound
estimates on the capacity of the Poisson-repeat channel (either using 
\eqref{eqn:expo:PoiYfirst} or the digamma distribution 
\eqref{eqn:expo:PoiYre} or their upper bound
estimates), which are depicted in Figure~\ref{fig:PoiPlot}
and listed in Table~\ref{tab:PoiCapData}.

\paragraph{The deletion channel.}
At a first glance, it is natural to get the impression that understanding
the capacity of a Poisson-repeat channel may be a more complex problem
than that of a deletion channel. After all, a deletion channel only deletes bits
whereas a Poisson-repeat channel may cause insertions  (repetitions)
\emph{in addition to} deletions. However, our work indicates that, counter-intuitively,
the deletion channel is analytically more complex than the Poisson-repeat
channel. In fact, we use the above results for the Poisson-repeat channel
as a guiding tool towards attacking the deletion channel problem (which is
why the Poisson-repeat channel is discussed first). The mean-limited
channel corresponding to a deletion channel is a binomial channel, 
which maps an input $x$ to the binomial distribution $\Bin_{x,1-d}$ over $x$
trials. On the other hand, the Poisson-repeat channel corresponds to a
mean-limited Poisson channel, which maps $x$ to a Poisson random
variable with mean $\lam x = -x \log d$. A Poisson distribution,
being a one-parameter distribution, is analytically simpler than
a two-parameter binomial distribution. Indeed, the Poisson distribution 
is a limiting special case of the binomial distribution. We use
this intuition to extend our results for the Poisson-repeat channel to
the deletion channel.  As in the Poisson case, we invoke 
\eqref{eqn:expo:capa:c} to reduce the capacity upper bound problem
for the deletion channel to that of the mean-limited binomial channel
with output mean constraint $\mu$. Then, the task of finding a dual feasible
distribution $Y$ naturally leads us to a novel distribution that we call
an ``inverse binomial'' distribution (discussed in Section~\ref{sec:invbin}), 
which is defined, for 
$q \in (0,1)$, by
\begin{equation} \label{eqn:expo:invbin}
\Pr[Y=y] = \TextStyle y_0 \binom{y/p}{y} q^y \exp(-y h(p)/p),
\end{equation}
where $h(\cdot)$ is the binary entropy function. The parameter $q$
uniquely determines the normalizing constant $y_0$ and the mean
$\mu=\E[Y]$ (and the mean can be adjusted to any desired positive value).
We use a convexity argument to show  (in Theorem~\ref{thm:BinChUpper})
that the above distribution indeed satisfies \eqref{eqn:expo:dualKL} with $\nu_0 = -\log y_0$
and $\nu_1 = -\log q$, thus resulting in a capacity upper bound of
$\sup_{q \in (0,1)}(-\mu \log q - \log y_0)$ for the mean-limited binomial channel and
a capacity upper bound of
\begin{equation} \label{eqn:expo:del:capa}
\iftoggle{full}{%
\capa \leq p \sup_{q \in (0,1)} \frac{-\mu(q) \log q - \log y_0(q)}{1+\mu(q)}.
}{%
\capa \leq \TextStyle p \sup_{q \in (0,1)} (-\mu(q) \log q - \log y_0(q))/(1+\mu(q)).}
\end{equation}
for the deletion channel. As in the Poisson case, it is desirable to obtain sharp
upper bound estimates on the term inside the supremum in \eqref{eqn:expo:del:capa},
which turns out to be a concave function of $q$, in terms of elementary or
common special functions. This is a technical task, and in 
Section~\ref{sec:invbin:spacial} (in particular, Theorem~\ref{thm:ibin:Lerch}),
we obtain sandwiching bounds for the parameters of an inverse binomial
distribution in terms of the Lerch transcendent \eqref{eqn:Lerch}, a well studied generalization
of the Riemann zeta function. Furthermore, we observe that an inverse binomial
distribution exhibits the same asymptotic growth as a negative binomial distribution
of order $r=1/2$. Using this, we are able to obtain upper and lower bound estimates
on the parameters of an inverse binomial distribution in terms of elementary functions
(Corollary~\ref{coro:invbin:negBinApprox} in Section~\ref{sec:invbin:negbin}).
The estimates are excellent if the deletion probability is not too small 
(Figure~\ref{fig:BinEstimates}).

Interestingly, we show that for $p=d=1/2$, the inverse binomial distribution
is \emph{exactly} a negative binomial distribution. Thus, in this case, we can
write down the \emph{exact} parameters of the distribution in terms of elementary functions
and show that the term inside the supremum in \eqref{eqn:expo:del:capa}
is simply $h(q)/(2-q)$, which is maximized at $q=1-\varphi$, where $\varphi=(\sqrt{5}+1)/2$
is the golden ratio, which results in the \emph{fully explicit} capacity upper bound of $(\log \varphi)/2
\approx 0.347120 \text{ (bits per channel use)}$
for the deletion channel with $d=1/2$. We may then interpolate between this bound
and the trivial values at $d=0,1$ using a convexity technique\footnote{%
For extending the bounds to 
$d<1/2$, the results of \cite{ref:RD15}
are not tight, so in this regime
we rely on the plausible conjecture that the
capacity function is convex \cite{ref:Dal11}.} of 
\cite{ref:RD15}, 
thereby obtaining fully explicit capacity upper bounds for general $d$ that
are proved without any need for numerical computation
(Corollary~\ref{coro:del:convexify}).

As in the Poisson case with \eqref{eqn:expo:PoiYfirst}, the inverse binomial
distribution suffers from a constant asymptotic gap of $1/2$ in the KKT conditions
\eqref{eqn:expo:dualKL} (Figure~\ref{fig:BinRpX}). By examining the connection between 
\eqref{eqn:expo:PoiYfirst} and the digamma distribution \eqref{eqn:expo:PoiYre}, we develop a
systematic ``truncation technique'' (made precise in Section~\ref{sec:Poi:truncation})
that allows us to refine \eqref{eqn:expo:invbin} to sharply
eliminate the gap for the binomial case as well. 
To begin with, we prove that enforcing the KKT conditions \eqref{eqn:expo:dualKL} with
equality \emph{for all} $x \geq 0$ results in a \emph{unique} class of
solutions for the distribution of $Y$, which is exactly
\begin{equation} \label{eqn:expo:KKToptimal}
\Pr[Y=y]=y_0 q^y \exp\left(\TextStyle -\sum_{k=0}^y \binom{y}{k} \frac{1}{p^k} \left(1-\frac{1}{p}\right)^{y-k} H(\Bin_{k,p})\right)
=: y_0 q^y \exp\Big(g(y)-y h(p)/p\Big)/y!.
\end{equation}
Therefore, if such a distribution feasibly exists, it would necessarily be capacity achieving. 
However, we observe that the term inside the exponent 
(what we have labeled as $g(y)$ in \eqref{eqn:expo:KKToptimal}) 
exponentially grows in $y$, and therefore, there
is no normalizing constant\footnote{%
Interestingly, using \eqref{eqn:expo:dualKL}, 
this shows that while the optimal input distribution must have infinite 
support, it cannot have a full support. See Remark~\ref{rem:bin:noFull},
and, similarly for the Poisson case, Remark~\ref{rem:Poi:noFull}.
} $y_0$ that would make \eqref{eqn:expo:KKToptimal} a valid
distribution for any $q>0$. Our proposed truncation technique adjusts the exponent of this
alleged optimal solution so as to make its growth rate manageable, while still
satisfying the KKT conditions \eqref{eqn:expo:dualKL} with a potentially
nonzero gap that exponentially decays in $x$. In order to do so, we first prove
the following integral expression for $g(y)$ in \eqref{eqn:expo:KKToptimal}: 
\begin{equation} \label{eqn:expo:Enp}
g(y)=\cE_{1/p}(y)-\cE_{1/p-1}(y)\text{, where } \cE_\eps(y)= \TextStyle
 \int_0^1 \frac{ 1-\eps t y-(1-\eps t)^y}{t \log(1-t)}\,dt.
\end{equation}
We show that $\cE_\eps(y)$ exponentially grows in $y$ when $\eps > 1$,
and grows as $\eps y(\log(\eps y)-1)+o(y)$ 
when $\eps \leq 1$ (Claim~\ref{claim:Enp:asymptotic}).
The truncation technique would involve the truncation of the upper bound
of the integral that defines $\cE_\eps(y)$ (when $\eps>1$)
to $1/\eps$. The resulting function, that we call $\Lam_{1/\eps}(y)$,
may be written, after a change of variables, as 
\InLine{\Lam_\eps(y):=
 \int_0^1 \frac{ 1-t y-(1-t)^y}{t \log(1-\eps t)}\,dt.}
We remark that $\Lam_1(y)=\cE_1(y)=\log\Gamma(1+y)$.
When $\eps \leq 1$, the growth rate of $\Lam_\eps(y)$ is
\InLine{(y/\eps) (\log(y/\eps)+\li(1-\eps)-1)+o(y),} where $\li(\cdot)$
is the logarithmic integral \eqref{eqn:Li:def}. 
Using the factorial moments of the binomial distribution,
we show that $\E[\cE_\eps(Y_x)]=\cE_{\eps p}(x)$ (Proposition~\ref{prop:Enp}),
and furthermore, that
\[
\E[\Lam_\eps(Y_x)]=
\Ex_{p/\eps}(x)+xp\li(1-\eps)/\eps-\eta(1-\eps)+ \TextStyle
\int_\eps^1 \frac{ (1-tp/\eps)^x }{t \log(1-t)}\,dt\text{, where } 
\eta(z) := \int_0^z \frac{dt}{(1-t) \log t}.
\]
From the above results, it follows that, letting
\[
g_p(y):=\begin{cases}
\Lam_p(y)-\Lam_{p/(1-p)}(y)+
\frac{y}{p}\big( (1-p)\li(\frac{1-2p}{1-p})-\li(1-p) \big)+
\eta(1-p)-\eta(\frac{1-2p}{1-p})
& p \in (0,1/2), \\
\Lam_p(y)-\Ex_{1/p-1}(y)-y \li(1-p)/p+\eta(1-p) & p \in [1/2,1],
\end{cases}
\]
and replacing the $g(y)$ in \eqref{eqn:expo:KKToptimal} with
$g_p(y)$ (except for the special case $g(0)=0$) results in a refined distribution
for $Y$ that sharply satisfies \eqref{eqn:expo:dualKL}.
Despite the complex-looking expression defining the above distribution, as we see in
Section~\ref{sec:del:general:trunc}, the distribution converges
pointwise to the dual-feasible solution \eqref{eqn:expo:PoiYre} (the digamma distribution)
for the Poisson case as $p \to 0$; therefore, it is indeed a generalization
of \eqref{eqn:expo:PoiYre} to arbitrary values of $p$.
The gap to equality in the KKT conditions can be explicitly
computed in integral form (using the above results for the expectations
of $\cE_\eps$ and $\Lam_\eps$), which we show to be
\InLine{%
R_p(x) = \TextStyle \int_{p}^1 \frac{(1-t)^x - (1-p)^x}{t \log(1-t)}\, dt
-\int_{\frac{p}{1-p}}^1 \frac{(1-(1-p)t)^x - (1-p)^x}{t \log(1-t)}\, dt,}
with the second integral understood to be zero for $p \geq 1/2$.
This gap is zero at $x=0$, exponentially decays as $x$ grows,
and converges to $xp E_1(xp)$ (as in the Poisson case) 
for $p \to 0$ (see Figure~\ref{fig:BinRpX}).
Hence, we obtain several capacity upper bounds, of varying complexities, 
for the deletion channel. This depends on the chosen 
dual feasible distribution for $Y$, that is, either the truncated
variation of \eqref{eqn:expo:KKToptimal} or the inverse
binomial distribution \eqref{eqn:expo:invbin}, or in the latter
case, whether the function in \eqref{eqn:expo:del:capa}
is numerically computed or upper bounded by either elementary
or standard special functions 
(Figures \ref{fig:DelSlopes}~and~\ref{fig:DelSlopesUB}). The resulting bounds are depicted
in Figure~\ref{fig:DelPlot} and are listed in Table~\ref{tab:DelCapData}.
For the limiting case $d \to 1$, we see that our best upper bound estimate
is $0.4644(1-d)$ which comes quite close to the computer-assisted upper
bound $0.4143(1-d)$ reported in \cite{ref:RD15}, and substantially
improves the $0.7918(1-d)$ in \cite{ref:MDP07} (which is also computer-assisted).
Our fully explicit upper bound of $(1-d) \log \varphi \approx 0.694242(1-d)$ 
is also better than what reported in \cite{ref:MDP07}.
Since our methods for the Poisson-repeat channel converge
to what we obtain for the deletion channel in the limit $d \to 1$, 
we obtain the same upper bound estimate of $0.4644(1-d)$
for the capacity of the Poisson-repeat channel with deletion
probability $d \to 1$ as well.
Finally, we analyze our results for the limiting case $d\to 0$
(Section~\ref{sec:del:smallD}). We prove that, in this regime,
our upper bounds exhibit the asymptotic behavior of
$1-\Theta(h(d))$ which is known to be the case for the
true capacity of the deletion channel \cite{ref:KMS10,ref:KM13}.

\newcommand{\SecDiscussion}{%
\section{Discussion and open problems}


We introduced a number of new techniques that
leave plenty of room for improvement in execution and
lead to intriguing problems for future investigation.
The first is to understand the loss in 
the capacity upper bound \eqref{eqn:expo:capa:c}.
Recall that the Markov chain representation $X-Z-Y$ of a $\cD$-repeat
channel in Section~\ref{sec:expo:reduction} (Figure~\ref{fig:processor}) 
is exact. However, 
as shown in \eqref{eqn:IXYupper}, the potential loss
would correspond to the term $I(Y;Z|X)$.
Developing techniques for lower bounding this conditional
mutual information for the optimal input
distribution $X$ would readily yield an improvement
in the capacity upper bound (assuming that one can show
a general $\Omega(n)$ lower bound on this conditional mutual
information).

%
%

Another intriguing problem is to further improve the quality of the
dual feasible distributions that we introduced for the Poisson-repeat
and deletion channels (i.e., the digamma distribution \eqref{eqn:expo:PoiYre}
and the truncated variation of \eqref{eqn:expo:KKToptimal}). 
Can the truncation technique of 
Sections \ref{sec:Poi:truncation}~and~\ref{sec:del:general:trunc}
be further refined to result in
even better capacity upper bounds for either channel?
As we observe in Remarks \ref{rem:Poi:noFull}~and~\ref{rem:bin:noFull},
the optimal input distributions for mean-limited Poisson and binomial
channels cannot have full support on
all non-negative integers, although they must have infinite supports. 
Our intuition based on the variances suggests that the optimal input distribution must 
actually be quite sparse; e.g., supported on points $\Theta(i^2/\lam)$
for the Poisson case and $\Theta(i^2(1-p)/p)$ for the binomial case
($i=0,1,\ldots$). A further intuition is that the KKT gap in \eqref{eqn:expo:dualKL}
attained by the optimal output distribution should take the general form of
the gaps attained by our dual-feasible distributions (Figures \ref{fig:Ei} and \ref{fig:BinRpX}), 
but additionally, oscillate back and forth to zero (while remaining positive)
and reach zero exactly at the sparse set of points supported by the
optimal input distribution. 

In the $X-Z-Y$ Markov chain representation of the deletion channel (Figure~\ref{fig:processor}), 
observe that each $Z_i$ is the sum of a geometric number of the
entries in the run-length representation $\Dots{X_1}{X_2}{X_t}$ of $X$,
and that $Y_i$ is obtained by passing $Z_i-1$ through a binomial channel.
Let $Z^\star$ be the optimal input distribution for a single use of the binomial channel,
and $\chi$ be the characteristic function of the distribution of $1+Z^\star$.
We say the distribution is \emph{geometrically infinitely divisible} if
$\Exp{1-1/\chi(t)}$ is an infinitely divisible characteristic function \cite{ref:KMM85}. 
In this case, for all $r \in (0,1]$, one can identify a random variable
$Z'$ such that $1+Z^\star$ is the sum of a geometric number (with mean $1/r$) 
of independent copies of $Z'$. Then, one may hope to set up the run-length
distribution of the input sequence $X$ to be i.i.d.\ from a
distribution $X'$ such that,
for an appropriate $r$, sum of a geometric (with mean $1/r$) copies of
$X'$ gives the distribution of $1+Z^\star$.
If $r$ is chosen appropriately (i.e., such that it coincides with the deletion probability
of an entire run in $X$), the distribution of $Z_1-1,Z_2-1,\ldots$
would form a sequence of i.i.d.\ copies of $Z^\star$; i.e., the optimal 
input distribution for the $Z-Y$ link. In this case, one may hope to show
that the resulting input distribution $X$ would be capacity achieving for the
deletion channel. We note, however, that currently there is no general consensus 
on whether the capacity achieving input distribution for the deletion channel
must consist of i.i.d.\ run-lengths. This is known to be the case for $d \to 0$
\cite{ref:KM13}. The optimality of i.i.d.\ run-lengths has not been
ruled out for any $d$, and indeed the above intuition on geometric infinite
divisibility may suggest this as a possibility.

We obtain sharp estimates on the functions inside the supremums 
in \eqref{eqn:expo:Poi:capa} and \eqref{eqn:expo:del:capa}
in terms of elementary or standard special functions. 
Can the supremums themselves (i.e., the capacity upper bounds)
be upper bounded in terms of 
such explicit functions? An effort towards this goal is
demonstrated in Appendix~\ref{app:claim:cber:analytical}.
Furthermore, our work motivates the study of several novel discrete probability
distributions that are worth further consideration. 



}
\iftoggle{full}{}{\SecDiscussion}

\iftoggle{full}{}{\part{Details of the results}}

\iftoggle{full}{}{%
\subsection*{Organization}
\SecOrganization
}

\iftoggle{full}{}{%
\subsection*{Notation}
\SecNotation
}

\section{General mean-limited and convolution channels}
\label{sec:meanlim}

In this section, we consider general classes of channels that we call
``general mean-limited'' and ``convolution channels''. 
The input and output alphabet for these
channels is the set $\R^{\geq 0}$ of non-negative reals.
In general, any such channel can be described by a probability
transition rule  
$\cP(y|x)$ over the non-negative reals, 
where $x, y \in \R^{\geq 0}$. 
The channel takes an input $X \in \R^{\geq 0}$
and outputs the output random variable $Y$ according to the rule
$\cP(y|x)$.
Since the capacity of this channel may in general be infinite,
we restrict the set of possible input distributions by defining
a mean constraint $\E[Y]=\mu$, 
for a parameter $\mu > 0$, and 
use the notation $\ch_\mu(\cP)$ and the terminology
\emph{general mean-limited channel} for the channel 
defined with respect to the transition rule $\cP(y|x)$ and
mean constraint $\mu$. 
The rate achieved by an input distribution for this channel
is defined as $I(X;Y)/\E[X]$, and naturally, the capacity
is the supremum of the achievable rates subject to the given
mean constraint.

A natural choice for
the transition rule $\cP$, that results in what we call a
convolution channel, is via a multiplicative noise
distribution $\cD$ over non-negative reals. For an $x \in \R^{\geq 0}$, let 
$\cD^{\oplus x}$ denote the distribution attained by raising
the characteristic function of $\cD$ to the power $x$. When $x$
is a positive integer, this would correspond to adding together $x$
independent samples from $\cD$, or equivalently, the $x$-fold
convolution of the distribution $\cD$ with itself (hence the terminology
``convolution channel''). The convolution channel defined
with respect to $\cD$, that we use the overloaded notation $\ch_\mu(\cD)$ for,
takes an input $X \in \R^{\geq 0}$ and outputs a sample from $\cD^{\oplus X}$.

Let $\lam := \E[\cD]$.
Note that, since $\E[Y]=\lam \E[X]$, 
the rate achieved by an input distribution $X$
would be $\lam I(X;Y)/\mu$, and the capacity is simply
$(\lam/\mu) \sup I(X;Y)$, where the supremum is over the
input distributions $X$ satisfying $\E[X]=\mu/\lam$.

A convolution channel, or more generally any channel $\ch_\mu(\cP)$,
can be defined over continuous or discrete distributions.
In this work, for the sake of concreteness and a unified notation,
we focus on the discrete case (and in fact, the integer case). 
However,
the results can be readily extended to continuous distributions
if differential entropy and mutual information are used
(rather than the discrete Shannon entropy) to measure information,
and summations are replaced by the the analogous integrals.

 \subsection{Capacity of general mean-limited channels} 
\label{sec:capa:meanlim}

In this section, we characterize the capacity of a general mean-limited
channel $\ch_\mu(\cP)$, defined with respect to a probability transition rule
$\cP(y|x)$ and output mean constraint $\mu$, as the optimum solution of
a convex program. This particularly provides the technical tool for analyzing 
the capacity of convolution channels, and subsequently, general repeat channels.

Recall that the capacity of $\ch_\mu(\cP)$ is the supremum mutual information
$I(X;Y)$ between the input and output of the channel, where the supremum
is taken with respect to all input distributions $X$ whose corresponding
output distribution $Y$ satisfies the given mean constraint $\E[Y]=\mu$.
Although our methods are general and apply to any (continuous or discrete)
transition rule $\cP(y|x)$, in this section we focus on discrete distributions.
In particular, we assume that the input alphabet is a discrete set
$\cX \subseteq \R^{\geq 0}$ and so is the output alphabet
$\cY \subseteq \R^{\geq 0}$ (for the purpose of this work, we 
may think of both $\cX$ and $\cY$ as the set of non-negative integers).

For each fixed $x \in \cX$, let $Y_x$ denote the random variable
$Y$ conditioned on $X=x$ (i.e., $Y_x$ is the output of
the channel when a fixed input $x$ is given). 
The problem of maximizing the mutual information 
\begin{equation} \nonumber 
I(X;Y)=H(Y)-H(Y|X)=\sum_{x\in\cX} \Pr[X=x] \KL{Y_x}{Y}, 
\end{equation}
where $\KL{\cdot}{\cdot}$ denotes the 
Kullback-Leibler (KL) divergence,
can naturally
be written as a convex minimization program as follows:

\begin{align}
\underset{X,Y}{\text{minimize}} \qquad 
 & -I(X;Y) =\sum_{y \in \cY} Y(y) \log Y(y) + \sum_{x \in \cX} X(x) H(Y_x)  
 \label{eqn:cvx1} \\
\text{subject to} \qquad 
 & X \succeq 0 \label{eqn:cvx1:pos} \\ 
& \innr{X, \bOne}=1 \label{eqn:cvx1:sum} \\
& \sum_{y \in \cY} y \cdot Y(y) = \mu \label{eqn:cvx1:exp} \\
& \bP  X = Y, \label{eqn:cvx1:trans}
\end{align}
where we have used the following notation: $X(x)$ (resp., $Y(y)$)
in \eqref{eqn:cvx1} and \eqref{eqn:cvx1:exp}
denotes the probability assigned by the distribution of $X$ (resp.,
the distribution of $Y$) to the outcome $x$ (resp., $y$). Moreover,
with a slight abuse of notation, we may think of $X$ and $Y$ as 
vectors of probabilities assigned to the possible outcomes by the distributions that define
the underlying random variables, so that for example 
$\innr{X, \bOne}$ in \eqref{eqn:cvx1:sum}, where $\bOne$ is the all-ones vector,
is a shorthand for the summation of probabilities that define $X$
(which should be equal to one). 
Note that for each fixed $x$, the entropy $H(Y_x)$ in
\eqref{eqn:cvx1} is a constant value 
defined by the transition rule $\cP$. 
In \eqref{eqn:cvx1:trans},
$\bP$ denotes the (infinite dimensional) transition matrix from $X$ to $Y$ whose
entry at row $y$ and column $x$ is equal to $\cP(y|x)$.

Following the standard approach in convex optimization, we define 
slack variables $Y'(y)$, $y \in \cY$, for each constraint in
\eqref{eqn:cvx1:trans}, $X'(x)$, $x \in \cX$, for each non-negativity constraint
in \eqref{eqn:cvx1:pos}, $\nu_0$ for \eqref{eqn:cvx1:sum} and
$\nu_1$ for \eqref{eqn:cvx1:exp}. Now, the Lagrangian $L(X,Y;X',Y',\nu_0,\nu_1)$ for the
program \eqref{eqn:cvx1} may be written as
\begin{align} \nonumber %
L(X,Y;X',Y',\nu_0,\nu_1) &= 
\sum_{y \in \cY} Y(y) \log Y(y) + \sum_{x \in \cX} X(x) H(Y|x)  \\
&
-\innr{X', X} + \nu_0 (\innr{X, \bOne}-1) + \nu_1 \left(\sum_{y \in \cY} y \cdot Y(y) - \mu\right) 
+\innr{Y', \bP X - Y}. \label{eqn:Lag}
\end{align}
%
From \eqref{eqn:Lag},
the derivatives of $L$ with respect to each variable $X(x)$ and $Y(y)$ 
can be written as
\begin{align}
\del{L}{X(x)}&=
H(Y|x)-X'(x)+\nu_0+\innr{Y', \bP^{(x)}}, \label{eqn:cvx1:delX}
\\
\del{L}{Y(y)}&=
1+\log Y(y)+y \nu_1 - Y'(y), \nonumber
\end{align}
where $\bP^{(x)}$ denotes the column of $\bP$ indexed by $x$.

Observe that one can trivially construct a strictly feasible solution
$X \succ 0$ for \eqref{eqn:cvx1}. 
Therefore, Slater's condition holds
and the duality gap for this program is zero.
The dual objective function 
\[
g(X',Y',\nu_0,\nu_1) := \inf_{X,Y} L(X,Y;X',Y',\nu_0,\nu_1)
\]
can now be written by analytically
optimizing $L$ with respect to $X$ and $Y$.
Setting $\del{L}{X(x)}=0$ implies that
\begin{equation} \label{eqn:cvx1:delXzero}
\E_{Y_x} [Y'(Y_x)]=
-H(Y_x)+X'(x)-\nu_0,
\end{equation}
where expectation is taken over the random variable $Y_x$.
This can be deduced from \eqref{eqn:cvx1:delX}
by observing that the product $\innr{Y', \bP^{(x)}}$
is precisely the average of the values defined by
$Y'(y)$, $y \in \cY$, with respect to the measure
defined by the $x$th column of $\bP$; i.e., the
distribution of $Y_x$.
In other words,
\[
\innr{Y', \bP^{(x)}} = \sum_{y \in \cY} \Pr[Y_x=y] \cdot Y'(y) =
\E_{Y_x} [Y'(Y_x)].
\]
Setting $\del{L}{Y(y)}=0$, on the other hand, gives us
\begin{equation} \label{eqn:cvx1:Ysolve}
Y'(y)=1+\log Y(y)+y \nu_1,
\end{equation}
and thus,
\begin{equation} \label{eqn:cvx1:YsolveB}
Y(y) = \exp(-1+Y'(y)-y \nu_1).
\end{equation}

Since $L$ linearly depends on variables $X(x)$ and
a linear function has bounded infimum only when the
function is zero, $g(X',Y',\nu_0,\nu_1)$ is only
finite when \eqref{eqn:cvx1:delXzero} holds for all 
$x$, which we will assume in the sequel.
In this case, we deduce
\begin{align}
g(X',Y',\nu_0,\nu_1) &= 
\sum_{y \in \cY} Y(y) \log Y(y)  
-\nu_0 + \nu_1 \left(\sum_{y \in \cY} y \cdot Y(y) - \mu\right) 
- \innr{Y', Y}\nonumber \\
&\stackrel{\eqref{eqn:cvx1:Ysolve}}{=}
\sum_{y \in \cY} Y(y) ((Y'(y)-1-y\nu_1)+y \nu_1-Y'(y))-\nu_0-\mu \nu_1 \nonumber \\
&\stackrel{\eqref{eqn:cvx1:YsolveB}}{=}
-\sum_{y \in \cY} \exp(-1+Y'(y)-y \nu_1)-\nu_0-\mu \nu_1.
\label{eqn:cvx1:g}
\end{align}
The dual program to \eqref{eqn:cvx1} can now be written,
recalling the constraints \eqref{eqn:cvx1:delXzero}, as
\begin{align*}
\underset{X',Y',\nu_0,\nu_1}{\text{maximize}} \qquad 
 & g(X',Y',\nu_0,\nu_1) \\
\text{subject to} \qquad & (\forall x \in \cX)\colon 
\E_{Y_x} [Y'(Y_x)]=
-H(Y_x)+X'(x)-\nu_0 \\
& X' \succeq \bZero,
\end{align*}
which we rewrite, using \eqref{eqn:cvx1:g}, as a convex minimization problem
\begin{align}
\underset{Y',\nu_0,\nu_1}{\text{minimize}} \qquad 
 & \sum_{y \in \cY} \exp(-1+Y'(y)-y \nu_1)+\nu_0+\mu \nu_1
 \label{eqn:cvx2} \\
\text{subject to} \qquad & (\forall x \in \cX)\colon 
\E_{Y_x} [Y'(Y_x)] \geq
-H(Y_x)-\nu_0. \label{eqn:cvx2:constr}
\end{align}

Now, the objective value achieved by any feasible solution 
to the above dual formulation gives an upper bound on the maximum
attainable mutual information $I(X; Y)$ and thus a capacity
upper bound. 
Furthermore, since \eqref{eqn:cvx1} is convex and satisfies strong duality,
Karush-Kuhn-Tucker (KKT) conditions imply that a primal feasible 
solution $\sX, \sY$ for \eqref{eqn:cvx1} is optimal \Iff\ 
there is a dual feasible solution $(Y',\nu_0,\nu_1)$ such that,
recalling \eqref{eqn:cvx1:Ysolve},
\begin{equation} \label{eqn:cvx1:YsolveC}
Y'(y)=1+\log \sY(y)+y \nu_1,
\end{equation}
for all $y \in \cX$, and moreover,
\eqref{eqn:cvx2:constr} holds with equality for all $x \in \cX$
such that $\sX(x)>0$. In this case, using \eqref{eqn:cvx1:YsolveC},
the dual objective function simplifies to
\begin{align*}
g(X',Y',\nu_0,\nu_1) &=
\sum_{y \in \cY} \exp(-1+Y'(y)-y \nu_1)+\nu_0+\mu \nu_1 \\
&\stackrel{\eqref{eqn:cvx1:YsolveC}}{=}  \sum_{y \in \cY} \sY(y) +\nu_0+\mu \nu_1 \\
&= 1+\nu_0+\mu \nu_1.
\end{align*}
On the other hand, we can write 
\begin{align*}
\E_{Y_x} [Y'(Y_x)] &\stackrel{\eqref{eqn:cvx1:YsolveC}}{=}
\sum_{y \in \cY} \Pr[Y_x=y] (\log \sY(y)+1+y \nu_1) \\&=
1+\nu_1 \E[Y_x] + \sum_{y \in \cY} \Pr[Y_x=y] \log \sY(y) \\
&= 1+\nu_1 \E[Y_x] + \sum_{y \in \cY} \Pr[Y_x=y] \log (\sY(y)/\Pr[Y_x=y]) - H(Y_x) \\
&=  1+\nu_1 \mu_x - \KL{Y_x}{\sY} - H(Y_x), 
\end{align*}
where 
we define $\mu_x := \E[Y_x]$,
so that \eqref{eqn:cvx2:constr} can be rewritten as 
\begin{equation*}
(\forall x \in \cX)\colon \KL{Y_x}{\sY}\leq 1+\nu_1 \mu_x+\nu_0.
\end{equation*}
We have thus proven the following result
(written with a trivial change of the variable $\nu_0$):
\begin{thm} \label{thm:meanlimited}
Consider a general mean-limited channel $\ch_\mu(\cP)$ with
output alphabet $\cY$, and let $Y$ be any distribution 
over $\cY$. 
Denote by the random variable $Y_x$ the output of the channel given $x$
as the input, and
let $\nu_0$ and $\nu_1$ be any real parameters such that\footnote{It is
worthwhile to note that having the term $\nu_1 \E[Y_x]$ on
the right hand side of \eqref{eqn:cvx1:KKT}, as opposed
to $\nu_1 x$, is quite useful for finding natural dual feasible distributions,
and causes a mean-adjusting term of the form $q^y$ in the distribution of $Y$
to be naturally absorbed in the constant $\nu_1$. However, this would
not necessarily be the case if, on the right hand side, we had $\nu_1 x$
(as would be the case if the techniques of \cite{ref:AG93} were applied). 
For the special case of the convolution channels, 
$\E[Y_x]$ is a linear function of $x$ and the distinction disappears.
However, our best upper bounds (Theorem~\ref{thm:Cupper}) consider
non-convolution channels as well.
}
\begin{align} \label{eqn:cvx1:dualF} 
(\forall x \in \cX)\colon &\KL{Y_x}{Y} \leq 
\nu_1 \E[Y_x] +\nu_0. 
\end{align} 
\noindent Then, capacity of $\ch_\mu(\cP)$ is at most
$\nu_1 \mu + \nu_0$.
Furthermore, the capacity is exactly 
$\nu_1 \mu + \nu_0$ \Iff\ there is a distribution
$X$ over the input alphabet such that the corresponding
output distribution is $Y$ and moreover,
%
\begin{align}
(\forall x \in \supp(X))\colon &\KL{Y_x}{Y}= 
\nu_1 \E[Y_x] +\nu_0. 
\label{eqn:cvx1:KKT} 
\end{align}
\end{thm}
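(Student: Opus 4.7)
The plan is to cast the capacity computation as a convex optimization program and apply Lagrangian duality, using weak duality for the upper-bound half and KKT conditions for the characterization of equality. Specifically, writing $I(X;Y)=\sum_{x\in\cX} X(x)\,\KL{Y_x}{Y}$, the capacity equals the supremum of $I(X;Y)$ subject to $X \succeq 0$, $\sum_x X(x)=1$, the linear constraint $Y=\bP X$, and the mean constraint $\sum_y y\,Y(y)=\mu$. The objective $-I(X;Y)=\sum_y Y(y)\log Y(y)+\sum_x X(x) H(Y_x)$ is convex in $(X,Y)$ since $H(Y_x)$ is a constant determined by $\cP$ and $-H(Y)$ is convex, so the whole program is convex.

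Next, I would form the Lagrangian with dual variables $X'(x)\geq 0$ for non-negativity, $\nu_0$ for normalization, $\nu_1$ for the mean constraint, and $Y'(y)$ for the transition equalities $Y=\bP X$. Slater's condition is trivial (any strictly positive $X$ with the correct induced mean works), so strong duality holds. Differentiating with respect to $Y(y)$ gives the stationarity relation $Y'(y)=1+\log Y(y)+y\nu_1$, which pins down $Y$ in terms of $Y'$; differentiating with respect to $X(x)$ produces a linear term that must vanish to avoid unbounded infimum, yielding the dual constraint $\E_{Y_x}[Y'(Y_x)]=-H(Y_x)+X'(x)-\nu_0$. Substituting the stationarity relation into this constraint and using
\[
\sum_{y}\Pr[Y_x=y]\log Y(y) = -H(Y_x)-\KL{Y_x}{Y},
\]
one sees that the dual constraint (with $X'(x)\geq 0$) is equivalent to exactly the hypothesis \eqref{eqn:cvx1:dualF} after absorbing the constant $1$ into $\nu_0$. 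Weak duality then immediately implies that the achievable mutual information is at most $\nu_1\mu+\nu_0$, proving the first half of the theorem.

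For the second half, I would invoke complementary slackness: $X'(x)\geq 0$ and $X'(x) X(x)=0$ force the dual constraint to hold with equality on $\supp(X)$ and with slack elsewhere. Combined with primal feasibility (the capacity-achieving pair $(\sX, \sY)$) and the stationarity equation, this is precisely condition \eqref{eqn:cvx1:KKT}, and computing the dual objective at an optimal pair collapses it (using $\sum_y \sY(y)=1$) to $1+\nu_0+\mu\nu_1$, again absorbed into $\nu_0$.

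The main obstacle is that both $\cX$ and $\cY$ may be infinite (as they are in all applications — Poisson, binomial, etc.), so the convex program is genuinely infinite-dimensional. One has to be careful that Slater's condition, the exchange of differentiation and summation in deriving the stationarity relations, and the validity of KKT all survive in this setting. For our purposes — discrete non-negative integer alphabets with well-behaved transition kernels $\cP$ — these issues can be handled either by truncation/limiting arguments (working with finite sub-alphabets and passing to the limit) or by appealing to an appropriate Fenchel-duality framework in weighted sequence spaces. The convexity of the program and the simple linear form of all constraints make these technicalities routine rather than substantive.
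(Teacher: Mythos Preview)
Your proposal is correct and follows essentially the same approach as the paper: formulating the capacity as a convex program, introducing the same Lagrange multipliers $(X',Y',\nu_0,\nu_1)$, invoking Slater's condition, deriving the stationarity relation $Y'(y)=1+\log Y(y)+y\nu_1$, and translating the dual constraint into the KL inequality \eqref{eqn:cvx1:dualF} (with the ``$+1$'' absorbed into $\nu_0$), then using complementary slackness for the equality characterization. Your caveat about the infinite-dimensional setting is well taken; the paper proceeds formally and does not dwell on this point either.
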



\subsection{Extension to multiple uses}
\label{sec:meanlimited:multiple}

We now extend the notion of 
general mean-limited channels to multiple uses and
observe that the capacity is achieved by
a product input distribution.

Let $\ch_\mu^m(\cP)$ be the $m$-fold concatenation
of the channel $\ch(\cP)$; i.e., the channel takes an 
$m$-dimensional input vector $X=(X_1, \ldots, X_m)$ 
and applies the transition rule on each $X_i$ independently,
resulting in an output vector $Y=(Y_1, \ldots, Y_m)$. 
In this case, the input distribution is allowed to be
arbitrary subject to a \emph{total mean constraint}
$\E[Y_1+\cdots+Y_m]=\mu$. Naturally, the achieved
rate by an input distribution $X$ is 
$I(X;Y)/\E[X_1+\cdots+X_m]$ and the capacity of the
$m$-fold channel is the supremum of the achievable rates.

It is straightforward to argue that the capacity of
$\ch_\mu^m(\cP)$ is achieved by an independent input
distribution. To see this, consider any input distribution
$X$ satisfying the total mean constraint. Let 
$X'=(X'_1, \ldots, X'_m)$ denote a distribution of $m$
independent real numbers such that the marginal distribution
of $X'_i$ is identical to that of $X_i$ for all $i \in [m]$. Clearly, 
this would mean that the output distribution $Y'=(Y'_1,\ldots,Y'_m)$
corresponding to $X'$ consists of $m$ independent entries,
where each $Y'_i$ has the same marginal distribution as $Y_i$.
Therefore, $\E[Y'_1+\cdots+Y'_m]=\E[Y_1+\cdots+Y_m]=\mu$ and thus,
$X$ also satisfies the total mean constraint. However, we
may show that the rate achieved by $X'$ is no less than that of
$X$, as follows:
\begin{align*}
I(X;Y) &= \sum_{i=1}^m I(X_i;Y_i|X_1,\ldots,X_{i-1}) \\
&= \sum_{i=1}^m (H(Y_i|X_1,\ldots,X_{i-1})-H(Y_i|X_1,\ldots,X_{i})) \\
&\leq \sum_{i=1}^m (H(Y_i)-H(Y_i|X_1,\ldots,X_{i})) \\
&= \sum_{i=1}^m (H(Y_i)-H(Y_i|X_i)) \\
&= \sum_{i=1}^m I(X_i;Y_i) 
= \sum_{i=1}^m I(X'_i;Y'_i) = I(X'; Y').
\end{align*}
Therefore, we have proved the following:
\begin{lem} \label{lem:product}
Let $\ch_\mu^m(\cP)$ be an $m$-use general mean-limited
channel. Then, there is a capacity achieving input distribution
$(X_1, X_2, \ldots, X_m)$ which is a product distribution.
\end{lem}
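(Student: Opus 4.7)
The plan is to take an arbitrary input distribution $X=(X_1,\ldots,X_m)$ satisfying the total mean constraint and construct a product distribution $X'=(X'_1,\ldots,X'_m)$ whose marginals agree with those of $X$; I will then show this product replacement is feasible and does not decrease the rate. First I would observe that, since each coordinate of the channel is applied independently given the corresponding input, the marginal distribution of $Y'_i$ equals that of $Y_i$. This immediately gives $\E[Y'_1+\cdots+Y'_m]=\E[Y_1+\cdots+Y_m]=\mu$, so $X'$ meets the same total mean constraint, and also $\E[X'_1+\cdots+X'_m]=\E[X_1+\cdots+X_m]$, so the denominator in the rate expression is preserved. Hence it suffices to prove $I(X;Y)\le I(X';Y')$.

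For the mutual information comparison, I would expand using the chain rule
\[
I(X;Y)=\sum_{i=1}^m I(X_i;Y_i \mid X_1,\ldots,X_{i-1})=\sum_{i=1}^m\bigl(H(Y_i\mid X_1,\ldots,X_{i-1})-H(Y_i\mid X_1,\ldots,X_i)\bigr),
\]
and then apply two elementary facts. First, conditioning reduces entropy, so $H(Y_i\mid X_1,\ldots,X_{i-1})\le H(Y_i)$. Second, by the channel's coordinatewise independence, $Y_i$ is conditionally independent of $(X_1,\ldots,X_{i-1})$ given $X_i$, so $H(Y_i\mid X_1,\ldots,X_i)=H(Y_i\mid X_i)$. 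Combining these yields $I(X;Y)\le\sum_i I(X_i;Y_i)$. Since mutual information depends only on joint marginals and the marginal of $(X'_i,Y'_i)$ equals that of $(X_i,Y_i)$, this equals $\sum_i I(X'_i;Y'_i)$, and by independence of the coordinates of $X'$ this is precisely $I(X';Y')$.

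Finally I would conclude that the rate achieved by $X'$ is at least that of $X$, and since $X$ was arbitrary the supremum is attained (or approached) over product distributions. The only mildly nontrivial step is the second equality above, justifying $H(Y_i\mid X_1,\ldots,X_i)=H(Y_i\mid X_i)$; this is not really an obstacle but should be stated carefully, as it is the place where the memorylessness of the $m$-fold channel $\ch_\mu^m(\cP)$ enters the argument. Everything else is a routine chain-rule calculation, essentially as sketched immediately preceding the lemma statement.
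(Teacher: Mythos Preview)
Your proposal is correct and follows essentially the same approach as the paper: replace $X$ by the product distribution $X'$ with the same marginals, note this preserves the mean constraint (and the rate denominator), and then bound $I(X;Y)\le\sum_i I(X_i;Y_i)=I(X';Y')$ via the chain rule, conditioning-reduces-entropy, and the memorylessness identity $H(Y_i\mid X_1,\ldots,X_i)=H(Y_i\mid X_i)$. Your added remarks on preserving $\E[\sum X_i]$ and on where memorylessness enters are helpful clarifications, but the argument is otherwise the same as the paper's.
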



\section{General repeat channels}
\label{sec:general:repeat}

A natural model to generalize both deletion and Poisson-repeat
channels is the following: Let $\cD$ be a distribution on 
non-negative integers. The $\cD$-repeat channel is defined to
receive a (possibly infinite) stream of bits and
replaces each bit independently with a number of copies of 
the bit distributed according to $\cD$. We call such a channel
a \emph{general repeat channel} with respect to the repetition rule $\cD$.
The binary deletion
channel and Poisson-repeat channels are 
repeat channels with respect to the Bernoulli and
Poisson repetition rules, respectively. 

To characterize the capacity of a 
$\cD$-repeat channel, without loss of generality, one may
assume that the first-ever bit given to the channel is not
deleted by the channel (i.e., it will be replicated at least
once). The effect of this assumption on the capacity is amortized
down to zero as the number of channel uses tends to infinity,
and thus we shall make this assumption in the sequel.
Similarly, we may assume that the first input bit ever given to the channel
is a $0$, as this assumption will also have no effect on the
capacity of the channel.

Consider any input distribution on $n$-bit sequences
$X=\Dots{B_1}{B_2}{B_n}$, where we think of $n$ as growing to infinity.
If we know the first bit of $X$, we may equivalently think of it 
as its run-length
encoding which we denote by a sequence of positive integers
$\Dots{X_1}{X_2}{X_t}$, where
we also think of $t$ (where $1 \leq t \leq n$) as growing to infinity.
Let the ``deletion probability'' 
$d(\cD)$ denote the probability assigned to the zero outcome by $\cD$,
and $p(\cD):=1-d(\cD)$.

We now aim to model the behavior of the $\cD$-repeat in a way
that can be analyzed more conveniently. Towards this goal, consider
the following \emph{pre-processor} procedure on a bit sequence $X$:

\begin{enumerate}
\item Let $p := p(\cD)$.
Given the input sequence $X$, draw a geometrically
distributed random variable $G\geq 1$ with mean $1/p$.

\item Let the bit sequence $X'=\Dots{B'_1}{B'_2}{B'_{n'}}$ be the sequence of
``even runs'' in $X$; i.e., the bits of $X$ corresponding
to the runs $\DotsIII{X_2}{X_4}{X_6}$.
Let $i$ be so that $B_i$, the $i$th bit of $X$,
corresponds to the $G$th bit in $X'$ 
(if $G > n'$, output $n-n'$ and terminate).
Suppose that the bit $B_i$ corresponds to the run $X_{2j}$ in $X$.

\item Output the integer $Z:=X_1+X_3+\cdots+X_{2j-1}$.
Repeat the procedure with $X=\Dots{B_i}{B_{i+1}}{B_n}$.

\end{enumerate}
Note that, since each iteration of the above procedure
eliminates at least one of the $X_i$, the number of integers
$M$ output by the procedure (which are all positive) must necessarily
satisfy $1 \leq M \leq n$.

Denote by $\overline{\cD}$ the distribution $\cD$ conditioned on the
outcome being nonzero.
We now define an auxiliary, \emph{run-processor} channel as follows:
The channel receives, at input, a sequence of positive integers
$\DotsAZ{Z_1}{Z_M}$. For each $i = 1,\ldots, M$, the channel
independently computes $Y_i$ as will be described next and then outputs the
sequence $\DotsAZ{Y_1}{Y_M}$. To compute $Y_i$, the channel
outputs a sample from $\overline{\cD}\oplus \cD^{\oplus (Z_i-1)}$,
where $\oplus$ denotes convolution of distributions (i.e.,
the distribution of independent random samples
from each component added together).
A schematic diagram of the above pre-processor and
run-processor procedures appears in Figure~\ref{fig:processor}.
We now show that the combination of the pre-processor and the
run-processor is statistically equivalent to action of the $\cD$-repeat channel.

\begin{figure}
\begin{center}
\resizebox{\columnwidth}{!}{
\input{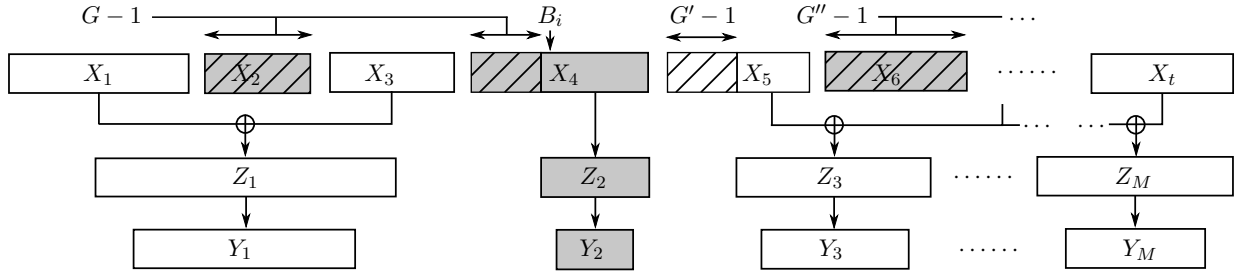}
}
\end{center}
\caption{A diagram of the pre-processor and the run-processor.
The white blocks are runs of zeros and the gray blocks are runs of ones,
and the hatched portions represent the bits that are deleted by
the channel. 
The random variables $G$, $G'$ and $G''$ respectively denote the random choices of $G$
in the first three iterations of the pre-processor.
In this example, a total of $G-1$ bits of $X_2$ and $X_4$ are deleted,
as well as $G'-1$ bits of $X_5$ and $G''-1$ bits of $X_6,X_8,\ldots$.
Consequently, $Z_1$ consists of the concatenation of the
runs corresponding to $X_1$ and $X_3$, the block $Z_2$ represents
the remaining portion of $X_4$, and so forth.
Finally, each $Y_i$ is obtained from the corresponding $Z_i$ according to
the repetition rule $\cD$.
}
\label{fig:processor}
\end{figure}

\begin{lem}
\label{lem:processing}
Let $X=\DotsAZ{B_1}{B_n}$ be a bit-sequence and
$Z=\DotsAZ{Z_1}{Z_M}$ be the output of the pre-processor
on $X$ with respect to a distribution $\cD$. 
Let $Y=\DotsAZ{Y_1}{Y_M}$ be the output of the
run-processor channel on $Z$ (with respect to $\cD$). 
Then, the distribution of
$Y$ is identical to the run-length encoding of the
output of the $\cD$-repeat channel on $X$.
\end{lem}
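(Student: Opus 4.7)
The plan is to proceed by strong induction on $t$, the number of runs in the run-length encoding $X_1,\ldots,X_t$ of $X$. For the base case $t=1$, the pre-processor sees no even runs and so terminates at its first step, outputting $Z_1=X_1$; the run-processor then produces $Y_1\sim\overline{\cD}\oplus\cD^{\oplus(X_1-1)}$. On the other side, the $\cD$-repeat channel produces a single output run of length $D_1+\cdots+D_{X_1}$, where $D_1\sim\overline{\cD}$ (since the first-ever bit is not deleted) and $D_2,\ldots,D_{X_1}$ are i.i.d.\ $\cD$, giving exactly the same distribution.

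For the inductive step with $X$ having $t\geq 2$ runs, the key identity I would establish is that the pre-processor's geometric draw $G$ can be coupled with the survival pattern of the 1-bits in the $\cD$-repeat channel. Each of the $n'$ bits in $X_2,X_4,\ldots$ is independently retained with probability $p=p(\cD)$, so the index of the first retained such bit is itself geometric with parameter $p$. Consequently, for each admissible $(j,g)$ with $1\leq g\leq X_{2j}$, the event ``the first surviving 1-bit lies at position $g$ within $X_{2j}$'' has probability $d^{X_2+\cdots+X_{2j-2}+g-1}p$ in both models, matching $\Pr[G=X_2+\cdots+X_{2j-2}+g]$; similarly, the termination case $G>n'$ matches the $\cD$-repeat event that every 1-bit of $X$ is deleted, both having probability $d^{n'}$. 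Hence $Z_1$ has the same distribution under both constructions.

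Conditional on the first-iteration outcome, I would split the $\cD$-repeat output into its first 0-run and the remainder. The first 0-run has length equal to the sum of the $D$-values of the bits in $X_1,X_3,\ldots,X_{2j-1}$; since these are independent of the 1-bit $D$-values that determined $G$, $j$ and $g$, they remain $Z_1$ independent samples with the very first being $\sim\overline{\cD}$ and the others $\sim\cD$, so their sum has law $\overline{\cD}\oplus\cD^{\oplus(Z_1-1)}$, matching $Y_1$. The remaining output coincides in distribution with the $\cD$-repeat output on $X'=B_i,B_{i+1},\ldots,B_n$, whose leading bit $B_i$ satisfies $D_{B_i}\sim\overline{\cD}$; since $X'$ has at most $t-2j+1\leq t-1$ runs, the induction hypothesis applies, and the subsequent pre+run iterations on $X'$ therefore produce the correct joint distribution.

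The hardest step will be carefully tracking conditional independence. Once one conditions on the first-iteration events---namely, that all bits of $X_2,\ldots,X_{2j-2}$ are deleted, that the first $g-1$ bits of $X_{2j}$ are deleted, and that $B_i$ survives as $\overline{\cD}$---one must verify that the $D$-values of bits strictly after $B_i$ remain i.i.d.\ $\cD$ and independent of the sampled $Y_1$. This reduces to the observation that the run-length encoding depends only on \emph{sums} of $D$-values within each output group, so the $Y_1$ produced by the run-processor can be realized as the corresponding group-sum in the $\cD$-repeat coupling without further constraining the individual $D$-values; once this is made precise, the recursion applied to $X'$ closes the induction.
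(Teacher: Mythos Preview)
Your proposal is correct and takes essentially the same approach as the paper: both arguments couple the pre-processor's geometric draw $G$ with the index of the first non-deleted bit among the even runs, identify $Y_1$ with the sum of the $D$-values over the $Z_1$ odd-run bits (distributed as $\overline{\cD}\oplus\cD^{\oplus(Z_1-1)}$), and then recurse on the suffix $X'=B_i,\ldots,B_n$ whose first bit now carries the conditional law $\overline{\cD}$. The paper phrases this as a reordering of the sampling of the $D_i$'s followed by an informal induction over iterations, whereas you structure it as strong induction on the number of runs $t$; the content is the same.
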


\begin{proof}
Recall the convention that the first bit of the input sequence
$X$ is assumed to be zero, and that this bit is not deleted by the
channel.
Given $X$, the $\cD$-repeat channel replaces each $B_i$ independently
with $D_i \geq 0$ copies sampled from $\cD$ (except for $D_1$, which
is sampled from the conditional distribution $\overline{\cD}$). 
Since the decision is
made independently for each $B_i$, the $D_i$ can be sampled in any
order without a change in the resulting output distribution. Suppose, therefore,
that the channel first decides on the even runs corresponding to 
$X_2, X_4, \ldots$. Namely, the channel can be thought of as performing
the following procedure to produce the output distribution:
\begin{enumerate}
\item For each bit in the input sequence represented by 
the even runs $X_2, X_4, \ldots$, decide, in order, whether the bit 
is deleted (i.e., replaced by zero copies), until the first
non-deletion is found. Suppose the first non-deletion
occurs at some $B_i$ located in block $2j$.

\item Consider now the odd runs $X_1, \ldots, X_{2j-1}$,
up to the non-deletion position. 
Replace $B_1$ (which ought not be deleted) by one
or more copies as determined by a fresh sample from $\overline{\cD}$.
Furthermore, independently replace
each remaining bit in the odd runs with a number of copies
determined by a fresh sample from $\cD$.

\item Restart the process on the remaining bits
$B_i, B_{i+1}, \ldots$, until the input sequence is exhausted.
\end{enumerate}
Note that, by the end of each execution of the above procedure,
we have revealed that $B_i$ is a non-deletion position.
In the subsequent round, $B_i$ will be the first
bit in the remaining input sequence, and this is consistent with the 
assumption that the first input bit in each round is 
replaced by a number of copies sampled from 
$\overline{\cD}$ (i.e., $\cD$ conditioned on the 
nonzero outcomes), as opposed to $\cD$.

We see that the number of zeros output by the first
iteration of the above process determines
the distribution of the first run-length of the output of
the $\cD$-repeat channel given $X$, which is $Y_1$. Moreover, 
conditioned on the outcome of the first iteration,
the second iteration determines the distribution
of the second run of the output of the $\cD$-repeat
channel, i.e., $Y_2$, given $X$ and $Y_1$. Inductively,
we see that the $\ell$th iteration of the above procedure,
conditioned on the outcome of the first $\ell-1$ iterations,
produces a number of bits distributed according to the
conditional distribution $Y_\ell$ given $X, Y_1,\ldots, Y_{\ell-1}$.
Therefore, the entire procedure samples $Y_1, \ldots, Y_M$
according to the exact run-length encoding distribution
of the output of the $\cD$-repeat channel, given the input $X$.

We now observe that the first step of the above procedure
corresponds to the first step of the pre-processor; i.e.,
the geometric random variable $G$ determines the position
of the first non-deletion $B_i$ within the even runs $\DotsII{X_2}{X_4}$
(recall that $1-p(\cD)$ is the deletion probability of
a bit). Furthermore, observe that the 
combined length of the odd runs $X_1+X_3+\cdots+X_{2j-1}$
is exactly the random variable $Z$ output by each iteration
of the pre-processor.

Now, the run-processor takes each $Z_\ell$ produced
by the $\ell$th execution of the pre-processor and 
replaces it with $Y_\ell$, which is an independent
sample from $\overline{\cD} \oplus \cD^{\oplus Z_\ell-1}$.
Observe that a sample from the above convolution corresponds
to the number of bits generated by taking a run of bits
of length $Z_\ell$, replacing the first bit with a non-zero
number of copies sampled from $\overline{\cD}$ and replacing
each remaining bit by a number of copies independently
sampled from $\cD$. This is precisely what the above
procedure does with the combined bits from the odd
runs $\Dots{X_1}{X_3}{X_{2j-1}}$ in each round. That is,
the integer $Y_\ell$ output by the run-processor
(conditioned on $\Dots{Y_1}{Y_2}{Y_{\ell-1}}$) has the same
distribution as the number of bits output by the above
procedure in round $\ell$ (conditioned on the transcript
of the execution the procedure for the first $\ell-1$ rounds).

We conclude that the pre-processor followed by the run-processor
generate an integer sequence that has the same distribution
as the run-length encoding of the above procedure, which in turn
has the same distribution as the run-length encoding of the output of the $\cD$-repeat
channel given input $X$.
\end{proof}

In light of Lemma~\ref{lem:processing}, characterizing
the capacity of a $\cD$-repeat channel (that we denote by
$\capa(\cD)$) is equivalent
to characterizing capacity of the cascade of the pre-processor and the
run-processor.
Let this cascade channel be denoted by the Markov chain $X-Z-Y$.
We are interested in
\[
\capa(\cD) = \lim_{n\to\infty} \sup \frac{I(X;Y)}{n},
\]
where the supremum is taken over all input distributions $X_1 \ldots X_n$.
Unlike channels with no synchronization errors (e.g., binary
symmetric or erasure channels), it is not trivial to show that 
the above limit exists, and is equal to the capacity. However, 
the identity follows from \cite{ref:Dob67}.
Note that
\begin{equation} \label{eqn:IXYupper}
I(X;Y) = H(Y)-H(Y|X)=H(Y)-(H(Y|X,Z)+I(Y;Z|X))\leq H(Y)-H(Y|Z)=I(Y;Z),
\end{equation}
so that \begin{equation} \label{eqn:IYZ}
\capa(\cD) \leq \lim_{n\to\infty} \sup \frac{I(Y;Z)}{n},
\end{equation}
where
the supremum is still over the distribution of $X$.
A major technical tool that we introduce is the following theorem, which reduces the
task of upper bounding the capacity of a $\cD$-repeat channel to a capacity
upper bound problem for a related mean-limited channel. A proof of this
result appears in Section~\ref{sec:thm:Cupper:proof}.

\begin{thm} \label{thm:Cupper}
Consider a $\cD$-repeat channel $\ch$ and let $\ch_\mu(\cP)$ be the general
mean-limited channel with respect to the transition rule $\cP$ over positive
integer inputs that, given an integer $1+x$, outputs a sample from the
convolution $\overline{\cD}\oplus \cD^{\oplus x}$, where 
$\overline{\cD}$ is the distribution $\cD$ conditioned on the
outcome being nonzero. Let $\lam := \E[\cD]$, 
$\overline{\lam}:=\E[\overline{\cD}]$, and
$1-p$ be the probability assigned
to the outcome zero by $\cD$. Then,
\begin{equation}
\label{eqn:Cupper}
\capa(\ch) \leq \sup_{\mu\geq \overline{\lam}} \frac{\capa(\ch_\mu(\cP))}
{1/p+(\mu-\overline{\lam})/\lam}.
\end{equation}
\end{thm}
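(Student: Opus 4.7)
The plan is to combine Lemma~\ref{lem:processing}, which realizes the run-length encoding of the channel output as the terminal variable of a Markov chain $X \to Z \to Y$, with \eqref{eqn:IYZ}, so that the task reduces to bounding $\lim_{n\to\infty} \sup_X I(Y;Z)/n$. Letting $M = |Y|$ denote the (random) number of produced runs, I would first observe that $M$ is a function of $Y$, hence
\[
I(Y;Z) \leq I(Y; Z, M) = I(Y; M) + I(Y; Z \mid M) \leq H(M) + I(Y; Z \mid M),
\]
and since $M \leq n$ always, $H(M) \leq \log(n+1)$ is negligible compared to $n$.

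Next I would exploit the fact that the run-processor is memoryless: given $Z$ and $M = m$, the coordinates $Y_1,\dots,Y_m$ are independent with $Y_i \mid Z_i \sim \overline{\cD} \oplus \cD^{\oplus(Z_i-1)}$. Subadditivity of entropy gives $I(Y; Z \mid M = m) \leq \sum_{i=1}^m I(Y_i; Z_i \mid M = m)$. Setting $\mu_{i,m} := \E[Y_i \mid M = m]$, the conditional law of $(Z_i, Y_i) \mid M = m$ is an admissible input--output pair for the single-use mean-limited channel $\ch_{\mu_{i,m}}(\cP)$, so by the definition of capacity invoked in Theorem~\ref{thm:meanlimited} one has $I(Y_i; Z_i \mid M = m) \leq \capa(\ch_{\mu_{i,m}}(\cP))$. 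Writing $C^\star$ for the right-hand side of \eqref{eqn:Cupper}, this is in turn bounded by $C^\star \cdot (1/p + (\mu_{i,m} - \overline{\lam})/\lam)$.

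The central algebraic simplification is $\overline{\lam}/\lam = 1/p$, which holds because $\lam = p\,\overline{\lam}$ by definition of $\overline{\cD}$. This collapses the per-round weight to $\mu_{i,m}/\lam$, so summing and then averaging over $M$ gives
\[
\sum_m \Pr[M=m] \sum_{i=1}^m \Big(\tfrac{1}{p} + \tfrac{\mu_{i,m} - \overline{\lam}}{\lam}\Big) = \frac{1}{\lam}\,\E\Big[\textstyle\sum_i Y_i\Big].
\]
The quantity on the right is $1/\lam$ times the expected total output length of $\ch$ on an $n$-bit input, which equals $n + O(1)$; the $O(1)$ correction accounts for the convention that the initial bit is replicated by $\overline{\cD}$ rather than $\cD$. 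Putting the pieces together yields $I(Y;Z) \leq H(M) + C^\star n + O(1)$, and dividing by $n$ and letting $n \to \infty$ delivers \eqref{eqn:Cupper}.

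The main obstacle is less a single hard estimate than the temptation to pursue a naive per-round accounting: directly bookkeeping the geometrically distributed even-run deletions per round, the transferred survivor bit, and the terminal-round boundary case of the pre-processor produces a distinctly messier bound (and, in particular, one weakened by a spurious factor of the form $1+(\mu-\overline{\lam})/\lam$). The proof only becomes clean once one realizes that the identity $\E[\sum_i Y_i] = n\lam + O(1)$, combined with the cancellation $1/p - \overline{\lam}/\lam = 0$, eliminates the random variable $M$ from the final expression and leaves only the natural interpretation of the right-hand side of \eqref{eqn:Cupper} as a mean-limited capacity per expected input bit.
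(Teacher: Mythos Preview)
Your argument is correct and is genuinely simpler than the paper's. The decisive observation is the identity $\overline{\lam}/\lam = 1/p$ (equivalently $\lam = p\,\overline{\lam}$), which collapses each per-round weight $1/p + (\mu_{i,m}-\overline{\lam})/\lam$ to $\mu_{i,m}/\lam$. Summing over $i$ and averaging over $M$ then gives exactly $\E\bigl[\sum_{i=1}^M Y_i\bigr]/\lam$, and by Lemma~\ref{lem:processing} this is the expected output length of $\ch$ on an $n$-bit input divided by $\lam$, i.e.\ $(\overline{\lam}+(n-1)\lam)/\lam = n + (1/p-1)$. The whole proof thus reduces to $I(Y;Z) \le H(M) + C^\star\bigl(n + 1/p - 1\bigr)$, and you are done.

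The paper does not exploit this cancellation. Instead it accounts for $n$ on the \emph{input} side: it writes $n = \sum_{i=1}^M (Z_i + G_i) - \Delta$ with $G_i$ the i.i.d.\ geometric variables from the pre-processor, invokes Azuma--Hoeffding to replace $\sum_i G_i$ by $M/p$ on a high-probability event, restricts to a good set $S$ of values of $M$ via an averaging argument, and finally applies the elementary inequality $(\sum a_i)/(\sum b_i) \le \max_i a_i/b_i$ to pass from the aggregate ratio to the supremum over $\mu$. Your route replaces all of this machinery with a single expectation computation on the \emph{output} side, where the bookkeeping is trivial because Lemma~\ref{lem:processing} already identifies $\sum_i Y_i$ with the channel's output length. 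What the paper's approach buys is perhaps a more transparent connection to the pre-processor's per-round cost $Z_i + G_i$ (and hence to the intuition behind the $1/p$ term), but at the price of considerably more technical overhead; your approach is shorter and, once the identity $\overline{\lam} = \lam/p$ is noted, arguably more natural.
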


A slightly simpler result to apply is the following corollary of 
Theorem~\ref{thm:Cupper}:

\begin{coro} \label{coro:Cupper}
Consider a $\cD$-repeat channel $\ch$ and let $\ch_\mu(\cD)$ be a
convolution channel corresponding to $\cD$ and restricted to
non-negative integer inputs. 
Let $1-p$ be the probability assigned
to the outcome zero by $\cD$, and $\lam := \E[\cD]$. Then,
the capacity of the $\cD$-repeat channel can be upper bounded as
\begin{equation}
\label{eqn:coro:Cupper}
\capa(\ch) \leq \sup_{\mu\geq 0} \frac{\capa(\ch_\mu(\cD))}{1/p+\mu/\lam}.
\end{equation}
\end{coro}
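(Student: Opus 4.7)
The plan is to deduce Corollary~\ref{coro:Cupper} directly from Theorem~\ref{thm:Cupper} by showing that the ``shifted'' convolution channel $\ch_\mu(\cP)$ of Theorem~\ref{thm:Cupper} has capacity no larger than that of the plain convolution channel $\ch_{\mu-\overline\lam}(\cD)$, followed by an affine change of variables in the supremum.

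Concretely, I would take an arbitrary input distribution for $\ch_\mu(\cP)$, whose support is the positive integers, and write it as $X = 1+U$ with $U$ supported on the non-negative integers. By the definition of $\cP$, the corresponding output can be represented as $Y = \overline D + Y'$, where $\overline D\sim\overline\cD$ is drawn independently of everything and, conditionally on $U$, $Y'\sim \cD^{\oplus U}$. In particular, $\overline D$ is independent of $(U,Y')$. This is precisely the structure needed for a one-step information inequality: since providing the side information $\overline D$ can only increase mutual information and a deterministic shift is invertible,
\begin{equation*}
I(X;Y)=I(U;Y)\leq I(U;Y,\overline D)=I(U;\overline D)+I(U;Y\mid \overline D)=I(U;Y'),
\end{equation*}
using $I(U;\overline D)=0$ and the fact that, conditioned on $\overline D$, the map $Y'\mapsto \overline D+Y'$ is a bijection so $I(U;Y\mid \overline D)=I(U;Y'\mid\overline D)=I(U;Y')$.

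Next I would match the mean constraints. Under the constraint $\E[Y]=\mu$ for $\ch_\mu(\cP)$, linearity of expectation gives $\overline\lam+\lam\E[U]=\mu$, so $\E[Y']=\lam\E[U]=\mu-\overline\lam$; in particular $\mu\geq\overline\lam$ is necessary for admissibility (matching the range in Theorem~\ref{thm:Cupper}), and $U$ is a legitimate non-negative-integer input to the convolution channel $\ch_{\mu-\overline\lam}(\cD)$ with the correct output mean. Taking the supremum over admissible inputs $X$ on the left and observing that the induced $U$ ranges over all admissible inputs of $\ch_{\mu-\overline\lam}(\cD)$ on the right, this yields the channel-level domination
\begin{equation*}
\capa(\ch_\mu(\cP))\leq \capa(\ch_{\mu-\overline\lam}(\cD))\quad\text{for all }\mu\geq\overline\lam.
\end{equation*}

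Plugging this into \eqref{eqn:Cupper} and substituting $\mu'=\mu-\overline\lam\geq 0$ in the supremum immediately gives \eqref{eqn:coro:Cupper}, since the denominator $1/p+(\mu-\overline\lam)/\lam$ becomes $1/p+\mu'/\lam$. There is no serious obstacle in this argument: the one place to be a little careful is that the inequality $I(U;Y)\leq I(U;Y')$ uses the independence of $\overline D$ from $U$ in an essential way and the deterministic invertibility of the shift given $\overline D$; both are transparent from the construction of $\cP$, so the proof is essentially a short data-processing computation combined with a change of variables.
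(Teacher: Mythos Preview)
Your proposal is correct and follows essentially the same approach as the paper: both reveal the independent additive component $\overline D$ (the paper calls it $Y_0$) as side information, use that this can only increase mutual information, and then identify the resulting channel with $\ch_{\mu-\overline\lam}(\cD)$ before substituting into Theorem~\ref{thm:Cupper}. Your version spells out the chain-rule computation $I(U;Y)\le I(U;Y,\overline D)=I(U;Y')$ explicitly, whereas the paper phrases it as ``side information cannot decrease capacity'' and ``subtract $Y_0$ to simulate the convolution channel,'' but the substance is identical.
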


\begin{proof}
\newcommand{\mus}{{\mu^{\star}}}
Let $\ch_\mu(\cP)$ be the channel defined in the statement 
of Theorem~\ref{thm:Cupper}. The mean-limited channel corresponding
to the transition rule $\cP$ receives a positive integer $x$,
and outputs a summation $Y=Y_0+Y_1+\cdots+Y_{x-1}$ of independent
random variables where $Y_0$ is sampled from $\overline{\cD}$
(i.e., $\cD$ conditioned on the outcome being nonzero) and
the rest are sampled from $\cD$. 
Let $\ch'_\mu(\cP)$ be 
a modification of $\ch_\mu(\cP)$ with side information, in which
the receiver also receives the exact value of $Y_0$. This
side information can only increase the capacity of the channel
for the corresponding parameter $\mu$. 
Let $\overline{\lam} := \E[\overline{\cD}]$. Since the input
to $\ch_\mu(\cP)$ is a positive integer, the modified channel
$\ch'_\mu(\cP)$
is equivalent to, and has the same capacity as, the convolution channel 
$\ch_{\mu-\overline{\lam}}(\cD)$ (with the input restricted to non-negative integers).
This is due to the fact that the receiver may simply subtract the
given value of $Y_0$, which is independent of the input and thus
bears no information about the input, from $Y$
and thereby simulate a convolution channel with the matching
mean constraint, which is $\ch_{\mu-\overline{\lam}}(\cD)$.
This means that $\capa(\ch_\mu(\cP)) \leq \capa(\ch_{\mu-\overline{\lam}}(\cD))$.

Let $\mus > \mu_0$ be
a value of $\mu$
that attains the supremum in \eqref{eqn:Cupper}.
We now have that 
\[\frac{\capa(\ch_{\mus}(\cP))}{1/p+(\mus-\overline{\lam})/\lam} \leq 
\frac{\capa(\ch_{\mus-\overline{\lam}}(\cD))}{1/p+(\mus-\overline{\lam})/\lam} 
\leq \sup_{\mu\geq 0} \frac{\capa(\ch_\mu(\cD))}{1/p+\mu/\lam},
\]
proving the claim.
\end{proof}

\begin{remark}
Compared with Theorem~\ref{thm:Cupper}, Corollary~\ref{coro:Cupper}
is in general more convenient to work with. This is due to the 
fact that the normalizing constant in the probability mass function of the
conditional distribution $\overline{\cD}$ in Theorem~\ref{thm:Cupper}
incurs an additive factor in the entropy expression for $\overline{\cD}$ 
that, in general,
may be of little effect but nevertheless cause significant technical difficulties. 
However, this convenience comes at cost of potentially obtaining worse capacity
upper bounds than what Theorem~\ref{thm:Cupper} would give. For the 
case of the deletion channel, $\cD$ is a Bernoulli distribution and
$\overline{\cD}$ becomes a trivial, singleton, distribution. Therefore,
in this case, Corollary~\ref{coro:Cupper} can obtain the same result
as Theorem~\ref{thm:Cupper}. However, for channels for which
$\overline{\cD}$ contains substantial entropy; e.g., the Poisson-repeat
channel where $\cD$ has a large mean,
the loss incurred by applying Corollary~\ref{coro:Cupper} 
rather than Theorem~\ref{thm:Cupper} may be noticeable and even potentially trivialize the resulting upper bounds.  
\end{remark}

\subsection{Proof of Theorem~\ref{thm:Cupper}}
\label{sec:thm:Cupper:proof}

In order to prove Theorem~\ref{thm:Cupper}, we first recall \eqref{eqn:IYZ}; i.e.,
\[
\capa(\cD) \leq \lim_{n\to \infty}\sup \frac{I(Y;Z)}{n},
\]
where the supremum is over the distribution of the $n$-bit
input sequence $X$ (and $Y=\DotsAZ{Y_1}{Y_M}$ and $Z=\DotsAZ{Z_1}{Z_M}$ being the corresponding
distributions of the outputs of the pre-processor and
run-processor, respectively). 
Assume that the capacity is not zero (otherwise, the claimed upper bound would be trivial).
In order to avoid introducing excessive notation for the various error terms involved,
in the sequel we use asymptotic notation as $n$ grows to infinity
(with hidden constants possibly depending on $\cD$); so that
a $o(1)$ term can be made arbitrarily small as $n$ grows; an $\omega(1)$ term grows
with $n$, and so forth. Consider a large enough $n$ (that we will
tend to infinity in the end) and a choice for $X$
that approaches the corresponding supremum, so that,
for the $Y$ and $Z$ defined by $X$, we have
\[
\capa(\cD) \leq \frac{I(Y;Z)}{n}+o(1).
\]
The minimum possible $n$ would depend on the desired
magnitude of the added $o(1)$ term.
We note that the length $M$ of $Y$ and $Z$ is itself a random variable
jointly correlated with $X$, $Y$ and $Z$. 
This causes technical difficulties that we first handle by  
showing below that we may essentially assume that the length $M$
is large, but fixed and known. In order to do so rigorously, first
recall that, denoting $Y_1^M := \DotsAZ{Y_1}{Y_M}$
and $Z_1^M := \DotsAZ{Z_1}{Z_M}$, we may write (since the knowledge of 
either $Y$ or $Z$ uniquely reveals $M$ as well),
\begin{align*}
I(Y;Z) &= I(M,Y_1^M;M,Z_1^M)\\
&= H(M,Y_1^M)-H(M,Y_1^M|M,Z_1^M)\\
&= H(M)+H(Y_1^M|M)-H(Y_1^M|M,Z_1^M)\\
&= H(M)+I(Y_1^M;Z_1^M|M)\\
&\leq \log n + I(Y_1^M;Z_1^M|M),
\end{align*}
where for the last inequality we are using the fact
that $M$ is always an integer between $1$ and $n$.
Therefore, conditioning on $M$ has no asymptotic effect on the capacity upper bound
and we may write
\[
\capa(\cD) \leq \frac{I(Y;Z|M)}{n}+o(1).
\]

Without loss of generality, in the sequel we assume that $X$ is entirely supported
on $n$-bit sequences that consist of $\Omega(n/\log n)$ runs
(much lower estimates would also suffice).
The contribution of all other sequences to the entropy of $X$ would be
$o(n)$, which would have no asymptotic effect on the achieved rate.
For any such input sequence (and consequently, for the distribution defined
by $X$), it is straightforward to show (e.g., using
Azuma-Hoeffding inequality) that with overwhelming probability
$1-1/n^{\omega(1)}$,
the resulting choice of $M$ will also be large; 
particularly, that $M \geq m_0$ for some $m_0=\Omega(n/\log n)$.
Let us now write
\begin{align}
\capa(\cD) &\leq \sum_{m \geq 1} \Pr[M=m] \frac{I(Y;Z|M=m)}{n}+o(1) \nonumber \\
&\leq \sum_{m \geq m_0} \Pr[M=m] \frac{I(Y;Z|M=m)}{n}+o(1), \label{eqn:capD:m0}
\end{align}
where in the second inequality, we have used the fact that
$M \geq m_0$ with probability $1-o(1)$, and have used the trivial
upper bound of $1$ for $I(Y;Z|M=m)/n$ when $m < m_0$ (recall that
$Z$ is always the run-length encoding of a bit-string of length at
most $n$, and thus its entropy is at most $n$).

Consider an alternative, but equivalent, realization of the pre-processor
that, given the input $X$, first draws an infinite sequence of i.i.d., geometrically
distributed random variables $G_1, G_2, \ldots$ (each with mean $1/p$),
and sets $G=G_\ell$ in the first step of the $\ell$th iteration (thus
the variables $G_{M+1},G_{M+2},\ldots$ are never looked at). 

Note that the total bit-length of $X$ consists of the summation of the produced
values of $Z_i$ by the pre-processor plus the corresponding $G_i$ (which represent 
the deleted bits by the pre-processor; i.e., hatched part in Figure~\ref{fig:processor}),
except for the final $G_M$ which may extend beyond the length of $X$.
More formally, it is always the case that
\[
\sum_{i=1}^M (Z_i + G_i) - G_M \leq n \leq \sum_{i=1}^M (Z_i + G_i),
\]
or, in other words,
\begin{equation} \label{eqn:capD:n}
n = \sum_{i=1}^M (Z_i + G_i) - \Delta,
\end{equation}
where $0 \leq \Delta \leq G_M$.

We recall that, for \emph{all} $m \geq 1$, we simultaneously have
$\E[\sum_{i=1}^m G_i]= m/p$. Furthermore,
by a Chernoff-Hoeffding inequality, the summation
highly concentrates around its expectation due to the
$G_i$ being independent; namely we may observe that
with probability $1-1/n^{\omega(1)}$, it is the
case that for all $m \geq m_0$
we have $\sum_{i=1}^m G_i = m(1/p + o(1))$.
Furthermore, the value of $G_M/m_0$ is $o(1)$
with probability $1-o(1)$ by Markov's inequality.
Overall, combined with \eqref{eqn:capD:n}, it follows that with probability $1-o(1)$,
we have
\begin{equation} \label{eqn:capD:n2}
\sum_{i=1}^M (Z_i + 1/p \pm o(1))=n.
\end{equation}
Note also that, the left hand side of \eqref{eqn:capD:n2} is $O(n)$ (treating $p$
as a constant) with probability $1$.
For an integer $m$, denote by $Z_{i,m}$ the random variable $Z_i$ 
conditioned on the event $M=m$. Given the input $X$, if we condition 
the output of the pre-processor on the event $M=m$, the joint
distribution of $G_1, G_2, \ldots$ obviously changes, to possibly 
even a non-product distribution. However, we may still apply an
averaging argument on \eqref{eqn:capD:n2} to show that\footnote{%
Some care is needed for the averaging argument. Particularly, we may take advantage of
the fact that, with high probability, the concentration bound
$\sum_{i=1}^m G_i = m(1/p+o(1))$ holds simultaneously \emph{for all} $m \geq m_0$.
Therefore, it just suffices to construct $S$ so that, for all $m \in S$, the
random variable $G_m$ conditioned on the event $M=m$ is upper bounded by $o(m)$,
which in turn follows by a simple averaging using Markov's inequality.
},
for some set $S \subseteq \N \setminus [m_0]$ such that $\Pr[M \in S] = 1-o(1)$,
the following holds: For all $m \in S$, we have
\begin{equation*} 
\sum_{i=1}^m (Z_{i,m} + 1/p \pm o(1))=n,
\end{equation*}
with probability $1-o(1)$ over the distribution of
$Z_{1,m}^m := (Z_{1,m},Z_{2,m}, \ldots, Z_{m,m})$. This, in turn, implies that,
for all $m \in S$,
\begin{equation} \label{eqn:capD:S}
\sum_{i=1}^m (\E[Z_{i,m}] + 1/p \pm o(1))=n(1+o(1)).
\end{equation}
Similar to $Z_{1,m}^m$, define $Y_{1,m}^m := (Y_{1,m},\ldots, Y_{m,m})$,
where $Y_{i,m}$ is the random variable $Y_i$ conditioned on the
event $M=m$. In other words, $Y_{1,m}^m$ is the output of the
run-processor when given $Z_{1,m}^m$ at input.
We may now rewrite \eqref{eqn:capD:m0} as
\begin{align}
\capa(\cD) &\leq 
\sum_{m \in S} \Pr[M=m] \frac{I(Y;Z|M=m)}{n}+o(1) \nonumber \\
&\stackrel{\eqref{eqn:capD:S}}{\leq} 
\sum_{m \in S} \Pr[M=m] \frac{I(Y_{1,m}^m;Z_{1,m}^m) (1+o(1))}{\sum_{i=1}^m (\E[Z_{i,m}] + 1/p \pm o(1))}+o(1).
 \label{eqn:capD:m0b}
\end{align}
Consider any fixed $m$. 
Note that the effect of the run-processor on $Z_{1,m}^m$ is 
precisely the same as an $m$-use mean-limited channel, 
as defined in Section~\ref{sec:meanlimited:multiple} (albeit 
without the mean constraint), where
the transition rule $\cP$ is given by the conditional distribution
$Y_{i,m}^m|Z_{i,m}^m$. 
That is, given an integer input $1+x$, the transition rule
$\cP$ outputs a sample from 
$\overline{\cD}\oplus \cD^{\oplus x}$.
Therefore, as in the proof of Lemma~\ref{lem:product}, since
each $Y_{i,m}$ only depends on the corresponding
random variable $Z_{i,m}$, we may write
\begin{equation} \label{eqn:capD:product}
I(Y_{1,m}^m;Z_{1,m}^m) \leq \sum_{i=1}^m I(Y_{i,m};Z_{i,m}).
\end{equation}
Furthermore, 
recall $\lam := \E[\cD]$ and $\overline{\lam} := \E[\overline{\cD}]$,
and observe that, for all $i$,
\begin{equation} \label{eqn:capD:EYZ}
\E[Y_{m,i}]=\overline{\lam}+(\E[Z_{m,i}]-1) \lam.
\end{equation}
Using \eqref{eqn:capD:product} and \eqref{eqn:capD:EYZ}, we may now rewrite 
 \eqref{eqn:capD:m0b} as
 \begin{align}
\capa(\cD) &\leq  
\sum_{m \in S} \Pr[M=m] \frac{\sum_{i=1}^m I(Y_{i,m};Z_{i,m}) (1+o(1))}
{\sum_{i=1}^m (1/p+(\E[Y_i]-\overline{\lam})/\lam \pm o(1))}+o(1)\nonumber \\
&\leq  
\sum_{m \in S} \Pr[M=m] \max_{i \in [m]}\frac{I(Y_{i,m};Z_{i,m}) (1+o(1))}
{1/p+(\E[Y_i]-\overline{\lam})/\lam \pm o(1)}+o(1) 
 \label{eqn:capD:m0c}
\end{align}
where the second inequality is due to the following simple result:
\begin{prop}
For positive real numbers $a_1, \ldots, a_m$ and
$b_1, \ldots, b_m$, we have
\[
\frac{a_1 +\cdots+a_m}{b_1 +\cdots+b_m} \leq \max_{i=1,\ldots,m} \frac{a_i}{b_i}.
\]
\end{prop}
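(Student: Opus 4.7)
The proposition is the familiar mediant inequality for positive reals, so the plan is short and essentially mechanical. Let $M := \max_{i \in [m]} a_i/b_i$, which is well defined and positive since all $b_i$ are positive. I would first observe that by the definition of $M$, we have $a_i/b_i \leq M$ for every $i$, and hence $a_i \leq M \, b_i$ for every $i$ (here using positivity of $b_i$ to clear the denominator without flipping the inequality).

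Next, I would sum these inequalities over $i = 1, \ldots, m$ to obtain $a_1 + \cdots + a_m \leq M (b_1 + \cdots + b_m)$. Finally, since $b_1 + \cdots + b_m > 0$, I can divide both sides by this sum to conclude $(a_1 + \cdots + a_m)/(b_1 + \cdots + b_m) \leq M$, which is exactly the claimed inequality.

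There is essentially no obstacle here: the statement holds under only positivity (one could even weaken $a_i$ to nonnegative), and no induction, averaging, or convexity tool beyond the single observation above is required. The one small subtlety worth flagging in a clean write-up is the justification that the maximum is attained (which is immediate since the index set $[m]$ is finite), and the use of positivity of the $b_i$ both to preserve the inequalities when multiplying through and to legitimize the final division.
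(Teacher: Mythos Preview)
Your proof is correct and essentially identical to the paper's: the paper assumes without loss of generality that the maximum is $a_1/b_1$, rewrites the claim as $\sum_i b_1 a_i \leq \sum_i a_1 b_i$, and notes this holds termwise since $a_i b_1 \leq a_1 b_i$ for each $i$. This is exactly your argument with $M = a_1/b_1$ in place of the abstract maximum.
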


\begin{proof}
Without loss of generality suppose the right hand side is $a_1/b_1$. Then,
the inequality is equivalent to 
\[
\sum_{i=1}^m b_1 a_i \leq \sum_{i=1}^m a_1 b_i, 
\]
which is true since for each $i$, we have assumed
$a_i b_1 \leq b_i a_1$.
\end{proof}
Now, observe that for any $i$ and $m$, we have
\[
\frac{I(Y_{i,m};Z_{i,m})}
{1/p+(\E[Y_i]-\overline{\lam})/\lam \pm o(1)}
\leq \sup_{\mu\geq \overline{\lam}} \frac{\capa(\ch_\mu(\cP))}
{1/p+(\mu-\overline{\lam})/\lam \pm o(1)},
\]
since, assuming that $\mu = \E[Y_{i,m}]$, the random variable
$Y_{i,m}$ is sampled by transmitting $Z_{i,m}$ over
a mean-limited channel with transition rule $\cP$ and 
mean constraint $\mu$. Therefore, the mutual information
$I(Y_{i,m};Z_{i,m})$ would be no more than the capacity of
this channel. Using this, \eqref{eqn:capD:m0c}
further simplifies to 
 \begin{align*}
\capa(\cD) &\leq  
\sum_{m \in S} \Pr[M=m] \max_{i \in [m]}
\sup_{\mu\geq \overline{\lam}} \frac{\capa(\ch_\mu(\cP))(1+o(1))}
{1/p+(\mu-\overline{\lam})/\lam \pm o(1)} + o(1)\\
&\leq \sup_{\mu\geq \overline{\lam}} \frac{\capa(\ch_\mu(\cP)) (1+o(1))}
{1/p+(\mu-\overline{\lam})/\lam \pm o(1)} + o(1) \\
&= \sup_{\mu\geq \overline{\lam}} \frac{\capa(\ch_\mu(\cP))}
{1/p+(\mu-\overline{\lam})/\lam},
\end{align*}
where the last equality is attained from the fact that,
by taking the limit $n \to \infty$, the $o(1)$ terms vanish.
This completes the proof of Theorem~\ref{thm:Cupper}.



\section{Upper bounds on the capacity of the Poisson-repeat channel}
\label{sec:Poi}

\subsection{Upper bounds on the capacity of a mean-limited Poisson channel}
\label{sec:Poi:PoiMeanLim}

Let $\cD$ be a Poisson distribution with mean $\lam$, and 
$\ch := \ch_\mu(\cD)$ be the convolution channel defined with respect
to the distribution\footnote{We note that, in 
the context of optical communications, 
this channel was also considered in \cite{ref:AW12}.
A related channel is the standard, additive, discrete-time Poisson channel \cite{ref:Sha90}
that has been extensively studied in information theory. 
} 
$\cD$ and mean constraint $\mu$. 
Let $\cP$ be the probability 
transition rule corresponding to $\ch_\mu(\cD)$
when seen as a general mean-limited channel.
Recall that the input and output alphabets for this channel
are both the set of non-negative integers.
By Theorem~\ref{thm:meanlimited}, in order to upper
bound the capacity of $\ch$, it suffices to exhibit
a distribution over non-negative integers
and real parameters $\nu_0$ and $\nu_1$, so that
the corresponding random variable $Y \in \N^{\geq 0}$ drawn from  
this distribution satisfies \eqref{eqn:cvx1:dualF}.

Let $Y_x$ be the output of the channel when the input is fixed
to $x$. Explicitly, $Y_x$ has a Poisson distribution with
mean $\E[Y_x]=\lam x$, so that \eqref{eqn:cvx1:dualF} can be
rewritten as
\begin{equation} \label{eqn:cvx1:dualFPoi} 
\KL{Y_x}{Y} \leq 
\lam \nu_1 x +\nu_0, \qquad x=0,1,\ldots.
\end{equation} 
By the conclusion of Theorem~\ref{thm:meanlimited}, 
exhibiting any such distribution $Y$ and parameters
$\nu_0$ and $\nu_1$ would imply
\begin{equation}
\capa(\ch_\mu(\cD)) \leq \nu_1 \mu + \nu_0.
\end{equation}
We consider the following general form for the distribution
of $Y$:
\begin{equation} \label{eqn:PoiY}
\Pr[Y=y]=y_0 \exp(f(y)) (q/e)^y,\quad y=0,1,\ldots,
\end{equation}
for some function $f\colon \N^{\geq 0} \to \R$,
real parameter $q>0$, and normalizing constant
\[
y_0 = \left(\sum_{y=0}^\infty \exp(f(y)) (q/e)^y\right)^{-1},
\] 
assuming that the summation is convergent. Given any
function $f$ that grows linearly in $y$ or slower,
it is always possible to choose $q$ small enough so
that the distribution is well defined. Moreover, by
varying the choice of $q$ it is possible to set the
expectation of $Y$ to match the chosen parameter\footnote{%
By varying $q$, the mean $\mu$ may be adjusted to any arbitrary positive 
value so long as,
for some fixed $q_0>0$, the summation defining $y_0$
diverges to infinity with $q=q_0$ but, on the other hand, converges for all $q<q_0$.
} $\mu$.
%
It turns out to be more convenient to set
$f(y):=g(y)-\log y!$, for some function $g\colon \N^{\geq 0} \to \R$,
and our goal would be to obtain an appropriate choice for $g$.

Recall that, for any choice of positive integers $x$ and $y$,
\[
\Pr[Y_x = y] = \frac{e^{-\lam x} (\lam x)^y}{y!}.
\]
The KL divergence $\KL{Y_x}{Y}$ can now be written as
\begin{align} \nonumber
\KL{Y_x}{Y} &= \sum_{y=0}^\infty \Pr[Y_x = y] \log \frac{\Pr[Y_x = y]}{\Pr[Y=y]}\\
&\stackrel{\eqref{eqn:PoiY}}{=} 
-\log y_0-\lam x (\log q) + \lam x \log(\lam x) - \E[g(Y_x)].
\label{eqn:KLPoiGeneral}
\end{align}
Note that the only nonlinear term (in $x$) in the above is
$\lam x \log(\lam x) - \E[g(Y_x)]$, so achieving
\eqref{eqn:cvx1:dualFPoi} is equivalent to having
real coefficients $a, b$ such that
\begin{equation} \label{eqn:cvx1:dualFPoiab} 
\lam x \log(\lam x) - \E[g(Y_x)] \leq 
ax+b, \qquad x=0,1,\ldots.
\end{equation} 
At this point, the following feasible choice 
$g(y)=g_0(y)$ is immediate:
\[
g_0(y)=\begin{cases}
0 & \text{if $y=0$,} \\
y \log y & \text{if $y>0$,}
\end{cases}
\]
which results in 
\begin{equation} \label{eqn:PoiYfirst}
\Pr[Y=y] = y_0 \frac{y ^y}{y!} (q/e)^y,
\end{equation}
where $0^0$ is to be understood as $1$.
Numerical estimates on the mean and normalizing constants
of this distribution for various choices of $q$ are listed in
Table~\ref{tab:PoiMeanY}.
To see that this choice satisfies 
\eqref{eqn:cvx1:dualFPoiab}, it suffices to
note that the function $g(y)$ defined above
is convex, and thus, by Jensen's inequality,
\[
\E[g_0(Y_x)] \geq g_0(E[Y_x]) = g_0(\lam x)=\lam x \log(\lam x),
\] 
so \eqref{eqn:cvx1:dualFPoiab} is satisfied for $a=b=0$.
One can, however, observe that the inequality
is strict by a constant gap as $x$ grows (as we show in Section~\ref{sec:Poi:truncation}). 

Using Stirling's approximation $y! \sim \sqrt{2\pi y} (y/e)^y$,
we may write the asymptotic behavior of \eqref{eqn:PoiYfirst} as
\begin{equation} \label{eqn:Poi:StirApprox}
\Pr[Y=y] \sim y_0 \frac{y^y (q/e)^y}{\sqrt{2\pi y} (y/e)^y}
= \frac{y_0}{\sqrt{2\pi y}}\, q^y,
\end{equation}
so we see that \eqref{eqn:PoiYfirst} can be normalized to a valid
distribution \Iff\ $q<1$.

We now present
a different choice for $g$ that more closely estimates
the linear upper bound $ax+b$, and in particular, converges to
it as $x$ grows. This alternative choice results
in a better capacity upper bound than the immediate choice
above. It is obtained by replacing $\log y$ in $g_0(y)$
with harmonic numbers that asymptotically behave
like $\log y$ 
but provide a more refined result. 
Explicitly, consider
\begin{equation} \label{eqn:gPoi}
g(y)=\begin{cases}
0 & \text{if $y=0$,} \\
y \psi(y)=y (H_{y-1} -\gamma) & \text{if $y>0$,}
\end{cases}
\end{equation}
where $\psi(y)=\Gamma'(y)/\Gamma(y)$ is the digamma
function, $H_n=\psi(n+1)+\gamma$ denotes the $n$th harmonic number
(where $H_0=0$, and, for a positive integer $n$, $H_n=\sum_{k=1}^n 1/k$),
 and $\gamma \approx 0.57721$ is the Euler-Mascheroni constant.
It is known that \cite[p.~259]{ref:Ab72}
\[
\psi(y) = \log y - \frac{1}{2y} +O\left(\frac{1}{y^2} \right),
\] 
so we have $\lim_{y\to \infty} (y \psi(y)-y \log y)=-1/2$,
and thus, combined with the Stirling approximation, we see that
with this alternative choice of $g$, we have a slightly different
asymptotic behavior than \eqref{eqn:Poi:StirApprox}, namely,
\begin{equation} \label{eqn:Poi:StirApprox:alternative}
\Pr[Y=y] \sim y_0 \frac{y^y (q/e)^y}{\sqrt{2\pi e y} (y/e)^y}
= \frac{y_0}{\sqrt{2\pi e y}}\, q^y,
\end{equation}
%
To verify that this choice of $g$ is feasible, we first
prove a series expansion for the function $g$.

\begin{lem} \label{lem:newton}
For any $y \geq 0$, the function $g$ in \eqref{eqn:gPoi}
can be represented as
\[
g(y)=-\gamma y + \sum_{j=2}^\infty \frac{(-1)^j j}{j-1} \binom{y}{j}.
\]
\end{lem}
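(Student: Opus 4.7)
The plan is to reduce the identity to the classical Newton-series representation of harmonic numbers. First I would dispose of $y=0$, where both sides vanish: on the right-hand side, the notation section's gamma-function convention gives $\binom{0}{j}=0$ for each $j\geq 1$, since $\Gamma(1-j)$ has a pole. For integer $y\geq 1$, every $\binom{y}{j}$ with $j>y$ vanishes, so the infinite sum truncates to $\sum_{j=2}^{y}$, and after cancelling the $-\gamma y$ terms on both sides the claim reduces to showing
\begin{equation*}
yH_{y-1} \;=\; \sum_{j=2}^{y} \frac{(-1)^{j} j}{j-1}\binom{y}{j}.
\end{equation*}

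The main ingredient will be the well-known identity $H_n=\sum_{k=1}^{n}\frac{(-1)^{k+1}}{k}\binom{n}{k}$. I would verify it in one line by writing $1/k=\int_0^1 x^{k-1}\,dx$, swapping the (finite) sum with the integral, applying the binomial theorem to collapse the integrand to $(1-(1-x)^n)/x$, and then substituting $u=1-x$ to recognize $\int_0^1 (1-u^n)/(1-u)\,du=\sum_{i=0}^{n-1}\frac{1}{i+1}=H_n$.

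The finishing step is to apply this identity with $n=y-1$, multiply through by $y$, and invoke the Pascal-type relation $y\binom{y-1}{k}=(k+1)\binom{y}{k+1}$ (immediate from the factorial definition) to obtain
\begin{equation*}
yH_{y-1} \;=\; \sum_{k=1}^{y-1} \frac{(-1)^{k+1}(k+1)}{k}\binom{y}{k+1}.
\end{equation*}
Re-indexing with $j=k+1$ gives the desired formula. There is no real analytical obstacle here; the only point requiring a moment of care is the $y=0$ boundary, which the gamma-function convention handles cleanly, and the legitimacy of the sum/integral swap, which is trivial because the sum is finite.
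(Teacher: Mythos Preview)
Your proof is correct for non-negative integers, which is precisely what the subsequent application (computing $\E[g(Y)]$ for a Poisson variable $Y$) requires. Your route, however, differs from the paper's. The paper first repairs the discontinuity of $g$ at $y=0$ by setting $g_1(0)=-1$, then computes the Newton coefficients of $g_1$ via iterated forward differences: $\Delta[g_1](y)=1-\gamma+H_y$, then $\Delta^{(2)}[g_1](y)=1/(y+1)$, and hence $\Delta^{(j)}[g_1](0)=(-1)^j/(j-1)$ for $j\geq 2$; finally it shifts each coefficient by $(-1)^j$ to account for the single-point correction $g(0)-g_1(0)=1$. You bypass this forward-difference machinery entirely by invoking the finite identity $H_n=\sum_{k=1}^{n}\tfrac{(-1)^{k+1}}{k}\binom{n}{k}$ with $n=y-1$, multiplying through by $y$, and absorbing the factor via $y\binom{y-1}{k}=(k+1)\binom{y}{k+1}$. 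Your argument is shorter and fully elementary. The paper's Newton-series framing does gesture toward validity for non-integer $y\geq 0$ (by analogy with the classical digamma expansion it cites), whereas your truncation-to-a-finite-sum step is specific to integers; but since the lemma is only ever applied at integer points, nothing is lost.
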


\begin{proof}
The proof is similar to the derivation for the well-known Newton series
expansion of the digamma function \eqref{eqn:digamma:newton}. The function $g$ exhibits
a discontinuity at $y=0$, where the limit is $-1$. Let us
correct this discontinuity by defining a function $g_1$ which
is the same as $g$ except at point $y=0$, where we define $g_1(0)=-1$.
The function $g_1$ is continuous and well-defined at all $y \geq 0$,
which we now express as a Newton series expansion.
Recall that the Newton series expansion of (any function) $g_1$ 
around zero can be written as
\begin{equation} \label{eqn:newton}
g_1(y)=\sum_{j=0}^\infty c_j \binom{y}{j},
\end{equation}
where the coefficient $c_j$ is defined to be the $j$th 
forward difference of the function at zero, namely,
\begin{equation} \label{eqn:newtonCoeff}
c_j = \sum_{k=0}^j (-1)^{j-k} \binom{j}{k} g_1(k).
\end{equation}
For our particular choice of $g_1$,
the forward difference at point $y$ is, 
understanding $y H_{y-1}$ at $y=0$ by its limit $-1$,
\[
\Delta[g_1](y):=g_1(y+1)-g_1(y)=-\gamma+(y+1) H_y-yH_{y-1}=
-\gamma+y (H_{y}-H_{y-1})+H_y=1-\gamma+H_y.
\]
Taking the second forward difference, we then obtain
\[
\Delta^{(2)}[g_1](y)=\Delta[g_1](y+1)-\Delta[g_1](y)=
H_{y+1}-H_y=\frac{1}{y+1}.
\]
Thus, for any $j\geq 2$, the $j$th forward difference of
the function $g_1$ is the $(j-2)$nd forward difference of
the function $\frac{1}{y+1}$, which can in turn be written as
\[
\Delta^{(j)}[g_1](y)=\sum_{k=0}^{j-2} (-1)^{j-k} \binom{j-2}{k} \frac{1}{y+k+1}.
\]
One can verify by induction on $j$ that the right hand side is equal to
\[
(-1)^{j} (j-2)! \prod_{k=0}^{j-2} \frac{1}{y+k+1},
\]
which, at $y=0$ and for $j\geq 2$, 
simplifies to $(-1)^j/(j-1)$ and gives the value of $c_j$.
Plugging this result in \eqref{eqn:newton}, we conclude that
\[
g_1(y)=-1+(1-\gamma)y+\sum_{j=2}^\infty \frac{(-1)^j}{j-1} \binom{y}{j}.
\]
Since the functions $g$ and $g_1$ only differ at $y=0$, 
from \eqref{eqn:newtonCoeff} we see that the $j$th Newton series
coefficients for the function $g$ is $(-1)^j+c_j$, which, for $j \geq 2$,
is equal to $(-1)^j j/(j-1)$. 
The function can now be expanded as
\[
g(y)=-\gamma y + \sum_{j=2}^\infty \frac{(-1)^j j}{j-1} \binom{y}{j},
\]
as desired.
\end{proof}

As a corollary of the above lemma, we may derive the following.

\begin{coro} \label{coro:EgY}
Let $g$ be the function in \eqref{eqn:gPoi},
and $Y$ be a Poisson random variable with mean $\lam$. Then,
\[
\E[g(Y)]=\lam (E_1(\lam) +\log \lam), 
\]
where
\begin{equation} \label{eqn:Ei:def}
E_1(\lam)=\int_{1}^\infty \frac{e^{-\lam t}}{t}\, dt
\end{equation}
is the exponential integral function.
\end{coro}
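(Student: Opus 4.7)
The plan is to apply the Newton series expansion for $g$ from Lemma~\ref{lem:newton} and then evaluate the resulting series term-by-term against the factorial moments of the Poisson distribution.

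First, I would recall that for a Poisson random variable $Y$ with mean $\lambda$, the $j$th factorial moment has the clean closed form $\E\bigl[\binom{Y}{j}\bigr] = \lambda^j/j!$ for every integer $j \geq 0$. Using the Newton series
\[
g(y) = -\gamma y + \sum_{j=2}^{\infty} \frac{(-1)^j j}{j-1} \binom{y}{j}
\]
from Lemma~\ref{lem:newton}, I would take expectations under $Y \sim \mathrm{Poisson}(\lambda)$ and exchange sum and expectation to obtain
\[
\E[g(Y)] = -\gamma \lambda + \sum_{j=2}^{\infty} \frac{(-1)^j j}{j-1} \cdot \frac{\lambda^j}{j!} = -\gamma \lambda + \sum_{j=2}^{\infty} \frac{(-1)^j \lambda^j}{(j-1)\,(j-1)!}.
\]
After the change of index $k = j-1$, this becomes $-\gamma \lambda - \lambda \sum_{k=1}^{\infty} (-\lambda)^k/(k \cdot k!)$.

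Next, I would identify this last series with the standard power series representation of the exponential integral, namely
\[
E_1(\lambda) = -\gamma - \log \lambda - \sum_{k=1}^{\infty} \frac{(-\lambda)^k}{k \cdot k!},
\]
which is easily derived by integrating the series for $(e^{-\lambda t}-1)/t$ from $0$ to $1$ and comparing with the integral definition \eqref{eqn:Ei:def}. Solving for the tail sum and substituting above yields $\E[g(Y)] = -\gamma\lambda + \lambda(\gamma + \log \lambda + E_1(\lambda)) = \lambda(E_1(\lambda) + \log \lambda)$, as claimed.

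The only nontrivial step is justifying the interchange of expectation and infinite summation, since the signs alternate and the coefficients $j/(j-1)$ do not decay. I would handle this by noting that the factorial moments $\E\bigl[\binom{Y}{j}\bigr] = \lambda^j/j!$ decay super-exponentially, so the series $\sum_j \frac{j}{j-1}\cdot \frac{\lambda^j}{j!}$ converges absolutely; hence Fubini's theorem (applied to counting measure against the Poisson law) legitimates the swap. As a sanity check, one can verify both the $\lambda \to 0$ limit (where $E_1(\lambda) + \log\lambda \to -\gamma$ and the expectation is dominated by the $y=1$ atom contributing $\lambda \cdot \psi(1) = -\gamma\lambda$) and the growth rate for $\lambda \to \infty$ (where $E_1(\lambda)$ decays exponentially and the expectation behaves like $\lambda \log\lambda$, matching Jensen applied to $y\log y$).
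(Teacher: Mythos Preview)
Your proof is correct and follows essentially the same approach as the paper: apply Lemma~\ref{lem:newton}, take expectations term-by-term using the Poisson factorial moments $\E\bigl[\binom{Y}{j}\bigr]=\lambda^j/j!$, re-index, and identify the resulting series with the standard expansion $E_1(\lambda)=-\gamma-\log\lambda-\sum_{k\ge 1}(-\lambda)^k/(k\,k!)$. Your explicit Fubini justification and the asymptotic sanity checks are additions the paper omits, but the underlying argument is the same.
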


\begin{proof}
The main ingredient to use is the simple fact that the $j$th
factorial moment of $Y$ is given by
\[
\E\left[ j! \binom{Y}{j} \right] = \lam^j.
\]
Combining this with the result of Lemma~\ref{lem:newton} immediately
gives
\begin{equation} \label{eqn:EgY}
\E[g(Y)]= -\gamma \lam+\sum_{j=2}^\infty \frac{(-\lam)^j j}{(j-1)j!}
= -\gamma \lam+\sum_{j=1}^\infty \frac{(-\lam)^{j+1}}{j j!}
\end{equation}
We now recall the following basic power series expansion for the
exponential integral function \cite[p.~229]{ref:Ab72}: For all $x > 0$,
\[
E_1(x) = -\gamma -\log x -\sum_{j=1}^\infty \frac{(-x)^j}{j j!},
\]
where $\gamma$ is the Euler-Mascheroni constant.
Using this expansion in \eqref{eqn:EgY} yields
\[
\lam E_1(\lam) + \gamma \lam = -\lam \log \lam + \E[g(Y)] + \gamma \lam, 
\]
which completes the proof.
\end{proof}

The result of Corollary~\ref{coro:EgY} immediately implies that
the choice of $g$ in \eqref{eqn:gPoi} satisfies 
\eqref{eqn:cvx1:dualFPoiab} with $a=b=0$, as we have
\begin{equation} \label{eqn:EgYx}
\lam x \log(\lam x)-\E[g(Y_x)]=-\lam x E_1(\lam x) \leq 0,
\end{equation}
from the fact that the exponential integral function $E_1(x)$ is,
by its integral definition, positive for all $x>0$.
The value of $\lam x E_1(\lam x)$ exponentially decays down to zero
(see Figure~\ref{fig:Ei}),
and therefore, \eqref{eqn:cvx1:dualFPoiab} sharply holds
for the choice of $g$ in \eqref{eqn:gPoi}.
\begin{figure}[t!]
\begin{center}
\includegraphics[height=2in]{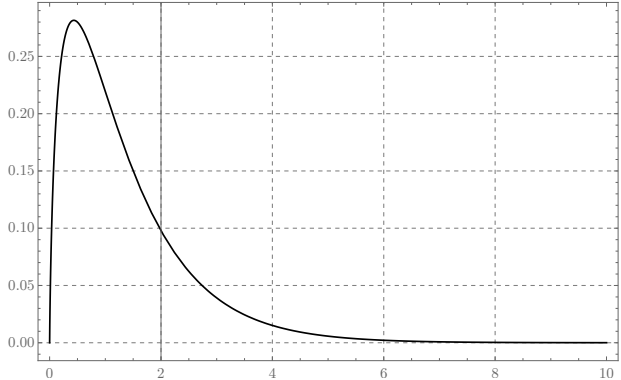}
\end{center}
\caption{The function $x E_1(x)$.}
\label{fig:Ei}
\end{figure}
The resulting distribution $Y$ can now be rewritten,
from \eqref{eqn:PoiY}, as
\begin{equation} \label{eqn:PoiYre}
\Pr[Y=y]=\begin{cases}
y_0 & \text{if $y=0$,} \\
y_0\, {\exp(y\psi(y)) (q/e)^y}/{y!} & \text{if $y>0$.}
\end{cases}
\end{equation}
We call the distribution defined by \eqref{eqn:PoiYre} the
\emph{digamma distribution} due to the digamma term in the exponent.
Combining \eqref{eqn:KLPoiGeneral} and 
\eqref{eqn:EgYx} gives us
\begin{equation} \label{eqn:PoiKL}
\KL{Y_x}{Y}\leq -\log y_0-\lam x \log q= -\log y_0- \E[Y_x] \log q, 
\end{equation}
and thus, Theorem~\ref{thm:meanlimited} gives the upper bound
\begin{equation}
\capa(\ch_\mu(\cD)) \leq -\mu \log q-\log y_0,
\end{equation}
where $q$ (and accordingly, the normalizing constant $y_0$)
must be chosen so that the mean constraint $\E[Y]=\mu$ is
satisfied (note that this choice is unique for any given $\mu>0$).
Therefore, understanding the upper bound requires a characterization 
of the relationship between $\mu, q$, and $y_0$.
Since the probability mass function defining $Y$ exhibits
an exponential decay, the values $\mu$ and $y_0$, as a function of
$q$, can be numerically computed efficiently to any
desired accuracy. Plots of these functions, for both distributions
\eqref{eqn:PoiYfirst} and \eqref{eqn:PoiYre}, are depicted
in Figure~\ref{fig:PoiMu}. Moreover, their numerical estimates
are listed, for various choices of $q \in (0,1)$, in Table~\ref{tab:PoiMeanY}.
\begin{figure}[t!]
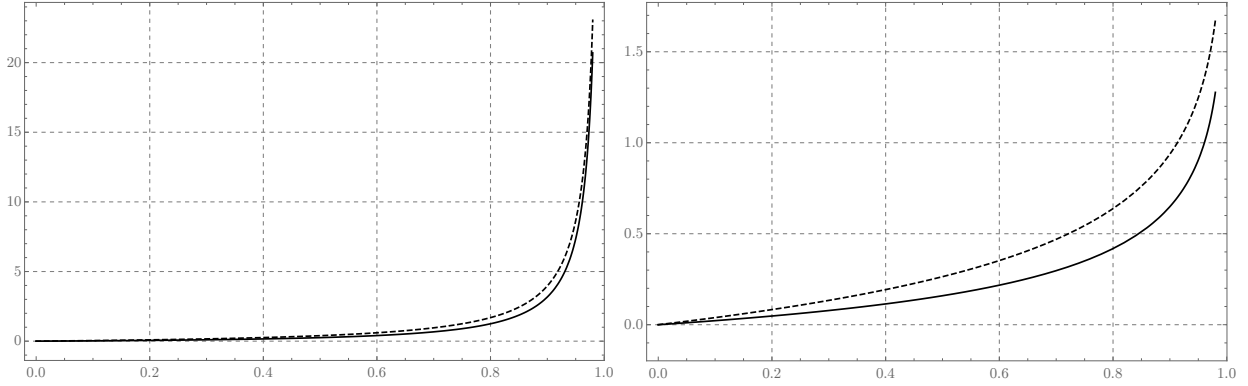

\begin{center}
\includegraphics[height=2in]{figPoiMuA}
\includegraphics[height=2in]{figPoiMuB}
\end{center}
\caption{Plots of $\mu=\E[Y]$ (left) and $\log(1/y_0)$ (right),
for the distribution of $Y$ 
in \eqref{eqn:PoiYfirst} (dashed) and the digamma distribution
\eqref{eqn:PoiYre} (solid), as
functions of $q$.}
\label{fig:PoiMu}
\end{figure}
We summarize the result of this section in the following\footnote{%
An appealing aspect of assigning the mean constraint to the output,
rather than the input, distribution is that such results as 
Theorem~\ref{thm:PoiChUpper} become independent of the
channel parameter $\lam$. Therefore, when we apply this result to
obtain capacity upper bounds for the Poisson-repeat channel, 
the deletion probability $p=1-\Exp{-\lam}$ appears only in the final expression 
to be optimized  \eqref{eqn:Poi:capa:general}.
}:
\begin{thm} \label{thm:PoiChUpper}
Let $q\in(0,1)$ be a given parameter and $Y$ be a random
variable distributed according to the digamma distribution \eqref{eqn:PoiYre},
for an appropriate normalizing constant $y_0$. Let
$\mu := \E[Y]$, and $\cD$ denote any Poisson distribution with positive mean.
Then, capacity of the mean-limited Poisson channel $\ch_\mu(\cD)$
satisfies 
\begin{equation} \label{eqn:PoiMeanLimCapUpperA}
\capa(\ch_{\mu}(\cD)) \leq -\mu \log q-\log y_0.
\end{equation}
\end{thm}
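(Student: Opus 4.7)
The plan is to apply Theorem~\ref{thm:meanlimited} to the channel $\ch_{\mu}(\cD)$ with the candidate output distribution $Y$ defined by \eqref{eqn:PoiYre} and the specific dual parameters $\nu_1 := -\log q$ and $\nu_0 := -\log y_0$. With these choices, the conclusion $\capa(\ch_{\mu}(\cD)) \leq \nu_1 \mu + \nu_0 = -\mu \log q - \log y_0$ follows immediately, so the only thing that actually requires verification is the dual-feasibility inequality \eqref{eqn:cvx1:dualF}, namely $\KL{Y_x}{Y} \leq -\log q \cdot \E[Y_x] - \log y_0$ for every non-negative integer $x$.

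To set this up, I would first observe that $Y_x$ is Poisson with mean $\lam x$, where $\lam = \E[\cD]>0$, so in particular $\E[Y_x] = \lam x$. The distribution $Y$ in \eqref{eqn:PoiYre} fits the template \eqref{eqn:PoiY} with $f(y) = g(y) - \log y!$ and the specific digamma choice $g(y) = y\psi(y)$ from \eqref{eqn:gPoi}. Plugging this into the general KL expansion \eqref{eqn:KLPoiGeneral} collapses the feasibility check to showing that the only nonlinear-in-$x$ contribution, namely $\lam x \log(\lam x) - \E[g(Y_x)]$, is non-positive; the remaining terms already take the required affine form $-\log y_0 - \lam x \log q$.

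The heart of the argument is evaluating $\E[g(Y_x)]$ in closed form. Here I would directly invoke Corollary~\ref{coro:EgY} applied to $Y_x$ (with its parameter $\lam$ replaced by $\lam x$) to obtain $\E[g(Y_x)] = \lam x\bigl(E_1(\lam x) + \log(\lam x)\bigr)$. Consequently $\lam x \log(\lam x) - \E[g(Y_x)] = -\lam x E_1(\lam x) \leq 0$, since $E_1$ is manifestly positive on $(0,\infty)$ by its integral representation \eqref{eqn:Ei:def}. Substituting back into \eqref{eqn:KLPoiGeneral} yields $\KL{Y_x}{Y} \leq -\log y_0 - \lam x \log q = \nu_0 + \nu_1 \E[Y_x]$ for every $x \in \NN$, which is precisely \eqref{eqn:cvx1:dualF}. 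Theorem~\ref{thm:meanlimited} then delivers the claimed capacity bound.

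The only genuinely non-trivial step is obtaining a clean closed form for $\E[y\psi(y)]$ under a Poisson law, and this has already been done in Lemma~\ref{lem:newton} and Corollary~\ref{coro:EgY} via the Newton series expansion of $y\psi(y)$ combined with the identity $\E[j!\binom{Y}{j}] = \lam^j$ for factorial moments of Poisson; the exponential-integral series then supplies the elegant cancellation that makes the nonlinear residue exactly $-\lam x E_1(\lam x)$. Once these ingredients are in hand, the proof is a short bookkeeping synthesis. I would also note in passing that $q \in (0,1)$ is needed for the series defining $y_0$ to converge (cf.\ \eqref{eqn:Poi:StirApprox:alternative}), which is exactly the hypothesis of the theorem, and that $\mu$ can be tuned to any desired positive value by choosing $q$ appropriately.
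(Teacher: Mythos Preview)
Your proposal is correct and follows essentially the same approach as the paper: compute the KL divergence via \eqref{eqn:KLPoiGeneral}, invoke Corollary~\ref{coro:EgY} to reduce the nonlinear term to $-\lam x E_1(\lam x) \leq 0$, and then apply Theorem~\ref{thm:meanlimited} with $\nu_1=-\log q$, $\nu_0=-\log y_0$. The paper's argument is exactly this synthesis, culminating in \eqref{eqn:PoiKL}.
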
 

Figure~\ref{fig:PoiMeanLimCap} depicts the capacity upper bounds attained by the
above result, as well as a similar result when the dual-feasible distribution for $Y$
is defined by \eqref{eqn:PoiYfirst}.

\begin{figure}[t!]
\begin{center}
\includegraphics[height=2in]{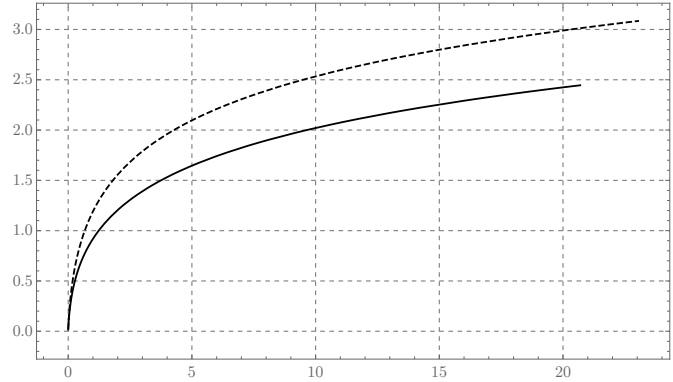}
\end{center}
\caption{Capacity upper bounds (measured in bits) 
of Theorems \ref{thm:PoiChUpper} 
and \ref{thm:PoiChUpperAnalytic} 
for the mean-limited Poisson channel in terms of $\mu=\E[Y]$.
The solid plot is the upper bound given by \eqref{eqn:PoiMeanLimCapUpperA}
(when the digamma distribution \eqref{eqn:PoiYre} is used for $Y$).
The dashed plot is given by \eqref{eqn:PoiMeanLimCapUpperB} 
(when \eqref{eqn:PoiYfirst} is used for $Y$).
The first
inequality in \eqref{eqn:PoiMeanLimCapUpperB} and its analytic upper bound
estimate would completely overlap and be indistinguishable in this plot.
Likewise, the upper bound estimates of Corollary~\ref{coro:Poi:negBinApprox}
for the digamma distribution \eqref{eqn:PoiYre} would result in essentially the same plot as the exact one above.
}
\label{fig:PoiMeanLimCap}
\end{figure}

\subsection{The truncation effect of replacing logarithm with harmonic numbers}
\label{sec:Poi:truncation}

The aim of this section is to provide an intuitive explanation of why the choice of
the digamma distribution \eqref{eqn:PoiYre} for the distribution of $Y$, that essentially replaces the logarithmic term $\log y$
in the exponent of $\exp(y \log y)$ in \eqref{eqn:PoiYfirst} with harmonic
numbers (equivalently, $\psi(y)$), results in improved capacity upper bounds.

For any analytic choice of $g(y)$ in  \eqref{eqn:cvx1:dualFPoiab}, we can write down 
the Taylor series expansion of $g$ around $\mu$ as
\[
g(y) = g(\mu)+(y-\mu) g'(\mu)+\frac{1}{2}(y-\mu)^2 g''(\mu)+\sum_{j=3}^\infty (y-\mu)^j \frac{g^{(j)}(\mu)}{j!}.
\]
Let $Y_x$ be Poisson-distributed with mean $\lam x$. 
Assuming that the above series converges\footnote{Convergence issues may be disregarded for large values of $y$
that have negligible contribution to the probability mass function of $Y_x$.}, we may take
the expectation of the above and, and noting
that the variance of $Y_x$ is equal to $\lam x$, and letting $\mu := \lam x$, write
\[
\E[g(Y_x)] = g(\lam x)+\frac{1}{2} \lam x g''(\lam x)+\sum_{j=3}^\infty \mu_j \frac{g^{(j)}(\lam x)}{j!},
\]
where $\mu_j := \E[(Y_x-\mu)^j]$. Now, with $g(y)=y \log y$ as in \eqref{eqn:PoiYfirst},
we would have $g''(\lam x) = 1/(\lam x)$, and for $j \geq 3$, $g^{(j)}(\lam x)=O(1/x^{j-1})$ so that,
for large $x$, we have the asymptotic behavior
\[
\E[g(Y_x)] = g(\lam x)+\frac{1}{2}+o(1).
\]
\begin{figure}[t!]
\begin{center}
\includegraphics[height=2in]{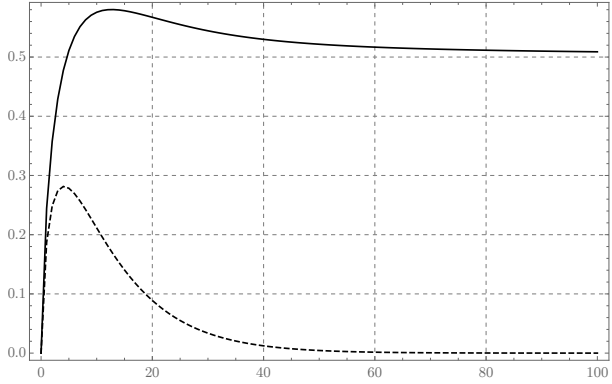}
\end{center}
\caption{Plot of the KKT gap to equality in \eqref{eqn:cvx1:dualFPoiab},
as a function of $x$, attained by the distribution \eqref{eqn:PoiYfirst} (solid) 
and the digamma distribution \eqref{eqn:PoiYre} (dashed) for deletion probability $d=0.9$ (the plots would
simply be scaled in $x$ for other deletion parameters). 
By Corollary~\ref{coro:EgY}, the second plot
coincides with the function $\lam x E_1(\lam x)$, where $\lam=-\log d$.
}
\label{fig:PoiKLGap}
\end{figure}
Therefore, while $g(y)$ satisfies \eqref{eqn:cvx1:dualFPoiab} with $a=b=0$ for all $x$
(and with equality for $x=0$),
the inequality exhibits an asymptotic constant gap of $1/2$ for large $x$
(see Figure~\ref{fig:PoiKLGap} for a depiction).
As we saw in Section~\ref{sec:Poi:PoiMeanLim}, specifically \eqref{eqn:EgYx}, this gap is eliminated by choosing
$g(y)$  according to \eqref{eqn:gPoi}. While this fact is verified in \eqref{eqn:EgYx},
it is worthwhile to provide a systematic way of deriving a choice of $g$ that exhibits no asymptotic gap.
In order to do so, recall that the ``ultimate goal'' in satisfying the KKT conditions of Theorem~\ref{thm:meanlimited} would be to exhibit a function $g(y)$ for which
 \eqref{eqn:EgYx} is satisfied with equality for all $x \geq 0$.
In general, this may be impossible to achieve with a choice of $g$ that 
does not grow faster than $y \log y+O(y)$ (so that the resulting expression
 \eqref{eqn:PoiY} can be normalized to a valid probability distribution).
 Nevertheless, we present a ``truncation technique'' that obtains an 
 approximate guarantee, 
 in the sense that the gap  in \eqref{eqn:cvx1:dualFPoiab}
 exponentially decays as $x$ grows,  while maintaining a controllable choice for $g$.

Assuming that  \eqref{eqn:cvx1:dualFPoiab}  holds with equality and $a=b=0$, 
we must have
$\E[g(Y_x)]=\mu \log \mu$, where $\mu :=\lam x$.
The right hand side, using the integral expression
\[
\log \mu = \int_{0}^\infty \frac{e^{-t}-e^{-\mu t}}{t}\, dt,
\]
and the Taylor expansion of the exponential function, can be written as 
\begin{align}
\mu \log \mu &= \int_{0}^\infty  \left( \frac{\mu e^{-t}}{t} + \sum_{j=1}^\infty \frac{\mu^j (-t)^{j-2}}{(j-1)!}\right)dt \label{eqn:Poi:uLogu:a} \\
&= 
\int_{0}^1  \left( \frac{\mu e^{-t}}{t} + \sum_{j=1}^\infty \frac{\mu^j (-t)^{j-2}}{(j-1)!}\right)dt +
\mu \int_{1}^\infty \frac{e^{-t}-e^{-\mu t}}{t}\, dt \nonumber \\
&= 
\int_{0}^1  \left( \frac{\mu e^{-t}}{t} + \sum_{j=1}^\infty \frac{\mu^j (-t)^{j-2}}{(j-1)!}\right)dt +
\mu E_1(1)-\mu E_1(\mu), \label{eqn:Poi:uLogu:b}
\end{align}
where $E_1(\cdot)$ denotes the exponential integral function \eqref{eqn:Ei:def}, and $E_1(1) \approx 0.219384$.
Using \eqref{eqn:Poi:uLogu:a}, and noting that the factorial moments of the Poisson distribution are given by
$\E\left[\binom{Y_x}{j}\right]=\mu^j/j!$, the following function
\begin{equation} \label{eqn:Poi:tilG}
\tilde{g}(y) := \int_{0}^\infty \left( \frac{y e^{-t}}{t} + \sum_{j=1}^\infty j \binom{y}{j} (-t)^{j-2}\right)dt
= y \int_{0}^\infty \left( \frac{ e^{-t}-1}{t} + \sum_{j=1}^\infty \binom{y-1}{j} (-t)^{j-1}\right)dt,
\end{equation}
interpreted formally, is the unique solution to the functional equation
\begin{equation} \label{eqn:Poi:functional}
(\forall x \geq 0)\ \E[\tilde{g}(y)]=\mu \log u.
\end{equation}
However, the above integral definition of $\tilde{g}(y)$ does not converge (recall that
$\int_0^\infty \Exp{-t} dt/t$ is divergent and that the inner summation in \eqref{eqn:Poi:tilG} only has a finite number
of terms for any integer $y > 0$).  To address this issue, we write down a function 
whose expectation sharply approximates the desired value \eqref{eqn:Poi:uLogu:b}
for large $\mu$. In order to do so, it suffices to note that the term $ \mu E_1(\mu)$
in \eqref{eqn:Poi:uLogu:b} is exponentially small in $\mu$ and can thus be ignored in the
approximation. Now, consider
a truncated variation of $\tilde{g}$ defined, by simply truncating the upper limit of the 
integration at $t=1$, as
\begin{align} \label{eqn:Poi:tilGtr}
\hat{g}(y) &:= 
y \int_{0}^1 \left( \frac{ e^{-t}-1}{t} + \sum_{j=1}^\infty \binom{y-1}{j} (-t)^{j-1}\right)dt  \nonumber \\
&= y(-\gamma-E_1(1))   + y \int_{0}^1 \sum_{j=1}^\infty \binom{y-1}{j} (-t)^{j-1}dt \nonumber  \\
&= y(-\gamma-E_1(1))   - y \sum_{j=1}^\infty \binom{y-1}{j} \frac{(-1)^{j}}{j} \nonumber   \\
&= -y E_1(1)   + y \psi(y), \nonumber 
\end{align}
 where, in the above, $\gamma$ is the Euler-Mascheroni constant,
 $\psi(y)$ is the digamma function (with $y \psi(y)$ to be understood as zero for $y=0$),
and we have used the Newton series expansion of the digamma function
\begin{equation} \label{eqn:digamma:newton}
\psi(y+1) = -\gamma-\sum_{j=1}^\infty \frac{(-1)^j}{j} \binom{y}{j}.
\end{equation}
From \eqref{eqn:Poi:uLogu:b}, we see that
\begin{equation} \label{eqn:Poi:trunc}
\E[\hat{g}(Y_x)]= \mu \log \mu -\mu E_1(1) + \mu E_1(\mu), 
\end{equation}
so that the function $\hat{g}(y) + yE_1(1)$ provides the desired approximation.
This is precisely the function $g$ that we defined in 
\eqref{eqn:gPoi}.

\begin{remark}
We could have truncated the integral upper limit to any constant $c \in [0,1]$
and yet obtain an exponentially sharp approximation of $\mu \log \mu$ for
large $\mu$ (choosing $c>1$, though, would result in an exponential growth of 
$\hat{g}(y)$ in $y$ and, subsequently, an expression for the probability mass function of $Y$ that
cannot be normalized to a valid distribution). Among these choices, truncation at $c=1$ provides the
closest possible approximation.
\end{remark}

\begin{remark} \label{rem:Poi:noFull}
The observation that the expression for $\tilde{g}(y)$, the
solution to the functional equation \eqref{eqn:Poi:functional},
does not converge shows that the KKT equality conditions \eqref{eqn:cvx1:KKT} of Theorem~\ref{thm:meanlimited}
cannot be simultaneously satisfied for all $x \geq 0$. In other words, for any mean-limited Poisson channel,
there is no input distribution $X$ with
full support on non-negative integers that achieves the capacity of the channel.
However, the optimal $X$ must have infinite support (since otherwise, for large enough $x$,
the KL divergence on the left hand side of \eqref{eqn:cvx1:dualF} becomes infinite and the KKT
conditions would be violated). The claim that the optimal $X$ cannot have full 
support makes intuitive sense. Intuitively, if the input distribution has nonzero
support on some $x>0$, then the corresponding channel output is expected to be $\lam x$ with a variance
of $\lam x$. Therefore, any input $x'$ for which $\lam |x-x'|$ is too close to the standard
deviation $\sqrt{\lam x}$ would cause confusion at the decoder and should be avoided. 
Roughly, this means that
if $x$ is on the support of the transmitter's input distribution $X$, 
the next symbol in the codebook should 
be picked at $x+\Omega(\sqrt{x/\lam})$.
\end{remark}

\subsection{Analytic estimates}
\label{sec:Poi:estimate}

It is desirable to provide sharp upper and lower bound estimates on the mean
and normalizing constants of the distribution \eqref{eqn:PoiYfirst}
and the digamma distribution \eqref{eqn:PoiYre} in terms of elementary or standard special functions,
and that is what we achieve in this section. 

\subsubsection{Estimates by standard special functions}
\label{sec:Poi:estimate:special}

First, we obtain sharp estimates on the parameters of the distribution \eqref{eqn:PoiYfirst}
in terms of standard special functions. Since the refined digamma distribution 
\eqref{eqn:PoiYre} achieves better capacity upper bounds than \eqref{eqn:PoiYfirst},
and, as we see in the next section,
we are able to estimate the parameters of the former sharply in terms
of elementary functions, the result of this section should be regarded
as a side result. However, the techniques presented here will be used for the more
complex problem of approximating the inverse binomial distribution, for the
deletion channel problem, in terms of standard special functions. 
It is thus natural to demonstrate the approximation techniques for the
mean-limited Poisson channel first before approaching the slightly more complex
case of the binomial channel.


We recall the standard special function 
Lerch transcendent (cf.~\cite[p.~27]{ref:EMOT53}), given by
\begin{equation} \label{eqn:Lerch}
\Phi (z,s,\alpha ):=\sum _{k=0}^{\infty }{\frac {z^{k}}{(k+\alpha )^{s}}}
= {\frac  {1}{\Gamma (s)}}\int _{0}^{\infty }{\frac  {t^{{s-1}}e^{{-\alpha t}}}{1-ze^{{-t}}}}\,dt.
\end{equation}

The approximation of \eqref{eqn:PoiYfirst} in terms of the above function 
is given by the following theorem, that we prove in Appendix~\ref{app:thm:PoiChUpperAnalytic}:
\begin{thm} \label{thm:PoiChUpperAnalytic}
Let $q\in(0,1)$ be a given parameter and $Y$ be a random
variable distributed according to \eqref{eqn:PoiYfirst},
for an appropriate normalizing constant $y_0$. Let
$\mu := \E[Y]$, and consider constants 
$\lsigma := 1/6$ and $\usigma := 0.177 \approx 16/90$.
Define special functions
$S_0(q,\sigma) := \Phi(q,1/2,1+\sigma)$ and
$S_1(q,\sigma) := \frac{q}{\sqrt{2\pi}} \Phi(q,-1/2,1+\sigma) -\sigma S_0(q,\sigma)$.
Then,
\begin{enumerate}
\item We have the bounds 
\begin{equation} \label{eqn:PoiEstimates}
\begin{aligned}
y_0 &\geq 1/(1+S_0(q,\lsigma)) =: \underline{y_0}, &
y_0 &\leq 1/(1+S_0(q,\usigma)) =: \overline{y_0}, \\
\mu &\geq S_1(q,\usigma)/(1+S_0(q,\lsigma)) =: \underline{\mu},  &
\mu &\leq S_1(q,\lsigma)/(1+S_0(q,\usigma)) =: \overline{\mu}.
\end{aligned}
\end{equation}
%

\item Let $\cD$ denote any Poisson distribution with positive mean.
Then, capacity of the mean-limited Poisson channel $\ch_\mu(\cD)$
satisfies
\begin{equation} \label{eqn:PoiMeanLimCapUpperB}
\capa(\ch_\mu(\cD)) \leq -\mu \log q-\log y_0 \leq
-\overline{\mu}\log q-\log \underline{y_0}.
\end{equation}

\end{enumerate} \qed
\end{thm}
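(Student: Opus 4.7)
The two parts of the theorem are tightly linked: once item 1 provides the sandwich bounds on $y_0$ and $\mu$, item 2 is immediate. Dual feasibility of \eqref{eqn:PoiYfirst} was already established in Section~\ref{sec:Poi:PoiMeanLim} through Jensen's inequality applied to $g_0(y)=y\log y$, yielding $\KL{Y_x}{Y}\le -\log y_0 + (-\log q)\,\E[Y_x]$. Theorem~\ref{thm:meanlimited} with $\nu_0=-\log y_0$ and $\nu_1=-\log q$ then gives the first inequality $\capa(\ch_\mu(\cD)) \le -\mu\log q - \log y_0$; substituting $\mu\le \overline{\mu}$ and $y_0\ge \underline{y_0}$ (using $-\log q > 0$) yields the second.

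The real content is item 1, and the plan is to approximate $y^y(q/e)^y/y!$ by a Lerch-friendly surrogate. Stirling writes this quantity as $q^y/(\sqrt{2\pi y}\,e^{\lambda_y})$ with $\lambda_y$ the Stirling remainder, and the algebraic identity
\[
\frac{1}{\sqrt{2\pi y}\,e^{\lambda_y}} = \frac{1}{\sqrt{2\pi(y+\sigma^{\ast}(y))}},\qquad \sigma^{\ast}(y):=y\bigl(e^{2\lambda_y}-1\bigr),
\]
reduces the task to uniformly sandwiching $\sigma^{\ast}(y)$ by constants. A Stirling-series expansion gives $\sigma^{\ast}(y)=\tfrac{1}{6}+\tfrac{1}{72y}+O(1/y^2)$, so $\sigma^{\ast}(y)\downarrow \tfrac{1}{6}$; the maximum is at $y=1$, where $\sigma^{\ast}(1)=e^2/(2\pi)-1\approx 0.17604 < 16/90$. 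Hence $\lsigma=\tfrac{1}{6}\le \sigma^{\ast}(y)\le \usigma=16/90$ uniformly in $y\ge 1$, and
\[
\frac{q^y}{\sqrt{2\pi(y+\usigma)}}\ \le\ \frac{y^y(q/e)^y}{y!}\ \le\ \frac{q^y}{\sqrt{2\pi(y+\lsigma)}}.
\]
Summing over $y\ge 1$ and reindexing turns $1/y_0-1$ into $\tfrac{q}{\sqrt{2\pi}}\Phi(q,1/2,1+\sigma)$ on each side, yielding $\underline{y_0}$ and $\overline{y_0}$ (with $S_0$ read as absorbing the $q/\sqrt{2\pi}$ factor needed for consistent normalization). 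For the mean, split $y/\sqrt{y+\sigma}=\sqrt{y+\sigma}-\sigma/\sqrt{y+\sigma}$ and reindex to obtain
\[
\sum_{y\ge 1} y\,\frac{q^y}{\sqrt{2\pi(y+\sigma)}}=\frac{q}{\sqrt{2\pi}}\,\Phi(q,-1/2,1+\sigma)-\sigma\cdot\frac{q}{\sqrt{2\pi}}\,\Phi(q,1/2,1+\sigma)=S_1(q,\sigma).
\]
Combining with the appropriate $y_0$-bound (choosing $\lsigma$ in the $S_1$ numerator and $\usigma$ in the $S_0$ denominator for $\overline{\mu}$, reversed for $\underline{\mu}$) produces the stated expressions.

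The main obstacle is the uniform lower bound $\sigma^{\ast}(y)\ge 1/6$: the Robbins inequality $\lambda_y>1/(12y+1)$ just fails at $y=1$, giving only $\approx 0.1663$. One must either invoke one further term of the Stirling asymptotic series, $\lambda_y\ge 1/(12y)-1/(360y^3)$, or appeal directly to the exact value $\lambda_1=1-\tfrac{1}{2}\log(2\pi)$. A similar finite-plus-asymptotic split is needed to confirm the monotonicity of $\sigma^{\ast}$ (so that the maximum is indeed at $y=1$), making the argument elementary but technically fiddly.
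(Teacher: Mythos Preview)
Your approach is essentially identical to the paper's: both bound $y!$ above and below by the shifted Stirling surrogate $\sqrt{2\pi(y+\sigma)}(y/e)^y$ with $\sigma\in\{\lsigma,\usigma\}$, sum the resulting sandwich $q^y/\sqrt{2\pi(y+\sigma)}$ over $y\ge 1$ to produce Lerch transcendents, and read off the bounds on $y_0$ and $\mu$ (your parenthetical about the missing $q/\sqrt{2\pi}$ in the theorem's definition of $S_0$ is on point---the appendix proof carries that factor). The only notable difference is that you attempt an analytic justification of the upper constant $\usigma$ via monotonicity of $\sigma^\ast(y)=y(e^{2\lambda_y}-1)$ and the exact value $\sigma^\ast(1)=e^2/(2\pi)-1$, whereas the paper cites \cite{ref:Gos78,ref:Nem11} for $\lsigma=1/6$ and explicitly admits that $\usigma=0.177$ has only been validated numerically.
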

As demonstrated in Figure~\ref{fig:PoiMuLerch}, the
expressions in \eqref{eqn:PoiEstimates} provide
remarkably sharp upper and lower bound estimates on the normalizing constant $y_0$
and the expectation $\mu$, which are accurate within a
multiplicative factor of about $1 \pm 0.004$ for all $q\in(0,1)$.

\begin{figure}[t!]
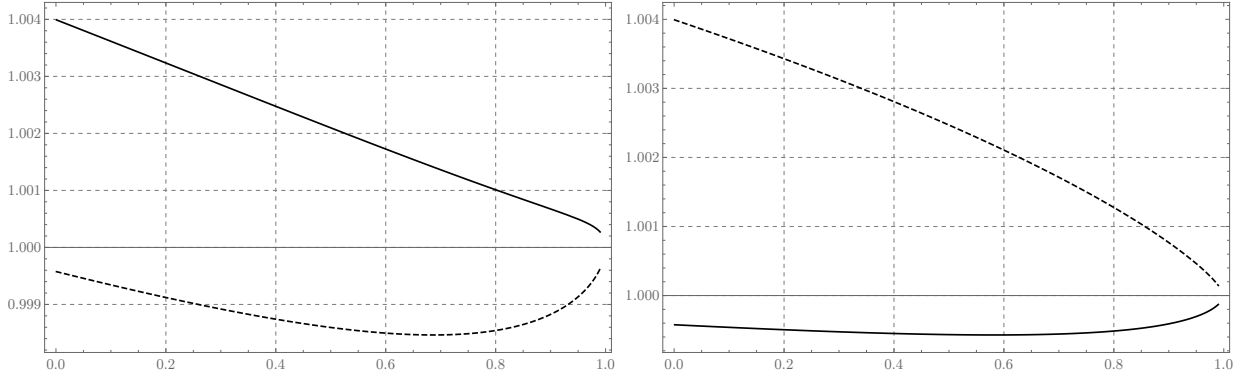

\begin{center}
\includegraphics[width=0.49 \columnwidth]{figPoiMuLerch} 
\includegraphics[width=0.49 \columnwidth]{figPoiLogY0Lerch}
\end{center}
\caption{Quality of the approximations in \eqref{eqn:PoiEstimates}
as a function of $q$. Left: $\overline{\mu}/\mu$ (solid) and
$\underline{\mu}/\mu$ (dashed). Right:
$(\log \overline{y_0})/(\log y_0)$ (solid) and
$(\log \underline{y_0})/(\log y_0)$ (dashed).
}
\label{fig:PoiMuLerch}
\end{figure}

\subsubsection{Estimate by the negative binomial distribution}
\label{sec:Poi:estimate:negBin}

In Section~\ref{sec:Poi:estimate:special}, we obtained
sharp estimates on the mean and the normalizing constant of the
distribution $Y$ in \eqref{eqn:PoiYfirst} in terms of standard
special functions. In this section, we provide
similar estimates, albeit not as sharp, for the
distribution \eqref{eqn:PoiYfirst} in terms of \emph{elementary} functions.

Recall that
a negative binomial distribution of order $r$ with ``success probability'' 
$q \in (0,1)$ 
is defined by the 
probability mass function
\begin{equation} \label{eqn:negbin}
\nbin_{r,q}(y)=\binom{y+r-1}{y} (1-q)^r q^y,\quad y=0,1,\ldots,
\end{equation}
and has mean $\mu=qr/(1-q)$ \cite[Section~5.3]{ref:DS12}. 
The asymptotic behavior of \eqref{eqn:negbin} at large $y$
can be understood by the Stirling approximation
$\Gamma(1+y) \sim \sqrt{2 \pi y} (y/e)^y$; namely,
\begin{align}
\nbin_{r,q}(y) &= \frac{\Gamma(y+r) (1-q)^r q^y}{\Gamma(y+1) \Gamma(r)} \nonumber\\
&\sim \frac{(1-q)^r q^y}{\Gamma(r)} \sqrt{\frac{y+r}{y+1}} e^{1-r} 
\frac{(y+r-1)^{y+r-1}}{y^y} \nonumber \\
&\sim \frac{(1-q)^r}{\Gamma(r)} e^{1-r} \left(1+\frac{r-1}{y}\right)^y q^y y^{r-1} \nonumber \\
&\sim \frac{(1-q)^r}{\Gamma(r)}\, q^y y^{r-1}  \label{eqn:negbin:asym:general}
\end{align}

\noindent Throughout this section, we focus
on the special case $r=1/2$, so
\eqref{eqn:negbin:asym:general} becomes
\begin{equation} \label{eqn:negbin:asym}
\nbin_{r,q}(y) \sim \sqrt{\frac{1-q}{\pi}} \, \frac{q^y}{\sqrt{y}}.
\end{equation}

We prove the following key estimate on the binomial coefficient $\binom{y-1/2}{y}$
that is used to provide accurate estimates on the parameters of the inverse
binomial distribution:

\newcommand{\gup}{{\overline{\gamma}}}
\newcommand{\gdown}{{\underline{\gamma}}}
\begin{lem} \label{lem:Poi:negBinApprox}
Let $\gdown := {2}/{e^{1+\gamma}} \approx 0.413099$ and 
$\gup := 1/\sqrt{2 e} \approx 0.428882$,
where $\gamma \approx 0.57721$ is the Euler-Mascheroni constant. Then,
for all $y\geq 1$,
\[
 \binom{y-1/2}{y}\, \gdown \leq  \frac{\exp(y \psi(y)-y)}{y!} \leq \binom{y-1/2}{y} \,\gup.
\]
\end{lem}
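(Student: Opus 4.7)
The plan is to reduce the claim to showing that a single explicit ratio lies between two constants. Using the Gamma-function definition $\binom{y-1/2}{y} = \Gamma(y+1/2)/(y!\sqrt{\pi})$, the claimed double inequality is equivalent to $\gdown \le \phi(y) \le \gup$, where
\[
\phi(y) := \frac{\sqrt{\pi}\,\exp(y\psi(y)-y)}{\Gamma(y+1/2)}.
\]
The argument then has three ingredients: evaluate $\phi(1)$, compute $\lim_{y\to\infty}\phi(y)$, and establish that $\phi$ is monotone increasing on the positive integers.

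First, using $\psi(1)=-\gamma$ and $\Gamma(3/2)=\sqrt{\pi}/2$, I obtain immediately $\phi(1) = 2e^{-1-\gamma} = \gdown$. For the limit, I apply the asymptotic $\psi(y) = \log y - 1/(2y) + O(y^{-2})$, which yields $\exp(y\psi(y)-y) = e^{-1/2}(y/e)^y(1+o(1))$, and combine this with Stirling's estimate $\Gamma(y+1/2) \sim \sqrt{2\pi}\,y^y e^{-y}$ to get $\phi(y) \to \sqrt{\pi}/\sqrt{2\pi e} = 1/\sqrt{2e} = \gup$.

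For monotonicity on integers $y \geq 1$, I use the recurrences $\Gamma(y+3/2)=(y+1/2)\Gamma(y+1/2)$ and $\psi(y+1) = \psi(y)+1/y$; after routine cancellation these give
\[
\frac{\phi(y+1)}{\phi(y)} = \frac{\exp(\psi(y+1))}{y+1/2},
\]
so the task reduces to the inequality $\psi(y+1) \geq \log(y+1/2)$. Setting $g(y) := \psi(y+1) - \log(y+1/2)$, it suffices to show $g'(y)<0$ and $\lim_{y \to \infty} g(y) = 0$ (the latter following from the same expansion of $\psi$ used above). For the derivative, I invoke the series $\psi'(y+1) = \sum_{n\geq 0}1/(y+1+n)^2$ together with the elementary inequality $(y+1/2+n)(y+3/2+n) = (y+1+n)^2 - 1/4 < (y+1+n)^2$, which gives
\[
\frac{1}{(y+1+n)^2} < \frac{1}{(y+1/2+n)(y+3/2+n)} = \int_{y+1/2+n}^{y+3/2+n}\frac{dt}{t^2}.
\]
Summing over $n \geq 0$ yields $\psi'(y+1) < 1/(y+1/2)$, whence $g'(y) < 0$ as required.

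The main technical subtlety I anticipate is keeping careful track of the constant $e^{-1/2}$ that appears in the limit computation: both the expansion $y\psi(y) - y\log y \to -1/2$ and the Stirling factor $(1+1/(2y))^y \to e^{1/2}$ contribute, and it is precisely their reconciliation that produces $1/\sqrt{2e}=\gup$, making the upper bound tight. The remainder of the argument is straightforward bookkeeping using standard identities for $\psi$, $\Gamma$, and Stirling.
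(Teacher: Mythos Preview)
Your proof is correct and follows the same high-level strategy as the paper: define the ratio, show it is monotone, and evaluate at $y=1$ and $y\to\infty$. The paper works with the reciprocal $g(y)=1/\phi(y)$ and shows it is decreasing on $(0,\infty)$ by computing $g'(y)$ and reducing to the positivity of $\varphi(y):=-\psi(y+1/2)+\psi(y)+y\psi'(y)-1$, which is then asserted (with a pointer to Euler's integral representation and complete monotonicity) rather than proved in detail.

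Your monotonicity argument is genuinely different and, in a sense, more self-contained: using the recurrences for $\Gamma$ and $\psi$ you reduce the discrete step $\phi(y+1)/\phi(y)\ge 1$ to the clean inequality $\psi(y+1)\ge\log(y+1/2)$, and you prove the latter fully via the series for $\psi'$ and a telescoping integral comparison. This is more elementary than invoking complete monotonicity. The trade-off is that your argument, as written, only yields monotonicity along the integers, whereas the paper's derivative computation gives it for all real $y>0$. For every use of the lemma in the paper (the estimates in Corollary~\ref{coro:Poi:negBinApprox} sum over integer $y$) your version suffices; and since you in fact establish $\psi(y+1)>\log(y+1/2)$ for all real $y>0$, the continuous statement could be recovered by differentiating $\log\phi$ directly if needed.
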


\begin{proof}
Consider the ratio
\begin{equation} \label{eqn:Poi:ratio}
g(y) = \frac{\binom{y-1/2}{y}}{\exp(y \psi(y))/(y! e^y)}=
\frac{\Gamma(y+1/2)}{\Gamma(1/2)\exp(y \psi(y)-y)}=
\frac{1}{\sqrt{\pi}}\Gamma(y+1/2) \exp(y-y \psi(y)),
\end{equation}
We use the following claim (see Appendix~\ref{app:Poi:ratio}):

\begin{claim} \label{claim:Poi:ratio}
The function $g(y)$ defined in \eqref{eqn:Poi:ratio} 
is a decreasing function of $y > 0$. 
\end{claim}
\noindent The above claim implies that, for all $y \geq 1$, we must have
\[
g(1) \geq g(y) \geq \lim_{y \to \infty} g(y),
\]
assuming that the limit exists (that we will show next). 
We have
\[
g(0)=\frac{1}{\sqrt{\pi}}\Gamma(3/2) \exp(1- \psi(1)) = \frac{e^{1+\gamma}}{2} \approx 2.420728.
\]
Recall, from \eqref{eqn:Poi:StirApprox:alternative}, that
\[
\exp(y \psi(y))/(y! e^y) \sim \frac{1}{\sqrt{2\pi e y}},
\]
and similarly, from \eqref{eqn:negbin:asym}, that
\[
\binom{y-1/2}{y} \sim 
{\frac{1}{\sqrt{\pi y}}},
\]
and therefore,
\[
\lim_{y \to \infty} g(y) = \sqrt{2e} \approx 2.331643.
\]
The result follows.
\end{proof}

\begin{figure}[t!]
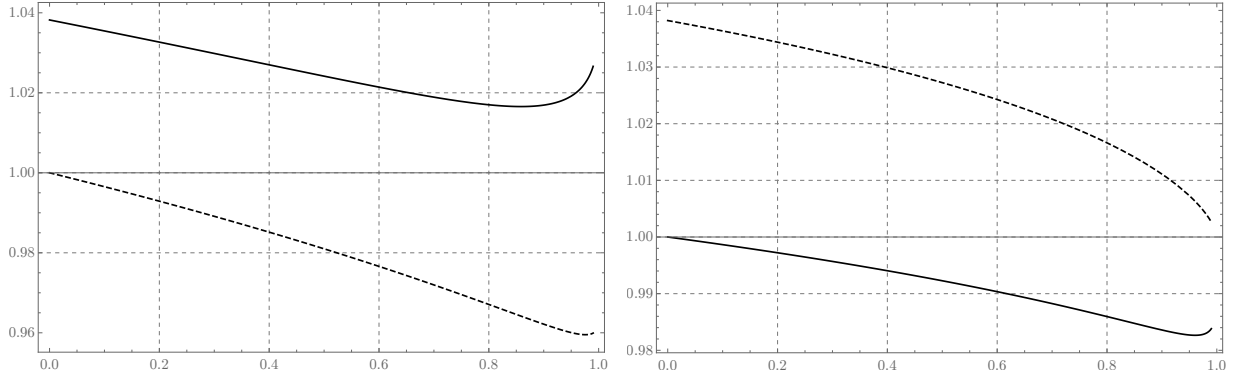

\begin{center}
\includegraphics[width=0.49 \columnwidth]{figPoiMuNegBin} 
\includegraphics[width=0.49 \columnwidth]{figPoiLogY0NegBin}
\end{center}
\caption{Quality of the approximations by Corollary~\ref{coro:Poi:negBinApprox}
as a function of $q$. Left: $\overline{\mu}/\mu$ (solid) and
$\underline{\mu}/\mu$ (dashed). Right:
$(\log \overline{y_0})/(\log y_0)$ (solid) and
$(\log \underline{y_0})/(\log y_0)$ (dashed).
The notation $\overline{\mu}$ and $\underline{\mu}$ ($\overline{y_0}$ and $\underline{y_0}$)
respectively refer to the upper and lower bound estimates on $\mu$ ($y_0$) 
by Corollary~\ref{coro:Poi:negBinApprox}.
}
\label{fig:PoiMuNegBin}
\end{figure}

Using the above approximations, we are able to prove the following
lower and upper bound estimates on the parameters of the digamma distribution
\eqref{eqn:PoiYre} in terms of elementary functions (see Figure~\ref{fig:PoiMuNegBin}
for a depiction of the quality of the approximations).

\begin{coro} \label{coro:Poi:negBinApprox}
Let $\gdown$ and $\gup$ be as in Lemma~\ref{lem:Poi:negBinApprox},
and $Y$ be distributed according to the digamma distribution \eqref{eqn:PoiYre}.
Then,
\begin{enumerate}
\item For all $y\geq 1$,
\begin{equation} \label{eqn:Poi:PYnegbin}
\gdown\, \nbin_{1/2,q}(y) \leq \sqrt{1-q} \Pr[Y=y]/y_0 \leq \gup\, \nbin_{1/2,q}(y).
\end{equation}

\item The normalizing constant $y_0$ satisfies
\begin{equation} \label{eqn:Poi:y0:negbin}
\log\left(1+\gdown\left(\frac{1}{\sqrt{1-q}}-1\right)\right) \leq
-\log y_0 \leq \log\left(1+\gup\left(\frac{1}{\sqrt{1-q}}-1\right)\right).
\end{equation}

\item The mean $\mu = \E[Y]$ satisfies
\begin{equation} \label{eqn:Poi:mu:negbinA}
\frac{\gdown q}{2 (1-q)^{3/2}} \leq \frac{\mu}{y_0} \leq \frac{\gup q}{2 (1-q)^{3/2}},
\end{equation}
and,
\begin{equation} \label{eqn:Poi:mu:negbin}
\frac{\gdown q}{2 (1-q) (\sqrt{1-q}+\gup(1-\sqrt{1-q}))} \leq
\mu \leq \frac{\gup q}{2 (1-q) (\sqrt{1-q}+\gdown(1-\sqrt{1-q}))}.
\end{equation}

\end{enumerate}
\end{coro}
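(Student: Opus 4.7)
The plan is to obtain all three parts essentially as direct consequences of Lemma~\ref{lem:Poi:negBinApprox}, which already does the heavy lifting by sandwiching $\exp(y\psi(y))/(y!\,e^y)$ between $\gdown \binom{y-1/2}{y}$ and $\gup \binom{y-1/2}{y}$. Since for $y \geq 1$ one has $\Pr[Y=y]/y_0 = \exp(y\psi(y))\,(q/e)^y/y!$, multiplying the lemma by $q^y$ gives, for all $y \geq 1$,
\[
\gdown \binom{y-1/2}{y} q^y \;\leq\; \Pr[Y=y]/y_0 \;\leq\; \gup \binom{y-1/2}{y} q^y.
\]
Recognizing $\nbin_{1/2,q}(y) = \binom{y-1/2}{y}\sqrt{1-q}\,q^y$, so that $\binom{y-1/2}{y} q^y = \nbin_{1/2,q}(y)/\sqrt{1-q}$, immediately yields \eqref{eqn:Poi:PYnegbin}. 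This is the pointwise step; the remaining work consists entirely in summing this inequality against $1$ and against $y$, and then combining.

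For part~2, I would sum \eqref{eqn:Poi:PYnegbin} over all $y\geq 1$. The left-hand total is $\sqrt{1-q}\,(1-y_0)/y_0$ (since $\Pr[Y=0]=y_0$), while the negative binomial tail satisfies $\sum_{y\geq 1}\nbin_{1/2,q}(y) = 1-\nbin_{1/2,q}(0) = 1-\sqrt{1-q}$. Rearranging the resulting double inequality for $1/y_0$ and noting that $(1-\sqrt{1-q})/\sqrt{1-q} = 1/\sqrt{1-q}-1$, I obtain \eqref{eqn:Poi:y0:negbin} after taking logarithms.

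For part~3, I would multiply \eqref{eqn:Poi:PYnegbin} through by $y$ and sum. Since $\nbin_{1/2,q}$ has mean $q/(2(1-q))$ (using the standard formula $qr/(1-q)$ with $r=1/2$), and since the $y=0$ term contributes nothing to $\E[Y]$, this directly yields \eqref{eqn:Poi:mu:negbinA} after dividing by $\sqrt{1-q}$. To get the absolute bounds \eqref{eqn:Poi:mu:negbin}, I then combine \eqref{eqn:Poi:mu:negbinA} with the reciprocal form of \eqref{eqn:Poi:y0:negbin}, namely $y_0 \leq \sqrt{1-q}/(\sqrt{1-q}+\gdown(1-\sqrt{1-q}))$ and $y_0 \geq \sqrt{1-q}/(\sqrt{1-q}+\gup(1-\sqrt{1-q}))$, pairing the upper (resp.\ lower) bound on $\mu/y_0$ with the upper (resp.\ lower) bound on $y_0$.

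There is no real obstacle here once Lemma~\ref{lem:Poi:negBinApprox} is in hand; the only thing to be slightly careful about is the $y=0$ boundary term (where the digamma-based formula does not literally apply and one uses $\Pr[Y=0]=y_0$ by definition) and, when deriving \eqref{eqn:Poi:mu:negbin} from \eqref{eqn:Poi:mu:negbinA} and \eqref{eqn:Poi:y0:negbin}, correctly pairing the matching one-sided bounds so that monotonicity of the composition is preserved. Both are routine.
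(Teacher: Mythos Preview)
Your proposal is correct and follows essentially the same approach as the paper: part~1 is immediate from Lemma~\ref{lem:Poi:negBinApprox} and the form of the digamma distribution, part~2 sums the pointwise bound over $y\geq 1$ using $\sum_{y\geq 1}\nbin_{1/2,q}(y)=1-\sqrt{1-q}$, part~3 first obtains \eqref{eqn:Poi:mu:negbinA} by weighting with $y$ and using the negative-binomial mean $q/(2(1-q))$, and then combines with the bounds on $y_0$ to get \eqref{eqn:Poi:mu:negbin}. Your remarks about the $y=0$ term and the correct pairing of one-sided bounds are exactly the small bookkeeping points the paper handles implicitly.
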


\begin{proof}
The first part is immediate from the expression of the
digamma distribution \eqref{eqn:PoiYre} combined with
the result of Lemma~\ref{lem:Poi:negBinApprox}. Now, let
$Z$ be distributed according to $\nbin_{1/2,q}$, and write, from
the definition of the normalizing constant,
\begin{align}
1/y_0&=  {1+\sum_{y=1}^\infty \Pr[Y=y]/y_0} \nonumber \\
&\stackrel{\eqref{eqn:Poi:PYnegbin}}{\leq}
{1+\left(\frac{\gup}{\sqrt{1-q}}\right)\sum_{z=1}^\infty \Pr[Z=z]} \nonumber \\
&= {1+\left(\frac{\gup}{\sqrt{1-q}}\right)\left(1-\sqrt{1-q}\right)} \nonumber\\
&= 1+\gup \left(\frac{1}{\sqrt{1-q}}-1\right), \label{eqn:Poi:y0:nbin}
\end{align}
which, after taking the logarithms of both sides, proves the
upper bound in \eqref{eqn:Poi:y0:negbin}. Proof of the lower bound
is similar. To upper bound the mean, we may write
\begin{align}
\frac{\mu}{y_0} &= \sum_{y=1}^\infty y \Pr[Y=y]/y_0 \nonumber \\
&\stackrel{\eqref{eqn:Poi:PYnegbin}}{\leq}
\sum_{z=1}^\infty \gup z \Pr[Z=z]/\sqrt{1-q} \nonumber \\
&= \gup \E[Z]/\sqrt{1-q} = \frac{\gup q}{2(1-q)^{3/2}},
\label{eqn:Poi:mu:nbin}
\end{align}
which, combined with a similar lower bound, proves \eqref{eqn:Poi:mu:negbin}.
Finally, combining this result with \eqref{eqn:Poi:y0:negbin}
yields \eqref{eqn:Poi:mu:negbin}.
\end{proof}

\subsection{Derivation of the capacity upper bound for 
the Poisson-repeat channel}

In order to obtain a capacity upper bound for the Poisson-repeat
channel, it suffices to combine 
Corollary~\ref{coro:Cupper} with either Theorem~\ref{thm:PoiChUpper}
or  Theorem~\ref{thm:PoiChUpperAnalytic}.
Let $\ch$ be a $\cD$-repeat channel, where $\cD$ 
is a Poisson distribution with mean $\lam$.
The probability $d$ assigned by $\cD$ to zero is thus
equal to $e^{-\lam}$, and its complement is $p:=1-d=1-e^{-\lam}$.
Since $d$ captures the ``deletion probability'' of the channel,
we parameterize the channel in terms of $p$ rather than the
Poisson parameter $\lam$. Note that $\lam = -\log(1-p)$.
From \eqref{eqn:coro:Cupper}, we may write
\begin{align*}
\capa(\ch) &\leq \sup_{\mu>0} \frac{\capa(\ch_\mu(\cD))}{-\mu/\log(1-p)+1/p}.
\end{align*}
We may now use Theorem~\ref{thm:PoiChUpper} with the corresponding
choice for the random variable $Y$ with parameter $q$, 
mean $\mu$ and normalizing constant $y_0$ and write, using \eqref{eqn:PoiMeanLimCapUpperA},
\begin{align} \label{eqn:Poi:capa:general}
\capa(\ch) &\leq \sup_{q\in(0,1)} 
\frac{-\mu \log q-\log y_0}{-\mu/\log(1-p)+1/p}.
\end{align}
Of course, this result would remain 
valid had we used Theorem~\ref{thm:PoiChUpperAnalytic}
and \eqref{eqn:PoiYfirst} for the distribution of $Y$.
Furthermore, the analytic approximations of Corollary~\ref{coro:Poi:negBinApprox}
may be used to upper bound the right hand side of 
\eqref{eqn:Poi:capa:general} by the supremum of an elementary function
of the channel parameter and $q$ (which can be shown to be concave in $q$). 
We summarize the above result in the following
theorem:

\begin{thm} \label{thm:Poi:capa}
Let $\ch$ be a Poisson-repeat channel with deletion probability $d \in (0,1)$
(or equivalently, repetition mean $\lam = \log(1/d)$ per bit). For a parameter
$q \in (0,1)$, let $Y$
be distributed according to either the 
distribution \eqref{eqn:PoiYfirst} or the digamma distribution \eqref{eqn:PoiYre}
 with an appropriate normalizing constant $y_0$ 
and let $\mu := \E[Y]$ denote its mean. Then,
\begin{align} \label{eqn:thm:Poi:capa}
\capa(\ch) &\leq \sup_{q\in(0,1)} 
\frac{-\mu \log q-\log y_0}{-\mu/\log d+1/(1-d)}.
\end{align}
Furthermore, let $\gdown := {2}/{e^{1+\gamma}} \approx 0.413099$ and 
$\gup := 1/\sqrt{2 e} \approx 0.428882$,
where $\gamma \approx 0.57721$ is the Euler-Mascheroni constant. 
Then, 
\begin{align} \label{eqn:thm:Poi:capa:approx}
\capa(\ch) &\leq \sup_{q\in(0,1)} 
\frac{-\overline{\mu} \log q-\log \underline{y_0}}{-\underline{\mu}/\log d +1/(1-d)},
\end{align}
where
\begin{equation} \label{eqn:thm:Poi:capa:approx:eqns}
\begin{aligned}
\overline{\mu} &:= \frac{\gup q}{2 (1-q) (\sqrt{1-q}+\gdown(1-\sqrt{1-q}))}, \\
\underline{\mu} &:= \frac{\gdown q}{2 (1-q) (\sqrt{1-q}+\gup(1-\sqrt{1-q}))}, \\
\underline{y_0} &:= 1+\gup\left(\frac{1}{\sqrt{1-q}}-1\right).
\end{aligned}
\end{equation}

\end{thm}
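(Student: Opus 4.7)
The proof is essentially a direct composition of the two main building blocks already established in earlier sections, so the plan is short. I would begin by invoking Corollary~\ref{coro:Cupper} applied to the Poisson-repeat channel. Since $\cD$ is Poisson with mean $\lam$, one has $d = e^{-\lam}$ and $p = 1-d$, which I use to substitute $\lam = -\log d$ and $1/p = 1/(1-d)$ into the denominator of \eqref{eqn:coro:Cupper}. This produces
\[
\capa(\ch) \leq \sup_{\mu > 0} \frac{\capa(\ch_\mu(\cD))}{-\mu/\log d + 1/(1-d)},
\]
reducing the problem to upper bounding the capacity of the mean-limited Poisson channel.

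Next, I would apply Theorem~\ref{thm:PoiChUpper}, which states that for every $q \in (0,1)$ the digamma distribution \eqref{eqn:PoiYre} is dual feasible with $\nu_1 = -\log q$ and $\nu_0 = -\log y_0(q)$, so $\capa(\ch_\mu(\cD)) \leq -\mu \log q - \log y_0(q)$ whenever $\mu = \mu(q)$ is the corresponding output mean. The key reparameterization step is to change the supremum from $\mu > 0$ to $q \in (0,1)$: because $\mu(q)$ is a continuous bijection between $(0,1)$ and $(0,\infty)$ (with $\mu(q) \to 0$ as $q \to 0^+$ and $\mu(q) \to \infty$ as $q \to 1^-$, evident from the exponential decay rate of \eqref{eqn:PoiYre}), every admissible $\mu$ is realized by exactly one $q$, yielding \eqref{eqn:thm:Poi:capa}. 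The very same argument goes through with the distribution \eqref{eqn:PoiYfirst} in place of \eqref{eqn:PoiYre}, so the bound holds for both dual-feasible choices of $Y$.

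For the elementary estimate \eqref{eqn:thm:Poi:capa:approx}, I would monotonically replace $\mu$ and $y_0$ by the bounds supplied in Corollary~\ref{coro:Poi:negBinApprox}. Because $\log q < 0$, the term $-\mu \log q$ in the numerator is increasing in $\mu$, so replacing $\mu$ by $\overline{\mu}$ can only enlarge it; similarly, since $-\log y_0$ is monotone in $y_0$, the bound from Corollary~\ref{coro:Poi:negBinApprox} on the normalizing constant lets us replace $-\log y_0$ by its elementary upper estimate (which appears as $-\log \underline{y_0}$ in \eqref{eqn:thm:Poi:capa:approx:eqns} once $\underline{y_0}$ is read as the lower bound on $y_0$ supplied by \eqref{eqn:Poi:y0:negbin}). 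Dually, $-1/\log d > 0$ makes the denominator increasing in $\mu$, so substituting $\underline{\mu}$ for $\mu$ can only shrink it. Taking the supremum over $q$ after both substitutions gives \eqref{eqn:thm:Poi:capa:approx}.

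There is no real analytic obstacle here: all nontrivial content, including the dual feasibility of the digamma distribution, the truncation argument that underlies the exponentially decaying KKT gap in \eqref{eqn:EgYx}, and the elementary sandwiching of $\mu$ and $y_0$ by negative-binomial parameters, has already been handled in Theorem~\ref{thm:PoiChUpper} and Corollary~\ref{coro:Poi:negBinApprox}. The only mild point requiring care is verifying the range-of-$\mu$ statement for the reparameterization; this is where one has to check that no admissible mean constraint is excluded when passing from $\sup_\mu$ to $\sup_q$, and it follows by an explicit examination of how $\mu(q)$ behaves at the two endpoints of $(0,1)$ using the asymptotic form \eqref{eqn:Poi:StirApprox:alternative}.
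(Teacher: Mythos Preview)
Your proposal is correct and follows essentially the same approach as the paper: combine Corollary~\ref{coro:Cupper} with Theorem~\ref{thm:PoiChUpper} (or the analogous bound for \eqref{eqn:PoiYfirst}), reparameterize the supremum from $\mu$ to $q$, and then invoke the elementary sandwiching estimates of Corollary~\ref{coro:Poi:negBinApprox} with the appropriate monotonicity in each slot. You are in fact slightly more explicit than the paper about justifying the $\sup_\mu \to \sup_q$ step via the bijectivity of $q \mapsto \mu(q)$.
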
 \qed

\begin{remark}
Observe that the distributions of $Y$ defined by 
\eqref{eqn:PoiYfirst} or the digamma distribution \eqref{eqn:PoiYre}
 in Theorem~\ref{thm:Poi:capa}
only depend on the choice of the parameter $q$ and not 
the channel parameter $d$ at all. Therefore, the calculations
of mean and normalizing constant for various choices of
$q$ can be reused for the computation of the capacity upper bound
for different choices of $d$.
\end{remark}

%

\noindent Observe that, as $p\rightarrow 0$, the right hand
side of \eqref{eqn:Poi:capa:general} converges to
\begin{equation} \label{eqn:Poi:slope}
p \sup_{q\in(0,1)} 
\frac{-\mu \log q-\log y_0}{\mu+1},
\end{equation}
and that the expression under the supremum is independent of the
channel parameter $p$. The expression under the supremum is
plotted in Figure~\ref{fig:PoiSLope} and can be numerically calculated efficiently, 
which results in the following
corollary:
\begin{coro} \label{coro:Poi:slope}
Let $\sC(d)$ denote the capacity of the Poisson-repeat
channel with deletion probability $d$. Then for $d \to 1$,
\[
\sC(d) \leq 0.464421 (1-d)\cdot (1+o(1))\ %
\text{\rm bits per channel use}.
\]
\end{coro}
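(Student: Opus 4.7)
The plan is to extract the leading-order behavior of the bound~\eqref{eqn:thm:Poi:capa} of Theorem~\ref{thm:Poi:capa} as $d \to 1$, equivalently as $p := 1-d \to 0$. Using the Taylor expansion $\log(1-p) = -p - p^2/2 + O(p^3)$, one has $-p/\log d = 1 - p/2 + O(p^2)$, so multiplying both numerator and denominator of~\eqref{eqn:thm:Poi:capa} by $p$ transforms the denominator into
\[
p\left(-\frac{\mu}{\log d} + \frac{1}{1-d}\right) = \mu\bigl(1 - \tfrac{p}{2} + O(p^2)\bigr) + 1 = (\mu+1)\bigl(1 + O(p)\bigr).
\]
The crucial point is that the relative error $p\mu/(2(\mu+1)) + O(p^2)$ is $O(p)$ \emph{uniformly} in $q \in (0,1)$, because $\mu(q)/(\mu(q)+1) < 1$ no matter how large $\mu(q)$ becomes. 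Factoring the $p$ through the supremum (legitimate by uniformity) yields
\[
\capa(\ch) \leq (1-d)\bigl(1+o(1)\bigr) \cdot K, \qquad K := \sup_{q \in (0,1)} \frac{-\mu(q)\log q - \log y_0(q)}{\mu(q)+1},
\]
which is precisely the form in \eqref{eqn:Poi:slope}.

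Next, I would check that $K$ is finite and attained at an interior point. Writing $f(q)$ for the function inside the supremum: as $q \to 0$ the digamma distribution \eqref{eqn:PoiYre} concentrates on $y = 0$, so $\mu \to 0$ and $y_0 \to 1$, giving $f(q) \to 0$. As $q \to 1$, the negative-binomial estimates of Corollary~\ref{coro:Poi:negBinApprox} give $\mu(q) = \Theta(1/\sqrt{1-q})$ and $-\log y_0(q) = O(\log(1/(1-q)))$; together with $-\log q = O(1-q)$ this yields $-\mu \log q = O(\sqrt{1-q}) \to 0$ and $(-\log y_0)/\mu = O(\sqrt{1-q}\log(1/(1-q))) \to 0$, so $f(q) \to 0$ at the right endpoint as well. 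Continuity on $(0,1)$ and the concavity suggested by Figure~\ref{fig:PoiSLope} then place the maximizer at some interior $q^\star$ with $\mu(q^\star) < \infty$.

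Finally, I would evaluate $K$ numerically. The exponential decay of the digamma distribution lets $\mu(q)$ and $y_0(q)$ be computed to arbitrary precision by truncating the series defining them (cf.\ Table~\ref{tab:PoiMeanY}); a one-dimensional concave maximization then gives $K \approx 0.464421 \cdot \ln 2$ nats, i.e.\ $K/\ln 2 \approx 0.464421$ bits per channel use after converting to bits per the paper's convention. The main obstacle is not the optimization itself but rather the uniform-in-$q$ control of the $O(p)$ expansion as $q \to 1$, which is handled by the trivial estimate $\mu/(\mu+1) \leq 1$; no machinery beyond Theorem~\ref{thm:Poi:capa} and Corollary~\ref{coro:Poi:negBinApprox} is needed.
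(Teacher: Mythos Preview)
Your approach is essentially the paper's: take the $p\to 0$ limit of the bound in Theorem~\ref{thm:Poi:capa}, observe that it becomes $p$ times the $p$-independent supremum~\eqref{eqn:Poi:slope}, and evaluate that supremum numerically. The paper states the convergence without further justification and relies on Figure~\ref{fig:PoiSLope}; your added uniformity argument via $\mu/(\mu+1)\le 1$ is a clean way to make the interchange of limit and supremum rigorous.

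There is one slip in your endpoint analysis at $q\to 1$. From Corollary~\ref{coro:Poi:negBinApprox} (equation~\eqref{eqn:Poi:mu:negbin}) the mean satisfies $\mu(q)=\Theta\bigl(1/(1-q)\bigr)$, not $\Theta\bigl(1/\sqrt{1-q}\bigr)$: as $q\to 1$ the factor $\sqrt{1-q}+\gdown(1-\sqrt{1-q})$ in the denominator tends to the constant $\gdown$, leaving $\mu\sim c/(1-q)$. Consequently $-\mu\log q=\Theta(1)$ and does \emph{not} tend to $0$ as you claim. The conclusion $f(q)\to 0$ is nevertheless correct, by the simpler estimate
\[
\frac{-\mu\log q}{\mu+1}\le -\log q \to 0,
\qquad
\frac{-\log y_0}{\mu+1}=O\bigl((1-q)\log\tfrac{1}{1-q}\bigr)\to 0,
\]
so only the intermediate justification needs adjusting.
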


\begin{figure}[t!]
\begin{center}
\includegraphics[height=2in]{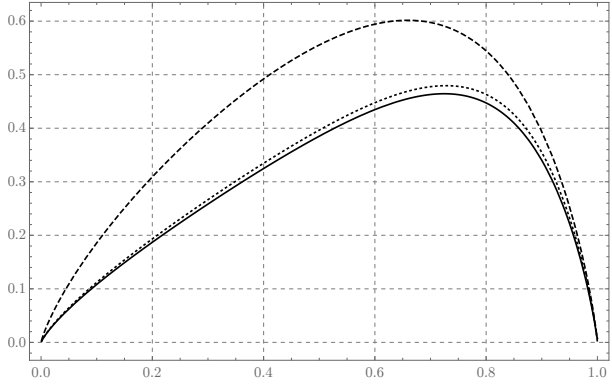}
\end{center}
\caption{The expression inside the supremum in \eqref{eqn:Poi:slope}
(measured in bits) plotted as a function of $q$ and with respect to the choices 
\eqref{eqn:PoiYre} (solid) and \eqref{eqn:PoiYfirst} (dashed) 
for the distribution of $Y$. The maximums are attained at
$q\approx 0.724762$ (solid) and $q\approx 0.659046$ (dashed),
resulting in supremums $\approx 0.464420$ (solid) and 
$\approx 0.601549$ (dashed). The analytic estimates of 
Theorem~\ref{thm:PoiChUpperAnalytic} 
result in an indistinguishable
plot from the dashed one, and a slightly higher supremum of
$\approx 0.602987$ attained at $q \approx 0.658810$.
The dotted plot depicts the elementary upper bound resulting
from \eqref{eqn:thm:Poi:capa:approx:eqns},
where the maximum is $\approx 0.479454$ and attained at $q \approx 0.727855$.
}
\label{fig:PoiSLope}
\end{figure}

Plots of the resulting capacity upper bounds for general $d$
are given in Figure~\ref{fig:PoiPlot}.
Moreover, the corresponding numerical values for the plotted curves are
listed, for various choices of the deletion parameter $d$, 
in Table~\ref{tab:PoiCapData}.

\begin{figure}[ht!] 
\begin{center}
\includegraphics[width=0.8\columnwidth]{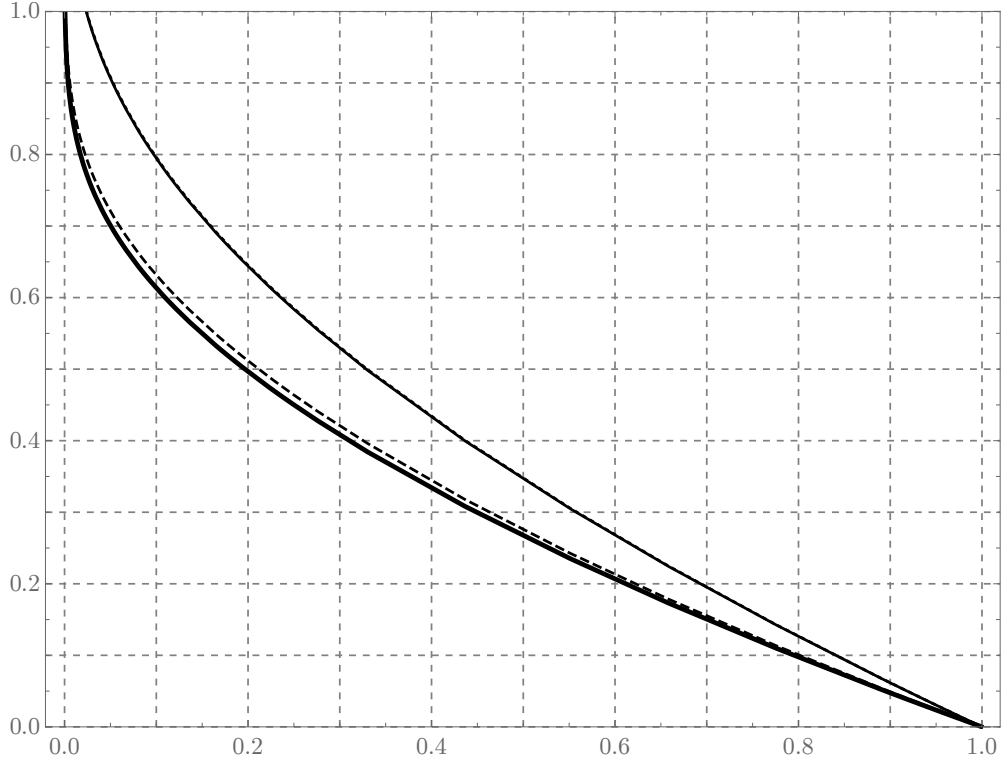}
\end{center}
\caption[]{Upper bounds (in bits per channel use) 
on the capacity of the Poisson-repeat channel given by
Theorem~\ref{thm:Poi:capa},
plotted as a function of the deletion probability $d=1-p=e^{-\lam}$.
The bounds are obtained using 
\begin{enumerate*}[label=\roman*)]
\item The digamma distribution \eqref{eqn:PoiYre} for $Y$  (solid, thick),
\item  The distribution \eqref{eqn:PoiYfirst} for $Y$ (solid),
\item The elementary upper bound estimates of  \eqref{eqn:thm:Poi:capa:approx:eqns}
on the parameters of the digamma distribution \eqref{eqn:PoiYre} (dashed),
\item The analytic upper bound estimates of Theorem~\ref{thm:PoiChUpperAnalytic}
on the parameters of \eqref{eqn:PoiYfirst}
(dotted, nearly indistinguishable from the solid curve obtained by a numerical 
computation of the actual parameters of the distribution). 
\end{enumerate*}
}
\label{fig:PoiPlot}
\vspace{2mm} \hrule
\end{figure}

\section{Upper bounds on the capacity of the deletion channel}
\label{sec:del}

\subsection{The inverse binomial distribution}
\label{sec:invbin}

Let $\ibin_{p,q}$ be a distribution on
non-negative integers, parameterized by
$p \in (0,1]$ and $q \in (0,1)$, and defined by the probability
mass function\footnote{%
We call this distribution ``inverse binomial'' since,
as we see in Section~\ref{sec:del:general},
the mutual information between any binomial distribution
$\Bin_{x,p}$ and the inverse binomial distribution essentially
simplifies to a linear term in $x$.
Therefore, intuitively, the binomial distribution ``neutralizes'' any 
binomial distribution in the divergence computations. Moreover, 
thinking of $y$ as the posterior of a binomial sampling $\Bin_{X,p}$,
we ``invert'' it back to the expected prior $y/p$.}
\begin{equation} \label{eqn:invbin}
\ibin_{p,q}(y) := y_0 \binom{y/p}{y} q^y \exp(-y h(p)/p),
\end{equation}
where $h(p)$ is the binary entropy function, and 
$y_0 \in (0,1)$ is the appropriate
normalizing constant:

\[
1/y_0 = 1+\sum_{y=0}^\infty \binom{y/p}{y} q^y \exp(-y h(p)/p).
\]
Note that, when $p=1$, the above reduces to a geometric distribution.

By the Stirling approximation, we may write the well known asymptotic expression
\begin{align*}
\binom{x}{px} &\sim 
\frac{(x/e)^x}{\sqrt{2\pi p(1-p)x}(px/e)^{px} (x(1-p)/e)^{(1-p)x}} \\
&= \frac{\exp(h(p)x)}{\sqrt{2\pi p(1-p)x}},
\end{align*}
and thus, we can 
identify the asymptotic
behavior of \eqref{eqn:invbin} at $y \rightarrow \infty$ as
\begin{align} \label{eqn:ibin:asymp}
\ibin_{p,q}(y) &\sim y_0 \frac{q^y}{\sqrt{2\pi (1-p) y}},
\end{align}
implying that the normalizing constant is well defined, 
leading to a legitimate distribution, exactly when $q\in (0,1)$.
Moreover, the expectation of the distribution can be made
arbitrarily small as $q\to 0$ and arbitrarily large as $q\to 1$
(since, by \eqref{eqn:ibin:asymp},
when $q = 1$ we have $\ibin_{p,1}(y)/y_0 = \Theta(1/\sqrt{y})$
and the summation defining $y_0$ becomes divergent). Therefore,
by varying the value of $q \in (0,1)$, it is possible to adjust
the expectation of the distribution to any desired positive value.

\subsubsection{Estimate by the negative binomial distribution}
\label{sec:invbin:negbin}

As was the case for the distribution \eqref{eqn:PoiY} 
related to the Poisson-repeat
channel, in this section we show that the inverse binomial
distribution can be approximated by a negative binomial
distribution of order $r=1/2$.
Towards this goal, we will use the following analytic claim
(derived in Appendix~\ref{app:invbin:ratio}):

\begin{claim} \label{claim:invbin:ratio}
Let $p\in (0,1)$. The ratio
\begin{equation} \label{eqn:invbin:ratio}
\rho(y) := \frac{\binom{y/p}{y} \exp(-y h(p) /p)}{\binom{y-1/2}{y}}
\end{equation}
is a decreasing function of $y>0$ for $p<1/2$, is equal to $1$ for $p=1/2$,
and is an increasing function of $y>0$ for $p>1/2$.
\end{claim}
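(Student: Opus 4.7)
The plan is to anchor at the fact that $\rho(y)\equiv 1$ when $p=\tfrac12$---which follows at once from the Legendre duplication formula, e.g.\ via the identity $\binom{y-1/2}{y}=\binom{2y}{y}/4^y$ written in Gamma-function form---and then to prove the stronger statement that $\phi_p(y):=(\log\rho)'(y)$ is strictly \emph{increasing in the parameter $p\in(0,1)$} for each fixed $y>0$, while $\phi_{1/2}\equiv 0$. Direct differentiation of $\log \rho$ (after writing $\rho$ in terms of $\Gamma$) gives
\[
\phi_p(y) \;=\; \tfrac{1}{p}\psi(1+y/p) \;-\; \tfrac{1-p}{p}\psi(1+y(1-p)/p) \;-\; \psi(y+\tfrac12) \;-\; h(p)/p,
\]
and $\phi_{1/2}\equiv 0$ follows from the digamma duplication formula $\psi(2z)=\tfrac12\psi(z)+\tfrac12\psi(z+\tfrac12)+\log 2$ applied at $z=y+\tfrac12$. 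Once the monotonicity in $p$ is established the claim is immediate: for $p<\tfrac12$ we get $\phi_p(y)<0$, so $\rho$ is strictly decreasing in $y$, and symmetrically for $p>\tfrac12$.

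To establish monotonicity of $\phi_p(y)$ in $p$, I introduce the auxiliary function $G(a):=\psi(1+a)+a\,\psi'(1+a)=\tfrac{d}{da}\bigl[a\,\psi(1+a)\bigr]$. A routine chain-rule calculation (using the elementary identity $(h(p)/p)'=\log(1-p)/p^2$) collapses $\partial_p\phi_p(y)$ to the clean form
\[
\partial_p\phi_p(y) \;=\; \frac{1}{p^2}\Bigl[\,G\!\bigl(y(1-p)/p\bigr)\,-\,G\!\bigl(y/p\bigr)\,-\,\log(1-p)\,\Bigr].
\]
The substitution $a:=y/p$, $r:=1-p$ reduces positivity of the bracket to the one-variable inequality $h_a(r):=G(ar)-G(a)-\log r > 0$ for all $a>0$ and $r\in(0,1)$. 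Since $h_a(1)=0$, it suffices to verify $h_a'(r)<0$ on $(0,1]$; and $h_a'(r)=a\,G'(ar)-1/r$, so with $u:=ar>0$ this collapses to the single-variable \emph{key lemma}
\[
G'(u) \;<\; 1/u \qquad \text{for all } u>0.
\]

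I expect this key lemma to be the only real obstacle, and I would prove it by a Laplace-integral argument. From $\psi'(1+u)=\int_0^\infty t\,e^{-(1+u)t}/(1-e^{-t})\,dt$ together with the geometric expansion $(1-e^{-t})^{-2}=\sum_{n\ge 0}(n+1)\,e^{-nt}$, a short computation gives
\[
G'(u) \;=\; \int_0^\infty \frac{t^2\,e^{-(u+1)t}}{(1-e^{-t})^2}\,dt.
\]
Combining this with $1/u=\int_0^\infty e^{-ut}\,dt$ and the identity $(1-e^{-t})^2=4\,e^{-t}\sinh^2(t/2)$ yields
\[
\tfrac{1}{u}\,-\,G'(u) \;=\; \int_0^\infty e^{-ut}\left[\,1-\Bigl(\tfrac{t}{2\sinh(t/2)}\Bigr)^{\!2}\,\right]dt,
\]
and the integrand is strictly positive because $2\sinh(t/2)>t$ for $t>0$ (immediate from $\sinh x - x=\sum_{k\ge 1}x^{2k+1}/(2k+1)! > 0$). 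This proves $G'(u)<1/u$. Reading the chain backwards: $h_a'(r)<0$ on $(0,1]$, so $h_a(r)>0$ on $(0,1)$, so $\partial_p\phi_p(y)>0$, so $\phi_p$ is strictly monotone in $p$ about the base point $\phi_{1/2}\equiv 0$, which is exactly the claimed monotonicity of $\rho(y)$ in $y$.
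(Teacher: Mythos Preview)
Your proof is correct and takes a genuinely different route from the paper's. Both start by writing $\phi_p(y)=(\log\rho)'(y)$ in terms of digamma values, but then diverge. The paper differentiates once more \emph{in $y$}, obtaining a combination of trigamma terms whose sign is settled by invoking a complete-monotonicity criterion for Gamma ratios due to Karp and Prilepkina (Theorem~\ref{thm:KP16} in the paper); the conclusion then follows from that sign together with $\lim_{y\to\infty}\phi_p(y)=0$. You instead differentiate \emph{in $p$}, anchoring at the exact identity $\phi_{1/2}\equiv 0$ from the digamma duplication formula, and reduce everything to the single-variable inequality $G'(u)<1/u$ with $G(a)=\tfrac{d}{da}[a\,\psi(1+a)]$, for which you give a short, self-contained Laplace-integral proof via $2\sinh(t/2)>t$.

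Your approach buys self-containment: no appeal to an external theorem, and a clean key lemma with a two-line proof. The paper's approach, on the other hand, yields slightly more --- it actually establishes complete monotonicity of $\pm\phi_p'(y)$ in $y$, not merely a sign --- and situates the claim within a broader framework. For the purposes of Claim~\ref{claim:invbin:ratio} as stated, your argument is arguably the more economical of the two.
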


The fact that, by Claim~\ref{claim:invbin:ratio},
$\rho(y)=1$ for $p=1/2$ implies that the inverse binomial distribution
for the special case $p=1/2$ is \emph{precisely} a negative binomial
distribution, and thus in this case, we have 
\begin{equation} \label{eqn:ibin:half}
y_0=\sqrt{1-q}, \quad \E[\ibin_{1/2,q}]=q/(2(1-q)).
\end{equation}
We show that, when $p \neq 1/2$, the inverse binomial distribution
is still reasonably approximated by a negative binomial 
distribution of order $r=1/2$, and that the quality of this
approximation improves as $p$ gets closer to $1/2$.
%
Define \begin{equation} \label{eqn:beta0}
\beta_0 := \rho(1) = (2/p) \exp(-h(p)/p).
\end{equation} 
By combining \eqref{eqn:negbin:asym} and
\eqref{eqn:ibin:asymp}, the ratio in \eqref{eqn:invbin:ratio}
satisfies
\begin{equation} \label{eqn:beta1}
\beta_1 := \lim_{y\to \infty} \rho(y) = \frac{1}{\sqrt{2(1-p)}}.
\end{equation}
We observe that, when $p=1/2$, we have $\beta_0 = \beta_1 = 1$,
and $\beta_0 \to 2/e \approx 0.735759$ as $p\to 0$.
For $p<1/2$, we have $\beta_0 > \beta_1$, whereas for $p>1/2$,
we have $\beta_0 < \beta_1$. This leads to the following
analogous result to \eqref{eqn:Poi:PYnegbin}:
\begin{lem} \label{lem:invbin:estimate}
For a parameter $p \in (0,1)$, let 
the $\beta_0$ and $\beta_1$ be the constants defined in 
\eqref{eqn:beta0} and \eqref{eqn:beta1}. 
Let $\bDown := \min\{\beta_0, \beta_1\}$ and
$\bUp := \max\{\beta_0, \beta_1\}$ (in particular,
for $p=1/2$, we have $\bDown=\bUp=1$). Then, for all $y \geq 1$,
\begin{equation} \label{eqn:invbin:estimate}
\bDown\, \nbin_{1/2,q}(y) \leq
\frac{\sqrt{1-q}}{y_0}\, \ibin_{p,q}(y) \leq \bUp\, \nbin_{1/2,q}(y).
\end{equation}
\end{lem}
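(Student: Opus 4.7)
The plan is to reduce the lemma to a direct application of Claim~\ref{claim:invbin:ratio}, since the inequality is essentially just a statement about the range of the ratio $\rho(y)$ defined in \eqref{eqn:invbin:ratio}. The first step is the algebraic identification: using the definition \eqref{eqn:invbin} of $\ibin_{p,q}(y)$ and the definition \eqref{eqn:negbin} of $\nbin_{1/2,q}(y)$ with $r=1/2$, the $q^y$ factors on each side cancel, the $\sqrt{1-q}$ factor from $\nbin_{1/2,q}$ matches the external $\sqrt{1-q}$, and the normalizing constant $y_0$ cancels with its reciprocal, yielding
\begin{equation*}
\frac{\sqrt{1-q}}{y_0}\,\ibin_{p,q}(y)\Big/\nbin_{1/2,q}(y) \;=\; \frac{\binom{y/p}{y}\exp(-yh(p)/p)}{\binom{y-1/2}{y}} \;=\; \rho(y).
\end{equation*}
So the lemma is equivalent to the two-sided bound $\bDown \leq \rho(y) \leq \bUp$ for every integer $y \geq 1$.

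Next, I would invoke Claim~\ref{claim:invbin:ratio}, which asserts that $\rho$ is decreasing on $(0,\infty)$ when $p<1/2$, identically one when $p=1/2$, and increasing when $p>1/2$. In every case, $\rho$ is monotonic on the relevant domain, so on the integer range $y \geq 1$ it attains its extremal values at the endpoints, namely $\rho(1)$ and $\lim_{y\to\infty}\rho(y)$. These two boundary values have already been identified as the constants $\beta_0 = \rho(1)$ in \eqref{eqn:beta0} and $\beta_1 = \lim_{y\to\infty}\rho(y)$ in \eqref{eqn:beta1}; the latter was computed using the Stirling-based asymptotics \eqref{eqn:ibin:asymp} for the inverse binomial distribution and \eqref{eqn:negbin:asym} for the negative binomial distribution of order $1/2$, so no further limit analysis is needed here.

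Finally, the definitions $\bDown := \min\{\beta_0,\beta_1\}$ and $\bUp := \max\{\beta_0,\beta_1\}$ cover both monotonicity regimes uniformly: when $p<1/2$ the decreasing $\rho$ is sandwiched by $\beta_1 \leq \rho(y) \leq \beta_0$, and when $p>1/2$ the increasing $\rho$ is sandwiched by $\beta_0 \leq \rho(y) \leq \beta_1$, while the degenerate case $p=1/2$ gives $\bDown=\bUp=1$ and the inequality is trivial. This immediately yields \eqref{eqn:invbin:estimate}.

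There is essentially no obstacle in this argument per se; the entire substance of the lemma has been packed into Claim~\ref{claim:invbin:ratio} (whose proof is deferred to Appendix~\ref{app:invbin:ratio}) and into the two asymptotic computations that produced $\beta_0$ and $\beta_1$. The real work, therefore, is the monotonicity proof of $\rho$, which presumably proceeds by taking the logarithmic derivative of $\rho(y)$ (extended to real $y>0$ via the $\Gamma$-function definition of the binomial coefficient), expressing it in terms of digamma functions, and checking the sign via the Newton-series expansion \eqref{eqn:digamma:newton} together with the sign pattern of $p-1/2$; but for the present lemma this is already black-boxed.
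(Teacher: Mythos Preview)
Your proposal is correct and matches the paper's approach: the paper does not even write out an explicit proof of this lemma, presenting it as an immediate consequence of Claim~\ref{claim:invbin:ratio} together with the identification of $\beta_0=\rho(1)$ and $\beta_1=\lim_{y\to\infty}\rho(y)$ in the preceding paragraph. Your algebraic reduction to the ratio $\rho(y)$ and the monotonicity argument are exactly what the paper intends.
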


The above lemma can now be used to derive upper and lower 
estimates on the mean and normalizing constant of an
inverse binomial distribution.
Let $Y$ be distributed according to $\ibin_{p,q}$,
and $y_0$ denote the corresponding normalizing constant in
\eqref{eqn:invbin}. Furthermore, let a random variable $Z$ be distributed 
according to $\nbin_{1/2,q}$.
As in \eqref{eqn:Poi:y0:nbin}, we may now proceed by writing
\begin{align*}
1/y_0 &= 1+\sum_{y=1}^\infty \Pr[Y=y]/y_0 \\
&\stackrel{\eqref{eqn:invbin:estimate}}{\leq}
1+ \frac{\bUp}{\sqrt{1-q}}\, \sum_{z=1}^\infty \Pr[Z=z] \\
&= 1+\frac{\bUp}{\sqrt{1-q}}\left(1-\sqrt{1-q}\right) \\
&= 1+\bUp \left(\frac{1}{\sqrt{1-q}}-1\right).
\end{align*}
Moreover, we may derive a similar expression to \eqref{eqn:Poi:mu:nbin},
namely, letting $\mu := \E[Y]$, that we have
\begin{equation}
\frac{\mu}{y_0} \leq \frac{\bUp q}{2(1-q)^{3/2}}.
\label{eqn:ibin:mu:nbin}
\end{equation}
Similarly, we may derive lower bounds on $\mu$ and $y_0$
by using the lower bounding constant $\bDown$.
This leads to the following result, which is analogous
to Corollary~\ref{coro:Poi:negBinApprox} (see Figure~\ref{fig:BinEstimates}
for plots on the quality of this approximation): 
\begin{coro} \label{coro:invbin:negBinApprox}
Consider the inverse binomial distribution $\ibin_{p,q}$,
with mean $\mu$, as defined in \eqref{eqn:invbin}.
Define $\bDown$ (resp., $\bUp$) to be the minimum (resp., the maximum)
of the two constants 
$(2/p) \exp(-h(p)/p)$ and  ${1}/{\sqrt{2(1-p)}}$.
Then,
\begin{gather} \label{eqn:ibin:y0:negbin}
\log\left(1+\bDown\left(\frac{1}{\sqrt{1-q}}-1\right)\right) \leq
-\log y_0 \leq \log\left(1+\bUp\left(\frac{1}{\sqrt{1-q}}-1\right)\right),
 \\
 \label{eqn:ibin:mu:negbinA}
\frac{\bDown q}{2 (1-q)^{3/2}} \leq \frac{\mu}{y_0} \leq \frac{\bUp q}{2 (1-q)^{3/2}}, \\
\frac{\bDown q}{2 (1-q) (\sqrt{1-q}+\bUp(1-\sqrt{1-q}))} \leq
\mu \leq \frac{\bUp q}{2 (1-q) (\sqrt{1-q}+\bDown(1-\sqrt{1-q}))}.
\label{eqn:ibin:mu:negbinB}
\end{gather}

\end{coro}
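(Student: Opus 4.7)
The proof will essentially mirror the derivation of Corollary~\ref{coro:Poi:negBinApprox}, relying on Lemma~\ref{lem:invbin:estimate} as the analogue of Lemma~\ref{lem:Poi:negBinApprox}. The key input is already in place: for every $y \geq 1$, the ratio $\ibin_{p,q}(y)/(\nbin_{1/2,q}(y) y_0/\sqrt{1-q})$ is trapped in $[\bDown,\bUp]$, and we only need to sum over $y$ to translate pointwise bounds into bounds on aggregate quantities.

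First, I would handle the normalizing constant. Writing $1/y_0 = 1 + \sum_{y \geq 1} \ibin_{p,q}(y)/y_0$ and applying \eqref{eqn:invbin:estimate} termwise, the sum is sandwiched between $(\bDown/\sqrt{1-q}) \sum_{z \geq 1} \nbin_{1/2,q}(z)$ and $(\bUp/\sqrt{1-q}) \sum_{z \geq 1} \nbin_{1/2,q}(z)$. Since $\nbin_{1/2,q}(0) = \sqrt{1-q}$ and the distribution sums to one, we have $\sum_{z \geq 1} \nbin_{1/2,q}(z) = 1-\sqrt{1-q}$, which yields
\[
1+\bDown\left(\tfrac{1}{\sqrt{1-q}}-1\right) \leq 1/y_0 \leq 1+\bUp\left(\tfrac{1}{\sqrt{1-q}}-1\right).
\]
Taking logarithms gives \eqref{eqn:ibin:y0:negbin}. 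The upper bound is worked out explicitly in the text preceding the corollary; the lower bound is completely symmetric and merely requires swapping $\bUp$ for $\bDown$.

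Next, for the mean, I would write $\mu/y_0 = \sum_{y \geq 1} y\, \ibin_{p,q}(y)/y_0$ and apply \eqref{eqn:invbin:estimate} again termwise. The outcome is that $\mu/y_0$ lies between $(\bDown/\sqrt{1-q})\E[Z]$ and $(\bUp/\sqrt{1-q})\E[Z]$, where $Z \sim \nbin_{1/2,q}$ has mean $q/(2(1-q))$. This immediately gives \eqref{eqn:ibin:mu:negbinA}. To obtain \eqref{eqn:ibin:mu:negbinB}, one combines this with the bounds on $y_0$: for the upper bound on $\mu$, multiply the upper bound on $\mu/y_0$ by the upper bound on $y_0$, namely $(1+\bDown(1/\sqrt{1-q}-1))^{-1}$, and simplify by factoring $1/\sqrt{1-q}$; analogously for the lower bound.

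The only minor subtlety is ensuring the bounds on $y_0$ are applied in the correct direction when multiplying with $\mu/y_0$ to get \eqref{eqn:ibin:mu:negbinB}, but this is entirely routine algebra given that all quantities involved are positive. There is no real analytic obstacle here, because the main technical work (the monotonicity statement of Claim~\ref{claim:invbin:ratio} and the identification of the endpoint constants $\beta_0$ and $\beta_1$) is already carried out in establishing Lemma~\ref{lem:invbin:estimate}. The corollary is essentially a bookkeeping exercise that translates pointwise sandwich estimates into bounds on the first two moments of the inverse binomial distribution by comparison with the exactly known parameters $\sqrt{1-q}$ and $q/(2(1-q))$ of a negative binomial distribution of order $1/2$.
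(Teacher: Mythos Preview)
Your proposal is correct and follows essentially the same approach as the paper: the derivation immediately preceding the corollary in the text carries out exactly the upper-bound computations for $1/y_0$ and $\mu/y_0$ via Lemma~\ref{lem:invbin:estimate} and the known parameters of $\nbin_{1/2,q}$, then remarks that the lower bounds follow symmetrically by swapping $\bUp$ for $\bDown$. Your account of how \eqref{eqn:ibin:mu:negbinB} is obtained by combining the bounds on $\mu/y_0$ with those on $y_0$ is also precisely what the paper intends.
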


\subsubsection{Estimates by standard special functions}
\label{sec:invbin:spacial}

The estimates of Corollary~\ref{coro:invbin:negBinApprox} provide
high-quality upper and lower bounds on the mean and the normalizing
constant of an inverse binomial distribution in terms of 
elementary functions. In fact, the bounds
are exact for $p=1/2$, and a numerical computation shows that
\eqref{eqn:ibin:y0:negbin} and
\eqref{eqn:ibin:mu:negbinB}
are within a multiplicative factor of about $1.2$ for all
$p \leq 0.8$ and $q \in (0,1)$.
However, as $p$ approaches $1$, the quality of the estimates
degrade, as the ratio between $\bUp$ and $\bDown$
tends to infinity when $p\to 1$. In this section, we provide
a different set of upper and lower bounds in terms of 
standard special functions.

Our starting point is the following claim on the binomial coefficients
(see Appendix~\ref{app:claim:binomial} for a derivation):
\begin{claim} \label{claim:binomial}
There are universal constants $\aDown, \aUp > 0$ such that
for all $y \geq 1$ and $p \in (0,1)$, we have
\begin{equation} \label{eqn:claim:binomial}
\frac{\exp(y h(p)/p)}{\sqrt{2\pi ((1-p)y+\aDown)}} \leq
\binom{y/p}{y} \leq \frac{\exp(y h(p)/p)}{\sqrt{2\pi ((1-p)y+\aUp)}}.
\end{equation}
In particular, one may take $\aDown = 0.19$ and $\aUp = 0.12$.
\end{claim}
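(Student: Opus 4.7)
The plan is to apply a sharp two-sided Stirling bound to each of the three Gamma functions in
\[
\binom{y/p}{y} = \frac{\Gamma(1+y/p)}{\Gamma(1+y)\,\Gamma(1+y(1-p)/p)}
\]
and track the residual correction. Write $a := y/p$, $b := y$, $c := y(1-p)/p$, so that $a = b+c$, and use $\Gamma(1+x) = \sqrt{2\pi x}(x/e)^x \exp(\phi(x))$ with $1/(12x+1) \leq \phi(x) \leq 1/(12x)$, or the sharper Nemes-type estimates from \cite{ref:Nem11} if tighter control is needed. Then
\[
\binom{y/p}{y} = \sqrt{\tfrac{a}{2\pi bc}}\cdot \tfrac{a^a}{b^b c^c}\cdot \exp\bigl(\phi(a)-\phi(b)-\phi(c)\bigr).
\]
The leading factors collapse exactly: $a/(bc) = 1/((1-p)y)$ and $a^a/(b^b c^c) = p^{-y}(1-p)^{-y(1-p)/p} = \exp(y h(p)/p)$. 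So the whole task reduces to pinching the Stirling remainder $R(p,y) := \phi(a)-\phi(b)-\phi(c)$ between $\tfrac{1}{2}\log((1-p)y/((1-p)y+\aDown))$ and $\tfrac{1}{2}\log((1-p)y/((1-p)y+\aUp))$.

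A direct computation with $\phi(x)=1/(12x)$ yields $R(p,y) = -(1-p(1-p))/(12(1-p)y)$, which is always negative (so the correction goes in the right direction) and matches, to first order in $1/y$, the target $-\alpha/(2(1-p)y)$ with $\alpha(p) = (1-p(1-p))/6$. The range of $\alpha(p)$ over $p\in(0,1)$ is $[1/8,\,1/6]$, which explains why the claimed constants $\aUp = 0.12 < 1/8$ and $\aDown = 0.19 > 1/6$ are natural: the slack on each side is chosen precisely to absorb (i) the two-sided spread between the lower and upper values of $\phi$, and (ii) the $O(1/y^3)$ remainder from the higher Stirling terms. Concretely, I would define
\[
\alpha_\star(p,y) := \frac{\exp(2 y h(p)/p)}{2\pi\,\binom{y/p}{y}^2} - (1-p)y
\]
and prove the two uniform inequalities $\aUp \leq \alpha_\star(p,y) \leq \aDown$ for all $p\in(0,1)$ and $y\geq 1$; the claim is equivalent to this sandwich after squaring.

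The main obstacle is the uniformity in both variables, especially the small-$y$ regime combined with $p$ near the boundary. When $p\to 1$, the argument $c = y(1-p)/p$ tends to $0$ and the cheap bound $\phi(c) \leq 1/(12c)$ blows up, even though $\binom{y/p}{y}$ itself stays perfectly bounded (it degenerates to an essentially geometric weight in $y$). I would handle this corner either by retaining the next term of the Stirling series, so the divergent contribution of $\phi(c)$ is cancelled at the required order, or by splitting $(p,y)$ into a ``bulk'' region $y \geq y_0$ for some moderate $y_0$, where the cheap Stirling sandwich suffices after a one-variable optimization in $p$, and a residual region $1 \leq y < y_0$ dispatched by a monotonicity-in-$y$ argument on $\alpha_\star(p,\cdot)$, in the spirit of Claim~\ref{claim:Poi:ratio}, reducing the verification to a one-variable numerical check in $p$.
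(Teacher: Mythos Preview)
Your approach via explicit Stirling error bounds is genuinely different from the paper's treatment---and you should know that the paper itself does \emph{not} prove this claim rigorously. Appendix~\ref{app:claim:binomial} opens with ``We do not present a rigorous proof that the particular choices of $\aDown$ and $\aUp$ in the statement are valid, but rather, provide a convincing argument,'' and leaves a complete proof as an open question. The paper's argument defines the log-ratio $R_{p,\alpha}(y)$ of the two sides, proves via \cite{ref:KP16} that $-R_{p,0}$ is completely monotone (so $\alpha=0$ rigorously gives an upper bound on the binomial coefficient), computes $R_{p,\alpha}(1)$ explicitly to find the critical $\alpha_0(p)=-1+p+(1-p)^{2-2/p}/(2\pi)$ at which it vanishes, observes numerically that $\alpha_0(p)$ ranges over roughly $(0.136,0.176)$, picks $\aUp=0.12$ and $\aDown=0.19$ outside this interval, and finishes by pointing to plots. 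Your route instead extracts the asymptotic value $\alpha(p)=(1-p(1-p))/6\in[1/8,1/6]$ from the $1/(12x)$ Stirling remainder, which is the $y\to\infty$ analogue of the paper's $y=1$ calculation and gives a cleaner mechanistic explanation of the same numerology. Your plan is at least as developed as the paper's non-proof and arguably more systematic, since you sketch an analytic bulk/boundary decomposition rather than appealing to plots; you correctly identify the $p\to1$ corner (where $c=y(1-p)/p\to0$ and the crude bound $\phi(c)\leq1/(12c)$ diverges) as the main obstruction. Neither approach, however, actually closes the gap for the specific constants stated.
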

Now, given a random variable $Y$ that is distributed 
according to \eqref{eqn:invbin}, we can write
\begin{align*}
1/y_0 &= 1+\sum_{y=1}^\infty \binom{y/p}{y} q^y \exp(-y h(p)/p)\\
&\stackrel{\eqref{eqn:claim:binomial}}{\leq}
1+\sum_{y=1}^\infty \frac{q^y}{\sqrt{2\pi ((1-p)y+\aUp)}} \\
&= 1+\frac{1}{\sqrt{2\pi (1-p)}} \sum_{y=1}^\infty \frac{q^y}{\sqrt{y+\aUp/(1-p)}}\\
&= 1+\frac{q\, \Phi(q,1/2,1+\aUp/(1-p))}{\sqrt{2\pi (1-p)}},
\end{align*}
where $\Phi(\cdot)$ denotes the
Lerch transcendent \eqref{eqn:Lerch}.
Similarly, an upper bound on $y_0$ may be obtained by replacing
$\aUp$ with $\aDown$ in the above. We now upper bound the mean
$\mu=\E[Y]$ as follows:
\begin{align*}
\mu/y_0 &= \sum_{y=1}^\infty \binom{y/p}{y} y q^y \exp(-y h(p)/p)\\
& \stackrel{\eqref{eqn:claim:binomial}}{\leq}
\sum_{y=1}^\infty \frac{y q^y}{\sqrt{2\pi ((1-p)y + \aUp)}}  \\
&= \frac{1}{\sqrt{2\pi (1-p)}} \sum_{y=1}^\infty \frac{y q^y}{\sqrt{y + \aUp/(1-p)}}\\
&= \frac{1}{\sqrt{2\pi (1-p)}} \left(
\sum_{y=1}^\infty q^y \sqrt{y + \aUp/(1-p)}
-\frac{\aUp}{1-p}\sum_{y=1}^\infty \frac{ q^y}{\sqrt{y + \aUp/(1-p)}}\right) \\
&= \frac{\Phi(q,-1/2,1+\aUp/(1-p))}{\sqrt{2\pi (1-p)}} -
\frac{\aUp\, \Phi(q,1/2,1+\aUp/(1-p))}{\sqrt{2\pi (1-p)^3}},
\end{align*}
and similarly, a lower bound may be obtained by replacing $\aUp$
with $\aDown$.
The above estimates are summarized in the following (and depicted in Figure~\ref{fig:BinEstimates}):

\begin{thm} \label{thm:ibin:Lerch}
For parameters $p,q \in (0,1)$, let $Y$ be distributed 
according to the inverse binomial distribution \eqref{eqn:invbin},
for an appropriate normalizing constant $y_0$, and let $\mu := \E[Y]$.
Let $\aDown$, $\aUp$ be the constants from Claim~\ref{claim:binomial};
namely, $\aDown := 0.19$ and $\aUp := 0.12$.
Define the functions
\begin{align}
S_0(p,q,\alpha)&:=
1+\frac{q\, \Phi(q,1/2,1+\alpha/(1-p))}{\sqrt{2\pi (1-p)}},
 \label{eqn:S0:ibin}  \\
S_1(p,q,\alpha)&:=
\frac{\Phi(q,-1/2,1+\alpha/(1-p))}{\sqrt{2\pi (1-p)}} -
\frac{\alpha, \Phi(q,1/2,1+\alpha/(1-p))}{\sqrt{2\pi (1-p)^3}},
 \label{eqn:S1:ibin} 
\end{align}
where $\Phi(\cdot)$ denotes the
Lerch transcendent \eqref{eqn:Lerch}.
Then, the following estimates hold:
\begin{equation}
\begin{gathered}
S_0(p,q,\aDown) \leq 1/y_0 \leq S_0(p,q,\aUp),\\
S_1(p,q,\aDown) \leq \mu/y_0 \leq S_1(p,q,\aUp),\\
S_1(p,q,\aDown)/S_0(p,q,\aUp) \leq \mu \leq S_1(p,q,\aUp)/S_0(p,q,\aDown).
\end{gathered}
\end{equation}
\end{thm}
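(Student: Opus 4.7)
The plan is to apply Claim~\ref{claim:binomial} term-by-term to the series defining $1/y_0$ and $\mu/y_0$, which sandwiches each summand by a quantity that, up to an index shift, matches the series representation of the Lerch transcendent \eqref{eqn:Lerch}. Concretely, starting from
\[
1/y_0 \;=\; 1+\sum_{y=1}^\infty \binom{y/p}{y} q^y \exp(-y h(p)/p),
\]
Claim~\ref{claim:binomial} (with constant $\alpha \in \{\aDown,\aUp\}$) cancels the $\exp(y h(p)/p)$ factor, leaving
\[
\sum_{y=1}^\infty \frac{q^y}{\sqrt{2\pi ((1-p)y+\alpha)}}
= \frac{1}{\sqrt{2\pi(1-p)}} \sum_{y=1}^\infty \frac{q^y}{\sqrt{y+\alpha/(1-p)}}.
\]
Shifting the index by $y=k+1$ and comparing with \eqref{eqn:Lerch} identifies the remaining sum with $q\,\Phi(q,1/2,1+\alpha/(1-p))$. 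Taking $\alpha=\aUp$ (resp.\ $\alpha=\aDown$) yields the upper (resp.\ lower) inequality on $1/y_0$ claimed in the theorem, since $\alpha \mapsto 1/\sqrt{\cdot+\alpha}$ is monotone decreasing.

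For $\mu/y_0$, I would apply exactly the same sandwich to
\[
\mu/y_0 \;=\; \sum_{y=1}^\infty y \binom{y/p}{y} q^y \exp(-y h(p)/p),
\]
obtaining the bounds
\[
\frac{1}{\sqrt{2\pi(1-p)}} \sum_{y=1}^\infty \frac{y\,q^y}{\sqrt{y+\alpha/(1-p)}},
\]
with $\alpha \in \{\aDown,\aUp\}$. To cast this in closed form I would use the algebraic identity
\[
\frac{y}{\sqrt{y+\beta}} \;=\; \sqrt{y+\beta} \;-\; \frac{\beta}{\sqrt{y+\beta}},
\]
where $\beta:=\alpha/(1-p)$. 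After this split, each of the two sub-sums is again (after the shift $y=k+1$) a Lerch transcendent, with parameters $s=-1/2$ and $s=1/2$, respectively. This produces the combination $\Phi(q,-1/2,\cdot)-\alpha\Phi(q,1/2,\cdot)/(1-p)$ appearing in the definition of $S_1$, modulo an overall factor of $q$ that is absorbed into the series representation. For the two-sided bounds on $\mu$ itself, I would then divide the sandwich for $\mu/y_0$ by the (opposite-direction) sandwich for $1/y_0$, which immediately gives $S_1(p,q,\aDown)/S_0(p,q,\aUp) \leq \mu \leq S_1(p,q,\aUp)/S_0(p,q,\aDown)$.

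The only step requiring any real care is verifying that the monotonicity in $\alpha$ goes the right way for $S_1$ after combining the two terms with opposite signs. Since both the leading term $\Phi(q,-1/2,1+\alpha/(1-p))$ decreases in $\alpha$ (larger shift, slower growth of $\sqrt{k+\alpha/(1-p)+1}$) and the subtracted term $\alpha\,\Phi(q,1/2,1+\alpha/(1-p))$ is more subtle, I would check monotonicity directly from the untransformed sum $\sum_y y q^y/\sqrt{2\pi((1-p)y+\alpha)}$, where the dependence on $\alpha$ is manifestly monotone; the two-term decomposition is merely used to express the answer in standard form. This is really a bookkeeping step rather than an obstacle. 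Everything else is a direct consequence of Claim~\ref{claim:binomial} and the defining series of the Lerch transcendent, so no additional tools are needed.
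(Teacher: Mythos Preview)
Your proposal is correct and follows essentially the same route as the paper: apply Claim~\ref{claim:binomial} term-by-term to the series for $1/y_0$ and $\mu/y_0$, factor out $\sqrt{2\pi(1-p)}$, and identify the resulting sums with $\Phi(q,\pm 1/2,1+\alpha/(1-p))$ via the same algebraic split $y/\sqrt{y+\beta}=\sqrt{y+\beta}-\beta/\sqrt{y+\beta}$. Your remark that monotonicity in $\alpha$ should be read off the untransformed sum $\sum_y y q^y/\sqrt{2\pi((1-p)y+\alpha)}$ rather than from the two-term decomposition is a clean way to handle the sign issue, and the paper leaves this implicit.
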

Compared with the negative binomial estimate (Corollary~\ref{coro:invbin:negBinApprox}), the above estimates are
inferior around $p=1/2$. However, unlike the former,
Theorem~\ref{thm:ibin:Lerch} provides remarkably accurate
estimates on the mean and the normalizing constant of an inverse
binomial distribution for all $p, q \in (0,1)$.

\begin{figure}[t!]
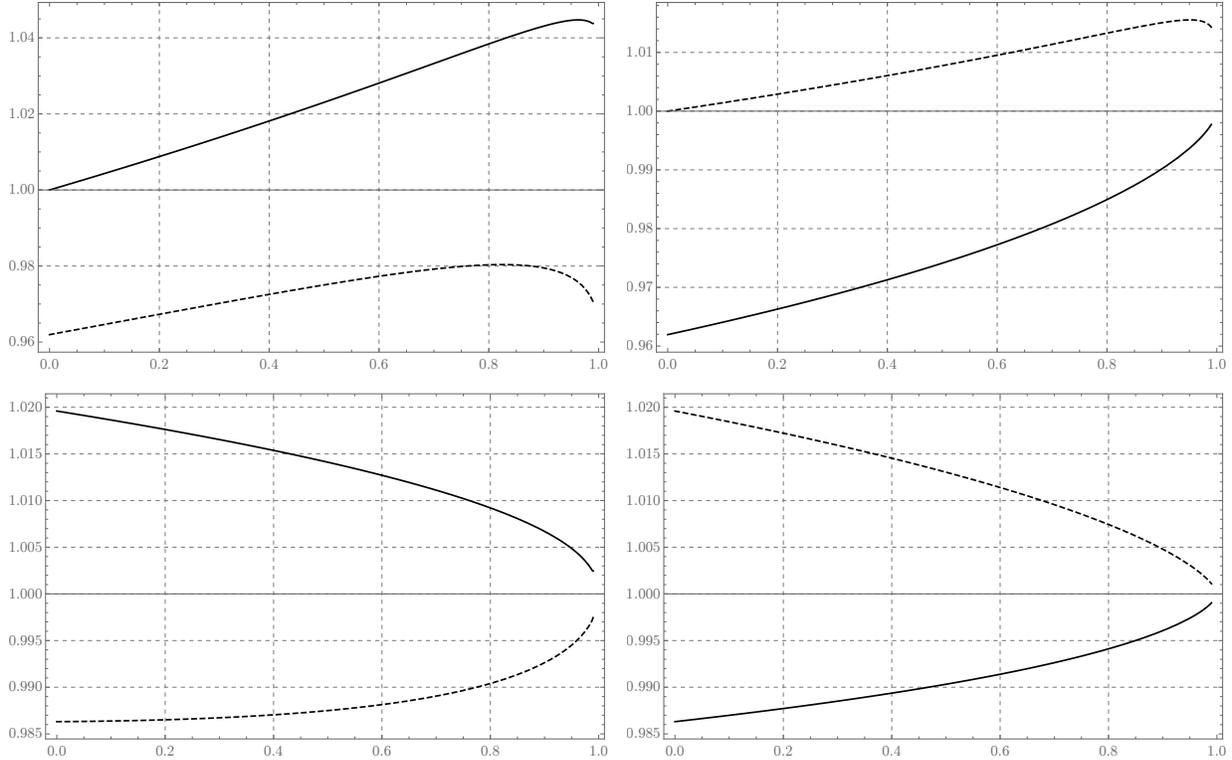

\begin{center}
\includegraphics[width=0.49 \columnwidth]{figBinMuNegBin} 
\includegraphics[width=0.49 \columnwidth]{figBinLogY0NegBin} \\[2mm]
\includegraphics[width=0.49 \columnwidth]{figBinMuLerch}
\includegraphics[width=0.49 \columnwidth]{figBinLogY0Lerch}
\end{center}
\caption{Quality of the approximations on the parameters
of a negative binomial distribution with $p=0.1$ in terms of elementary functions 
(top; using Corollary~\ref{coro:invbin:negBinApprox}) and
standard special functions (bottom; using Theorem~\ref{thm:ibin:Lerch}). 
The plots on the left depict the ratios $\overline{\mu}/\mu$ (solid)
and  $\underline{\mu}/\mu$ (dashed), as functions of $q$,
where $\overline{\mu}$ and $\underline{\mu}$ are respectively
the upper and lower bound estimates on the mean $\mu$.
Similarly, the plots on the right depict the ratios $(\log \overline{y_0})/(\log y_0)$ (solid)
and  $(\log \underline{y_0})/(\log y_0)$ (dashed), as functions of $q$,
where $\overline{y_0}$ and $\underline{y_0}$ are respectively
the upper and lower bound estimates on the normalizing constant $y_0$.
}
\label{fig:BinEstimates}
\end{figure}

\subsection{Upper bounds on the capacity of a mean-limited Binomial channel}
\label{sec:BinMeanLim}

In this section, we consider the convolution channel
$\ch_\mu(\ber_p)$, where $\ber_p$ is the Bernoulli distribution
with mean $p$ (and $d=1-p$ is the deletion probability). 
The input to this channel is a non-negative
integer $X$ and the output is a sample from the binomial
distribution $\Bin_{X,p}$. In particular, we have $\E[Y]=p \E[X]$,
and thus the mean constraint implies that $\E[X]=\mu/p$.

\subsubsection{Capacity upper bound using the inverse binomial distribution}
\label{sec:del:general}

In order to upper bound the capacity of $\ch_\mu(\ber_p)$, 
we shall apply Theorem~\ref{thm:meanlimited} with an 
appropriate choice for the distribution of $Y$. Recall that,
for every $x \geq 0$, the distribution of $Y_x$ is binomial
($\Bin_{x,p}$)
with parameters $x$ (number of trials), $p$ (success probability),
and mean $\E[Y_x]=px$. 
That is, for all integers $y \geq 0$,
\begin{equation} \label{eqn:del:Yx}
\Pr[Y_x=y] = \binom{x}{y} p^y (1-p)^{x-y}.
\end{equation}
Intuitively, the constraints in \eqref{eqn:cvx1:dualF} 
should be generally satisfied with as small a gap as possible.
Indeed, according to the KKT conditions, equality must hold for every
point on the support of an optimal input distribution $X$.
Moreover, the optimal (and in fact, any feasible) 
$X$ must have infinite support, since otherwise,
for large enough $x$ the KL divergence $\KL{Y_x}{Y}$ for 
the corresponding output distribution $Y$ would be infinite, violating
\eqref{eqn:cvx1:dualF}.
One may hope that, in the ideal case, the optimum distribution
would satisfy all constraints with equality (and that would be
necessary if the optimum $X$ had full support). However,
in Remark~\ref{rem:bin:noFull}, we rule out this possibility.

In order to identify a feasible choice for $Y$, we first use 
convexity to show that an inverse binomial distribution,
as defined in \eqref{eqn:invbin}, is a feasible choice. 
Let $Y$ be distributed according to $\ibin_{p,q}$.
Note that the entropy of $Y_x$ can be written as
\begin{align} \label{eqn:del:HYx}
H(Y_x)&=x h(p)-\E\left[ \log \binom{x}{Y_x} \right],
\end{align}
where $h(p)$ is the binary entropy function.
Then, $\KL{Y_x}{Y}$ can be written as
\begin{align}
\KL{Y_x}{Y} &= \sum_{y=0}^\infty \Pr[Y_x = y] \log \frac{\Pr[Y_x = y]}{\Pr[Y=y]}
\nonumber \\
&\stackrel{\eqref{eqn:invbin}}{=} -H(Y_x)-\log y_0-\E[Y_x] \log q
+\E[Y_x] h(p)/p
-\E\left[ \log \binom{Y_x/p}{Y_x} \right] \nonumber \\
&\stackrel{\eqref{eqn:del:HYx}}{=}
\E \log \left[\binom{x}{Y_x}\Big/  \binom{Y_x/p}{Y_x}\right]
-xp \log q-\log y_0. \label{eqn:del:KLYxY}
\end{align}
Let $\overline{Y_x}:= x-Y_x$, and observe that the distribution
of $\overline{Y_x}$ is also binomial over $x$ trials, 
but with success probability $1-p$; i.e., $\Bin_{x,1-p}$.
Define 
\[
f(y) := \log \left[\binom{x}{y}\Big/  \binom{y/p}{y}\right].
\]
We can now write the expectation in \eqref{eqn:del:KLYxY} as
\[
\E[f(Y_x)]=\log x! - \E[\log Y_x] - \E[\log \overline{Y_x}] - \E\left[ \log \binom{x}{Y_x} \right].
\]
By the Bohr-Mollerup theorem (or simply positivity of the trigamma function), 
$-\log \Gamma(y+1)$ is a concave function of $y$.  
Moreover, we observe the following,
which implies that $f(y)$ is a concave function of $y$ 
(see Appendix~\ref{app:binom:monotone} for a proof):
\begin{claim}  \label{claim:binom:monotone}
The function 
\[f(y) := \log \binom{y}{p y} = \log \frac{\Gamma(y+1)}{\Gamma(py+1)\Gamma((1-p)y+1)},\]
defined for $p \in (0,1)$ and $x>0$, is completely monotone. That is,
for all integers $j=0,1, \ldots$, $(-1)^j f^{(j)}(y) > 0$ for all $y>0$. 
\end{claim}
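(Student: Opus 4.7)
The plan is to route the problem through the polygamma functions, convert the whole family of derivative-sign statements into a single nonnegativity question about a one-variable kernel, and then dispose of that kernel with a symmetry argument in $p$.

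First I would differentiate $f(y)=\log\Gamma(1+y)-\log\Gamma(1+py)-\log\Gamma(1+(1-p)y)$ term by term to obtain, for $j\ge 1$,
\[
f^{(j)}(y)=\psi^{(j-1)}(1+y)-p^{j}\psi^{(j-1)}(1+py)-(1-p)^{j}\psi^{(j-1)}(1+(1-p)y).
\]
Then, for $j\ge 2$, I would insert the Laplace representation $\psi^{(j-1)}(1+z)=(-1)^{j}\int_{0}^{\infty}t^{j-1}e^{-zt}/(e^{t}-1)\,dt$ into each term and change variables ($s=pt$ in the middle integral, $s=(1-p)t$ in the last) so that the factors $p^{j}$ and $(1-p)^{j}$ cancel and only the exponential denominators change. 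This collapses all the higher-derivative claims into the uniform formula
\[
(-1)^{j}f^{(j)}(y)=\int_{0}^{\infty}t^{j-1}e^{-yt}\,\tilde K(t)\,dt,\qquad \tilde K(t):=\frac{1}{e^{t}-1}-\frac{1}{e^{t/p}-1}-\frac{1}{e^{t/(1-p)}-1},
\]
so the problem (for all $j\ge 2$ simultaneously) reduces to the single kernel positivity statement $\tilde K(t)>0$ for $t>0$.

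To attack $\tilde K$, the plan is to multiply through by $t$ and write $t\tilde K(t)=H(p)$ where $H(p):=\phi(t)-p\phi(t/p)-(1-p)\phi(t/(1-p))$ and $\phi(s):=s/(e^{s}-1)$ is the Bernoulli generating function. I would fix $t>0$ and treat $H$ as a function of $p$ on $[0,1]$. Elementary quotient-rule algebra gives $\phi''(s)=e^{s}[e^{s}(s-2)+s+2]/(e^{s}-1)^{3}$; the numerator $h(s)=e^{s}(s-2)+s+2$ has $h(0)=h'(0)=0$ and $h''(s)=se^{s}>0$, so $h>0$ and $\phi$ is strictly convex on $(0,\infty)$. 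Differentiating $H$ directly and simplifying produces
\[
H'(p)=\chi(t/p)-\chi(t/(1-p)),\qquad \chi(s):=s\phi'(s)-\phi(s),
\]
and since $\chi'(s)=s\phi''(s)>0$, $\chi$ is strictly increasing. Combined with $H(0^{+})=H(1^{-})=0$ (the key fact being that $p\phi(t/p)=t/(e^{t/p}-1)\to 0$ as $p\to 0^{+}$), the monotonicity of $\chi$ pins $H$ down: it rises on $(0,1/2)$, falls on $(1/2,1)$, with a strict maximum at $p=1/2$, and so $H(p)>0$ on $(0,1)$. That delivers $\tilde K(t)>0$ and closes the integral formula.

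The main obstacle---and the reason a naïve approach fails---is that the obvious term-by-term expansion $\tilde K(t)=\sum_{k\ge 1}(e^{-kt}-e^{-kt/p}-e^{-kt/(1-p)})$ has summands that are negative for small $kt$ and positive for large $kt$, so no pointwise majorization works and partial summation gives nothing clean. The reparametrization through $\phi$ and the symmetry $p\leftrightarrow 1-p$ is the crucial move; everything else is a short calculus exercise. Finally, a caveat: the literal complete-monotonicity claim as stated cannot hold at $j=1$, because $f(0)=f'(0)=0$ together with strict convexity (the $j=2$ case) forces $f'(y)>0$ on $(0,\infty)$, contradicting $-f'>0$. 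What the proof above actually delivers is that $f''$ is completely monotone on $(0,\infty)$---equivalently, $(-1)^{j}f^{(j)}(y)>0$ for all $j\ge 2$ and $y>0$---which is precisely the content used in the paper to establish concavity of $\log[\binom{x}{y}/\binom{y/p}{y}]$.
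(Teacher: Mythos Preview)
Your proof is correct and follows essentially the same overall route as the paper: both reduce the problem to positivity of the kernel
\[
\tilde K(t)=\frac{1}{e^{t}-1}-\frac{1}{e^{t/p}-1}-\frac{1}{e^{t/(1-p)}-1}
\]
(the paper obtains this by citing a result of Karp--Prilepkina; you derive it directly from the Laplace representation of the polygamma functions, which is the same content). Your caveat about $j=1$ is also exactly right and matches what the paper actually proves: the paper shows $f>0$, $f'>0$, and then that $f''$ is completely monotone, so the literal statement of the claim is mis-stated in the same way you identify.

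Where you genuinely differ is in the proof of $\tilde K(t)>0$. Your argument fixes $t$, writes $t\tilde K(t)=\phi(t)-p\phi(t/p)-(1-p)\phi(t/(1-p))$ with $\phi(s)=s/(e^{s}-1)$, and exploits convexity of $\phi$ together with the symmetry $p\leftrightarrow 1-p$ to pin the maximum at $p=1/2$. The paper's argument is shorter: it simply splits $\frac{1}{e^{t}-1}=\frac{p}{e^{t}-1}+\frac{1-p}{e^{t}-1}$ and shows each piece dominates the corresponding subtracted term, e.g.\ $\frac{p}{e^{t}-1}\ge\frac{1}{e^{t/p}-1}$, which rearranges to $pe^{t/p}-e^{t}+1-p\ge 0$ and follows from vanishing at $t=0$ with nonnegative derivative $e^{t/p}-e^{t}$. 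So the paper's route is a two-line calculus check, while yours gives a more structural picture (the kernel is a concave-type defect in $p$); both are valid, and yours would generalize more readily to other kernels with the same symmetry.
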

\noindent Therefore, we can apply Jensen's inequality and deduce that
\[
\E[f(Y_x)] \leq f(\E[Y_x]) = f(px)=\log 1 = 0,
\]
and thus, plugging this result into \eqref{eqn:del:KLYxY}, that
for $x=0,1, \ldots,$
\begin{equation}
\KL{Y_x}{Y} \leq -xp \log q-\log y_0. \label{eqn:del:KLYxY:line}
\end{equation}
Now, Theorem~\ref{thm:meanlimited} can be applied with the above
choice for $Y$, which proves the following:

\begin{thm} \label{thm:BinChUpper}
Let $p \in (0,1)$ and $q\in(0,1)$ be given parameters and $Y$ be a random
variable distributed according to the inverse binomial distribution \eqref{eqn:invbin}
for an appropriate normalizing constant $y_0$ and mean $\mu=\E[Y]$. 
Then, the capacity of the mean-limited binomial channel 
$\ch_\mu(\ber_p)$ satisfies 
\begin{equation} \label{eqn:BinMeanLimCapUpperA}
\capa(\ch_\mu(\ber_p)) \leq -\mu \log q-\log y_0.
\end{equation}
\end{thm}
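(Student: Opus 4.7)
The plan is to invoke Theorem~\ref{thm:meanlimited} with the inverse binomial distribution as the dual-feasible output distribution and verify the KKT-style inequality \eqref{eqn:cvx1:dualF} with parameters $\nu_0 = -\log y_0$ and $\nu_1 = -\log q$. Since $\E[Y_x] = px$ for the binomial transition rule, the target inequality becomes $\KL{Y_x}{Y} \leq -xp\log q - \log y_0$ for every integer $x\geq 0$, and once this is established Theorem~\ref{thm:meanlimited} immediately yields the claimed bound $\capa(\ch_\mu(\ber_p)) \leq \nu_1\mu + \nu_0 = -\mu\log q -\log y_0$.

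First I would expand the KL divergence. Writing $\Pr[Y_x=y]$ from \eqref{eqn:del:Yx} and $\Pr[Y=y]$ from \eqref{eqn:invbin}, and using the entropy identity \eqref{eqn:del:HYx} for the binomial, the divergence collapses to the clean form
\begin{equation*}
\KL{Y_x}{Y} = \E\!\left[\log\!\left(\binom{x}{Y_x}\Big/\binom{Y_x/p}{Y_x}\right)\right] - xp\log q - \log y_0.
\end{equation*}
Thus the only thing to prove is that the expectation on the right is nonpositive; call it $\E[f(Y_x)]$ with $f(y) := \log\binom{x}{y} - \log\binom{y/p}{y}$. Once we show $\E[f(Y_x)] \leq 0$, we are done.

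The natural route is Jensen's inequality: since $\E[Y_x] = px$, we have $f(\E[Y_x]) = f(px) = \log\binom{x}{px} - \log\binom{x}{px} = 0$, so concavity of $f$ in $y$ would give $\E[f(Y_x)] \leq f(px) = 0$. Concavity in turn reduces to two ingredients. The term $\log\binom{x}{y} = \log\Gamma(x+1) - \log\Gamma(y+1) - \log\Gamma(x-y+1)$ is concave in $y$ because $-\log\Gamma(\cdot+1)$ is concave (the trigamma function is positive), so the two $-\log\Gamma$ contributions are each concave. For the second term, substituting $y\mapsto y/p$ and $(1-p)y/p$ shows $\log\binom{y/p}{y}$ is a scaled instance of the function treated in Claim~\ref{claim:binom:monotone}; that claim asserts complete monotonicity of $\log\binom{y}{py}$, which in particular yields negativity of the second derivative, i.e.\ concavity. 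Hence $f$ is concave and Jensen closes the argument.

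The main obstacle is the concavity step for the $\log\binom{y/p}{y}$ piece, since a direct differentiation gives a combination of trigamma values that is not manifestly signed; however, this is exactly the content of Claim~\ref{claim:binom:monotone}, which is already stated (and proved in Appendix~\ref{app:binom:monotone}) and thus can be cited. After assembling these pieces we obtain \eqref{eqn:del:KLYxY:line}, and Theorem~\ref{thm:meanlimited} delivers \eqref{eqn:BinMeanLimCapUpperA}.
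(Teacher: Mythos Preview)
Your overall plan is exactly the paper's: expand the KL divergence to the form \eqref{eqn:del:KLYxY}, reduce the task to $\E[f(Y_x)]\le 0$ with $f(y)=\log\binom{x}{y}-\log\binom{y/p}{y}$, establish concavity of $f$, and apply Jensen at $\E[Y_x]=px$ where $f(px)=0$, then invoke Theorem~\ref{thm:meanlimited}. That matches the paper's Section~\ref{sec:del:general} step for step.

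There is, however, a sign slip in your concavity step that breaks the chain as written. Complete monotonicity in Claim~\ref{claim:binom:monotone} means $(-1)^j g^{(j)}(y)>0$; for $j=2$ this gives $g''>0$, i.e.\ \emph{convexity} of $g(y)=\log\binom{y}{py}$, not concavity. With your stated direction (``negativity of the second derivative, i.e.\ concavity'') you would have $\log\binom{y/p}{y}$ concave, hence $-\log\binom{y/p}{y}$ convex, and then $f=\text{concave}+\text{convex}$, from which ``hence $f$ is concave'' does not follow. The correct reading is: $\log\binom{y/p}{y}$ is a linear reparametrization of $g$ and therefore convex, so $-\log\binom{y/p}{y}$ is concave; combined with the concavity of $\log\binom{x}{y}$ (via $-\log\Gamma(\cdot+1)$ concave applied to both $y\mapsto y$ and $y\mapsto x-y$), $f$ is a sum of concave pieces and Jensen applies. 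With that one correction your argument is complete and coincides with the paper's.
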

If desired, the right hand side of \eqref{eqn:BinMeanLimCapUpperA}
can in turn be upper bounded by elementary functions
using the estimates provided by Corollary~\ref{coro:invbin:negBinApprox},
or by standard special functions using Theorem~\ref{thm:ibin:Lerch}.

\subsubsection{Improving the capacity upper bound using the truncation technique}
\label{sec:del:general:trunc}

The choice of the inverse binomial distribution for
the random variable $Y$ in Theorem~\ref{thm:BinChUpper}
already achieves strong capacity upper bounds for
the mean-limited binomial channel (and consequently, the deletion
channel with deletion probability $1-d$, as shown
in Section~\ref{sec:del:derivation} and Figure~\ref{fig:DelPlot}) for all $p \in (0,1)$.
However, although $Y$ achieves the feasibility requirement
\eqref{eqn:del:KLYxY:line} with equality at $x=0$, 
for large $x$ the inequality remains
strict, and, as Figure~\ref{fig:BinRpX} depicts, exhibits a constant asymptotic gap 
of $1/2$ to the linear term on
the right hand side (a similar constant gap exists for the Poisson
channel with the choice of \eqref{eqn:PoiYfirst} for $Y$,
which is eliminated by using the digamma distribution \eqref{eqn:PoiYre}, leading to the
improved bounds). 
In this section, we obtain improved results by 
implementing the truncation technique of Section~\ref{sec:Poi:truncation}
for binomially distributed random variables.

We start from the following integral representation for the log-gamma
function\footnote{A simple proof of this identity is to observe that, letting $f(z)$ denote the right hand side 
expression in \eqref{eqn:LGint}, $f(0)=0$, and furthermore, $f(z+1)-f(z) = \log(1+z)$ (which can, in turn,
be verified by taking the derivative of the integral expression of the difference in $z$, and
verify that it is indeed equal to $1/(1+z)$). Since the expression defining $f(z)$ is convex in $z$, 
and satisfies the same
recursion as the log-gamma function, it must be equal to the log-gamma function by the
Bohr-Mollerup theorem.} \cite{ref:Che17}:
\begin{align}
\log \Gamma(1+z)&=\int_0^1 \frac{1-tz-(1-t)^z}{t \log(1-t)}\,dt \label{eqn:LGint} \\
&=\int_0^1 \sum_{j=2}^\infty \frac{ \binom{z}{j} (-t)^{j-1} }{\log(1-t)}\,dt. \label{eqn:LGsum}
\end{align}
For an $x>0$, let $Y_x$ be a sample from $\Bin_{x,p};$ i.e., a binomial random
variable over $x$ trials with success probability $p$. Recall that 
\[
\Pr[Y_x=y]=\binom{x}{y} p^y (1-p)^{x-y},
\]
and that the factorial
moments of $Y_x$ are given by
\begin{equation} \label{eqn:bin:factorial}
\E\left[\binom{Y_x}{j}\right] = \binom{x}{j} p^j.
\end{equation}
Define
\begin{align}
\Ex_p(x) &:= \E[\log(Y_x!)] \label{eqn:bin:Enp} \\
&= \sum_{y=0}^\infty \binom{x}{y} p^y (1-p)^{x-y} \log y! \nonumber \\
&\stackrel{\eqref{eqn:LGsum}}{=} \E\left[ \int_0^1 \sum_{j=2}^\infty \frac{ \binom{Y_x}{j} (-t)^{j-1} }{\log(1-t)}\,dt \right] \nonumber \\
&\stackrel{\eqref{eqn:bin:factorial}}{=}
 \int_0^1 \sum_{j=2}^\infty \frac{ \binom{x}{j} p^j (-t)^{j-1} }{\log(1-t)}\,dt \nonumber \\
 &=
 \int_0^1 \frac{ 1-p t x-(1-p t)^x}{t \log(1-t)}\,dt 
 \label{eqn:bin:Enp:expand}
\end{align}

\noindent The asymptotic growth of the function $\cE_p$ is derived below.
\begin{claim} \label{claim:Enp:asymptotic}
For large $x>0$ and $p \in (0,1]$, we have $\Ex_p(x)=x p (\log(x p)-1)+\frac{1}{2} \log(x p) \pm O(1)$.
\end{claim}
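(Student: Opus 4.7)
The plan is to apply Stirling's formula $\log y! = y \log y - y + \tfrac{1}{2}\log y + O(1)$ (valid for integers $y \geq 1$, with $\log 0! = 0$ handled separately) directly inside the defining expectation $\Ex_p(x) = \E[\log Y_x!]$ for $Y_x \sim \Bin_{x,p}$ with mean $\mu := xp$. Assuming $x$ is large and $p \in (0,1]$ is fixed, $\Pr[Y_x = 0] = (1-p)^x$ is exponentially small, so this event contributes at most $O(1)$ to $\Ex_p(x)$. Using the exact identity $\E[Y_x] = \mu$ to handle the $-Y_x$ term, the claim reduces to
\[
\E[Y_x \log Y_x] = \mu \log \mu + O(1) \quad \text{and} \quad \E[\log Y_x \cdot \mathbf{1}_{Y_x \geq 1}] = \log \mu + O(1).
\]

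Each of these expectations admits one direction by Jensen's inequality (since $y \log y$ is convex and $\log y$ concave), giving $\E[Y_x \log Y_x] \geq \mu \log \mu$ and $\E[\log Y_x \cdot \mathbf{1}_{Y_x \geq 1}] \leq \log \mu$ up to a trivial $O(1)$ correction for the conditioning event. For the opposite directions, I would expand each function around $\mu$ to second order in Lagrange remainder form; the second-order term contributes $\var(Y_x)/(2\mu) = (1-p)/2 = O(1)$ for $y \log y$ and $-\var(Y_x)/(2\mu^2) = O(1/x)$ for $\log y$, while the tail of the remainder is controlled using Chernoff bounds for binomials together with the deterministic bounds $Y_x \leq x$ and $\log Y_x \leq \log x = \log \mu + O(1)$.

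The delicate point is the contribution to $\E[\log Y_x \cdot \mathbf{1}_{Y_x \geq 1}]$ from the region where $Y_x$ is small but positive, since $\log y$ is unbounded near zero. Here Chernoff gives $\Pr[Y_x < \mu/2] \leq \exp(-\Omega(x))$ for fixed $p$, and on this event $|\log Y_x| \leq \log x$, so the total contribution is $\exp(-\Omega(x)) \cdot \log x = o(1)$. The degenerate case $p = 1$ is immediate: $Y_x = x$ deterministically, so $\Ex_1(x) = \log x!$ and the claim reduces to Stirling's formula itself.
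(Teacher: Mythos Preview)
Your proposal is correct and takes a genuinely different route from the paper. The paper compares $\Ex_p(x)$ to $\log\Gamma(1+px)$ via the integral representation \eqref{eqn:bin:Enp:expand} and the log-gamma formula \eqref{eqn:LGint}, obtaining
\[
\Ex_p(x)-\log\Gamma(1+px)=\int_0^1 \frac{(1-t)^{px}-(1-tp)^x}{t\log(1-t)}\,dt,
\]
and then argues this integral is $O(1)$ by splitting the integration range into small $t$ (where a series estimate of the integrand applies) and the remaining interval (where the integrand decays exponentially in $x$); Stirling on $\log\Gamma(1+px)$ then finishes. You instead substitute Stirling pointwise inside the expectation and reduce to the moment estimates $\E[Y_x\log Y_x]=\mu\log\mu+O(1)$ and $\E[\log Y_x\cdot\mathbf{1}_{Y_x\ge1}]=\log\mu+O(1)$, handled by Jensen plus a second-order expansion with Chernoff-controlled tails. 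The paper's route is more natural in context, since the same integral comparison reappears verbatim in the proof of Claim~\ref{claim:Lam} and underpins the truncation technique of Section~\ref{sec:del:general:trunc}; your route is more self-contained and does not rely on the integral representation at all. One point to tighten: when you invoke the Lagrange remainder for $y\mapsto y\log y$, the second derivative $1/y$ is unbounded near zero, so the bound $\var(Y_x)/(2\mu)$ is not literally the Lagrange remainder term but only holds after restricting to, say, $Y_x\ge\mu/2$; your Chernoff argument already covers the complementary event, so this is just a matter of stating the split explicitly before citing the $\var(Y_x)/(2\mu)$ contribution.
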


\begin{proof}
By Stirling's approximation, we have
\[
\Ex_1(x)=\log \Gamma(1+x)=x \log x-x+\log \sqrt{2\pi x} + o(1).
\]
This particularly shows the claim for $p=1$.
Now, using the integral expression \eqref{eqn:LGint}  for the log-gamma function, 
and  \eqref{eqn:bin:Enp:expand},
we may write
\begin{align*}
\Ex_p(x)-\log \Gamma(1+p x) &= 
\int_0^1 \left( \frac{1-tpx-(1-tp)^x}{t \log(1-t)}- \frac{1-tpx-(1-t)^{p x}}{t \log(1-t)}\right)dt  \\
&= \int_0^1 \frac{(1-t)^{px}-(1-tp)^x}{t \log(1-t)}\,dt 
\end{align*}
By decomposing the integration interval in the last expression
over small $t$ and the remaining interval
(at which the integrand exponentially vanishes as $x$ grows), and estimating
the integrand for small $t$ by a series expansion, it is seen that the
difference is $O(1)$. The claim follows by applying Stirling's 
approximation on $\log \Gamma(1+px)$.
\end{proof}

Towards designing a distribution $Y$ over positive integers that
satisfies \eqref{eqn:cvx1:dualF} as tightly as possible (and
particularly with equality at $x=0$ and a sharply vanishing gap as $x$ grows),
suppose that the
probability mass function for the distribution of $Y$ (that we wish to design)
is given by a generalization of \eqref{eqn:invbin} as follows:
\begin{equation} \label{eqn:bin:y:general}
p(y) := \Pr[Y=y] = y_0 \frac{q^y \exp(g(y) -y h(p)/p)}{y!},
\end{equation}
for a function $g\colon \NN\to \R$ and appropriate normalizing 
constant $y_0$. Note that, in order to have a well-defined 
distribution, $g(y)$ can asymptotically
be at most $y \log y+O(y)$.
We may write
\begin{align}
\KL{Y_x}{Y} &= \sum_{y=0}^\infty \Pr[Y_x=y] (\log(\Pr[Y_x=y])-\log p(y)) \nonumber\\
&= -x h(p)+\E\left[\log \binom{x}{Y_x}-\log p(Y_x)\right]\nonumber \\
&\stackrel{\eqref{eqn:bin:y:general}}{=} -x h(p)+\log x!-\E[\log (x-Y_x)]-\E[\log g(Y_x)]-\E[Y_x](\log q-h(p)/p)-\log y_0 \nonumber\\
&\stackrel{\eqref{eqn:bin:Enp}}{=}\log x!-\Ex_{1-p}(x)-\E[\log g(Y_x)]-xp\log q-\log y_0.
\label{eqn:bin:KL:general}
\end{align}
Therefore, satisfying \eqref{eqn:cvx1:dualF} is equivalent to having,
for some values $a$ and $b$ possibly depending on $p$, and for all integers
$x >0$,
\begin{equation} \label{eqn:bin:KL:ab}
\E[g(Y_x)] \geq \log x!-\Ex_{1-p}(x)+a x +b.
\end{equation}

Given a parameter $\eps \in (0,1]$, 
Let $f_\eps$ be a function such that, for every integer $x \geq 0$,
\begin{equation}
\E[f(Y_x)] = \Ex_\eps(x).
\end{equation}
Such a function exists and can explicitly (and uniquely) be written, using 
\eqref{eqn:bin:factorial} and \eqref{eqn:bin:Enp:expand},
as
\begin{align}
f_\eps(y) &=  \int_0^1 \sum_{j=2}^\infty \frac{ \binom{y}{j} (\eps/p)^j (-t)^{j-1} }{\log(1-t)}\,dt 
\label{eqn:fey:a} \\
&= \sum_{k=0}^y \binom{y}{k} (1/p)^k (1-1/p)^{y-k} \Ex_\eps(k)  \label{eqn:fey:b} \\
&= \Ex_{\eps/p}(y), \label{eqn:fey:c}
\end{align}
where the second equality can be seen by observing that the equation for
factorial moments \eqref{eqn:bin:factorial} of a binomial distribution still remains syntactically valid even
if the ``success probability'' $p$ defining the distribution is greater than $1$.
Note that the summation in \eqref{eqn:fey:a} is finite for any non-negative
integer $y$, and thus the series representing $f_\eps(y)$ is always convergent
to a finite value. 

When $\eps/p < 1$, the terms in \eqref{eqn:fey:a} exponentially vanish in magnitude
as $j$ grows and $f_\eps(y)$ maintains a manageable asymptotic growth (as
shown by Claim~\ref{claim:Enp:asymptotic}). However,
for $\eps > p$, the value of $f_\eps(y)$ exponentially grows in $\eps/p$.
In this case, we will use the truncation technique of 
Section~\ref{sec:Poi:truncation} to modify $f_\eps(y)$ to a function
with manageable asymptotic growth rate whose expectation under $Y_x$ still provides
a very accurate estimate on $\Ex_\eps(x)$. Towards this goal, let $\eps \in (0,1]$,
and consider the following truncation of the integral expression \eqref{eqn:bin:Enp:expand} for $\Ex_{1/\eps}(y)$:
\begin{align}
\Lam_\eps(y) &:= \int_0^1 \frac{ 1- ty-(1-t)^y }{t \log(1-\eps t)}\,dt \label{eqn:Lam} \\
&= \int_0^\eps \frac{ 1- ty/\eps-(1-t/\eps)^y }{t \log(1-t)}\,dt \label{eqn:Lam:b} \\
&= \int_0^1 \sum_{j=2}^\infty \frac{ \binom{y}{j} (-t)^{j-1} }{\log(1-\eps t)}\,dt 
\label{eqn:Lam:a}  \\
&= \int_0^\eps \sum_{j=2}^\infty \frac{ \binom{y}{j} (-t/\eps)^{j-1} }{\eps \log(1-t)}\,dt.
\nonumber 
\end{align}
Note that $\Lam_1(y) = \log \Gamma(1+y)$.
For any fixed $\eps$ and $t<\eps$, the term $(1-t/\eps)^y$ in the integrand
of \eqref{eqn:Lam:b} exponentially vanishes in $y$, and the aim is to show that
the error caused by the truncation exponentially converges to a linear expression in $y$ as $y$ grows.
An important property of the function $\Lam_\eps$ is its expected value with respect
to a binomial distribution; namely, $\E[\Lam_\eps(Y_x)]$.
Using \eqref{eqn:Lam:a} and  \eqref{eqn:bin:factorial}, we have
\begin{align}
\E[\Lam_\eps(Y_x)] &= \int_0^1 \sum_{j=2}^\infty \frac{ \binom{x}{j} p^j (-t)^{j-1} }{\log(1-\eps t)}\,dt \nonumber \\
&= \int_0^1 \sum_{j=2}^\infty \frac{ \binom{x}{j} (p/\eps)^j (-\eps t)^{j-1} }{\log(1- \eps t)}\,d(\eps t) \nonumber \\
&= \int_0^\eps \sum_{j=2}^\infty \frac{ \binom{x}{j} (p/\eps)^j (-t)^{j-1} }{\log(1-t)}\,dt\nonumber  \\
& \stackrel{\eqref{eqn:bin:Enp:expand}}{=} \Ex_{p/\eps}(x)-
\int_\eps^1 \sum_{j=2}^\infty \frac{ \binom{x}{j} (p/\eps)^j (-t)^{j-1} }{\log(1-t)}\,dt \nonumber \\
&= \Ex_{p/\eps}(x)+\int_\eps^1 \frac{ (1-tp/\eps)^x+ xtp/\eps-1 }{t \log(1-t)}\,dt                                                                                                        
\nonumber \\
&= \Ex_{p/\eps}(x)+\int_\eps^1 \frac{ (1-tp/\eps)^x+ xtp/\eps-1 }{t \log(1-t)}\,dt
\nonumber \\ 
&= \Ex_{p/\eps}(x)+xp\li(1-\eps)/\eps-\eta(1-\eps)+
\int_\eps^1 \frac{ (1-tp/\eps)^x }{t \log(1-t)}\,dt. \label{eqn:ELam} 
\end{align}
where $\li(\cdot)$ the logarithmic integral function
\begin{equation} \label{eqn:Li:def}
\li(z) = \int_0^z \frac{dt}{\log t},
\end{equation}
and we have defined
\begin{equation} \label{eqn:eta}
\eta(z) = \int_0^z \frac{dt}{(1-t)\log t} = \sum_{j=1}^\infty \li(z^j),
\end{equation}
If $p \leq \eps \leq 1$, we notice that the integrand
in the residual term in \eqref{eqn:ELam} is non-positive,
and thus, in this case, 
\[
\eqref{eqn:ELam} \leq \Ex_{p/\eps}(x)+xp\li(1-\eps)/\eps -\eta(1-\eps). 
\]
On the other hand, the integrand is at least $(1-p)^x/(t \log(1-t))$
(note that $\log(1-t) < 0$). Therefore, we may also write, for $p \leq \eps \leq 1$,
\begin{align}
 \nonumber
\eqref{eqn:ELam} &\geq \Ex_{p/\eps}(x)+xp\li(1-\eps)/\eps -\eta(1-\eps)
+\int_\eps^1 \frac{ (1-p)^x }{t \log(1-t)}\,dt \\
&= \Ex_{p/\eps}(x)+xp\li(1-\eps)/\eps +((1-p)^x-1)\eta(1-\eps).
\label{eqn:ELam:lower}
\end{align}

\noindent The asymptotic growth rate of the function $\Lam_\eps$ is described
by the following result:
\begin{claim} \label{claim:Lam}
For large $x$, and a fixed $\eps \in (0,1]$, 
the function $\Lam_\eps$ defined in \eqref{eqn:Lam}
satisfies
\[
\Lam_\eps(x) = \frac{x}{\eps}\left(\log (x/\eps)+\li(1-\eps)-1\right) + \frac{1}{2} \log(x/\eps) \pm O(1). 
\]
\end{claim}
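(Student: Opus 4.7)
The plan is to compare $\Lam_\eps(y)$ to the log-gamma function $\log\Gamma(1+y/\eps)$, whose asymptotics are supplied by Stirling, and absorb the discrepancy into an explicit $O(1)$ remainder plus the $\frac{y}{\eps}\li(1-\eps)$ main correction. Concretely, applying the substitution $u=\eps t$ to \eqref{eqn:Lam} yields
\[
\Lam_\eps(y)=\int_0^\eps \frac{1-uy/\eps-(1-u/\eps)^y}{u\log(1-u)}\,du,
\]
and the integral representation \eqref{eqn:LGint} of the log-gamma function gives
\[
\log\Gamma(1+y/\eps)=\int_0^1 \frac{1-uy/\eps-(1-u)^{y/\eps}}{u\log(1-u)}\,du.
\]
Thus
\[
\Lam_\eps(y)-\log\Gamma(1+y/\eps)=I_1(y)-I_2(y),
\]
where $I_1(y):=\int_0^\eps \frac{(1-u)^{y/\eps}-(1-u/\eps)^y}{u\log(1-u)}\,du$ and $I_2(y):=\int_\eps^1\frac{1-uy/\eps-(1-u)^{y/\eps}}{u\log(1-u)}\,du$.

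For $I_2$, I would split the integrand as $-uy/\eps$ plus the bounded pieces, use the substitution $v=1-u$ in $\int_\eps^1 \frac{du}{\log(1-u)}$ to recognize $\li(1-\eps)$, and observe that the contribution $\int_\eps^1 \frac{(1-u)^{y/\eps}-1}{u\log(1-u)}\,du$ converges as $y\to\infty$ to a finite constant (since $(1-u)^{y/\eps}\to 0$ pointwise on $(\eps,1)$). This yields $I_2(y)=-\frac{y}{\eps}\li(1-\eps)+O(1)$. For $I_1$, the crucial observation is that on $(0,\eps)$ the integrand has a definite sign: setting $\phi(u):=y\log(1-u/\eps)-(y/\eps)\log(1-u)$, a direct derivative computation gives $\phi'(u)=-\frac{y u(1-\eps)}{\eps^2(1-u)(1-u/\eps)}\le 0$ and $\phi(0)=0$, so $(1-u/\eps)^y\le (1-u)^{y/\eps}$. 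Then the mean-value bound $1-e^{\phi(u)}\le -\phi(u)\le C y u^2$ on a fixed subinterval $(0,\delta)$, combined with $(1-u)^{y/\eps}\le e^{-uy/\eps}$ and $|u\log(1-u)|\ge u^2/2$ for small $u$, gives an integrand bounded by $O(y e^{-uy/\eps})$ whose integral over $(0,\delta)$ is $O(1)$; on $[\delta,\eps]$ both $(1-u)^{y/\eps}$ and $(1-u/\eps)^y$ decay exponentially in $y$, and $1/(u\log(1-u))$ is bounded, so this piece is also $O(1)$. Hence $I_1(y)=O(1)$.

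Combining the two estimates,
\[
\Lam_\eps(y)=\log\Gamma(1+y/\eps)+\frac{y}{\eps}\li(1-\eps)\pm O(1),
\]
and applying Stirling's expansion $\log\Gamma(1+y/\eps)=\frac{y}{\eps}\log(y/\eps)-\frac{y}{\eps}+\frac{1}{2}\log(y/\eps)+\frac{1}{2}\log(2\pi)+O(\eps/y)$ yields the claimed formula. The main obstacle is controlling $I_1$: the integrand is a difference of two quantities that coincide to leading order at $u=0$, so a naive bound blows up with $y$; the uniform bound requires the monotonicity of $\phi$ together with the quadratic vanishing near $u=0$ to beat the $1/(u\log(1-u))\sim -1/u^2$ singularity, and this is where the careful split into $(0,\delta)$ and $[\delta,\eps]$ is needed.
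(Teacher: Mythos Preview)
Your proof is correct and follows essentially the same route as the paper: both compare $\Lam_\eps(y)$ with $\log\Gamma(1+y/\eps)$ via the integral representation \eqref{eqn:LGint}, split at $u=\eps$, extract the $\frac{y}{\eps}\li(1-\eps)$ term from the tail $(\eps,1)$, bound the difference integral on $(0,\eps)$ by separating a neighborhood of $0$ from the rest, and finish with Stirling. Your treatment of $I_1$ (via the monotonicity of $\phi$ and the quadratic vanishing at $u=0$) is in fact more detailed than the paper's, which only sketches this step.
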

\begin{proof}
The proof is quite similar to that of  Claim~\ref{claim:Enp:asymptotic}.
When $\eps=1$, we have $\Lam_1(x) = \log \Gamma(1+x)$ and the claim follows
from Stirling's approximation. In general,
let $\Delta(x) := \Lam_\eps(x)-\log \Gamma(1+x/\eps)-x \li(1-\eps)/\eps$. Using
Stirling's approximation, it suffices to show that $|\Delta(x)| = O(1)$.
Let us write
\begin{align*}
\log \Gamma(1+x/\eps)= \Lam_1(x/\eps)&=\int_0^1 \frac{ 1- tx/\eps-(1-t)^{x/\eps} }{t \log(1-t)}\,dt \\
&=
\int_0^\eps \frac{ 1- tx/\eps-(1-t)^{x/\eps} }{t \log(1-t)}\,dt+
\int_\eps^1 \frac{ 1- tx/\eps-(1-t)^{x/\eps} }{t \log(1-t)}\,dt \\
&=
\int_0^\eps \frac{ 1- tx/\eps-(1-t)^{x/\eps} }{t \log(1-t)}\,dt-
\frac{x \li(1-\eps)}{\eps}-\int_\eps^1 \frac{(1-t)^{x/\eps} }{t \log(1-t)}\,dt \\
&=
\int_0^\eps \frac{ 1- tx/\eps-(1-t)^{x/\eps} }{t \log(1-t)}\,dt-
\frac{x \li(1-\eps)}{\eps}+o(1),
\end{align*}
where for the last step we have used the fact that $(1-t)^{x/\eps}$
goes down to zero as $x$ grows. 
Thus, we have
\begin{align*}
\Delta(x) &\stackrel{\eqref{eqn:Lam}}{=} \int_0^1 \frac{ 1- tx-(1-t)^{x} }{t \log(1-\eps t)}\,dt
-\int_0^\eps \frac{ 1- tx/\eps-(1-t)^{x/\eps} }{t \log(1-t)}\,dt-o(1) \\
&= \int_0^\eps \frac{ (1-t)^{x/\eps}-(1-t/\eps)^{x} }{t \log(1-t)}\,dt-o(1),
\end{align*}
where in the second step we have used a change of variables. 
 The last integral can be shown to be upper bounded by a constant by
 decomposing the integral over small $t$ (and using a series estimate
 of the integrand) and the remaining interval (over which the integrand
 tends to zero as $x$ grows). \Mnote{expand on this?}
  This completes the proof. 
\end{proof}

Now, we are ready to define an appropriate expression for the distribution
of $Y$ that satisfies \eqref{eqn:bin:KL:ab} for choices of $a$ and $b$
possibly depending on $p$. Before doing so, let $H(\Bin_{x,p})$
denote the entropy of the binomial distribution with parameters $x$
and $p$ and consider the following
function defined over the integers $y \geq 0$:
\begin{align} 
\lam_p(y) &= \sum_{k=0}^y \binom{y}{k} (1/p)^k (1-1/p)^{y-k} H(\Bin_{k,p})
\nonumber \\
&= \sum_{k=0}^y \binom{y}{k} (1/p)^k (1-1/p)^{y-k} (y h(p) - \log k!
+ \Ex_p(k) + \Ex_{1-p}(k)) \nonumber \\
&= yh(p)/p + \log y! -\Ex_{1/p}(y) + \Ex_{1/p-1}(y), \label{eqn:lamP}
\end{align}
where we have syntactically extended the definition of the function $\Ex_p$
in \eqref{eqn:bin:Enp} to $p > 1$, used the fact that the expressions
for the moments (in particular, the
mean) of the binomial distribution, regarded syntactically,
remain true even if $p>1$, as well as the following observation:

\begin{prop} \label{prop:Enp}
The functions $\Ex_p$ defined in \eqref{eqn:bin:Enp} satisfy,
for all integers $y \geq 0$ and all $q>0$, 
\[
\sum_{k=0}^y \binom{y}{k} q^k (1-q)^{n-k} \Ex_p(k) = \Ex_{pq}(y).
\] 
\end{prop}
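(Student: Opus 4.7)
My plan is to derive the identity directly from the integral/series representations of $\Ex_p$ already established at \eqref{eqn:bin:Enp:expand} and \eqref{eqn:fey:a}, by applying the binomial-average operator and recognizing the result. Let $K$ be a random variable distributed according to $\Bin_{y,q}$, so that the left-hand side of the claimed identity is precisely $\E[\Ex_p(K)]$.

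The cleanest route is via the series representation. From the derivation leading to \eqref{eqn:bin:Enp:expand}, one has
$$\Ex_p(k) = \int_0^1 \sum_{j=2}^\infty \frac{\binom{k}{j} p^j (-t)^{j-1}}{\log(1-t)}\,dt,$$
where for integer $k$ the inner sum is in fact finite (terms with $j>k$ vanish), so every exchange of summation and expectation is immediate. Taking the expectation with $k=K$ and using the factorial-moment identity $\E[\binom{K}{j}] = \binom{y}{j} q^j$, both sides of which are polynomials in $q$ of degree $j$, yields
$$\E[\Ex_p(K)] = \int_0^1 \sum_{j=2}^\infty \frac{\binom{y}{j} (pq)^j (-t)^{j-1}}{\log(1-t)}\,dt = \Ex_{pq}(y),$$
where the final equality is the same representation applied with parameter $pq$ in place of $p$ and $y$ in place of $x$.

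Alternatively, and perhaps more transparently, one can work with the closed-form integrand \eqref{eqn:bin:Enp:expand}: since the numerator $1-ptk-(1-pt)^k$ is uniformly bounded in $k$ on $[0,1]$ and behaves as $O(ptk \cdot t)$ near $t=0$, Fubini applies, and
$$\E[\Ex_p(K)] = \int_0^1 \frac{1 - pt\,\E[K] - \E[(1-pt)^K]}{t \log(1-t)}\,dt.$$
Using $\E[K] = qy$ together with the binomial probability generating function $\E[(1-pt)^K] = (1-q + q(1-pt))^y = (1-qpt)^y$, the integrand simplifies to $(1-(pq)ty - (1-(pq)t)^y)/(t\log(1-t))$, which is exactly the integrand defining $\Ex_{pq}(y)$.

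The whole proof is a one-line calculation once either representation of $\Ex_p$ is in hand, so there is no real obstacle; the only minor care needed is to note that the extension of the binomial factorial-moment identity to real-valued $p$ (here $pq$ can exceed $1$ in intermediate manipulations, though not in the intended regime) remains syntactically valid because both sides are polynomial identities in the parameter, matching the syntactic extension of $\Ex_p$ to $p>1$ already invoked in \eqref{eqn:lamP}.
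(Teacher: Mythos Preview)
Your proof is correct. Both routes you give are valid, and your remark at the end about the polynomial nature of the identity is exactly what is needed to cover $q>1$.

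The paper's argument is slightly different in presentation, though equivalent in substance. Rather than going through the integral or series representation of $\Ex_p$, it works directly from the probabilistic definition $\Ex_p(k)=\E[\log W!]$ with $W\sim\Bin_{k,p}$: if $K\sim\Bin_{y,q}$ and then $W\sim\Bin_{K,p}$, the composition property of binomials gives $W\sim\Bin_{y,pq}$, whence the left-hand side equals $\E[\log W!]=\Ex_{pq}(y)$. The paper then observes that this composition law holds ``syntactically'' for $q>1$ as well. Your approach unpacks this same fact through the factorial-moment identity $\E\bigl[\binom{K}{j}\bigr]=\binom{y}{j}q^j$ (or equivalently the generating-function identity), which makes the polynomial validity for $q>1$ more transparent; the paper's phrasing is a bit more conceptual but requires the reader to accept the formal extension of $\Bin_{y,q}$ to $q>1$. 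The content is the same.
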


\begin{proof}
This is a direct consequence of the fact that, letting
$Y \sim \Bin_{y,q}$ and $K \sim \Bin_{Y,p}$, 
we have $K \sim \Bin_{y,pq}$, and that this is true in
a syntactical sense even if $p$ and $q$ are allowed to
be larger than one. In this case, the left hand side
is $\E[\log K!]$ which, noting that the distribution
of $K$ is binomial (namely, $\Bin_{y,pq}$), 
can be rewritten as $\Ex_{pq}(y)$.
\end{proof}

Now, recall the binomially distributed random variable $Y_x$ and
observe that, using \eqref{eqn:lamP} and Proposition~\ref{prop:Enp}
we may write
\[
\E[\lam_p(Y_x)] = xh(p) + \Ex_p(x) - \Ex_1(x) + \Ex_{1-p}(x)=H(\Bin_{x,p}),
\]
and therefore, if there is a normalizing constant $y_0$ that, for some $q>0$, makes the
following a legitimate probability mass function over the non-negative integers,
\begin{equation} \label{eqn:bin:pY:alleged}
\Pr[Y=y] = y_0 q^y \exp(-\lam_p(y)),
\end{equation}
the distribution of $Y$ would then satisfy the KKT conditions 
\eqref{eqn:cvx1:dualF}  with equality for all $x$. However, as we
discussed before, for all $p \in (0,1]$, the value of $\lam_p(y)$
(in particular, the value of $\Ex_{1/p}(y)$) exponentially grows in $y$.
Therefore, no value of $q$ and $y_0$ can normalize the above
distribution to a legitimate one. To address this, we employ the
truncation technique that we developed in this section to 
modify $\lam_p(y)$ to a function that exhibits a controllable growth 
rate, but nevertheless, approximates the behavior of $\lam_p(y)$
and satisfies the KKT conditions sharply.
Towards this goal, we distinguish two cases; namely, when $p \in (0,1/2)$
and when $p \in [1/2,1]$.

\begin{remark} \label{rem:bin:noFull}
As was the case for the mean-limited Poisson channel (see Remark~\ref{rem:Poi:noFull}),
for the mean-limited binomial channel with input distribution $X$ and output distribution
$Y$, we may observe that the capacity achieving distribution for $Y$ (and subsequently,
the capacity achieving distribution for $X$) must have
infinite support. Otherwise, for sufficiently large $x$, the quantity 
$\KL{Y_x}{Y}$ would be infinite, violating the KKT conditions \eqref{eqn:cvx1:dualF}.
 Moreover, we observe that the function $\lam_p(y)$ is
 the unique function that satisfies $\E[\lam_p(Y_x)]=H(\Bin_{x,p})$ for all
 non-negative integers $x$ (due to the triangular nature of the corresponding
 system of linear equations, the solution to this functional equality must be unique).
 In turn, this implies that, up to a change in the linear term, the expression
  \eqref{eqn:bin:pY:alleged} for the distribution of $Y$
  is the unique choice that would satisfy the KKT conditions \eqref{eqn:cvx1:dualF}
  with equality for all $x$. However, as this alleged distribution cannot be 
  normalized, and since the KKT conditions must be satisfied with equality over the entire
  support of any optimal $X$, it follows that the capacity achieving distribution for $X$
  cannot have full support over non-negative integers, even though its support must be infinite. As was
  discussed for the Poisson case in Remark~\ref{rem:Poi:noFull}, this makes
  intuitive sense from a coding perspective. Intuitively, for every nonzero $x$ on the support
  of $X$, the next higher integer supported by $X$ should be about $x+\Omega(\sqrt{x(1-p)/p})$ 
  (and the first nonzero $x$ should be $\Omega(1/p)$).
  This ensures that the channel outputs corresponding to different elements 
  on the support of $X$ are sufficiently spread out (as dictated by the corresponding standard
  deviations) to avoid substantial confusions on the decoder side.
\end{remark}

\subsubsection*{Case 1: Truncation when $p \in [1/2,1]$}

When $p \geq 1/2$, the only exponentially growing term in \eqref{eqn:lamP}
is $\Ex_{1/p}(y)$, which we truncate to $\Lam_p(y)$ as defined in \eqref{eqn:Lam}.
Consequently, we define the function
\begin{equation} \label{eqn:gpy:largeP}
g_p(y) := \Lam_p(y)-\Ex_{1/p-1}(y)-y \li(1-p)/p+\eta(1-p),
\end{equation}
where $\eta$ is the function in \eqref{eqn:eta} and $\Ex_{1/p-1}$ is
defined according to  \eqref{eqn:bin:Enp}. We may write,
using \eqref{eqn:ELam} and Proposition~\ref{prop:Enp},
\begin{align*}
\E[g_p(Y_x)] &= \Ex_{1}(x)+x\li(1-p) -\eta(1-p)
-\Ex_{1-p}(x) - x\li(1-p) + \eta(1-p) + \int_{p}^1 \frac{(1-t)^x dt}{t \log(1-t)} \\
&= \log x!-\Ex_{1-p}(x) + \int_{p}^1 \frac{(1-t)^x dt}{t \log(1-t)},
\end{align*}
where we notice that the ``error term'' 
$\int_{p}^1 (1-t)^x dt/(t \log(1-t))$ is an incomplete
variation (or tail) of the integral
\eqref{eqn:LGsum} defining the log-gamma function, 
and exponentially vanishes as $x$ grows.

Recall the Kronecker delta function
\[
\delta(y) := \begin{cases}
1 & y=0, \\
0 & y \neq 0,
\end{cases}
\]
and note that $\E[\delta(Y_x)] = (1-p)^x$. 
Now, define
\begin{equation} \label{eqn:bin:g:plarge}
g(y) := g_p(y)-\eta(1-p) \delta(y) = \begin{cases}
0 & y=0 \\
g_p(y) & y > 0.
\end{cases}
\end{equation}
By combining the above expectation and \eqref{eqn:ELam:lower},
we see that
\begin{equation} \label{eqn:bin:EgYx:pLarge}
\E[g(Y_x)] = \log x!
-\Ex_{1-p}(x) + R_p(x) \geq \log x!-\Ex_{1-p}(x),
\end{equation}
where we have defined
\begin{equation} \label{eqn:RpX:pLarge}
R_p(x) := \int_{p}^1 \frac{(1-t)^x-(1-p)^x}{t \log(1-t)}\, dt \geq 0.
\end{equation}
Note that the error $R_p(x)$ is zero for $p=1$, and is always 
non-negative since the integrand is non-negative for all $t \in [p,1]$.
Using the above choice for $g(y)$ in the probability mass
function of $Y$ in \eqref{eqn:bin:y:general} results in
\[
\KL{Y_x}{Y} \stackrel{\eqref{eqn:bin:KL:general}}{=}
-xp \log q - \log y_0 - R_p(x) \leq -xp \log q - \log y_0,
\]
thus satisfying the KKT conditions \eqref{eqn:cvx1:dualF}
for all $x \geq 0$ and with the choices $\nu_1 = -\log q$
and $\nu_0 = -\log y_0$.

The asymptotic behavior of the above $g(y)$ can be deduced
from Claims \ref{claim:Enp:asymptotic}~and~\ref{claim:Lam}. Namely, we have
\begin{align*} 
g(y) &= \frac{y}{p}(\log(y/p)+\li(1-p)-1)-\frac{y(1-p)}{p} \left(\log\left(\frac{y(1-p)}{p}\right)-1\right)-\frac{y}{p}\li(1-p) \\&+
\frac{1}{2} \log(y/p) - \frac{1}{2} \log\left(\frac{y(1-p)}{p}\right) \pm O(1) \\
&= y h(p)/p+y \log(y/e)\pm O_p(1),
\end{align*}
which, combined with Stirling's approximation $\log y! = y \log(y/e)+\log \sqrt{2\pi y}+o(1)$, 
implies that the probability mass
function of $Y$ in \eqref{eqn:bin:y:general} asymptotically
behaves as $q^y/\sqrt{y}$ and can thus be normalized to a legitimate
distribution precisely when $q \in (0,1)$.
Consequently, as was the case for the inverse binomial
distribution, the expectation of the distribution can be made
arbitrarily small as $q\to 0$ and arbitrarily large as $q\to 1$.
 
\subsubsection*{Case 2: Truncation when $p \in (0,1/2]$}

For values of $p$ below $1/2$, both $\Ex_{1/p}(y)$ and $\Ex_{1/p-1}$
in \eqref{eqn:lamP} exponentially grow and must both be truncated.
Accordingly, we modify the function $g_p(y)$ in \eqref{eqn:gpy:largeP}
to, recalling the functions $\Lam_\eps$ from \eqref{eqn:Lam} and 
$\eta$ from \eqref{eqn:eta}, the following:
\begin{align} 
g_p(y) &:= \Lam_p(y)-\Lam_{p/(1-p)}(y)+
\frac{y}{p}\left( (1-p)\li\left(\frac{1-2p}{1-p}\right)-\li(1-p) \right)+
\eta(1-p)-\eta\left(\frac{1-2p}{1-p}\right) \nonumber \\
&= \Lam_p(y)-\Lam_{p/(1-p)}(y)+
\frac{y}{p}\left( (1-p)\li\left(\frac{1-2p}{1-p}\right)-\li(1-p) \right)+
\int_{p}^{\frac{p}{1-p}} \frac{dt}{t \log(1-t)}, \label{eqn:gpy:smallP}
\end{align}
and, similar to the previous case, adjust it with a Kronecker delta as follows
\begin{equation} \label{eqn:bin:g:psmall}
g(y) = g_p(y)  - \int_{p}^{\frac{p}{1-p}} \frac{\delta(y) dt}{t \log(1-t)}=
\begin{cases}
0 & y=0, \\
g_p(y) & y > 0.
\end{cases}
\end{equation}

\begin{remark}
Note that $\li(0)=\eta(0)=0$, and $\Lam_1(y)=\Ex_{1}(y)=\log \Gamma(1+y)$. Therefore,
we see that for the boundary case $p=1/2$, the expressions given by  \eqref{eqn:gpy:largeP}
and  \eqref{eqn:gpy:smallP} coincide.
\end{remark}

A remarkable property of the distribution defined with respect to the above choice
of $g$ in \eqref{eqn:bin:g:psmall} is that, in the limit $p \to 0$, 
the distribution converges to the digamma distribution \eqref{eqn:PoiYre} that we designed for the 
mean-limited Poisson channel. Therefore, the distribution designed using
the truncation technique in
this section is indeed the right generalization of what we constructed for the Poisson case
to the more general setting of the binomial channel. This is formalized below.

\begin{prop} \label{prop:smallp:approx}
Consider the distribution \eqref{eqn:bin:y:general}, where $g(y)$ is
defined according to \eqref{eqn:bin:g:psmall}. Then,
\[
\lim_{p \to 0} \Pr[Y=y]=y_0 \frac{\exp(y \psi(y)) (q/e)^y}{y!},
\]
where $\psi(\cdot)$ is the digamma function. That is, 
the distribution converges, pointwise, to the digamma distribution \eqref{eqn:PoiYre}.
\end{prop}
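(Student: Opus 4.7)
The goal is to show that for every fixed positive integer $y$,
\[
\lim_{p\to 0}\bigl(g_p(y)-y h(p)/p\bigr)=y\psi(y)-y,
\]
at which point $\exp(g(y)-yh(p)/p)$ tends to $\exp(y\psi(y))/e^y$ and the displayed pointwise limit drops out of \eqref{eqn:bin:y:general}. The case $y=0$ is immediate since $g(0)=0$ by \eqref{eqn:bin:g:psmall}. The plan is to expand each of the four ingredients of $g_p(y)-yh(p)/p$ in $p$ and observe that the divergent $O(1/p)$ contributions cancel, leaving the clean Poisson-type limit.

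First I would handle the difference $\Lam_p(y)-\Lam_{p/(1-p)}(y)$. Writing both using the integral representation \eqref{eqn:Lam} over the same interval $[0,1]$ and using the expansion $1/\log(1-a)=-1/a+1/2+a/12+O(a^2)$ at $a=pt$ and $a=pt/(1-p)$, the $-1/a$ parts give $-1/(pt)$ and $-(1-p)/(pt)$, whose difference is $-1/t$. Hence the $1/p$ singularity cancels and
\[
\Lam_p(y)-\Lam_{p/(1-p)}(y)=-\int_0^1\frac{1-ty-(1-t)^y}{t^{2}}\,dt+o(1).
\]
To evaluate the remaining integral I would integrate by parts with $u'=1/t^2$ (the boundary contribution at $0$ vanishes because $1-ty-(1-t)^y=O(t^2)$ near $0$), obtaining $y-1+y\int_0^1\bigl((1-t)^{y-1}-1\bigr)/t\,dt=y-1-yH_{y-1}$. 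Therefore $\Lam_p(y)-\Lam_{p/(1-p)}(y)\to 1-y+yH_{y-1}$.

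Next I would expand the logarithmic-integral term. Using the classical expansion $\li(1-x)=\gamma+\log x-x/2+O(x^2)$ as $x\to 0^+$, applied at $x=p/(1-p)$ and $x=p$, and multiplying the first by $(1-p)$, a direct computation gives
\[
(1-p)\li\!\bigl(\tfrac{1-2p}{1-p}\bigr)-\li(1-p)=p(1-\gamma-\log p)+O(p^2),
\]
so that the corresponding contribution to $g_p(y)$ equals $y(1-\gamma-\log p)+o(1)$. For the last piece, the change of variables $t=ps$ with $s\in[1,1/(1-p)]$ and the expansion of $1/\log(1-ps)$ as above show
\[
\int_p^{p/(1-p)}\!\frac{dt}{t\log(1-t)}=-1+O(p).
\]
Finally, $yh(p)/p=-y\log p+y+o(1)$ as $p\to 0$, because $(1-p)\log(1-p)/p\to -1$.

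Adding the three small-$p$ limits for $g_p(y)$ and subtracting that of $yh(p)/p$, the $-y\log p$ singularities cancel and everything collapses to $yH_{y-1}-\gamma y-y=y\psi(y)-y$, using $\psi(y)=H_{y-1}-\gamma$. Substituting into \eqref{eqn:bin:y:general} yields $\Pr[Y=y]\to y_0\exp(y\psi(y))(q/e)^y/y!$, which is exactly \eqref{eqn:PoiYre}. The main technical care is in step one: the individual terms $\Lam_p(y)$ and $\Lam_{p/(1-p)}(y)$ diverge like $1/p$, and one must track the expansion of $1/\log(1-a)$ one order past its singular term to see the cancellation and then rigorously justify interchanging the series expansion with the bounded integral over $[0,1]$ (which is routine since the integrand is uniformly $O(1)$ away from $t=0$ and integrable near $t=0$ by the $O(t^2)$ zero of the numerator).
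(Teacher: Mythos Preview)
Your proof is correct and follows the same overall decomposition as the paper: handle the $\Lambda$-difference, the $\li$-term, the short integral, and $yh(p)/p$ separately, then add. The computations for the last three ingredients are essentially identical to the paper's \eqref{eqn:lam:limit:b} and \eqref{eqn:lam:limit:c}.

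The one place your route differs is in evaluating $\lim_{p\to 0}\bigl(\Lambda_p(y)-\Lambda_{p/(1-p)}(y)\bigr)$. The paper passes through the series representation \eqref{eqn:Lam:a}, defines $\Lambda_0(y):=\lim_{\eps\to 0}\eps\Lambda_\eps(y)=\sum_{j\ge 2}\binom{y}{j}(-1)^j/(j-1)$, and then manipulates this binomial sum via the Newton series of the digamma function to reach $y\psi(y)+(\gamma-1)y+1$. You instead stay with the integral form \eqref{eqn:Lam}, expand $1/\log(1-a)$ to the next order so the $1/p$ singularities visibly cancel, and evaluate the resulting $\int_0^1(1-ty-(1-t)^y)/t^2\,dt$ by integration by parts and the standard identity $\int_0^1((1-t)^{y-1}-1)/t\,dt=-H_{y-1}$. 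Both arrive at $1-y+yH_{y-1}$; your route is arguably more elementary in that it avoids the Newton-series detour and makes the cancellation of the divergent parts explicit, whereas the paper's series approach ties more directly into the factorial-moment machinery used elsewhere in the section.
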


\begin{proof}
Recall, from \eqref{eqn:Lam:a}, the series expansion
\[
\Lam_\eps(y)=\int_0^1 \sum_{j=2}^\infty \frac{ \binom{y}{j} (-t)^{j-1} }{\log(1-\eps t)}\,dt.
\]
Let $\Lam_0(y) := \lim_{\eps \to 0} \eps \Lam_\eps(y)$.
In the limiting case $\eps \to 0$, the denominator of the above series can be estimated by $\eps t$, so that
we have,
\begin{align}
\Lam_0(y) &= \int_0^1 \sum_{j=2}^\infty \binom{y}{j} (-t)^{j-2}\,dt \nonumber \\
&= \sum_{j=2}^\infty \binom{y}{j} \frac{(-1)^{j}}{j-1} \nonumber \\
&= -y \sum_{j=1}^\infty \binom{y-1}{j} \frac{(-1)^{j}}{j(j+1)} \nonumber \\
&= -y \left(\sum_{j=1}^\infty \binom{y-1}{j} \frac{(-1)^{j}}{j}-
\sum_{j=1}^\infty \binom{y-1}{j} \frac{(-1)^{j}}{j+1}\right) \nonumber \\
&\stackrel{\eqref{eqn:digamma:newton}}{=} 
-y \left(-\gamma-\psi(y)-
\sum_{j=1}^\infty \binom{y-1}{j} \frac{(-1)^{j}}{j+1}\right) \nonumber \\
&= y \psi(y)+\gamma y +\sum_{j=1}^\infty \binom{y}{j+1} (-1)^{j} \nonumber \\
&= y \psi(y)+(\gamma-1) y+1, \label{eqn:lam:limit:a}
\end{align}
where for the last equality we have used $\sum_{j=0}^\infty \binom{y}{j} (-1)^j=(1-1)^y=0$.
From the expansion of the logarithmic integral (equivalently, exponential integral
combined with logarithm, see \cite[p.\ 229]{ref:Ab72}), which is,
\[
\li(1-\eps)=\gamma+\log \eps -\eps/2 - \eps^2/24 - O(\eps^3),
\]
we may write
\begin{align}
(1-p)\li\left(\frac{1-2p}{1-p}\right)-\li(1-p)=-\gamma p +(1-p) \log(p/(1-p))-\log p
-O(p)=-\gamma p+h(p)-O(p^3). \label{eqn:lam:limit:b}
\end{align}
Furthermore, by approximating $1/(t \log(1-t))$ by $-1/t^2$ for small $t$, we can deduce
that
\begin{align}
\lim_{p \to 0}\int_{p}^{\frac{p}{1-p}} \frac{dt}{t \log(1-t)}=
-\lim_{p \to 0}\int_{p}^{\frac{p}{1-p}} \frac{dt}{t^2}=-1. \label{eqn:lam:limit:c}
\end{align}
By plugging the above results \eqref{eqn:lam:limit:a}, \eqref{eqn:lam:limit:b},
and \eqref{eqn:lam:limit:c} into \eqref{eqn:gpy:smallP}, we may now see that,
for any $y>0$,
\begin{equation} \label{eqn:lam:limit:d}
g(y) = g_p(y)= y (\psi(y)+ h(p)/p - 1 \pm O(p)).
\end{equation}
Recalling from \eqref{eqn:bin:y:general}, that
\[
\Pr[Y=y] = y_0 \frac{q^y \exp(g(y) -y h(p)/p)}{y!},
\]
we can use \eqref{eqn:lam:limit:d} to write
\[
\lim_{p \to 0} \Pr[Y=y]=y_0 \frac{\exp(y \psi(y)) (q/e)^y}{y!},
\]
as claimed.
\end{proof}

An important property of the function $g$ that we need to use is 
its expectation with respect to a binomial distribution.
This can be expressed, using \eqref{eqn:ELam}, as
\begin{align}
\E[g(Y_x)] &= \Ex_{1}(x)+x \li(1-p)-\eta(1-p)+\int_{p}^1 \frac{(1-t)^x - (1-p)^x}{t \log(1-t)}\, dt \nonumber \\
&- \Ex_{1-p}(x)-x(1-p)\li\left(\frac{1-2p}{1-p}\right)+\eta\left(\frac{1-2p}{1-p}\right)
-\int_{\frac{p}{1-p}}^1 \frac{(1-(1-p)t)^x - (1-p)^x}{t \log(1-t)}\, dt  \nonumber \\
&+x\left( (1-p)\li\left(\frac{1-2p}{1-p}\right)-\li(1-p) \right)
+\eta(1-p)-\eta\left(\frac{1-2p}{1-p}\right)  \nonumber \\
&= \log x!-\Ex_{1-p}(x)+R_p(x), \label{eqn:bin:EgYx:pSmall}
\end{align}
where we have defined
\begin{align}
R_p(x) &:= \int_{p}^1 \frac{(1-t)^x - (1-p)^x}{t \log(1-t)}\, dt
-\int_{\frac{p}{1-p}}^1 \frac{(1-(1-p)t)^x - (1-p)^x}{t \log(1-t)}\, dt  \nonumber  \\
&= \int_{p}^1 \frac{(1-t)^x}{t \log(1-t)}\, dt
-\int_{\frac{p}{1-p}}^1 \frac{(1-(1-p)t)^x}{t \log(1-t)}\, dt
 -(1-p)^x \int_{p}^{\frac{p}{1-p}} \frac{dt}{t \log(1-t)}. \label{eqn:RpX:pSmall}
\end{align}  

\begin{figure}[t!]
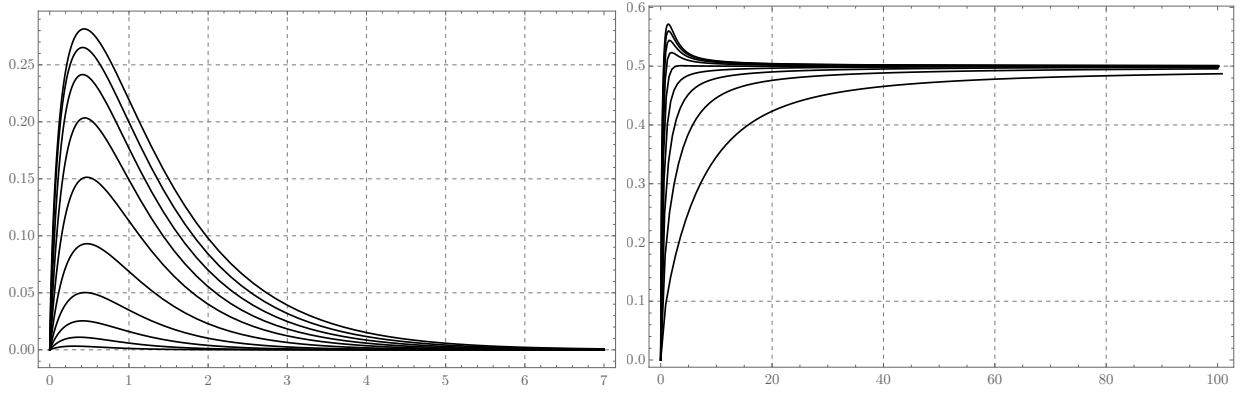

\begin{center}
\includegraphics[width=0.49 \columnwidth]{figBinRpX}
\includegraphics[width=0.49 \columnwidth]{figBinKLGapInvBin}
\end{center}
\caption{Left: Plots of the truncation error $R_p(x/p)$, defined in  \eqref{eqn:RpX:pLarge} and  
\eqref{eqn:RpX:pSmall}, as a function of $x$, for various values of $p$.  
From the highest to the lowest plot: $p\to 0$ (i.e., $x E_1(x)$), and
$p=0.1, 0.2, \ldots, 0.9$. We recall that $R_1(x)=0$.
Right: A similar plot of the KKT gaps attained by the inverse binomial distribution \eqref{eqn:invbin}.
}
\label{fig:BinRpX}
\end{figure}

The error quantity $R_p(x)$, depicted in Figure~\ref{fig:BinRpX}, can be shown to be always non-negative:
\begin{claim} \label{claim:bin:Rpx}
For all $x \geq 0$ and $p \in (0,1/2]$, the quantity $R_p(x)$ defined in 
\eqref{eqn:RpX:pLarge} is non-negative.
\end{claim}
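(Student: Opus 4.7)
The plan is to exhibit $R_p(x)$ as a sum of integrals with manifestly non-negative integrands, by means of a substitution that aligns the two integrals of \eqref{eqn:RpX:pSmall} onto a common interval. Throughout, the key structural observation is that $t\log(1-t)<0$ for $t\in(0,1)$, so signs of the integrands are dictated entirely by the numerators.

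First, in the second integral of \eqref{eqn:RpX:pSmall} I would perform the substitution $s=(1-p)t$, which maps $[p/(1-p),1]$ onto $[p,1-p]$. After cancellation of the Jacobian against the factor of $1/(1-p)$ appearing in $t=s/(1-p)$, this yields
\[
\int_{p/(1-p)}^1 \frac{(1-(1-p)t)^x-(1-p)^x}{t\log(1-t)}\,dt \;=\; \int_p^{1-p}\frac{(1-s)^x-(1-p)^x}{s\log(1-s/(1-p))}\,ds.
\]
Next, I would split the first integral of \eqref{eqn:RpX:pSmall} as $\int_p^1=\int_p^{1-p}+\int_{1-p}^1$. The tail $\int_{1-p}^1$ is pointwise non-negative, since on $[1-p,1]$ we have $1-t\le p\le 1-p$ (using $p\le 1/2$), so $(1-t)^x-(1-p)^x\le 0$ while $t\log(1-t)<0$.

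The remaining two pieces, both on $[p,1-p]$, combine into the single integral
\[
\int_p^{1-p}\frac{(1-t)^x-(1-p)^x}{t}\left(\frac{1}{\log(1-t)}-\frac{1}{\log(1-t/(1-p))}\right)dt.
\]
On $[p,1-p]$ the first factor is non-positive because $1-t\le 1-p$. For the bracketed factor, note $t/(1-p)\ge t$ implies $\log(1-t/(1-p))\le\log(1-t)<0$; since reciprocation reverses order on the negative reals, $1/\log(1-t/(1-p))\ge 1/\log(1-t)$, so the bracketed difference is non-positive as well. The product of two non-positive quantities, divided by $t>0$, is non-negative, and the endpoint $t=1-p$ (where $\log(1-t/(1-p))=-\infty$ forces the second reciprocal to zero) causes no issue since the integrand extends continuously.

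Adding the two non-negative contributions yields $R_p(x)\ge 0$. The proof is essentially signs-only once the substitution is spotted; the only point requiring care is the direction of monotonicity when reciprocating through negative logarithms, which must be checked explicitly to confirm that the two non-positive factors in the combined integrand multiply to something non-negative rather than changing sign at an interior point.
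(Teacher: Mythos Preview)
Your proof is correct, but the decomposition differs from the paper's. The paper avoids any substitution: it splits the first integral at $t=p/(1-p)$ and combines the piece on $[p/(1-p),1]$ directly with the second integral (which already lives on that interval), obtaining
\[
R_p(x)=\int_{p}^{p/(1-p)}\frac{(1-t)^x-(1-p)^x}{t\log(1-t)}\,dt+\int_{p/(1-p)}^{1}\frac{(1-t)^x-(1-(1-p)t)^x}{t\log(1-t)}\,dt,
\]
and both integrands are visibly non-negative since $(1-t)^x\le(1-p)^x$ on the first interval and $(1-t)^x\le(1-(1-p)t)^x$ everywhere. Your route instead uses the change of variables $s=(1-p)t$ to align the \emph{numerators} of the two integrals on $[p,1-p]$, at the cost of mismatched denominators $\log(1-t)$ versus $\log(1-t/(1-p))$; this forces the extra step of checking monotonicity of reciprocals of negative logarithms. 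Both arguments are short and elementary; the paper's is marginally cleaner because the sign check for each piece involves only a single comparison, whereas your combined integrand needs two factors to be simultaneously non-positive.
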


\begin{proof}
We can rearrange \eqref{eqn:RpX:pSmall} as
\begin{align*}
R_p(x) &= \int_{p}^{\frac{p}{1-p}} \frac{(1-t)^x - (1-p)^x}{t \log(1-t)}\, dt
-\int_{\frac{p}{1-p}}^1 \frac{-(1-t)^x + (1-p)^x + (1-(1-p)t)^x - (1-p)^x}{t \log(1-t)}\, dt  \\
&= \int_{p}^{\frac{p}{1-p}} \frac{(1-t)^x - (1-p)^x}{t \log(1-t)}\, dt
+\int_{\frac{p}{1-p}}^1 \frac{(1-t)^x - (1-(1-p)t)^x}{t \log(1-t)}\, dt,
\end{align*}
and observe that the integrands inside both integrals in the second equation are non-negative
over the integration interval.
\end{proof}

In fact, as $p$ gets small, $R_p(x)$ converges to $px E_1(px)$,
which, as shown in \eqref{eqn:Poi:trunc}, is
the gap to equality achieved by the digamma distribution \eqref{eqn:PoiYre} for the mean-limited
Poisson channel.
This can be seen from the result of Proposition~\ref{prop:smallp:approx}
(showing that the truncated distribution for the binomial
channel converges to the digamma distribution \eqref{eqn:PoiYre} in the limit $p \to 0$),
combined with the fact that the binomial channel converges to a 
Poisson channel as $p \to 0$.
 
%
%

Using Claim~\ref{claim:bin:Rpx} and \eqref{eqn:bin:EgYx:pSmall}, we now have that,
for all integers $x \geq 0$,
\[
\E[g(Y_x)] \geq \log x! - \Ex_{1-p}(x),
\]
which, similar to the case $p \geq 1/2$, implies that the probability mass
function of $Y$ in \eqref{eqn:bin:y:general} satisfies
\[
\KL{Y_x}{Y} \stackrel{\eqref{eqn:bin:KL:general}}{=}
-xp \log q - \log y_0 - R_p(x) \leq -xp \log q - \log y_0,
\]
thereby satisfying the KKT conditions \eqref{eqn:cvx1:dualF}
for all $x \geq 0$ and with the choices $\nu_1 = -\log q$
and $\nu_0 = -\log y_0$.

Finally, we derive the asymptotic growth of $g(y)$ using Claim~\ref{claim:Lam}
as follows:
\begin{align*}
g(y) &= \frac{y}{p}(\log(y/p)+\li(1-p)-1)-
\frac{y(1-p)}{p}\left(\log\left(\frac{(1-p)y}{p}\right)+\li\left(\frac{1-2p}{1-p}\right)-1\right)  \\
&+\frac{y}{p}\left( (1-p)\li\left(\frac{1-2p}{1-p}\right)-\li(1-p) \right)
+\frac{1}{2} \log(y/p)- \frac{1}{2} \log\left(\frac{(1-p)y}{p}\right) \pm O(1)\\
&= y h(p)/p+y \log(y/e) +O_p(1),
\end{align*}
which, as was the case for $p \geq 1/2$, confirms
that the probability mass
function of $Y$ in \eqref{eqn:bin:y:general}  can be normalized to a legitimate
distribution precisely when $q \in (0,1)$, and that the expectation can
be adjusted to any desired positive value by choosing $q$ appropriately.
The resulting probability mass function of $Y$ is plotted,
for various choices of $p$ and $q$, in Figure~\ref{fig:bin:PPDF:trunc}.
Furthermore, the mean of the resulting distribution is depicted,
for various choices of $p$, in Figure~\ref{fig:bin:mu:trunc}.

\begin{figure}[t!]
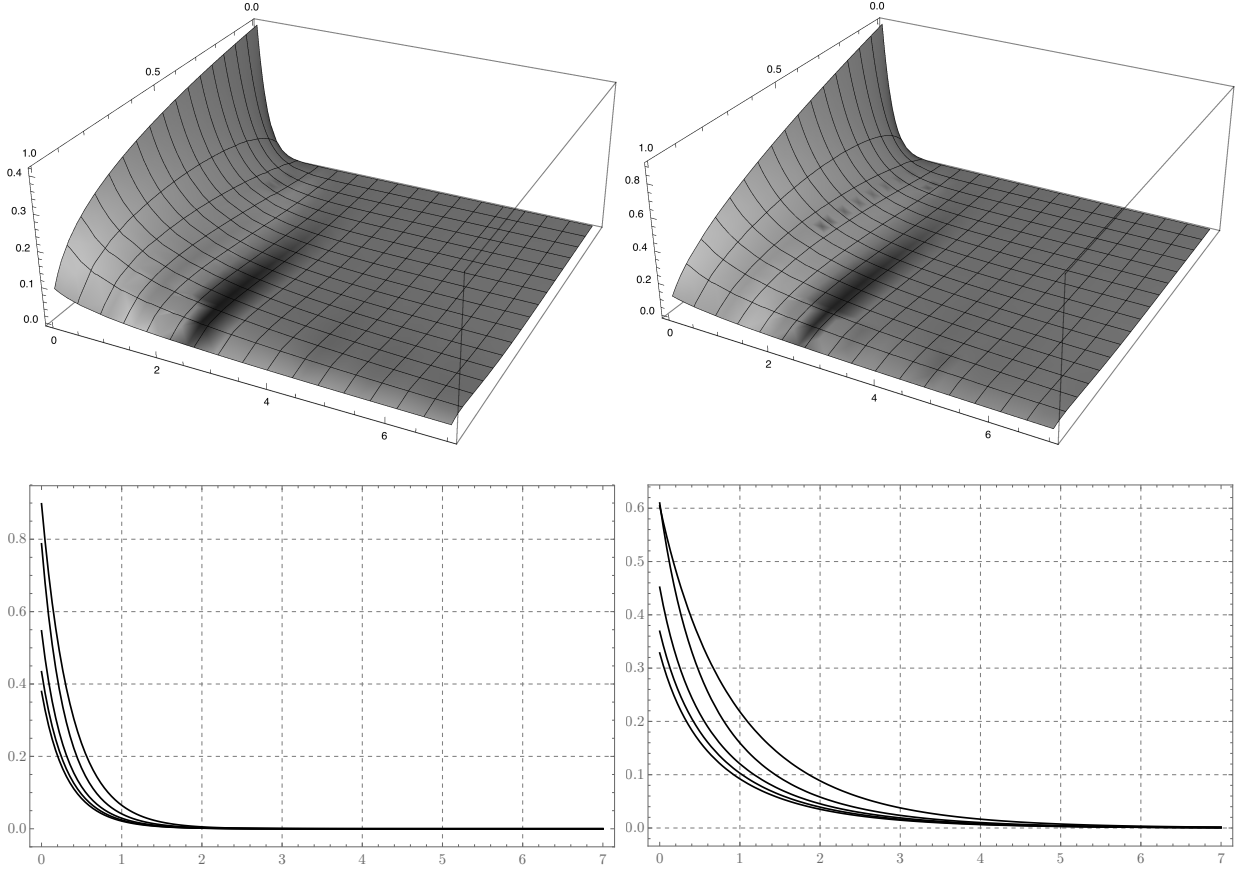

\begin{center}
\includegraphics[width=0.49 \columnwidth]{\detokenize{figPPDF3D_Trunc_p02}}
\includegraphics[width=0.49 \columnwidth]{\detokenize{figPPDF3D_Trunc_p08}}\\[5mm]
\includegraphics[width=0.49 \columnwidth]{\detokenize{figPPDF2D_Trunc_q01}}
\includegraphics[width=0.49 \columnwidth]{\detokenize{figPPDF2D_Trunc_q05}}
\end{center}
\caption{Plots of the probability mass function corresponding
to the choice of $Y$ in Theorem~\ref{thm:BinChUpper:trunc}.
For the top plots, the length represents $y$, the width
represents $q$, and the height is the probability (left: $p=0.2$,
right: $p=0.8$).
For the bottom plots, the probability is plotted as a function
of $y$, for $p=0.1,0.3,0.5,0.7,0.9$ (from the lowest to the highest
curve), where we have chosen $q=0.1$ (left), and $q=0.5$ (right).
}
\label{fig:bin:PPDF:trunc}
\end{figure}

\begin{figure}[t!]
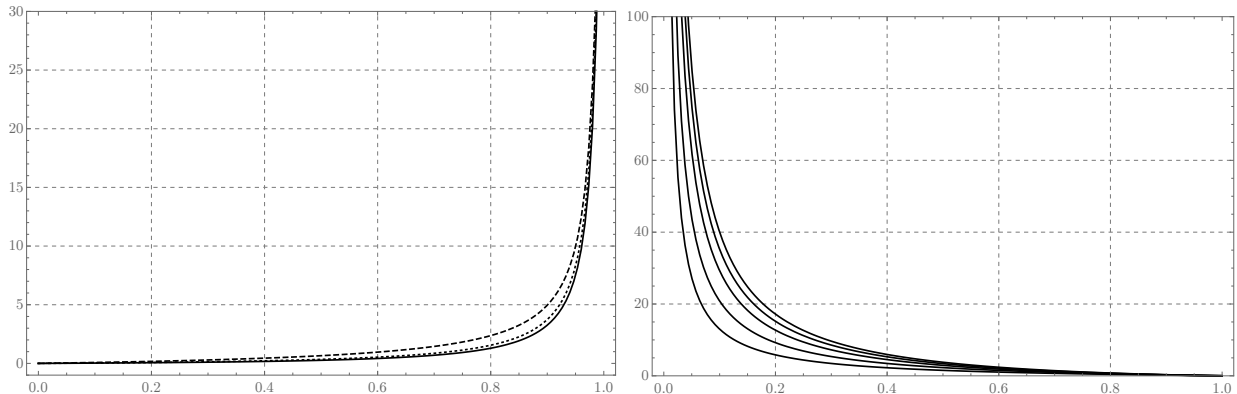

\begin{center}
\includegraphics[width=0.49 \columnwidth]{\detokenize{figMu_Trunc}}
\includegraphics[width=0.49 \columnwidth]{\detokenize{figInvMu_Trunc}}

\end{center}
\caption{Plots of the mean of the probability distribution corresponding
to the choice of $Y$ in Theorem~\ref{thm:BinChUpper:trunc}.
Left: the mean as a function of $q$ for $p=0.1,0.5,0.9$ (from the
lowest to the highest curve). Right: the inverse of the mean as
a function of $q$ for $p=0.1,0.3,0.5,0.7,0.9$ (from the
highest to the lowest curve).
}
\label{fig:bin:mu:trunc}
\end{figure}

\subsubsection*{Wrapping up}

Equipped with an improved choice for the distribution of the channel output $Y$,
we may now proceed as in Section~\ref{sec:del:general} with the alternative choice
applied in Theorem~\ref{thm:BinChUpper}, which is restated with the modified
distribution below.

\begin{thm} \label{thm:BinChUpper:trunc}
Let $p \in (0,1)$ and $q\in(0,1)$ be given parameters and $Y$ be a random
variable distributed according to \eqref{eqn:bin:y:general}; that is,
\[
\Pr[Y=y] = y_0 \frac{q^y \exp(g(y) -y h(p)/p)}{y!},
\]
for an appropriate normalizing constant $y_0$ and mean $\mu=\E[Y]$,
where
\[ g(y) :=
\begin{cases}
0 & y=0 \\
g_p(y) & y > 0,
\end{cases}
\]
and $g_p(y)$ is defined for $p < 1/2$ by \eqref{eqn:gpy:smallP} and
for $p \geq 1/2$ by \eqref{eqn:gpy:largeP}; that is,
\[
g_p(y) := \begin{cases}
\Lam_p(y)-\Ex_{1/p-1}(y)-y \li(1-p)/p+\eta(1-p) & p \geq 1/2, \\
\Lam_p(y)-\Lam_{p/(1-p)}(y)+
\frac{y}{p}\left( (1-p)\li\left(\frac{1-2p}{1-p}\right)-\li(1-p) \right)+
\int_{p}^{\frac{p}{1-p}} \frac{dt}{t \log(1-t)} & p < 1/2.
\end{cases}
\]
Then, capacity of the mean-limited binomial channel 
$\ch_\mu(\ber_p)$ satisfies 
\begin{equation} \label{eqn:BinMeanLimCapUpperB}
\capa(\ch_\mu(\ber_p)) \leq -\mu \log q-\log y_0.
\end{equation}
\end{thm}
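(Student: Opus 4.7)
The plan is to assemble the ingredients already developed in Section~\ref{sec:del:general:trunc} and invoke Theorem~\ref{thm:meanlimited} with $\nu_1 = -\log q$ and $\nu_0 = -\log y_0$. The only thing to verify is that $Y$ is a valid distribution (so $y_0$ exists) and that the KKT feasibility condition $\KL{Y_x}{Y} \leq -\log q \cdot \E[Y_x] - \log y_0$ holds for every non-negative integer $x$; since $\E[Y_x] = px$, this amounts to the bound $\KL{Y_x}{Y} \leq -xp \log q - \log y_0$.

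For the normalization, I would appeal to the asymptotic analysis already carried out at the end of each of the two cases, which shows via Claim~\ref{claim:Enp:asymptotic} and Claim~\ref{claim:Lam} that $g(y) = yh(p)/p + y\log(y/e) \pm O_p(1)$. Combined with Stirling, this shows that the summand $q^y \exp(g(y)-yh(p)/p)/y!$ behaves as $\Theta(q^y/\sqrt{y})$, so the normalizing constant $y_0$ is well-defined precisely for $q \in (0,1)$, and by varying $q$ the mean $\mu$ can be tuned to any positive value.

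For the feasibility inequality, I would start from identity \eqref{eqn:bin:KL:general}, which expresses
\[
\KL{Y_x}{Y} = \log x! - \Ex_{1-p}(x) - \E[g(Y_x)] - xp\log q - \log y_0.
\]
Thus the desired bound is equivalent to $\E[g(Y_x)] \geq \log x! - \Ex_{1-p}(x)$. Here the two cases $p \geq 1/2$ and $p < 1/2$ are handled separately but in parallel: in each case, applying the formula \eqref{eqn:ELam} for $\E[\Lam_\eps(Y_x)]$ and Proposition~\ref{prop:Enp} for the expectation of $\Ex_\eps(Y_x)$ to the explicit expression of $g_p$ yields, after cancellation of the $\li$ and $\eta$ terms engineered into $g_p$, that $\E[g(Y_x)] = \log x! - \Ex_{1-p}(x) + R_p(x)$, where $R_p(x)$ is given by \eqref{eqn:RpX:pLarge} for $p \geq 1/2$ and by \eqref{eqn:RpX:pSmall} for $p < 1/2$. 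The Kronecker-delta correction in \eqref{eqn:bin:g:plarge} and \eqref{eqn:bin:g:psmall}, incidentally, is exactly what cancels the constant term at $x=0$ so that $R_p(0) = 0$ and the inequality is tight at $x=0$.

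The main (and essentially only) obstacle is verifying $R_p(x) \geq 0$ for all $x \geq 0$: for $p \geq 1/2$ this is immediate because the integrand $((1-t)^x - (1-p)^x)/(t \log(1-t))$ in \eqref{eqn:RpX:pLarge} has a non-negative numerator (since $t \geq p$) and a negative denominator, wait, both factors in the quotient are negative so the integrand is non-negative; for $p < 1/2$ this is exactly Claim~\ref{claim:bin:Rpx}, whose proof splits the integration interval at $p/(1-p)$ and checks signs. Granting these two facts, $\E[g(Y_x)] \geq \log x! - \Ex_{1-p}(x)$ for every $x$, and plugging back gives $\KL{Y_x}{Y} \leq -xp\log q - \log y_0 = -\log q \cdot \E[Y_x] + (-\log y_0)$. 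Theorem~\ref{thm:meanlimited} then delivers the capacity upper bound $\nu_1 \mu + \nu_0 = -\mu \log q - \log y_0$, completing the proof.
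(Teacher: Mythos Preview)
Your proposal is correct and follows essentially the same route as the paper: the theorem is a summary of the development in Section~\ref{sec:del:general:trunc}, and your argument---reduce via \eqref{eqn:bin:KL:general} to $\E[g(Y_x)] \geq \log x! - \Ex_{1-p}(x)$, compute $\E[g(Y_x)]$ using \eqref{eqn:ELam} and Proposition~\ref{prop:Enp} to obtain the residual $R_p(x)$, verify $R_p(x)\geq 0$ (directly for $p\geq 1/2$, via Claim~\ref{claim:bin:Rpx} for $p<1/2$), and invoke Theorem~\ref{thm:meanlimited}---matches the paper's structure exactly, including the normalization check via Claims~\ref{claim:Enp:asymptotic} and~\ref{claim:Lam}.
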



\subsection{Derivation of the capacity upper bound for 
the deletion channel}
\label{sec:del:derivation}

In order to complete the derivation of the capacity upper bound for
the deletion channel, we combine the results of Section~\ref{sec:del:general},
and their improvements in Section~\ref{sec:del:general:trunc},
with the general framework developed in Section~\ref{sec:general:repeat};
in particular, Corollary~\ref{coro:Cupper}.

Denoting by $\ch$ the deletion channel with deletion probability $d$,
and letting $p := 1-d$, the result of Corollary~\ref{coro:Cupper} 
combined with either Theorem~\ref{thm:BinChUpper} or
Theorem~\ref{thm:BinChUpper:trunc} gives
us the capacity upper bound
\begin{align}
\capa(\ch) &\leq \sup_{\mu\geq 0} \frac{\capa(\ch_\mu(\ber_p))}{1/p+\mu/p}  \nonumber \\
&\stackrel{\eqref{eqn:BinMeanLimCapUpperA}}{\leq}
p\, \sup_{q \in (0,1)} \frac{-\mu \log q-\log y_0}{1+\mu} =: \CBer(p), \label{eqn:del:upper:formula}
\end{align}
where $\mu$ and $y_0$ respectively denote the mean and normalizing constant
of the distribution of $Y$ (defined by either Theorem~\ref{thm:BinChUpper} or
Theorem~\ref{thm:BinChUpper:trunc}) 
and each depend
on both $p$ and $q$. 
Let $\CBer(p)$ denote the expression on the right hand side of \eqref{eqn:del:upper:formula}
and $\CBer(p,q)$ denote\footnote{Note that both $\CBer(p)$ and $\CBer(p,q)$
also depend on the underlying distribution for $Y$. However, we suppress this dependence
in the notation, which should be clear from the context.} the expression inside the supremum in \eqref{eqn:del:upper:formula}.
We have proved the following:

\begin{thm} \label{thm:del:upper:general}
Let $\ch$ be a deletion channel with deletion probability $d$,
and let $p := 1-d$. Given a parameter $q \in (0,1)$, consider
a random variable $Y$ distributed according to 
either Theorem~\ref{thm:BinChUpper} (inverse binomial) or
Theorem~\ref{thm:BinChUpper:trunc} (the truncated distribution),
with an
appropriate normalizing constant $y_0(q)$ and mean $\mu(q) = \E[Y]$. Then,
\begin{equation} \label{eqn:del:upper:general}
\capa(\ch) \leq \CBer(1-d) = (1-d)\, \sup_{q \in (0,1)} \frac{-\mu(q) \log q-\log y_0(q)}{1+\mu(q)}.
\end{equation}
\end{thm}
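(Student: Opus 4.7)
The plan is to simply assemble Theorem~\ref{thm:del:upper:general} from the two main ingredients already in hand: Corollary~\ref{coro:Cupper}, which reduces the capacity of a general $\cD$-repeat channel to that of its associated convolution (mean-limited) channel, and Theorem~\ref{thm:BinChUpper} (or its sharper version, Theorem~\ref{thm:BinChUpper:trunc}), which upper bounds the capacity of the mean-limited binomial channel via a dual-feasible output distribution.

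First, I would specialize Corollary~\ref{coro:Cupper} to the deletion channel setting. Here $\cD = \ber_p$ with $p = 1-d$, so the probability assigned to $0$ by $\cD$ is $1-p = d$ and $\lam = \E[\cD] = p$. Substituting these into \eqref{eqn:coro:Cupper} gives
\begin{equation*}
\capa(\ch) \;\leq\; \sup_{\mu \geq 0} \frac{\capa(\ch_\mu(\ber_p))}{1/p + \mu/p} \;=\; p \cdot \sup_{\mu \geq 0} \frac{\capa(\ch_\mu(\ber_p))}{1 + \mu}.
\end{equation*}

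Next, I would plug in the capacity bound for the mean-limited binomial channel. For any $q \in (0,1)$, either Theorem~\ref{thm:BinChUpper} or Theorem~\ref{thm:BinChUpper:trunc} constructs a dual-feasible output distribution $Y$ (the inverse binomial distribution, or its truncated refinement) with normalizing constant $y_0(q)$ and mean $\mu(q) := \E[Y]$, and yields
\begin{equation*}
\capa(\ch_{\mu(q)}(\ber_p)) \;\leq\; -\mu(q) \log q - \log y_0(q).
\end{equation*}
The key point making the parameter swap clean is that, as noted in the constructions (see the discussions following \eqref{eqn:ibin:asymp} and in Theorem~\ref{thm:BinChUpper:trunc}), varying $q$ over $(0,1)$ sweeps $\mu(q)$ over all of $(0,\infty)$: the probability mass function behaves like $q^y/\sqrt{y}$ asymptotically, so the sum defining $y_0(q)^{-1}$ converges precisely when $q \in (0,1)$, with $\mu(q) \to 0$ as $q \to 0^+$ and $\mu(q) \to \infty$ as $q \to 1^-$. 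Hence the supremum over $\mu \geq 0$ on the right of the displayed inequality is dominated by the supremum over the parameterization $q \in (0,1)$:
\begin{equation*}
\sup_{\mu \geq 0} \frac{\capa(\ch_\mu(\ber_p))}{1+\mu} \;\leq\; \sup_{q \in (0,1)} \frac{-\mu(q) \log q - \log y_0(q)}{1+\mu(q)}.
\end{equation*}

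Combining these two displays, multiplying through by $p = 1-d$, recovers precisely \eqref{eqn:del:upper:general}. There is no real obstacle here; this is a direct composition of earlier results, and the only modest care needed is to justify the parameter change $\mu \leftrightarrow q$, which follows from the monotonic, surjective correspondence between $q \in (0,1)$ and $\mu(q) \in (0,\infty)$ established in Section~\ref{sec:invbin} for the inverse binomial case and by the asymptotic analysis of $g(y)$ at the end of Section~\ref{sec:del:general:trunc} for the truncated case.
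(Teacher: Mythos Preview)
Your proposal is correct and follows essentially the same route as the paper: apply Corollary~\ref{coro:Cupper} with $\cD=\ber_p$ (so $\lam=p$ and the zero-probability is $1-p$), then bound $\capa(\ch_\mu(\ber_p))$ via Theorem~\ref{thm:BinChUpper} or~\ref{thm:BinChUpper:trunc}, and finally reparameterize the supremum from $\mu$ to $q$. If anything, you are slightly more explicit than the paper in justifying the $\mu\leftrightarrow q$ change of variables via the surjectivity of $q\mapsto\mu(q)$ onto $(0,\infty)$.
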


\begin{figure}[t!]
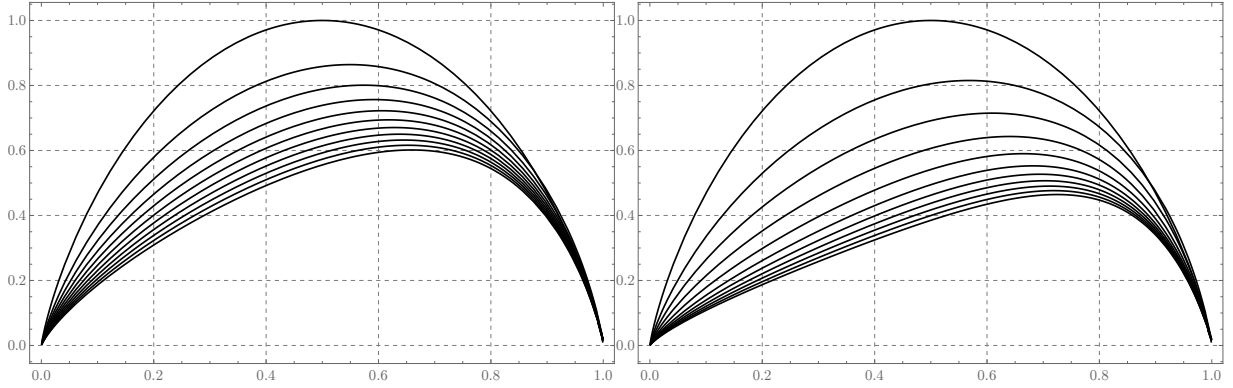

\begin{center}
\includegraphics[width=0.49\columnwidth]{figDelSlopesIBin}
\includegraphics[width=0.49\columnwidth]{figDelSlopesTrunc}
\end{center}
\caption{Numerical plots of the capacity upper bound
slope $\CBer(p,q)$ (the expression
under the supremum in \eqref{eqn:del:upper:formula}), measured in bits,
for various choices
of $p$ and as a function of $q$, where the distribution 
of $Y$ is given by the inverse binomial distribution in 
Theorem~\ref{thm:BinChUpper} (left), or the improved (truncated) distribution 
in Theorem~\ref{thm:BinChUpper:trunc} (right).
The chosen values for $p$ are (from the lowest to the highest curve):
$p=10^{-4},0.1,0.2,\ldots, 0.9$, and $p = 1$ (in which case the 
curve is equal to $h(q)$).
}
\label{fig:DelSlopes}
\vspace{2mm} \hrule
\end{figure}

\begin{figure}[t!]
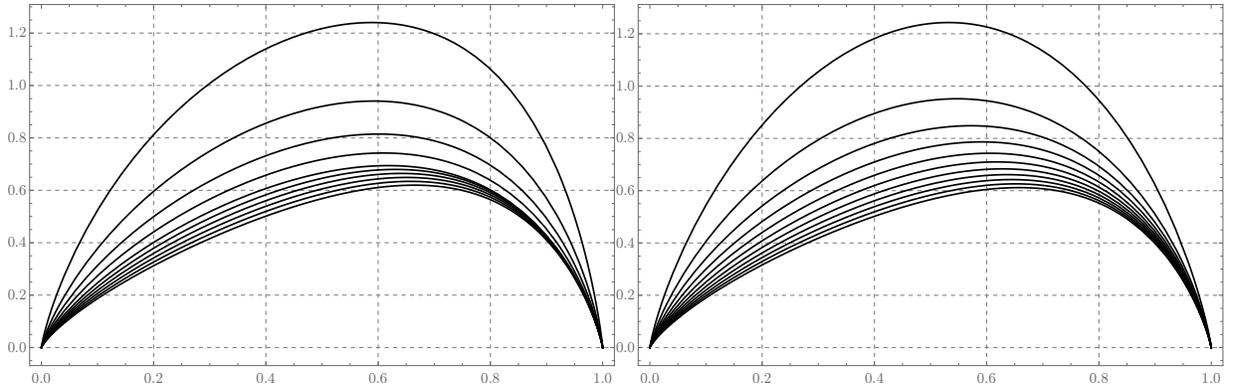

\begin{center}
\includegraphics[width=0.49\columnwidth]{figDelSlopesIBinNBinUB}
\includegraphics[width=0.49\columnwidth]{figDelSlopesIBinLerchUB}
\end{center}
\caption{Analytic upper bounds on the capacity slope $\CBer(p,q)$ 
(the expression
under the supremum in \eqref{eqn:del:upper:formula}), measured in bits,
for various choices
of $p$ and as a function of $q$, where the distribution 
of $Y$ is given by the inverse binomial distribution in 
Theorem~\ref{thm:BinChUpper}. The diagram on the left plots 
the upper bound on $\CBer(p,q)$ in terms of elementary functions,
using Corollary~\ref{coro:invbin:negBinApprox}.
The diagram on the right uses the upper bounds in terms of standard
special functions given by Theorem~\ref{thm:ibin:Lerch}.
The choices of $p$ are, from the lowest curve to the highest:
$p=10^{-4},0.1,0.2,\ldots, 0.9$, and $p = 0.999$
(the latter being excluded in the first diagram as the resulting
capacity upper bounds are already trivial at $p=0.9$).
}
\label{fig:DelSlopesUB}
\vspace{2mm} \hrule
\end{figure}

Note that, when $p=1$, both Theorem~\ref{thm:BinChUpper} and
Theorem~\ref{thm:BinChUpper:trunc} assign a geometric distribution to $Y$.
As we show in Section~\ref{sec:del:smallD}, in both cases
we have $\CBer(1,q)=h(q)$.

\subsubsection{The particular case $d=1/2$}
\label{sec:del:dhalf}
Recall that, for the special case $p=d=1/2$, the inverse binomial
distribution defined in \eqref{eqn:invbin} becomes, precisely, a negative binomial distribution. 
Thus, in this case, the right hand side of \eqref{eqn:del:upper:general} 
can be analytically optimized in closed form. Using \eqref{eqn:ibin:half},
the expression under supremum is equal to
\begin{equation} \label{eqn:CberHalf}
\CBer(1/2,q) = \frac{-\frac{q}{2(1-q)} \log q-\log \sqrt{1-q}}{1+\frac{q}{2(1-q)}}=
\frac{h(q)}{2-q},
\end{equation}
whose maximum is attained at the golden ratio conjugate $q=(\sqrt{5}-1)/2$
(shown by equating the derivative of the expression to zero, which results
in a quadratic equation).
It can be verified by straightforward
manipulations that, in this case, the resulting capacity upper bound 
given by  \eqref{eqn:del:upper:general} is equal to
\begin{equation} \label{eqn:del:pHalf}
\CBer(1/2) =\frac{1}{4}\log \frac{3+\sqrt{5}}{2} = \frac{1}{2} \log \varphi 
\approx 0.347120\ \text{\rm (in bits per channel use)},
\end{equation}
where $\varphi =(1+\sqrt{5})/2$ is the golden ratio.

An extension of the above result for $p=1/2$ to smaller values of $p$
is presented in Appendix~\ref{app:claim:cber:analytical}. However, 
the convexification result
of \cite{ref:RD15} may be used along with our capacity upper bound for
$d=p=1/2$ 
to derive simple and closed-form capacity upper bounds for general 
deletion probabilities. Namely,
we prove the following:

\begin{coro} \label{coro:del:convexify}
Let $\ch$ be the deletion channel with deletion probability $d$. Then,
\[
\capa(\ch) \leq \begin{cases}
(1-d) \log \varphi
\approx 0.694242 (1-d) & d \geq 1/2,\\
1-d \log(4/\varphi) 
\approx 1-1.305758 d & d < 1/2, \\
\end{cases}
\]
where $\varphi =(1+\sqrt{5})/2$ is the golden ratio,
the entropy is in bits per channel use, and the bound for $d<1/2$
holds under the plausible
conjecture \cite{ref:Dal11}
that the capacity function is convex over $d \in [0,1/2]$.
%
\end{coro}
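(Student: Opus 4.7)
The plan is to leverage the fully explicit bound $\CBer(1/2) = \tfrac{1}{2}\log\varphi$ derived in \eqref{eqn:del:pHalf} at the single point $d=1/2$ and then propagate it to all other deletion probabilities by two different interpolation arguments: a multiplicative ``scaling'' argument for $d \geq 1/2$, and an affine convexity argument for $d<1/2$.

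For the regime $d \geq 1/2$, I will invoke the channel-fragmentation / scaling bound of \cite{ref:RD15}, which states that the function $d\mapsto \capa(d)/(1-d)$ is non-increasing in $d$. The intuition is that a deletion channel with parameter $d \geq d_0$ is information-theoretically dominated by cascading a deletion-$d_0$ channel with a deletion-$(d-d_0)/(1-d_0)$ channel, and after normalizing by $(1-d)$ (the expected fraction of surviving bits) one gets a monotone quantity. Applying this with $d_0 = 1/2$ and using \eqref{eqn:del:pHalf}, we obtain for every $d \geq 1/2$:
\begin{equation*}
\capa(d) \;\leq\; \frac{1-d}{1-1/2}\cdot \capa(1/2) \;\leq\; 2(1-d)\cdot \tfrac{1}{2}\log\varphi \;=\; (1-d)\log\varphi,
\end{equation*}
which matches the claimed bound and evaluates numerically to $\approx 0.694242(1-d)$ bits.

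For the regime $d<1/2$, I will assume the convexity conjecture of \cite{ref:Dal11}, i.e.\ that $d\mapsto \capa(d)$ is convex on $[0,1]$. Combined with the two anchor values $\capa(0)=1$ and $\capa(1/2)\leq \tfrac{1}{2}\log\varphi$, convexity on $[0,1/2]$ gives, for every $d \in [0,1/2]$,
\begin{equation*}
\capa(d) \;\leq\; (1-2d)\cdot 1 + 2d\cdot \tfrac{1}{2}\log\varphi \;=\; 1 - d(2-\log\varphi) \;=\; 1 - d\log(4/\varphi),
\end{equation*}
where the final equality uses $2 = \log 4$ (in base $2$), so that $2-\log\varphi = \log(4/\varphi)\approx 1.305758$.

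The essentially routine parts are these two interpolation steps; the only substantive obstacles are bookkeeping. The main content-bearing step is really the fact that \eqref{eqn:del:pHalf} is explicit, so no obstacle remains beyond (i) correctly citing the monotonicity of $\capa(d)/(1-d)$ from \cite{ref:RD15} and (ii) stating the dependence on Dalai's convexity conjecture for the small-$d$ half. Thus the corollary is obtained by combining \eqref{eqn:del:pHalf} with these two interpolation mechanisms, with no additional numerical computation required.
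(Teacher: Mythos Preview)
Your proof is correct and follows essentially the same approach as the paper: for $d\geq 1/2$ the paper applies \cite[Theorem~1]{ref:RD15} with anchor points $d_1=1/2$, $d_2=1$ and observes that the additive correction terms vanish, yielding exactly your monotonicity statement $\capa(d)\leq \tfrac{1-d}{1-d_1}\,\capa(d_1)$; for $d<1/2$ both you and the paper perform the same affine interpolation between $(0,1)$ and $(1/2,\tfrac{1}{2}\log\varphi)$ under Dalai's convexity conjecture.
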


\begin{proof}
Suppose that capacity upper bounds of $c_1$ and $c_2$, respectively
for deletion probabilities $d_1$ and $d_2$ are known.
Let $\ell \in (0,1)$ and $c$ be the capacity of the channel
at deletion probability $d := \ell d_1 + (1-\ell) d_2$.
Under the assumption that the capacity function for the
deletion channel is a convex function of $d$, it would trivially follow that
$c \leq \ell c_1 + (1-\ell) c_2$, hence proving the claim by letting
$d_1 = 1/2$, $c_1 = (\log \varphi)/2$, and either $d_2=0$ with $c_2=\log 2$
or $d_2=1$ with $c_2=0$.
Without assuming the convexity conjecture, it has been shown in \cite[Theorem~1]{ref:RD15}
that, unconditionally, one has (with entropy measured in bits)
\[
c \leq \ell c_1 + (1-\ell) c_2 + (1-d) \log(1-d)-\ell(1-d_1) \log(\ell(1-d_1))
- (1-\ell)(1-d_2) \log((1-\ell)(1-d_2)).
\]
By letting $d_2=1$ and $c_2=0$, and $d=\ell d_1+(1-\ell)$ (thus,
$1-d=\ell(1-d_1)$) one gets 
$c \leq \ell c_1 + \ell(1-d_1)\log(\ell(1-d_1))- \ell(1-d_1)\log(\ell(1-d_1))=\ell c_1$,
proving the unconditional claim for $d \geq 1/2$.
We remark that one could use the result of \cite{ref:RD15} with $d<1/2$
and get nontrivial upper bounds, unconditionally, for a range of $d$ below
$1/2$ as well (e.g., $d \geq 0.48$). 
\end{proof}

\subsubsection{The general case.}
For general $p$, the function $\CBer(p,q)$, under both Theorems 
\ref{thm:BinChUpper}~and~\ref{thm:BinChUpper:trunc},
is numerically plotted\footnote{We remark that,
since the probability mass function of $Y$ has
an exponential decay, the function $\CBer(p,q)$ can be 
numerically computed efficiently and in polynomial
time in the desired accuracy. Moreover, the plots
suggest that for every $p$, this function is 
concave in $q$ and thus its maximum can also be
numerically computed in polynomial time in 
the desired accuracy (e.g., by a simple binary search,
or the Newton's method). Even though concavity
is evident from Figure~\ref{fig:DelSlopes} (and similarly,
Figure~\ref{fig:PoiSLope} for the Poisson case),
it is not proved formally, and we leave it as an interesting
remaining task. The concavity of the upper bound estimates in terms
of analytic functions (Figure~\ref{fig:DelSlopesUB})
is, however, straightforward to analytically verify.
} in Figure~\ref{fig:DelSlopes}.
Furthermore, if Theorem~\ref{thm:BinChUpper} is used
to determine the distribution of $Y$ (i.e., the inverse binomial
distribution), high-quality upper bound
estimates on the value of $\CBer(p,q)$ in terms
of elementary or standard special functions (when $p$
is not too close to $1$) are available 
via Corollary~\ref{coro:invbin:negBinApprox} and
Theorem~\ref{thm:ibin:Lerch}. These upper bounds
are plotted in Figure~\ref{fig:DelSlopesUB}.

Maximizing the value of $\CBer(p,q)$ with respect to $q$
for a given $p=1-d$ results in capacity upper bounds
for the deletion channel with deletion probability $d$.
The quality of the upper bound depends on the result used
to determine $\CBer(p,q)$, including, and in decreasing order of quality:
\begin{enumerate}
\item Theorem~\ref{thm:BinChUpper:trunc} for the distribution of $Y$ (i.e.,
the truncated distribution),
\item Theorem~\ref{thm:BinChUpper} 
for the distribution of $Y$ (i.e., the inverse binomial distribution),
\item The analytic upper bound estimates of Theorem~\ref{thm:ibin:Lerch} on
the inverse binomial distribution,
\item The elementary upper bound estimates of 
Corollary~\ref{coro:invbin:negBinApprox} on the parameters of the
inverse binomial distribution.
\end{enumerate}

Plots of the resulting capacity upper bounds for general $d$
are given in Figure~\ref{fig:DelPlot}, and are compared with
the best available capacity upper bounds as reported in \cite{ref:RD15} (which
are, in turn, based on the results of \cite{ref:FD10} combined with
a convexification technique).
The corresponding numerical values for the plotted curves are
also listed, for various choices of the deletion probability $d$, 
in Table~\ref{tab:DelCapData}.

\begin{figure}[t!] 
\begin{center}
\includegraphics[width=0.8\columnwidth]{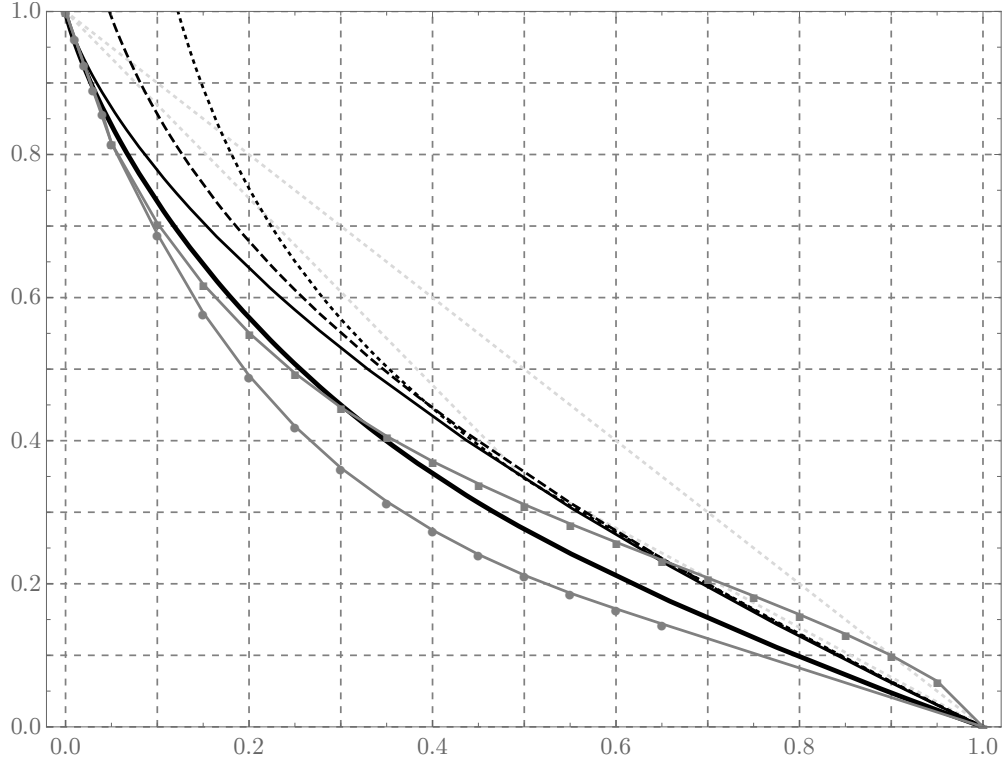}
\end{center}
\caption[]{Upper bounds (in bits per channel use) 
on the capacity of the deletion channel, 
plotted as a function of the deletion probability $d$.
The bounds are obtained using 
\begin{enumerate*}[label=\roman*)]
\item Theorem~\ref{thm:BinChUpper:trunc} (solid, thick),
\item Theorem~\ref{thm:BinChUpper} (solid, black),
\item analytic upper bound estimates of Theorem~\ref{thm:ibin:Lerch} (dashed, black),
\item elementary upper bound estimate of 
Corollary~\ref{coro:invbin:negBinApprox} (dotted, black).
\end{enumerate*}
The best known capacity upper bounds reported in \cite{ref:RD15} 
are shown in gray with the circle markers representing the
explicitly reported data points. The gray plot with square markers
are the upper bounds reported in \cite{ref:MDP07}.
The trivial erasure capacity upper bound $1-d$ as well as the
fully analytic upper bounds of Corollary~\ref{coro:del:convexify}
are displayed in dotted
light gray.
The numeric values corresponding to the plots are listed 
in Table~\ref{tab:DelCapData}.
}
\label{fig:DelPlot}
\vspace{2mm} \hrule
\end{figure}

\subsubsection{The limiting case $d \to 1$}
\label{sec:del:largeD}

When the deletion probability $d$ tends to $1$, or equivalently $p = 1-d \to 0$, 
the capacity upper bounds take the form $\CBer(p) \leq C_0 (1-d)$,
for an absolute constant $C_0$. The value of $C_0$ depends on which one
of the above four results is used, and by numerically maximizing the
univariate concave function $\CBer(p,q)$ over $q \in (0,1)$, can be approximated,
respectively, as follows (measured in bits):
\begin{enumerate}
\item Under Theorem~\ref{thm:BinChUpper:trunc},
$C_0 \approx 0.4644$ with maximizer $q \approx 0.7247$,

\item Under Theorem~\ref{thm:BinChUpper},
$C_0 \approx 0.6015$ with maximizer $q \approx 0.6590$,

\item Using the analytic upper bound estimate of Theorem~\ref{thm:ibin:Lerch},
$C_0 \approx 0.6115$ with maximizer $q \approx 0.6573$,

\item Using the elementary upper bound estimate of 
Corollary~\ref{coro:invbin:negBinApprox},
$C_0 \approx 0.6196$ with maximizer $q \approx 0.6644$.
\end{enumerate}
We also recall that, as we saw in the result of Section~\ref{sec:del:dhalf}
for the particular case $d=1/2$,
the fully explicit upper bound of $(\log \varphi)/2$
for this case can be linearly extended to all $d \geq 1/2$,
including the limiting case $d \to 1$, using the convexification
technique of \cite{ref:RD15}. 
Indeed, Corollary~\ref{coro:del:convexify} implies the
choice of $C_0 = \log \varphi \approx 0.694242 (1-d)$
(in bits), where $\varphi$ is the golden ratio, for this case.

\begin{remark} \label{rem:slope:same}   
We remark that the above upper bound estimate of
$0.4644(1-d)$ for $d \to 1$
coincides with what we achieve for the Poisson-repeat channel 
with the same deletion probability in 
Corollary~\ref{coro:Poi:slope}.  
This is due to the fact that the truncated distribution
\eqref{eqn:PoiYre} (the digamma distribution) for the Poisson-repeat channel
is the limiting distribution of what 
we obtain in Section~\ref{sec:del:general:trunc}
using the truncation method (as shown by
Proposition~\ref{prop:smallp:approx}),
and that the mean-limited binomial channel converges
a mean-limited Poisson channel in the limit $d \to 1$.
\end{remark}

\subsubsection{The limiting case $d \to 0$}
\label{sec:del:smallD}

The limiting behavior of the capacity of the deletion channel for
$d \to 0$ is very well understood \cite{ref:KMS10,ref:KM13}.
In particular, in this case, it is known that the capacity 
behaves as $1-h(d)+O(d) = 1+ d \log d + O(d)$ (in bits per channel use).
The goal of this section is to prove that the capacity upper
bounds obtained by Theorem~\ref{thm:del:upper:general}
exhibit the correct asymptotic behavior of $1-\Theta(h(d))$ 
for small $d$ (albeit with a slightly sup-optimal constant
behind $h(d)$). We demonstrate the result for 
Theorem~\ref{thm:del:upper:general} applied with the weaker
choice of the inverse binomial distribution. The same approach
could be used to obtain an analogous results for the truncated
distribution of Theorem~\ref{thm:BinChUpper:trunc}.

\begin{thm} \label{thm:del:smallD}
Consider the deletion channel $\ch$ with deletion probability $d \to 0$.
Then, the capacity upper bound of Theorem~\ref{thm:del:upper:general}
with respect to the inverse binomial distribution \eqref{eqn:invbin}
takes the form
\[
\capa(\ch) \leq 1-(1-O(d)) h(d)/2 = 1-h(d)/2 + o(d)
\text{ (bits per channel use)}.
\]
\end{thm}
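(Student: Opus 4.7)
My plan is to apply Theorem~\ref{thm:del:upper:general} at $q$ close to $1/2$. This is the natural candidate because at $d=0$ the inverse binomial distribution \eqref{eqn:invbin} reduces to a geometric distribution and the quantity to be maximized, $f(p,q) := (-\mu\log q - \log y_0)/(1+\mu)$, reduces to the binary entropy $h(q)$, which attains $\log 2$ at $q=1/2$. To quantify the $d>0$ deviation, the first step will be to Taylor expand the coefficients $c_y := \binom{y/(1-d)}{y}\exp(-y h(d)/(1-d))$ for $d\to 0$ at fixed $y$. Using $\binom{y+\epsilon}{y} = 1 + \epsilon H_y + O(\epsilon^2)$ with $\epsilon = yd/(1-d)$, together with $h(d)/(1-d) = -d\log d + d + O(d^2|\log d|)$, I will obtain $c_y = 1 + yd(H_y + \log d - 1) + O(d^2 \log^2 d)$.

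Next, I will sum $c_y q^y$ and $yc_y q^y$ against the generating-function identity $\sum_{y\geq 1} H_y q^y = -\log(1-q)/(1-q)$ and its $q\frac{d}{dq}$-transforms to obtain, for fixed $q \in (0,1)$,
\begin{equation*}
\frac{1}{y_0} = \frac{1}{1-q} + \frac{dq\log(d/(1-q))}{(1-q)^2} + O(d^2\log^2 d), \qquad \mu = \frac{q}{1-q} + \frac{dq\,[\log(d/(1-q)) + q]}{(1-q)^2} + O(d^2\log^2 d).
\end{equation*}
Plugging these into the formula for $f(1-d, q)$ yields $f(1-d, q) = h(q) + d\,G(q,d) + O(d^2\log^2 d)$ with $G(q,d) = q[\log(d/(1-q))(1 + \log((1-q)/q)) + q\log((1-q)/q)]$. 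At $q = 1/2$ the factor $\log((1-q)/q)$ vanishes, leaving $G(1/2, d) = (1/2)\log(2d)$ and $f(1-d, 1/2) = \log 2 + (d/2)\log(2d) + O(d^2\log^2 d)$. Since $\partial_q f|_{q=1/2} = -d\log(2d) = O(d|\log d|)$ while $h''(1/2) = -4$, the supremum is attained at some $q^* = 1/2 + O(d|\log d|)$ and the corresponding shift in $f$ is $O(d^2\log^2 d)$, so $\sup_q f(1-d, q) = \log 2 + (d/2)\log(2d) + O(d^2\log^2 d)$.

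Finally, multiplying by $(1-d)$ and converting natural to binary logarithms yields the upper bound $\CBer(1-d) = 1 - d/2 + (d/2)\log_2 d + O(d^2\log^2 d)$ bits. Since $h(d) = -d\log_2 d + d/\ln 2 + O(d^2)$, this is of the form $1 - (1 - O(d))h(d)/2$, as claimed. The main technical obstacle lies in making the Taylor expansion of $c_y$ uniform in $y$: the pointwise remainder $O(d^2\log^2 d)$ hides polynomial factors in $y$, so a bare summation could be delicate. Fortunately, for $q \in (0,1)$ bounded away from $1$ the damping by $q^y$ controls these factors; a clean argument splits the sum at $y = K$ with $K$ polylogarithmic in $1/d$ and invokes the uniform bound $c_y \leq 1/\sqrt{2\pi\aUp}$ from Claim~\ref{claim:binomial} to push the truncation error into the $O(d^2\log^2 d)$ regime.
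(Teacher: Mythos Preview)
Your strategy coincides with the paper's: expand the coefficients $c_y := \binom{y/(1-d)}{y}\exp(-yh(d)/(1-d))$ around $d=0$, sum against $q^y$, and locate the supremum near $q=1/2$. The execution differs. The paper uses only the crude sandwich $(\delta q)^y \leq c_y q^y \leq (\delta q)^y(1+O(dy^2))$ with $\delta := e^{-h(d)/(1-d)}$ below a logarithmic cutoff in $y$ (and the uniform bound $c_y \leq 1$ for the tail), obtaining $y_0 = (1-\delta q)(1-O(d))$, $\mu = \delta q(1\pm O(d))/(1-\delta q)$, and then reducing to $\sup_\rho\bigl(h(\rho) - \rho\, h(d)\bigr)$. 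You instead keep the full first-order term $c_y = 1 + yd(H_y+\log d-1)+\cdots$ and close the sums via $\sum_{y\geq 1} H_y q^y = -\log(1-q)/(1-q)$, which is sharper and yields the explicit correction $G(q,d)$. Both routes need the tail truncation you flag; the paper additionally begins by ruling out $q$ close to $1$ via Corollary~\ref{coro:invbin:negBinApprox} (your pointwise estimates on $c_y$ are only summable for $q$ bounded away from $1$, so your local argument at $q=1/2$ does not by itself control $\sup_{q\in(0,1)}$), and you should make that step explicit rather than fold it into ``for $q$ bounded away from $1$''.

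One slip in your last sentence: your bound $1 - d/2 + (d/2)\log_2 d + O(d^2\log^2 d)$ differs from $1 - h(d)/2$ by $d(1-\ln 2)/(2\ln 2) = \Theta(d)$, because the linear-in-$d$ part of $h(d)$ in bits is $d/\ln 2$, not $d$. So what is actually established (by you, and by the paper) is $1 - \bigl(1 - O(1/|\log d|)\bigr)\,h(d)/2$, i.e., $1 - (1-o(1))h(d)/2$, not the literal $1 - (1-O(d))h(d)/2$; the paper's own final equality $(\log 2 - h(d)/2)(1+O(d)) = \log 2 - (1-O(d))h(d)/2$ carries the same imprecision. This is cosmetic for the headline $1 - \Theta(h(d))$, but worth noting.
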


\begin{proof}
Let $Y$ be an inverse binomial random variable with parameter $q$,
normalizing constant $y_0$ and mean $\mu := \E[Y]$.
We first recall \eqref{eqn:del:upper:formula}, that
\[
\capa(\ch) \leq 
(1-d)\, \sup_{q \in (0,1)} \frac{-\mu \log q-\log y_0}{1+\mu} \nonumber. 
\]
We first rule out the possibility of the optimum $q$ being close to $1$
(which is evident from Figure~\ref{fig:DelSlopesUB}). This is immediate
from the estimates on the parameters of an inverse binomial distribution
given by Corollary~\ref{coro:invbin:negBinApprox}. Namely,  
Corollary~\ref{coro:invbin:negBinApprox} proves that, when $q \to 1$,
we have $-\log y_0 = O(-\log(1-q))$, and $\mu = \Omega(1/(1-q))$.
Thus, in this regime we have
\[
\frac{-\mu \log q-\log y_0}{1+\mu} \leq -\log q-\frac{\log y_0}{\mu}
= -\log q+O(-(1-q)\log(1-q))=O(h(q)) \to 0.
\]
Without loss of generality we may therefore assume that
there is an absolute constant $q_0<1$ such that $q \leq q_0$ (since
we now know that the
supremum in \eqref{eqn:del:upper:formula} over the values of $q$ close to $1$
approaches zero).
Logarithm of the binomial coefficient $\binom{y/p}{y}$ admits
the series expansion (in $d$)
\[
\log \binom{\frac{y}{1-d}}{y} = (\gamma+\psi(y+1)) yd+
\frac{y}{12} \left(12 \gamma +12 \psi(y+1)-\pi ^2 y+6 \psi'(y+1)y\right) d^2+O(d^3 y^2+
d^4 y^4) + \cdots,
\]
where $\gamma$ is the Euler-Mascheroni constant, and $\psi$
is the digamma function. Consider a parameter $y_1 = O(\log(1/d))$
to be determined later.
For any $y \leq y_1$, we may thus write
\[
1 \leq \binom{\frac{y}{1-d}}{y} \leq 1+O(d y \log y) = 1+O(d y^2).
\]
From the definition of the inverse binomial distribution \eqref{eqn:invbin},
and the above estimate, we may write
\begin{equation} \label{eqn:ibin:geom:sw}
y_0 q^y \exp(-y h(d)/(1-d)) \leq \Pr[Y=y] \leq y_0 q^y \exp(-y h(d)/(1-d)) (1+O(d y^2))
\end{equation}
for all $y \leq y_1$.
We also recall that, letting $\delta := \exp(-h(d)/(1-d))$, 
for all $y \geq 0$,
\begin{equation} \label{eqn:binom:upper}
\binom{\frac{y}{1-d}}{y} \leq \exp(y h(1-d)/(1-d)) = 1/\delta^y,
\end{equation}
and thus, for all $y \geq 0$,
\begin{equation} \label{eqn:ibin:trivial:upper}
\Pr[Y=y]/y_0 = \binom{\frac{y}{1-d}}{y} (\delta q)^y
\stackrel{\eqref{eqn:binom:upper}}{\leq} q^y.
\end{equation}
Since $1-q = \Omega(1)$, we may choose $y_1$ large enough so as to
ensure that
\begin{equation} \label{eqn:ibin:tail}
\sum_{y=y_1}^\infty y \Pr[Y=y]/y_0 \leq dq < d.
\end{equation}
In the sequel, we use asymptotic notation with respect to the vanishing
parameter $d$; i.e., $d = o(1)$.
We may write, 
\begin{align*}
1/y_0 &= \sum_{y=0}^\infty \Pr[Y=y]/y_0 \\
&= \sum_{y=0}^\infty \binom{\frac{y}{1-d}}{y} (\delta q)^y  \\
&\stackrel{\eqref{eqn:ibin:tail}}{\leq} 
d+\sum_{y=0}^{y_1} \binom{\frac{y}{1-d}}{y} (\delta q)^y  \\
&\leq d+\sum_{y=0}^\infty (\delta q)^y (1+O(d y^2))\\
&=d+ \frac{1}{1-\delta q} + O \left(\frac{d \delta q (1+\delta q)}{(1-\delta q)^3}\right) \\
&\leq d+\frac{1}{1-\delta q} 
\left(1+ O \left(\frac{d \delta q }{(1-\delta q)^2}\right)\right) \\
&\leq \frac{1+O(d)}{1-\delta q},
\end{align*}
where in the last inequality we have used the assumptions $q \leq q_0$, 
$1-q_0 = \Omega(1)$, $d=o(1)$, and 
$\delta = 1-\Theta(d \log d)=1-o(1)$.
Upper bounding $y_0$ is slightly simpler:
\begin{align*}
1/y_0 = \sum_{y=0}^\infty \Pr[Y=y]/y_0 
&\geq \sum_{y=0}^\infty (q \delta)^y = \frac{1}{1-\delta q}.
\end{align*}
Using a similar approach, we may upper bound $\mu$ as follows:
\begin{align*}
\mu/y_0 &= \sum_{y=1}^\infty y \Pr[Y=y]/y_0 \\
&\stackrel{\eqref{eqn:ibin:tail}}{\leq} dq+\sum_{y=1}^{y_1} y \Pr[Y=y]/y_0 \\
&\leq 
dq+\sum_{y=1}^\infty y(\delta q)^y (1+O(d y^2)) \\
&= dq+\frac{\delta q}{(1-\delta q)^2} + O\left(
\frac{d (\delta^2 q^2 +4 \delta q +1)}{(1-\delta q)^4}\right)\\
&\leq dq+\frac{\delta q}{(1-\delta q)^2} 
\left(1+ O \left(\frac{d}{(1-\delta q)^2}\right)\right) \\
&\leq \frac{\delta q}{(1-\delta q)^2} (1+O(d)),
\end{align*}
so that, using the upper bound on $y_0$,
\[
\mu \leq \frac{\delta q (1+O(d))}{1-\delta q}.
\]
Finally, we lower bound $\mu$ as
\begin{align*}
\mu/y_0 = \sum_{y=1}^\infty y \Pr[Y=y]/y_0 
&\geq \sum_{y=1}^\infty y (q \delta)^y = \frac{\delta q}{(1-\delta q)^2},
\end{align*}
and using the lower bound on $y_0$, we write
\[
\mu \geq \frac{\delta q (1-O(d))}{1-\delta q}.
\]
In conclusion, we have so far shown the estimates
\begin{align} \label{eqn:ibin:geom:c}
y_0 = (1-\delta q) (1 - O(d)), \quad \mu = \frac{\delta q (1 \pm O(d))}{1-\delta q}
\end{align}
We are now ready to apply the above estimates in \eqref{eqn:del:upper:formula}, and write
\begin{align}
\capa(\ch) &\leq 
(1-d)\, \sup_{q \in (0,1)} \frac{-\mu \log q-\log y_0}{1+\mu} \nonumber  \\
&\leq
\sup_{q \in (0,q_0)} \frac{-\mu \log q-\log y_0}{1+\mu} \nonumber \\
&\stackrel{\eqref{eqn:ibin:geom:c}}{\leq} 
\sup_{q \in (0,q_0)} (1-\delta q) \frac{-\frac{\delta q}{1-\delta q} \log q-
\log (1-\delta q) \pm O(d)}{1 \pm O(d)}\, (1+O(d)) \nonumber \\
&\leq \sup_{q \in (0,q_0)} (h(\delta q)+\delta q \log \delta \pm O(d))\, (1+O(d)) \nonumber \\
&\leq \sup_{q \in (0,q_0)} (h(\delta q)-\delta q h(d)/(1-d) \pm O(d))\, (1+O(d)) \nonumber \\
&\leq \sup_{q \in (0,q_0)} (h(\delta q)-\delta q h(d) \pm O(d))\, (1+O(d)). \nonumber
\end{align}
In the above result, the expression under the supremum approaches zero
for $q \to 0$, and approaches one for $\delta q = 1/2$. Therefore, we expect the
supremum to occur around $q \approx 1/2$ and be close to one. In particular,
we know that the supremum is away from zero (by a constant), and may thus write the above as
\begin{align}
\capa(\ch) \leq (1+O(d)) \sup_{q \in (0,q_0)} (h(\delta q)-\delta q h(d)).
\label{eqn:ibin:geom:d}
\end{align}

Consider the function $c(\rho) := h(\rho)+\eps \rho$. By simply equating
the derivative of the function to zero, it follows that the maximum
of this function is attained at 
$\rho^\star=e^{\eps}/(1+e^\eps)$ and the that maximum value is
\[c^\star = \frac{e^{\eps}(\eps+\log(1+e^{-\eps}))+\log(1+e^\eps)}{1+e^\eps}
= \log 2 + \eps/2 + \eps^2/8 -O(\eps^4).
\]
By letting $\rho:=\delta q$ and $\eps := -h(d)$, we conclude that
\[
\eqref{eqn:ibin:geom:d} \leq (\log 2 - h(d)/2)(1+O(d)) = \log 2 - (1-O(d)) h(d)/2,
\] 
as desired.
\end{proof}

\iftoggle{full}{\SecDiscussion}{}

\section*{Acknowledgments}

The author thanks Suvrit Sra and Iosif Pinelis for referring
him to \cite{ref:KP16} for the proof of Claim~\ref{claim:binom:monotone},
as well as Marco Dalai, Suhas Diggavi, Tolga Duman, Michael Mitzenmacher, and
Andrea Montanari for their comments on earlier drafts of this article.

\bibliographystyle{abbrv}
\bibliography{\jobname}

\appendix

\section{Proof of Theorem~\ref{thm:PoiChUpperAnalytic}}
\label{app:thm:PoiChUpperAnalytic}

Our starting point is the Stirling approximation of the gamma function
\[
n!=\Gamma(n+1)\sim\sqrt{2\pi n} \left(\frac{n}{e}\right)^n,
\]
which asymptotically matches $n!$ and, 
for all $n > 0$, provides a lower bound on $n!$.
A generalized form of the approximation is
\begin{equation} \label{eqn:Stirling}
n! \sim\sqrt{2\pi (n+\sigma)} \left(\frac{n}{e}\right)^n,
\end{equation}
for some real parameter $\sigma \geq 0$, so the
Stirling approximation is the special case $\sigma=0$.
By ``fine tuning'' the constant $\sigma$, it is possible
to obtain sharp lower bounds, and upper bounds, on 
$n!$ for all $n \geq 1$. Lets us call a value of $\sigma$
a \emph{lower (resp., upper) bounding constant for \eqref{eqn:Stirling}}
if the resulting estimate provides a lower (resp., upper) bound on $n!$
for all $n \geq 1$.
It was shown by Gosper~\cite{ref:Gos78}
that $\sigma=1/6\approx 0.166$ provides a remarkably accurate lower bound
estimate on $n!$. The accuracy of this estimate has been studied
in \cite{ref:Nem11}, where it is shown that
\[
\Gamma(n+1)=\left(\frac{n}{e}\right)^n \sqrt{2\pi(x+1/6)}
\left( 1+\frac{1}{144(n+1/4)^2}-\frac{1}{12960(n+1/4)^3}-\cdots \right),
\]
leading to a multiplicative error of about $0.4\%$ even for $n=1$
(as opposed to about $8\%$ achieved by the Stirling approximation).
Let us consider fixed choices of lower and upper bounding constants,
respectively, $\lsigma$ and $\usigma$,
for \eqref{eqn:Stirling}. By inspection, we find that the following
are valid choices\footnote{The choice of $\lsigma$ has been rigorously validated
by \cite{ref:Nem11}. However, we have only numerically validated $\usigma$ 
and it would be interesting to obtain a rigorous proof of its validity.}
\begin{equation} \label{eqn:sigma}
\lsigma := 1/6 = 15/90, \quad \usigma := 0.177 \approx 16/90.
\end{equation}
\begin{figure}[t!]
\begin{center}
\includegraphics[height=2in]{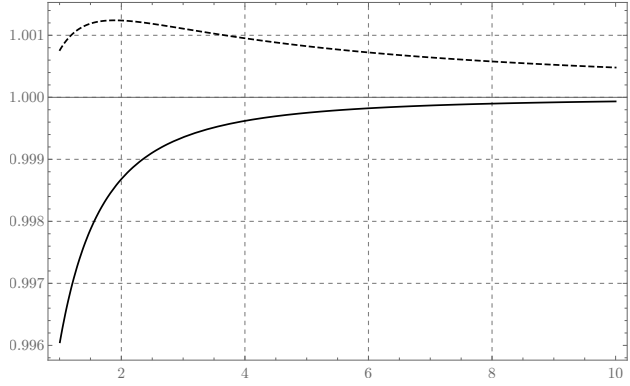}
\end{center}
\caption{Plot of $\left(\sqrt{2\pi (n+\sigma)} \left(\frac{n}{e}\right)^n\right)/\Gamma(n+1)$, as a function of $n \geq 1$, 
with $\sigma=\lsigma$ (solid) and and $\sigma=\usigma$ (dashed)
as determined by \eqref{eqn:sigma}.}
\label{fig:Stirling}
\end{figure}

\noindent Now, let us define functions
\begin{align*}
S_0(q) &:= \sum_{y=1}^\infty \frac{y^y}{y!}\, (q/e)^y,\\
S_1(q) &:= \sum_{y=1}^\infty \frac{y^{y+1}}{y!}\, (q/e)^y.
\end{align*}
Using \eqref{eqn:Stirling}, we can compute upper and lower bounds
on the values of these functions as follows. Define
\begin{align}
S_0(q,\sigma) &:= \sum_{y=1}^\infty 
\frac{y^y}{\sqrt{2\pi (y+\sigma)} \left(\frac{y}{e}\right)^y}\, (q/e)^y \label{eqn:S0sigmaDef}\\
&= \frac{1}{\sqrt{2\pi}} \sum_{y=1}^\infty \frac{q^y}{\sqrt{y+\sigma}} \nonumber\\
&= \frac{q}{\sqrt{2\pi}} \Phi(q,1/2,1+\sigma),
\label{eqn:S0sigma}
\end{align}
where $\Phi(\cdot)$ denotes the
Lerch transcendent \eqref{eqn:Lerch}. 
For the special case $\sigma=0$, this slightly simplifies to 
\[
S_0(q,0)=\frac{1}{\sqrt{2\pi}} \li_{1/2}(q),
\]
where $\li_s(\cdot)$ denotes the
polylogarithm function 
\begin{equation} \label{eqn:polylog:def}
\li_s(z):=\sum_{k=1}^\infty z^k k^{-s}.
\end{equation}
Similarly, we define
\begin{align}
S_1(q,\sigma) &:= \sum_{y=1}^\infty 
\frac{y^{y+1}}{\sqrt{2\pi (y+\sigma)} \left(\frac{y}{e}\right)^y}\, (q/e)^y  \label{eqn:S1sigmaDef}\\
&= \frac{1}{\sqrt{2\pi}} \sum_{y=1}^\infty \frac{(y+\sigma-\sigma)
q^y}{\sqrt{y+\sigma}} \nonumber\\
&= \frac{q}{\sqrt{2\pi}} \Phi(q,-1/2,1+\sigma) -\sigma S_0(q,\sigma), \label{eqn:S1sigma}
\end{align}
which, for $\sigma=0$, simplifies to 
\[
S_1(q,0)=\frac{1}{\sqrt{2\pi}} \li_{-1/2}(q).
\]
Clearly, we have
\begin{equation} \label{eqn:Sbounds}
S_0(q) \geq S_0(q,\usigma), \quad
S_1(q) \geq S_1(q,\usigma), \quad
S_0(q) \leq S_0(q,\lsigma), \quad
S_1(q) \leq S_1(q,\lsigma).
\end{equation}
Given a parameter $q \in (0,1)$, using $\eqref{eqn:PoiYfirst}$
we may write
\[
y_0=1/(1+S_0(q)),\quad \mu:=\E[Y]=y_0 S_1(q)=S_1(q)/(1+S_0(q)).
\]
Using \eqref{eqn:S0sigma} and \eqref{eqn:S1sigma},
we may now derive the estimates \eqref{eqn:PoiEstimates} in the statement. 
This completes the proof.



\section{Analytic claims}

In this section, we verify a number of stand-alone analytic claims
that have been used in the proofs.

\subsection{Claim \ref{claim:Poi:ratio}}
\label{app:Poi:ratio}

Consider the derivative of the function $g$ defined in \eqref{eqn:Poi:ratio}:
\begin{align*}
g'(y)&= \frac{1}{\sqrt{\pi}}\left( \Gamma'(y+1/2) \exp(y-y \psi(y))+
\Gamma(y+1/2) \exp(y-y \psi(y)) (1-(\psi(y)+y \psi'(y)))\right)\\
&= \frac{-1}{\sqrt{\pi}} \Gamma(y+1/2) \exp(y-y \psi(y))\,
(-\psi(y+1/2)-1+\psi(y)+y \psi'(y)),
\end{align*}
where $\psi(\cdot)$ is the digamma function.
It can be seen that the function 
\begin{equation} \label{eqn:phi}
\varphi(y) := -\psi(y+1/2)+\psi(y)+y \psi'(y)-1,
\end{equation}
depicted in Figure~\ref{fig:phi},
is positive for all $y>0$ (in fact, this function is completely monotone,
which can be proved by expressing $\varphi(y)$ using Euler's integral
representation for the digamma function).
It follows that $g'(y)<0$ for all $y>0$, and thus the ratio $g$
is strictly decreasing. 
\Mnote{more details.}
\begin{figure}[t!]
\begin{center}
\includegraphics[height=2in]{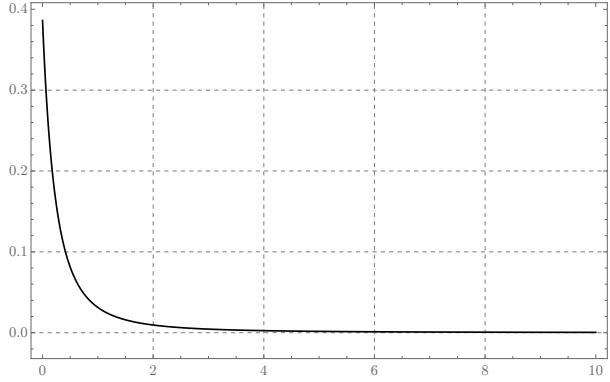}
\end{center}
\caption{Plot of the function $\varphi$ in \eqref{eqn:phi}.}
\label{fig:phi}
\end{figure}

\subsection{Claim~\ref{claim:invbin:ratio}}
\label{app:invbin:ratio}

For the special case $p=1/2$, the ratio is equal to $1$ for all $y>0$
due to the duplication formula for the gamma function:
\[
\Gamma (y)\Gamma \left(y+{\tfrac {1}{2}}\right)={\sqrt {4\pi }}\;\Gamma (2y)/4^y,
\]
which implies that,
\[
\binom{2y}{y}=\frac{2\Gamma(2y)}{\Gamma(y) \Gamma(y+1)}=\frac{4^y \Gamma(y+\frac{1}{2})}{\sqrt{\pi} \Gamma(y+1)}=
 \frac{4^y}{\sqrt{\pi}} \binom{y-\frac{1}{2}}{y} \Gamma(1/2)=
4^y \binom{y-\frac{1}{2}}{y},
\]
and the result follows by noting that $h(1/2) = \log 2$.
Since $\rho$ is always positive for $y>0$, it is increasing
(decreasing) \Iff\ $\log \rho$ is increasing (decreasing).
We can write the derivative of $\log \rho$ in $y$
by taking the derivative of each constituent term;
that is,
\[
p \frac{d}{dy}\log \rho(y) =
-h(p)-p \psi(y+1/2)-(1-p)\psi(y(1/p-1)+1) + \psi(y/p+1) =: \rho_1(y).
\]
One can verify that $\lim_{y \to \infty} \rho_1(y)=0$.
The derivative of $\rho_1(y)$, in turn, can be written as
\[
p \frac{d}{dy} \rho_1(y) =-p^2 \psi'(y+1/2)-(1-p)^2 \psi'(y(1/p-1)+1)+\psi'(y/p+1).
\]
\Mnote{more details}
This function is positive and decreasing in $y$ (in fact, completely monotone) when
$p<1/2$, identically zero when $p=1/2$,
and negative and increasing (in fact, negated completely monotone) when $p>1/2$
(this can be proved using Theorem~\ref{thm:KP16}).
It follows that $\rho_1(y)$ is increasing (and negative) when
$p<1/2$, zero when $p=1/2$, and decreasing (and positive) when $p>1/2$. 
The claim follows.

\subsection{Claim~\ref{claim:binomial}} 
\label{app:claim:binomial}


We do not present a rigorous proof that the particular choices of
$\overline{\alpha}$ and $\underline{\alpha}$
in the statement are valid, but rather, provide 
a convincing argument. It would be an interesting question to provide
a complete proof.
Consider the log-ratio
\begin{equation} \label{eqn:claim:binomial:logratio}
R_{p,\alpha}(y) := \log \binom{y/p}{y}-y h(p)/p+\frac{1}{2} \log(2\pi((1-p)y+\alpha)).
\end{equation}
By Stirling's approximation, it follows that for any $p \in (0,1)$ and any fixed $\alpha$,
we have \[\lim_{y \to \infty} R_{p,\alpha}(y)=0.\]
It can be proved, using 
Theorem~\ref{thm:KP16},
that the
function $-R_{p,0}$ is completely monotone for all $p \in (0,1)$. Therefore, even for $y\geq 0$,
the constant $\alpha=0$ provably provides an upper bound for $\binom{y/p}{y}$. That is,
for all $p \in (0,1)$,
\[
\binom{y/p}{p} \leq \frac{\exp(y h(p)/p)}{\sqrt{2\pi (1-p)y}}.
\]
We now have the more refined task of finding choices of $\alpha$ that makes the above log-ratio always positive (negative)
for all $p \in (0,1)$ and all $y \geq 1$ (rather than $y \geq 0$).
It is worthwhile to note that the $j$th derivative of the function $-R_{p,\alpha}(y)$ (for $j \geq 1$) can
be computed, letting $q := p/(1-p)$, as 
\[
\frac{A_{j-1} (-p)^j (1-p)^j + (a+y(1-p))^j ( p^j \psi^{(j-1)}(1+y) -\psi^{(j-1)}(1+\frac{y}{p}) + (1-p)^j \psi^{(j-1)}(1+\frac{y}{q}) )}{p^j (a+y(1-p))^j},
\]
where $\psi^{(j)}$ is the polygamma function, $A_j$ is the number of even permutations on $n$ items
(OEIS~A001710), except for the special cases $A_1:=A_0:=1/2$, and that for the first derivative,
an additional constant $h(p)/p$ is to be added to the above expression to obtain the correct derivative.
Figure~\ref{fig:IBinLerchConst} depicts the log-ratio function and its derivative for various choices of $\alpha$ and $p$.
At $y=1$, we have \[R_{p,\alpha}(1) = (1-p) \log(1-p)/p+\log(2\pi(1-p+\alpha))/2. \]
By setting $R_{p,\alpha}(1)=0$, and solving for $\alpha$, we obtain the solution
\[
\alpha_0 := -1+p+\frac{(1-p)^{2-2/p}}{2\pi}.
\]
The value of the solution $\alpha_0$ as a function of $p$ is plotted in Figure~\ref{fig:IBinLerchConst}
(the limit at $p \to 0$ is $e^2/(2\pi)-1 \approx 0.176$, and the minimum occurs a $p \approx 0.405$
at which point we have $\alpha_0 \approx 0.136$).
Note that $\alpha_0$ is the transition point for whether the plot for the log-ratio lies above or below
the horizontal axis around $y=1$, and we see that this transition always occurs within $\alpha \in (0.13, 0.18)$.
By choosing the value of $\alpha$ sufficiently outside this critical region (in particular, 
for the choices of $\underline{\alpha}$ and $\overline{\alpha}$ in the statement), 
we may ensure that the log-ratio
remains above, or below, zero for $y \geq 1$ until the asymptotic convergence towards zero dominates the behavior
of the function. The log-ratios for the chosen values of $\underline{\alpha}$ and $\overline{\alpha}$
and various choices of $p$ are plotted as functions of $y$ in Figure~\ref{fig:IBinLerchConst}.
We observe that, apart from the cases $p \to 0$ and $p \to 1$, the log-ratio
$R_{p,\alpha}(y+1)$ or its negation appear to be not only positive, but also
completely monotone for the chosen values of $\alpha$.

\begin{figure}[t!]
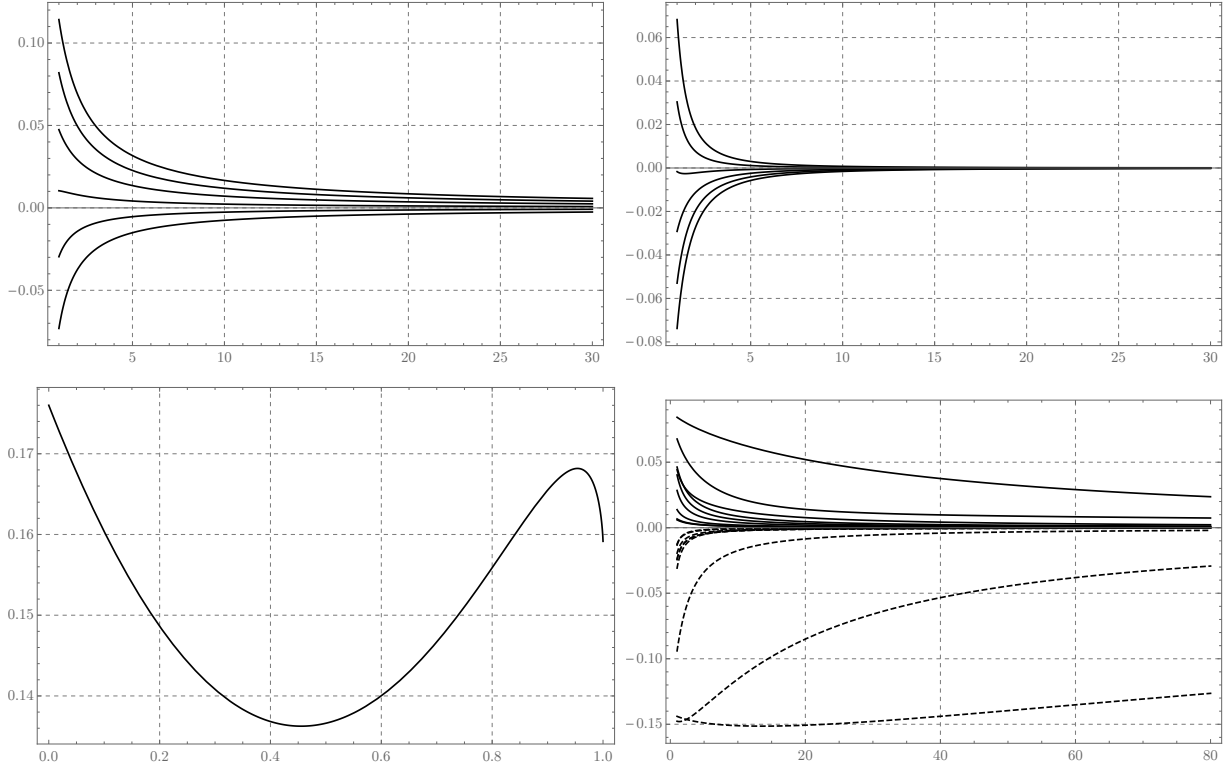

\begin{center}
\includegraphics[width=0.49 \columnwidth]{figIBinLerchConst1} 
\includegraphics[width=0.49 \columnwidth]{figIBinLerchConst2} \\[2mm]
\includegraphics[width=0.49 \columnwidth]{figIBinLerchConst3}
\includegraphics[width=0.49 \columnwidth]{figIBinLerchConst4}
\end{center}
\caption{Plots related to the log-ratio function $R_{p,\alpha}$ in \eqref{eqn:claim:binomial:logratio}.
Top: plots of $R_{0.5,\alpha}(y)$, as functions of $y$, 
for $\alpha=0.1,0.15,0.20,0.25,0.30$ (left, from the lowest to the highest curve)
and their derivatives in $y$ (right, from the highest to the lowest curve).
Bottom left: plot of $\alpha_0 = -1+p+(1-p)^{2-2/p}/(2\pi)$ as a function
of $p$. Bottom right: plots of $R_{p,\underline{\alpha}}(y)$ (solid) and 
$R_{p,\overline{\alpha}}(y)$ (dashed) in $y$ for $p=10^{-3},10^{-2},0.1,0.3,0.5,0.7,0.9,0.99,0.999$.
}
\label{fig:IBinLerchConst}
\end{figure}

\subsection{Claim~\ref{claim:binom:monotone}}
\label{app:binom:monotone}

In this section, we prove Claim~\ref{claim:binom:monotone}
that is restated below:
\begin{claimS} 
The function 
\[f(y) := \log \binom{y}{p y} = \log \frac{\Gamma(y+1)}{\Gamma(py+1)\Gamma((1-p)y+1)},\]
defined for $p \in (0,1)$ and $x>0$, is completely monotone. That is,
for all integers $j=0,1, \ldots$, $(-1)^j f^{(j)}(y) > 0$ for all $y>0$. 
\end{claimS}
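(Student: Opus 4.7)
The plan is to establish complete monotonicity of $f$ by exhibiting it as the Laplace transform of a non-negative measure on $(0,\infty)$, so that Bernstein's theorem applies. The natural starting point is the Binet/Malmsten integral representation of the digamma function,
\[
\psi(x)=-\gamma+\int_0^\infty\frac{e^{-t}-e^{-xt}}{1-e^{-t}}\,dt,\qquad x>0,
\]
together with its higher derivatives, $\psi^{(k)}(x)=(-1)^{k+1}\int_0^\infty\frac{t^{k}e^{-xt}}{1-e^{-t}}\,dt$ for $k\ge 1$. Applying these to $f(y)=\log\Gamma(y+1)-\log\Gamma(py+1)-\log\Gamma((1-p)y+1)$ and rescaling the integration variable in each summand (letting $s=pt$ in the $\Gamma(py+1)$ term and $s=(1-p)t$ in the $\Gamma((1-p)y+1)$ term) would reduce the sign conditions $(-1)^j f^{(j)}(y)>0$ to the pointwise non-negativity of a single explicit kernel depending on $(s,p)$.

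The cleanest such reduction is via the second derivative: differentiating under the integral yields, after the above rescalings, an expression of the form
\[
f''(y)=\int_0^\infty e^{-ys}\,K(s,p)\,ds
\]
with an explicit kernel $K(s,p)$ built from the three $(1-e^{-\cdot})^{-1}$ factors. The main technical step is to show $K(s,p)\ge 0$ for all $s>0$ and $p\in(0,1)$; I would approach this by expanding $K$ as a power series in $e^{-s}$, $e^{-s/p}$ and $e^{-s/(1-p)}$ and invoking the criterion of Karp and Prilepkina~\cite{ref:KP16} specialised to the shift triple $\{1,1/p,1/(1-p)\}$, combined with the elementary bound $p^2+(1-p)^2\le 1$, which is what allows the two subtractive terms to be dominated by the first. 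Once $K\ge 0$ is secured, Bernstein's theorem immediately gives complete monotonicity of $f''$.

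The fundamental obstacle, and where the plan genuinely breaks when aiming at the literal statement, is in extending from $f''$ back to $f$ itself. Since $f(0)=f'(0)=0$ but $f(y)\sim y\,h(p)\to\infty$ as $y\to\infty$, the function $f$ is strictly increasing and unbounded, and therefore cannot be the Laplace transform of any finite positive measure on $(0,\infty)$; more concretely, a direct computation using the representation of $f'$ from the $\psi$-integrals shows $f'(y)>0$ for all $y>0$ (this is the non-negativity of $p e^{-pyt}+(1-p)e^{-(1-p)yt}-e^{-yt}$, which again follows from $p^2+(1-p)^2\le 1$), so the $j=1$ instance of the stated inequality already fails. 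No integral-representation or KP16-style argument can repair this, because the obstruction lies at the level of the asymptotics of $f$, not at the level of the kernel. The honest output of this program is thus complete monotonicity of $f''$ only, which is what the claim should actually be read as asserting: it yields the pointwise strict convexity $f''(y)>0$, and this convexity, combined with the concavity in $y$ of $-\log\Gamma(y+1)$ and $-\log\Gamma(x-y+1)$, is exactly what is needed in Theorem~\ref{thm:BinChUpper} to conclude concavity of $\log[\binom{x}{y}/\binom{y/p}{y}]$ and apply Jensen's inequality.
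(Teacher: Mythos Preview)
Your analysis is correct, including the observation that the claim as literally stated is false: since $f'(y)>0$ for all $y>0$, the $j=1$ case of $(-1)^j f^{(j)}(y)>0$ fails, and no completely monotone function can be unbounded. The paper's own proof in Appendix~\ref{app:binom:monotone} tacitly concedes this: it records $f>0$ and $f'>0$ (the latter by concavity of the digamma, essentially your convexity-of-$x\mapsto a^x$ argument), then writes ``it suffices to show that $f''$ is completely monotone'' and invokes the Karp--Prilepkina criterion \cite{ref:KP16} exactly as you propose. So your plan and the paper's execution coincide; what is actually established, and what Theorem~\ref{thm:BinChUpper} actually uses, is complete monotonicity of $f''$ (in particular $f''>0$).

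One minor difference in the kernel check: rather than routing through $p^2+(1-p)^2\le 1$, the paper verifies
\[
P(u)=\frac{1}{e^u-1}-\frac{1}{e^{u/p}-1}-\frac{1}{e^{u/(1-p)}-1}\ge 0
\]
by showing the stronger subadditivity $\frac{p}{e^u-1}\ge\frac{1}{e^{u/p}-1}$ (equivalently $pe^{u/p}-e^u+1-p\ge 0$, which vanishes at $u=0$ and has derivative $e^{u/p}-e^u\ge 0$), together with the symmetric bound for $1-p$. Your $p^2+(1-p)^2\le 1$ observation cleanly handles the $f'>0$ integrand, but for $P(u)$ itself the paper's monotonicity argument is the more direct route.
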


\begin{proof}
By definition, $f(y)>0$ for all $y>0$. Moreover, 
$f'(y)=\psi(y+1)-p\psi(py+1)-(1-p)\psi((1-p)y+1)$,
which is positive for all $y>0$ by the concavity of the digamma
function. Thus, it suffices to show that $f''(y)$ is
completely monotone.
This claim is a special case of the following result proved in \cite{ref:KP16}:
\begin{thm}\cite{ref:KP16} \label{thm:KP16}
Let
\[
F(y):= \frac{\Pi_{i=1}^m \Gamma(A_i y+a_i)}{\Pi_{j=1}^n \Gamma(B_j+b_j)}.
\]
Then, 
$(\log F(y))''$ is completely monotone \Iff\ the function
\[
P(u) := \sum_{i=1}^m \frac{\exp(-a_i u/A_i)}{1-\exp(-u/A_i)}-
\sum_{i=1}^n \frac{\exp(-b_j u/B_j)}{1-\exp(-u/B_j)}
\]
is non-negative for all $u>0$.  \qed
\end{thm}
\noindent For our application, we have
\[
P(u) = \frac{1}{e^u-1}-\frac{1}{e^{u/p}-1}-\frac{1}{e^{u/q}-1}.
\]
In order to show that $P(u)\geq 0$, it suffices to verify that
\[
P_0(u) := \frac{p}{e^u-1}-\frac{1}{e^{u/p}-1} \geq 0
\]
for all $u>0$, or equivalently, that $p e^{u/p}-e^u +1-p \geq 0$ for
all $u>0$. The left hand side is zero at $u=0$ and has a positive
derivative in $u$ (which is $e^{u/p}-e^u$) for all $u \geq 0$. 
The result follows.
\end{proof}

\subsection{An analytically simple deletion capacity upper bound for $p \leq 1/2$}
\label{app:claim:cber:analytical}

In this appendix, we show that the result of \eqref{eqn:CberHalf}
for $p=1/2$ can be extended to smaller values of $p$ as well, leading
to simple and analytic capacity upper bound expressions. 
In particular, recalling the notation of Section~\ref{sec:del:derivation}, we observe the following:
\begin{claim} \label{claim:cber:analytical}
Let $\CBer(p,q)$ be defined with respect to the inverse binomial distribution for $Y$.
Then, for all $p \leq 1/2$, and all $q \in (0,1)$,  the following upper bound holds:
\begin{equation} \label{eqn:CberSmall}
\CBer(p,q) \leq \frac{\beta_0 h(q)}{2-(3-2 \beta_1) q},
\end{equation}
where $\beta_0$ and $\beta_1$ are defined according to 
 \eqref{eqn:beta0} and \eqref{eqn:beta1}, which are both equal to $1$ for $p=1/2$. 
\end{claim}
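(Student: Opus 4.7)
The plan is to substitute the elementary negative-binomial estimates of Corollary~\ref{coro:invbin:negBinApprox} into the formula $\CBer(p,q) = p(-\mu\log q - \log y_0)/(1+\mu)$ from \eqref{eqn:del:upper:formula} and verify the claim by direct algebraic manipulation. For $p \leq 1/2$ the roles in Lemma~\ref{lem:invbin:estimate} are $\bUp = \beta_0$ and $\bDown = \beta_1$, so Corollary~\ref{coro:invbin:negBinApprox} supplies the upper bound $-\log y_0 \leq \log(1+\beta_0 s)$ together with two-sided control on $\mu$ in terms of the negative-binomial mean $\tilde\mu = q/(2(1-q))$, where $s := (1-\sqrt{1-q})/\sqrt{1-q}$. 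As a sanity check, at $p=1/2$ we have $\beta_0 = \beta_1 = 1$, the inverse binomial distribution coincides with $\nbin_{1/2,q}$, and the claim reduces exactly to the identity $\CBer(1/2,q) = h(q)/(2-q)$ already proved in Section~\ref{sec:del:dhalf}, providing both a sanity check and a guiding template.

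Writing $t := \sqrt{1-q}$ for brevity, I would first note that $F(\mu) := (-\mu \log q - \log y_0)/(1+\mu) = -\log q + (\log q - \log y_0)/(1+\mu)$ is monotone in $\mu$, with direction determined by the sign of $\log(y_0/q)$. In the regime $q \leq y_0$, plugging the upper bounds $\mu \leq \beta_0 q/(2(1-q)(t+\beta_1(1-t)))$ and $-\log y_0 \leq \log((t+\beta_0(1-t))/t)$ into $F$ yields, after clearing the internal fraction,
\[
\CBer(p,q) \leq p \cdot \frac{-\beta_0 q \log q + 2t^2(t+\beta_1(1-t))\log\bigl((t+\beta_0(1-t))/t\bigr)}{2t^2(t+\beta_1(1-t))+\beta_0 q},
\]
which at $\beta_0 = \beta_1 = 1$ collapses exactly to $h(q)/(2-q)$. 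Linearising the logarithm via $\log(1+\beta_0 s)\leq \beta_0 s = \beta_0(1-t)/t$ turns the numerator into a polynomial in $t$ with coefficients in $\beta_0, \beta_1$; after cross-multiplying with the target denominator $2-(3-2\beta_1)q$, the desired inequality reduces to a polynomial-in-$t$ comparison that can be settled by elementary calculus. The complementary regime $q > y_0$ is handled symmetrically, substituting the lower bound $\mu \geq \beta_1 q/(2(1-q)(t+\beta_0(1-t)))$ in place of the upper bound on $\mu$.

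The main obstacle is precisely the case split induced by the monotonicity of $F$: each regime yields a slightly different rational expression, and matching both to the single clean form $\beta_0 h(q)/(2-(3-2\beta_1)q)$ requires careful bookkeeping. The specific coefficient $(3-2\beta_1)$ in the denominator traces back to the factor $1+\mu$ evaluated at either $\mu_L$ or $\mu_U$, whose respective numerators take the form $2t^2(t+\beta_0(1-t))+\beta_1 q$ and $2t^2(t+\beta_1(1-t))+\beta_0 q$; ensuring that both reduce, after the linearisation $\log(1+\beta_0 s)\leq\beta_0 s$, to expressions dominated by $2-(3-2\beta_1)q$ is the technical crux. Once this bookkeeping is arranged correctly the remaining inequality is a one-variable polynomial inequality in $t\in(0,1)$, which should be checkable by verifying agreement at the endpoints $t\in\{0,1\}$ and ruling out interior crossings by a first-derivative comparison (together with the monotone dependence of each side on $\beta_0, \beta_1 \in [2/e, 1]$ inherited from the definitions in \eqref{eqn:beta0} and \eqref{eqn:beta1}).
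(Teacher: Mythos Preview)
Your approach diverges from the paper's and contains a fatal gap. The paper does not case-split on the sign of $\log(y_0/q)$ nor linearize the logarithm. Instead it forms the single crude upper bound $g(q)\geq \CBer(p,q)$ by plugging the \emph{upper} estimates on $\mu$ and on $-\log y_0$ from Corollary~\ref{coro:invbin:negBinApprox} into the numerator of \eqref{eqn:del:upper:formula} and the \emph{lower} estimate on $\mu$ into the denominator (this is valid for all $q$ with no case distinction). It then studies the ratio $R(q):=h(q)/g(q)$, computes $\lim_{q\to 0}R(q)=2/\bUp=2/\beta_0$ and $\lim_{q\to 0}R'(q)=(2\bDown-3)/\bUp=(2\beta_1-3)/\beta_0$, and argues that $R(q)$ lies above its tangent line at $q=0$. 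That tangent line is precisely $(2-(3-2\beta_1)q)/\beta_0$; this is the origin of the coefficient $3-2\beta_1$, not any rearrangement of $1+\mu$.

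Your linearization $\log(1+\beta_0 s)\le\beta_0 s$ breaks the argument. First, $\CBer(p,q)$ is defined as the expression \emph{inside} the supremum in \eqref{eqn:del:upper:formula}, without the leading factor $p$; once you remove that spurious factor from your display, the target inequality is an \emph{equality} at $p=1/2$, since there $\CBer(1/2,q)=h(q)/(2-q)$ exactly and all the negative-binomial estimates are exact. Applying $\log(1+s)\le s$ then replaces the term $-2t^2\log t=-(1-q)\log(1-q)$ in the numerator by the strictly larger $2t(1-t)$ (indeed $1-t+t\log t>0$ on $(0,1)$), so your intermediate bound strictly exceeds $h(q)/(2-q)$ for every $q\in(0,1)$ and the final comparison fails. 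Relatedly, the $-q\log q$ and $\log(1-q)$ contributions do not cancel after cross-multiplying because the two denominators $2-(3-2\beta_1)q$ and $2t^2(t+\beta_1(1-t))+\beta_0 q$ coincide only when $\beta_0=\beta_1$; the comparison never reduces to a polynomial inequality in $t$ as you claim.
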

\begin{proof}
In order to derive the claim, consider  
\[
f(q) :=  \CBer(p,q) = \frac{-\mu(q) \log q-\log y_0(q)}{1+\mu(q)},
\]
where $\mu(q)$ and $y_0(q)$ are, respectively, the mean and the normalizing constant
of the inverse binomial distribution \eqref{eqn:invbin} for the given parameters
$p$ and $q$. Using Corollary~\ref{coro:invbin:negBinApprox}, we may
upper bound $f(q)$ by the elementary function $g(q)$ defined below as
\[
g(q) := \frac{-\frac{\bUp q \log q }{2 (1-q) (\sqrt{1-q}+\bDown(1-\sqrt{1-q}))}-\log\left(1+\bUp\left(\frac{1}{\sqrt{1-q}}-1\right)\right)}{1+\frac{\bDown q}{2 (1-q) (\sqrt{1-q}+\bUp(1-\sqrt{1-q}))}},
\]
where $\bDown$ (resp., $\bUp$) is the minimum (resp., the maximum)
of the two constants 
$\beta_0 = (2/p) \exp(-h(p)/p)$ and  $\beta_1 = {1}/{\sqrt{2(1-p)}}$ 
(so that, for $p \leq 1/2$, $\bUp = \beta_0$ and $\bDown = \beta_1$).
Define the ratio 
\begin{equation} \label{eqn:CBerRatio}
R(q) := h(q)/g(q).
\end{equation}
This ratio is plotted in Figure~\ref{fig:CBerRatio}.
A direct calculation shows that
\[
\lim_{q \to 0} R(q) = 2/\bUp, \quad \lim_{q \to 1} R(q) = \bDown/\bUp, \quad
\lim_{q \to 0 }\frac{\partial R}{\partial q} = (2 \bDown-3)/\bUp.
\]

Suppose $p \leq 1/2$, in which case it can be observed that $R(q)$ lies above its
tangent line at $q \to 0$
(and for $p=1/2$, $R(q)$ actually coincides with the tangent line). Therefore,
\[
R(q) = \frac{h(q)}{g(q)} \geq \frac{2+(2 \bDown-3)q}{\bUp} = \frac{2-(3-2 \beta_1)q}{\beta_0},
\]
and the claim follows. 
\end{proof}

\begin{figure}[t!]
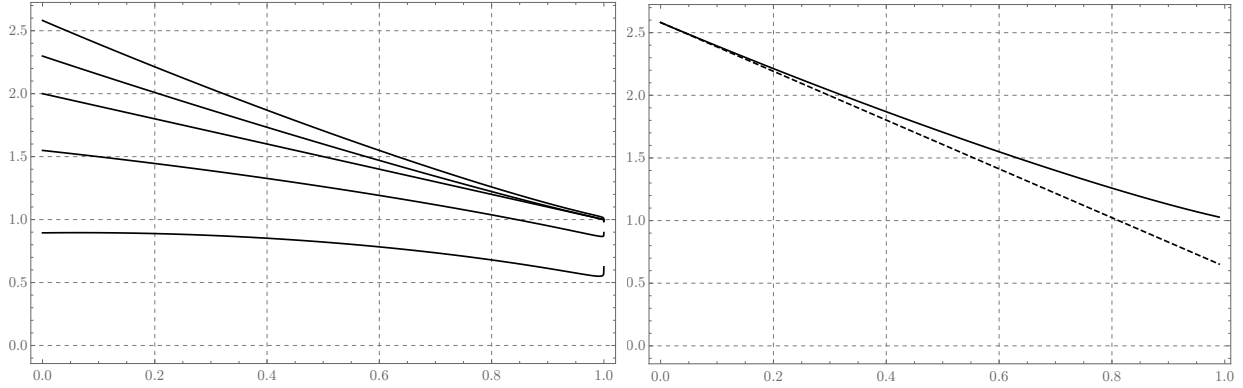
 
\begin{center}
\includegraphics[width=0.49 \columnwidth]{figCBerRatio}
\includegraphics[width=0.49 \columnwidth]{figCBerRatioAndDer}
\end{center}
\caption{Left: Plots of the ratio $R(q)$, defined in \eqref{eqn:CBerRatio}, for $p=0.1,0.3,0.5,0.7,0.9$ (from the highest
to the lowest graph), as a function of $q$. Right: Plot of $R(q)$ for $p=0.1$ and its 
corresponding tangent line (dashed) at $q=0$.}
\label{fig:CBerRatio}
\end{figure}


Using the result of Claim~\ref{claim:cber:analytical}, we can now prove the following:
\begin{coro} \label{coro:cber:analytical}
Let $p \leq 1/2$, $\beta_0$ and $\beta_1$ be defined according to 
 \eqref{eqn:beta0} and \eqref{eqn:beta1}, and $\ch$ be the deletion
 channel with deletion probability $d=1-p$. Then,
 \[
 \capa(\ch) \leq \frac{ (1-d) \beta_0 h(q^\star)}{2-(3-2 \beta_1)q^\star},
 \]
 where $q^\star \in (0,1)$ is the solution to
$
q^\star=(1-q^\star)^{\beta_1-1/2}.
$
\end{coro}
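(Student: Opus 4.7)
The plan is to combine Theorem~\ref{thm:del:upper:general} and Claim~\ref{claim:cber:analytical} and then analytically maximize the resulting elementary upper bound over $q \in (0,1)$. Concretely, for $p = 1-d \leq 1/2$, Theorem~\ref{thm:del:upper:general} gives $\capa(\ch) \leq (1-d) \sup_{q \in (0,1)} \CBer(p,q)$, and by Claim~\ref{claim:cber:analytical}, the integrand is pointwise bounded by $\beta_0 h(q)/(2-(3-2\beta_1)q)$. Since $\beta_0$ does not depend on $q$, it suffices to maximize $f(q) := h(q)/(2-cq)$, where $c := 3-2\beta_1$, over $(0,1)$, and the bound follows.

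To find the optimizer, I would equate $f'(q)$ to zero. Writing $h'(q) = \log((1-q)/q)$, the first-order condition reduces to $(2-cq)\log((1-q)/q) + c h(q) = 0$. Expanding $h(q) = -q\log q - (1-q)\log(1-q)$ and collecting terms, both $cq \log q$ and $cq \log(1-q)$ cancel, leaving the remarkably clean identity
\[
(2-c)\log(1-q) = 2 \log q.
\]
Since $c = 3-2\beta_1$, we have $2-c = 2\beta_1 - 1$, so the critical-point equation becomes $\log q = (\beta_1 - 1/2)\log(1-q)$, i.e., $q = (1-q)^{\beta_1 - 1/2}$, which is exactly the defining relation for $q^\star$.

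Existence and uniqueness of $q^\star \in (0,1)$ is easy: for $p \in [0,1/2]$ one has $\beta_1 = 1/\sqrt{2(1-p)} \in [1/\sqrt 2, 1]$, so $\beta_1 - 1/2 > 0$, and the function $F(q) := q - (1-q)^{\beta_1 - 1/2}$ is continuous and strictly increasing on $(0,1)$ with $F(0) = -1$ and $F(1) = 1$. Since $f(q) \to 0$ at both endpoints $q \to 0^+$ and $q \to 1^-$ and $f$ is smooth and positive on $(0,1)$, the unique interior critical point $q^\star$ must realize the global maximum. Substituting $q^\star$ into the bound $(1-d)\beta_0 f(q^\star)$ yields the stated inequality, and one can verify consistency with the special case $p=1/2$ treated in Section~\ref{sec:del:dhalf}, where $\beta_1 = 1$ reduces the critical equation to $q^\star = \sqrt{1-q^\star}$, giving the golden-ratio conjugate $q^\star = \varphi - 1$.

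There is no real obstacle here: the entire proof is a one-variable optimization whose cleanness is guaranteed by the fortuitous cancellation of the $cq\log q$ and $cq\log(1-q)$ terms in the first-order condition. The only mild care needed is to confirm that the critical point is indeed a maximum and that it lies in $(0,1)$, both of which are immediate from monotonicity of $F$ and the boundary behavior of $f$.
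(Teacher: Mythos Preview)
Your proposal is correct and takes essentially the same approach as the paper: combine Theorem~\ref{thm:del:upper:general} with Claim~\ref{claim:cber:analytical} and set the derivative of the resulting elementary bound $\beta_0 h(q)/(2-(3-2\beta_1)q)$ to zero, arriving at the same critical-point equation $q=(1-q)^{\beta_1-1/2}$. You supply slightly more detail than the paper (existence and uniqueness of $q^\star$, and that the unique critical point is a global maximum), but the core computation is identical.
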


\begin{proof}
Recall that
\[
 \capa(\ch) \leq (1-d) \sup_{q \in (0,1)} \CBer(p,q) \leq
 \sup_{q \in (0,1)} \frac{\beta_0 h(q)}{2-(3-2 \beta_1) q},
\]
where the second inequality is due to Claim~\ref{claim:cber:analytical}.
Let $q^\star$ be the choice of $q$ that maximizes 
the right  
hand side of \eqref{eqn:CberSmall}.
The derivative of the right hand side of \eqref{eqn:CberSmall} in $q$
is equal to
\[
\beta_0 \frac{-2 \log q -(1-2 \beta_1) \log(1-q)}{(2-(3-2 \beta_1)q)^2}.
\]
By equating this derivative to zero, we see that  $q^\star$ is
the solution to
$
q^\star=(1-q^\star)^{\beta_1-1/2},
$
and the result follows.
\end{proof}

\begin{table}[!p]
\caption{Mean and normalizing constants for the distributions defined for
the Poisson channel by \eqref{eqn:PoiYfirst} $(i=1)$ and the 
digamma distribution 
\eqref{eqn:PoiYre} $(i=2)$, for various choices of the parameter $q \in (0,1)$.
For each $i \in \{1,2\}$, we have used the notation 
$\ell_i:=-\log y_0$ and $\mu_i:=\E[Y]$ for the corresponding distribution.}
 \vspace{5mm}
 \begin{center}
 \resizebox{0.9\columnwidth}{!}{%
\begin{tabular}{||c||c|c||c|c||}
\hline
$q$ & $\ell_1$ & $\mu_1$ & $\ell_2$ & $\mu_2$  \\  
\hline \hline 
0.01 & 0.003699 & 0.003719 & 0.002079 & 0.002093 \\ 
0.02 & 0.007439 & 0.007523 & 0.004186 & 0.004242 \\ 
0.03 & 0.011222 & 0.011413 & 0.006321 & 0.006450 \\ 
0.04 & 0.015049 & 0.015393 & 0.008486 & 0.008718 \\ 
0.05 & 0.018919 & 0.019464 & 0.010681 & 0.011049 \\ 
\hline 
0.06 & 0.022835 & 0.023631 & 0.012906 & 0.013445 \\ 
0.07 & 0.026797 & 0.027897 & 0.015163 & 0.015908 \\ 
0.08 & 0.030806 & 0.032264 & 0.017452 & 0.018440 \\ 
0.09 & 0.034863 & 0.036737 & 0.019774 & 0.021045 \\ 
0.10 & 0.038970 & 0.041319 & 0.022129 & 0.023725 \\ 
\hline 
0.11 & 0.043128 & 0.046014 & 0.024519 & 0.026482 \\ 
0.12 & 0.047337 & 0.050825 & 0.026944 & 0.029321 \\ 
0.13 & 0.051599 & 0.055758 & 0.029406 & 0.032243 \\ 
0.14 & 0.055916 & 0.060816 & 0.031905 & 0.035252 \\ 
0.15 & 0.060288 & 0.066004 & 0.034442 & 0.038352 \\ 
\hline 
0.16 & 0.064717 & 0.071327 & 0.037019 & 0.041546 \\ 
0.17 & 0.069204 & 0.076789 & 0.039636 & 0.044838 \\ 
0.18 & 0.073751 & 0.082397 & 0.042295 & 0.048232 \\ 
0.19 & 0.078360 & 0.088156 & 0.044996 & 0.051732 \\ 
0.20 & 0.083031 & 0.094071 & 0.047741 & 0.055343 \\ 
\hline 
0.21 & 0.087768 & 0.100150 & 0.050530 & 0.059069 \\ 
0.22 & 0.092570 & 0.106397 & 0.053367 & 0.062915 \\ 
0.23 & 0.097441 & 0.112821 & 0.056251 & 0.066887 \\ 
0.24 & 0.102381 & 0.119428 & 0.059183 & 0.070989 \\ 
0.25 & 0.107394 & 0.126226 & 0.062167 & 0.075228 \\ 
\hline 
0.26 & 0.112480 & 0.133223 & 0.065202 & 0.079610 \\ 
0.27 & 0.117642 & 0.140427 & 0.068291 & 0.084142 \\ 
0.28 & 0.122883 & 0.147848 & 0.071436 & 0.088829 \\ 
0.29 & 0.128204 & 0.155495 & 0.074637 & 0.093680 \\ 
0.30 & 0.133607 & 0.163377 & 0.077897 & 0.098702 \\ 
\hline 
0.31 & 0.139096 & 0.171506 & 0.081218 & 0.103903 \\ 
0.32 & 0.144673 & 0.179893 & 0.084601 & 0.109293 \\ 
0.33 & 0.150341 & 0.188549 & 0.088049 & 0.114879 \\ 
0.34 & 0.156101 & 0.197487 & 0.091564 & 0.120673 \\ 
0.35 & 0.161959 & 0.206722 & 0.095148 & 0.126685 \\ 
\hline 
0.36 & 0.167915 & 0.216266 & 0.098804 & 0.132925 \\ 
0.37 & 0.173975 & 0.226135 & 0.102534 & 0.139406 \\ 
0.38 & 0.180140 & 0.236346 & 0.106341 & 0.146141 \\ 
0.39 & 0.186415 & 0.246916 & 0.110227 & 0.153142 \\ 
0.40 & 0.192804 & 0.257863 & 0.114195 & 0.160426 \\ 
\hline 
0.41 & 0.199310 & 0.269207 & 0.118249 & 0.168006 \\ 
0.42 & 0.205937 & 0.280969 & 0.122392 & 0.175901 \\ 
0.43 & 0.212691 & 0.293172 & 0.126626 & 0.184128 \\ 
0.44 & 0.219575 & 0.305840 & 0.130957 & 0.192706 \\ 
0.45 & 0.226594 & 0.319000 & 0.135387 & 0.201657 \\ 
\hline 
0.46 & 0.233754 & 0.332679 & 0.139921 & 0.211002 \\ 
0.47 & 0.241060 & 0.346908 & 0.144563 & 0.220766 \\ 
0.48 & 0.248518 & 0.361719 & 0.149317 & 0.230976 \\ 
0.49 & 0.256134 & 0.377147 & 0.154188 & 0.241659 \\ 
0.50 & 0.263914 & 0.393231 & 0.159182 & 0.252846 \\ 
\hline 

\hline \hline
$q$ & $\ell_1$ & $\mu_1$ & $\ell_2$ & $\mu_2$  \\  
\hline
\end{tabular}
\begin{tabular}{||c||c|c||c|c||}
\hline
$q$ & $\ell_1$ & $\mu_1$ & $\ell_2$ & $\mu_2$  \\  
\hline \hline 
0.51 & 0.271866 & 0.410012 & 0.164304 & 0.264571 \\ 
0.52 & 0.279996 & 0.427534 & 0.169560 & 0.276869 \\ 
0.53 & 0.288312 & 0.445846 & 0.174955 & 0.289781 \\ 
0.54 & 0.296823 & 0.465001 & 0.180497 & 0.303350 \\ 
0.55 & 0.305537 & 0.485057 & 0.186193 & 0.317622 \\ 
\hline 
0.56 & 0.314465 & 0.506076 & 0.192049 & 0.332650 \\ 
0.57 & 0.323615 & 0.528128 & 0.198076 & 0.348489 \\ 
0.58 & 0.332999 & 0.551287 & 0.204280 & 0.365202 \\ 
0.59 & 0.342629 & 0.575638 & 0.210672 & 0.382858 \\ 
0.60 & 0.352517 & 0.601271 & 0.217262 & 0.401532 \\ 
\hline 
0.61 & 0.362676 & 0.628288 & 0.224060 & 0.421309 \\ 
0.62 & 0.373121 & 0.656801 & 0.231080 & 0.442280 \\ 
0.63 & 0.383869 & 0.686933 & 0.238332 & 0.464551 \\ 
0.64 & 0.394935 & 0.718822 & 0.245832 & 0.488235 \\ 
0.65 & 0.406338 & 0.752623 & 0.253595 & 0.513463 \\ 
\hline 
0.66 & 0.418099 & 0.788508 & 0.261637 & 0.540379 \\ 
0.67 & 0.430240 & 0.826670 & 0.269976 & 0.569147 \\ 
0.68 & 0.442784 & 0.867328 & 0.278633 & 0.599951 \\ 
0.69 & 0.455759 & 0.910727 & 0.287630 & 0.633001 \\ 
0.70 & 0.469192 & 0.957149 & 0.296989 & 0.668534 \\ 
\hline 
0.71 & 0.483117 & 1.006911 & 0.306740 & 0.706822 \\ 
0.72 & 0.497569 & 1.060380 & 0.316910 & 0.748178 \\ 
0.73 & 0.512586 & 1.117973 & 0.327534 & 0.792962 \\ 
0.74 & 0.528214 & 1.180178 & 0.338648 & 0.841592 \\ 
0.75 & 0.544500 & 1.247557 & 0.350295 & 0.894554 \\ 
\hline 
0.76 & 0.561501 & 1.320769 & 0.362520 & 0.952420 \\ 
0.77 & 0.579279 & 1.400589 & 0.375377 & 1.015862 \\ 
0.78 & 0.597905 & 1.487933 & 0.388928 & 1.085679 \\ 
0.79 & 0.617460 & 1.583895 & 0.403240 & 1.162828 \\ 
0.80 & 0.638037 & 1.689789 & 0.418395 & 1.248462 \\ 
\hline 
0.81 & 0.659744 & 1.807209 & 0.434485 & 1.343985 \\ 
0.82 & 0.682705 & 1.938106 & 0.451619 & 1.451121 \\ 
0.83 & 0.707068 & 2.084896 & 0.469924 & 1.572012 \\ 
0.84 & 0.733006 & 2.250603 & 0.489554 & 1.709350 \\ 
0.85 & 0.760730 & 2.439067 & 0.510690 & 1.866565 \\ 
\hline 
0.86 & 0.790489 & 2.655234 & 0.533555 & 2.048092 \\ 
0.87 & 0.822593 & 2.905584 & 0.558421 & 2.259765 \\ 
0.88 & 0.857427 & 3.198777 & 0.585631 & 2.509412 \\ 
0.89 & 0.895475 & 3.546649 & 0.615616 & 2.807777 \\ 
0.90 & 0.937364 & 3.965808 & 0.648940 & 3.169999 \\ 
\hline 
0.91 & 0.983921 & 4.480299 & 0.686347 & 3.618099 \\ 
0.92 & 1.036269 & 5.126285 & 0.728855 & 4.185347 \\ 
0.93 & 1.095987 & 5.960740 & 0.777904 & 4.924423 \\ 
0.94 & 1.165396 & 7.078892 & 0.835620 & 5.923839 \\ 
0.95 & 1.248104 & 8.652671 & 0.905333 & 7.344293 \\ 
\hline 
0.96 & 1.350183 & 11.027073 & 0.992684 & 9.510209 \\ 
0.97 & 1.483060 & 15.010052 & 1.108370 & 13.186490 \\ 
0.98 & 1.672514 & 23.035876 & 1.276746 & 20.695693 \\ 
0.99 & 2.001316 & 47.343266 & 1.576877 & 43.831689 \\ 
& & & & \\ 
\hline \hline
$q$ & $\ell_1$ & $\mu_1$ & $\ell_2$ & $\mu_2$  \\  
\hline
\end{tabular}
}
 \end{center}
%
%
\label{tab:PoiMeanY}
\end{table}

\begin{table}[!p]
\caption{Capacity upper bounds for the Poisson-repeat channel with
deletion probability $d=1-p=e^{-\lam}$,
as plotted in Figure~\ref{fig:PoiPlot}. For each
$d$, the quantity $(1-d) c_i$ is the capacity upper bound when 
method $i =1,2,3,4$ is used, and $q_i$ is the choice of $q$
that maximizes the function under the supremum
in \eqref{eqn:Poi:capa:general} for the chosen method. The methods
are: The digamma distribution \eqref{eqn:PoiYre} for $Y$ ($i=1$,
giving the strongest bounds);
The elementary upper bounds of \eqref{eqn:thm:Poi:capa:approx:eqns}
on the parameters of the digamma distribution \eqref{eqn:PoiYre} $(i=2)$;
The distribution defined by \eqref{eqn:PoiYfirst} for $Y$ $(i=3)$;
The analytic upper bounds of Theorem~\ref{thm:PoiChUpperAnalytic}
for the distribution of $Y$ defined by \eqref{eqn:PoiYfirst} $(i=4)$.
Note that the methods $i=3$ and $4$ give trivial results
for sufficiently small $d$. %
 }
 \begin{center} 
\resizebox{\columnwidth}{!}{%
\begin{tabular}{||c||c|c||c|c||c|c||c|c||}
\hline
$d$ & $c_1$ & $q_1$ & $c_2$ & $q_2$ & $c_3$ & $q_3$ & $c_4$ & $q_4$ \\  
\hline \hline 
0.01 & 0.849 & 0.881 & 0.872 & 0.883 & 1.095 & 0.849 & 1.096 & 0.849 \\ 
0.02 & 0.804 & 0.869 & 0.826 & 0.871 & 1.038 & 0.834 & 1.040 & 0.834 \\ 
0.03 & 0.775 & 0.861 & 0.797 & 0.864 & 1.002 & 0.825 & 1.004 & 0.824 \\ 
0.04 & 0.754 & 0.855 & 0.776 & 0.857 & 0.975 & 0.817 & 0.977 & 0.817 \\ 
0.05 & 0.737 & 0.850 & 0.758 & 0.852 & 0.954 & 0.810 & 0.955 & 0.810 \\ 
\hline 
0.06 & 0.723 & 0.845 & 0.744 & 0.848 & 0.936 & 0.805 & 0.937 & 0.804 \\ 
0.07 & 0.711 & 0.841 & 0.731 & 0.844 & 0.920 & 0.800 & 0.922 & 0.799 \\ 
0.08 & 0.700 & 0.837 & 0.720 & 0.840 & 0.906 & 0.795 & 0.908 & 0.795 \\ 
0.09 & 0.690 & 0.834 & 0.710 & 0.837 & 0.894 & 0.791 & 0.895 & 0.790 \\ 
0.10 & 0.681 & 0.831 & 0.701 & 0.833 & 0.882 & 0.787 & 0.884 & 0.787 \\ 
\hline 
0.11 & 0.673 & 0.828 & 0.693 & 0.831 & 0.872 & 0.783 & 0.874 & 0.783 \\ 
0.12 & 0.666 & 0.825 & 0.685 & 0.828 & 0.863 & 0.780 & 0.864 & 0.779 \\ 
0.13 & 0.659 & 0.822 & 0.678 & 0.825 & 0.854 & 0.776 & 0.855 & 0.776 \\ 
0.14 & 0.652 & 0.820 & 0.672 & 0.823 & 0.845 & 0.773 & 0.847 & 0.773 \\ 
0.15 & 0.646 & 0.818 & 0.665 & 0.820 & 0.837 & 0.770 & 0.839 & 0.770 \\ 
\hline 
0.16 & 0.640 & 0.815 & 0.660 & 0.818 & 0.830 & 0.768 & 0.832 & 0.767 \\ 
0.17 & 0.635 & 0.813 & 0.654 & 0.816 & 0.823 & 0.765 & 0.825 & 0.765 \\ 
0.18 & 0.630 & 0.811 & 0.649 & 0.814 & 0.817 & 0.762 & 0.818 & 0.762 \\ 
0.19 & 0.625 & 0.809 & 0.644 & 0.812 & 0.810 & 0.760 & 0.812 & 0.760 \\ 
0.20 & 0.620 & 0.807 & 0.639 & 0.810 & 0.804 & 0.757 & 0.806 & 0.757 \\ 
\hline 
0.21 & 0.616 & 0.805 & 0.634 & 0.808 & 0.798 & 0.755 & 0.800 & 0.755 \\ 
0.22 & 0.612 & 0.803 & 0.630 & 0.806 & 0.793 & 0.753 & 0.795 & 0.753 \\ 
0.23 & 0.607 & 0.801 & 0.626 & 0.804 & 0.788 & 0.751 & 0.789 & 0.750 \\ 
0.24 & 0.603 & 0.800 & 0.622 & 0.802 & 0.783 & 0.748 & 0.784 & 0.748 \\ 
0.25 & 0.600 & 0.798 & 0.618 & 0.801 & 0.778 & 0.746 & 0.779 & 0.746 \\ 
\hline 
0.26 & 0.596 & 0.796 & 0.614 & 0.799 & 0.773 & 0.744 & 0.774 & 0.744 \\ 
0.27 & 0.592 & 0.795 & 0.611 & 0.798 & 0.768 & 0.742 & 0.770 & 0.742 \\ 
0.28 & 0.589 & 0.793 & 0.607 & 0.796 & 0.764 & 0.741 & 0.765 & 0.740 \\ 
0.29 & 0.586 & 0.792 & 0.604 & 0.794 & 0.760 & 0.739 & 0.761 & 0.738 \\ 
0.30 & 0.583 & 0.790 & 0.600 & 0.793 & 0.756 & 0.737 & 0.757 & 0.737 \\ 
\hline 
0.31 & 0.579 & 0.789 & 0.597 & 0.791 & 0.751 & 0.735 & 0.753 & 0.735 \\ 
0.32 & 0.576 & 0.787 & 0.594 & 0.790 & 0.748 & 0.733 & 0.749 & 0.733 \\ 
0.33 & 0.573 & 0.786 & 0.591 & 0.789 & 0.744 & 0.732 & 0.745 & 0.731 \\ 
0.34 & 0.571 & 0.784 & 0.588 & 0.787 & 0.740 & 0.730 & 0.742 & 0.730 \\ 
0.35 & 0.568 & 0.783 & 0.585 & 0.786 & 0.736 & 0.728 & 0.738 & 0.728 \\ 
\hline 
0.36 & 0.565 & 0.782 & 0.583 & 0.785 & 0.733 & 0.727 & 0.734 & 0.727 \\ 
0.37 & 0.562 & 0.780 & 0.580 & 0.783 & 0.730 & 0.725 & 0.731 & 0.725 \\ 
0.38 & 0.560 & 0.779 & 0.577 & 0.782 & 0.726 & 0.724 & 0.728 & 0.723 \\ 
0.39 & 0.557 & 0.778 & 0.575 & 0.781 & 0.723 & 0.722 & 0.724 & 0.722 \\ 
0.40 & 0.555 & 0.777 & 0.572 & 0.780 & 0.720 & 0.721 & 0.721 & 0.720 \\ 
\hline 
0.41 & 0.553 & 0.775 & 0.570 & 0.778 & 0.717 & 0.719 & 0.718 & 0.719 \\ 
0.42 & 0.550 & 0.774 & 0.567 & 0.777 & 0.714 & 0.718 & 0.715 & 0.718 \\ 
0.43 & 0.548 & 0.773 & 0.565 & 0.776 & 0.711 & 0.716 & 0.712 & 0.716 \\ 
0.44 & 0.546 & 0.772 & 0.563 & 0.775 & 0.708 & 0.715 & 0.709 & 0.715 \\ 
0.45 & 0.543 & 0.771 & 0.560 & 0.774 & 0.705 & 0.714 & 0.706 & 0.713 \\ 
\hline 
0.46 & 0.541 & 0.770 & 0.558 & 0.773 & 0.702 & 0.712 & 0.704 & 0.712 \\ 
0.47 & 0.539 & 0.769 & 0.556 & 0.771 & 0.699 & 0.711 & 0.701 & 0.711 \\ 
0.48 & 0.537 & 0.767 & 0.554 & 0.770 & 0.697 & 0.710 & 0.698 & 0.709 \\ 
0.49 & 0.535 & 0.766 & 0.552 & 0.769 & 0.694 & 0.708 & 0.696 & 0.708 \\ 
0.50 & 0.533 & 0.765 & 0.550 & 0.768 & 0.691 & 0.707 & 0.693 & 0.707 \\ 
\hline 

\hline \hline
$d$ & $c_1$ & $q_1$ & $c_2$ & $q_2$ & $c_3$ & $q_3$ & $c_4$ & $q_4$ \\  
\hline
\end{tabular}
\begin{tabular}{||c||c|c||c|c||c|c||c|c||}
\hline
$d$ & $c_1$ & $q_1$ & $c_2$ & $q_2$ & $c_3$ & $q_3$ & $c_4$ & $q_4$ \\  
\hline \hline 
0.51 & 0.531 & 0.764 & 0.548 & 0.767 & 0.689 & 0.706 & 0.690 & 0.706 \\ 
0.52 & 0.529 & 0.763 & 0.546 & 0.766 & 0.686 & 0.705 & 0.688 & 0.704 \\ 
0.53 & 0.527 & 0.762 & 0.544 & 0.765 & 0.684 & 0.703 & 0.685 & 0.703 \\ 
0.54 & 0.526 & 0.761 & 0.542 & 0.764 & 0.682 & 0.702 & 0.683 & 0.702 \\ 
0.55 & 0.524 & 0.760 & 0.540 & 0.763 & 0.679 & 0.701 & 0.681 & 0.701 \\ 
\hline 
0.56 & 0.522 & 0.759 & 0.538 & 0.762 & 0.677 & 0.700 & 0.678 & 0.700 \\ 
0.57 & 0.520 & 0.758 & 0.537 & 0.761 & 0.675 & 0.699 & 0.676 & 0.698 \\ 
0.58 & 0.519 & 0.757 & 0.535 & 0.760 & 0.672 & 0.697 & 0.674 & 0.697 \\ 
0.59 & 0.517 & 0.756 & 0.533 & 0.759 & 0.670 & 0.696 & 0.672 & 0.696 \\ 
0.60 & 0.515 & 0.755 & 0.531 & 0.758 & 0.668 & 0.695 & 0.669 & 0.695 \\ 
\hline 
0.61 & 0.513 & 0.754 & 0.530 & 0.757 & 0.666 & 0.694 & 0.667 & 0.694 \\ 
0.62 & 0.512 & 0.754 & 0.528 & 0.757 & 0.664 & 0.693 & 0.665 & 0.693 \\ 
0.63 & 0.510 & 0.753 & 0.526 & 0.756 & 0.662 & 0.692 & 0.663 & 0.692 \\ 
0.64 & 0.509 & 0.752 & 0.525 & 0.755 & 0.660 & 0.691 & 0.661 & 0.691 \\ 
0.65 & 0.507 & 0.751 & 0.523 & 0.754 & 0.658 & 0.690 & 0.659 & 0.690 \\ 
\hline 
0.66 & 0.506 & 0.750 & 0.522 & 0.753 & 0.656 & 0.689 & 0.657 & 0.689 \\ 
0.67 & 0.504 & 0.749 & 0.520 & 0.752 & 0.654 & 0.688 & 0.655 & 0.688 \\ 
0.68 & 0.503 & 0.748 & 0.519 & 0.751 & 0.652 & 0.687 & 0.653 & 0.686 \\ 
0.69 & 0.501 & 0.747 & 0.517 & 0.750 & 0.650 & 0.686 & 0.651 & 0.685 \\ 
0.70 & 0.500 & 0.747 & 0.516 & 0.750 & 0.648 & 0.685 & 0.649 & 0.684 \\ 
\hline 
0.71 & 0.498 & 0.746 & 0.514 & 0.749 & 0.646 & 0.684 & 0.648 & 0.683 \\ 
0.72 & 0.497 & 0.745 & 0.513 & 0.748 & 0.644 & 0.683 & 0.646 & 0.682 \\ 
0.73 & 0.496 & 0.744 & 0.511 & 0.747 & 0.642 & 0.682 & 0.644 & 0.682 \\ 
0.74 & 0.494 & 0.743 & 0.510 & 0.746 & 0.641 & 0.681 & 0.642 & 0.681 \\ 
0.75 & 0.493 & 0.742 & 0.509 & 0.745 & 0.639 & 0.680 & 0.640 & 0.680 \\ 
\hline 
0.76 & 0.492 & 0.742 & 0.507 & 0.745 & 0.637 & 0.679 & 0.639 & 0.679 \\ 
0.77 & 0.490 & 0.741 & 0.506 & 0.744 & 0.635 & 0.678 & 0.637 & 0.678 \\ 
0.78 & 0.489 & 0.740 & 0.505 & 0.743 & 0.634 & 0.677 & 0.635 & 0.677 \\ 
0.79 & 0.488 & 0.739 & 0.503 & 0.742 & 0.632 & 0.676 & 0.634 & 0.676 \\ 
0.80 & 0.486 & 0.738 & 0.502 & 0.742 & 0.630 & 0.675 & 0.632 & 0.675 \\ 
\hline 
0.81 & 0.485 & 0.738 & 0.501 & 0.741 & 0.629 & 0.674 & 0.630 & 0.674 \\ 
0.82 & 0.484 & 0.737 & 0.499 & 0.740 & 0.627 & 0.673 & 0.629 & 0.673 \\ 
0.83 & 0.483 & 0.736 & 0.498 & 0.739 & 0.626 & 0.673 & 0.627 & 0.672 \\ 
0.84 & 0.481 & 0.735 & 0.497 & 0.739 & 0.624 & 0.672 & 0.626 & 0.671 \\ 
0.85 & 0.480 & 0.735 & 0.496 & 0.738 & 0.622 & 0.671 & 0.624 & 0.671 \\ 
\hline 
0.86 & 0.479 & 0.734 & 0.495 & 0.737 & 0.621 & 0.670 & 0.622 & 0.670 \\ 
0.87 & 0.478 & 0.733 & 0.493 & 0.736 & 0.619 & 0.669 & 0.621 & 0.669 \\ 
0.88 & 0.477 & 0.733 & 0.492 & 0.736 & 0.618 & 0.668 & 0.619 & 0.668 \\ 
0.89 & 0.476 & 0.732 & 0.491 & 0.735 & 0.616 & 0.667 & 0.618 & 0.667 \\ 
0.90 & 0.475 & 0.731 & 0.490 & 0.734 & 0.615 & 0.667 & 0.616 & 0.666 \\ 
\hline 
0.91 & 0.473 & 0.730 & 0.489 & 0.733 & 0.614 & 0.666 & 0.615 & 0.665 \\ 
0.92 & 0.472 & 0.730 & 0.488 & 0.733 & 0.612 & 0.665 & 0.614 & 0.665 \\ 
0.93 & 0.471 & 0.729 & 0.486 & 0.732 & 0.611 & 0.664 & 0.612 & 0.664 \\ 
0.94 & 0.470 & 0.728 & 0.485 & 0.731 & 0.609 & 0.663 & 0.611 & 0.663 \\ 
0.95 & 0.469 & 0.728 & 0.484 & 0.731 & 0.608 & 0.662 & 0.609 & 0.662 \\ 
\hline 
0.96 & 0.468 & 0.727 & 0.483 & 0.730 & 0.606 & 0.662 & 0.608 & 0.661 \\ 
0.97 & 0.467 & 0.726 & 0.482 & 0.729 & 0.605 & 0.661 & 0.607 & 0.661 \\ 
0.98 & 0.466 & 0.726 & 0.481 & 0.729 & 0.604 & 0.660 & 0.605 & 0.660 \\ 
0.99 & 0.465 & 0.725 & 0.480 & 0.728 & 0.602 & 0.659 & 0.604 & 0.659 \\ 

& & & & & & & & \\ 
\hline \hline
$d$ & $c_1$ & $q_1$ & $c_2$ & $q_2$ & $c_3$ & $q_3$ & $c_4$ & $q_4$ \\  
\hline
\end{tabular}
}
\end{center}
\label{tab:PoiCapData}
\end{table}

\begin{table}[!p]
\caption{Capacity upper bounds for the deletion channel with
deletion probability $d$,
as plotted in Figure~\ref{fig:DelPlot}. For each
$d$, the quantity $(1-d) c_i$ is the capacity upper bound when 
method $i =1,2,3,4$ is used, and $q_i$ is the choice of $q$
that maximizes the function $\CBer(1-d,q)$ under the supremum
in \eqref{eqn:del:upper:formula} for the chosen method. The methods
are: Theorem~\ref{thm:BinChUpper:trunc} (the truncated distribution)
for the distribution of $Y$ ($i=1$,
giving the strongest bounds);
Theorem~\ref{thm:BinChUpper} (inverse binomial) for the distribution of $Y$ ($i=2$);
Analytic upper bounds of Theorem~\ref{thm:ibin:Lerch} ($i=3$); 
The elementary upper bounds of Corollary~\ref{coro:invbin:negBinApprox} 
($i=4$). Note that the analytic upper bound estimates ($i=3,4$) give trivial results
for sufficiently small $d$. %
 }
\begin{center} 
\resizebox{\columnwidth}{!}{%
\begin{tabular}{||c||c|c||c|c||c|c||c|c||}
\hline
$d$ & $c_1$ & $q_1$ & $c_2$ & $q_2$ & $c_3$ & $q_3$ & $c_4$ & $q_4$ \\  
\hline \hline 
0.01 & 0.965 & 0.511 & 0.971 & 0.509 & 1.184 & 0.526 & 3.667 & 0.630 \\ 
0.02 & 0.941 & 0.520 & 0.952 & 0.516 & 1.137 & 0.525 & 2.611 & 0.614 \\ 
0.03 & 0.920 & 0.528 & 0.937 & 0.522 & 1.100 & 0.526 & 2.146 & 0.605 \\ 
0.04 & 0.902 & 0.535 & 0.923 & 0.527 & 1.069 & 0.529 & 1.872 & 0.599 \\ 
0.05 & 0.885 & 0.541 & 0.911 & 0.531 & 1.043 & 0.532 & 1.687 & 0.595 \\ 
\hline
0.06 & 0.869 & 0.547 & 0.900 & 0.535 & 1.020 & 0.534 & 1.552 & 0.593 \\ 
0.07 & 0.854 & 0.552 & 0.890 & 0.539 & 1.000 & 0.538 & 1.448 & 0.591 \\ 
0.08 & 0.840 & 0.558 & 0.881 & 0.542 & 0.982 & 0.541 & 1.365 & 0.590 \\ 
0.09 & 0.827 & 0.563 & 0.872 & 0.546 & 0.966 & 0.544 & 1.297 & 0.589 \\ 
0.10 & 0.815 & 0.568 & 0.864 & 0.549 & 0.951 & 0.547 & 1.240 & 0.588 \\ 
\hline 
0.11 & 0.803 & 0.573 & 0.856 & 0.552 & 0.937 & 0.549 & 1.191 & 0.588 \\ 
0.12 & 0.791 & 0.577 & 0.849 & 0.555 & 0.924 & 0.552 & 1.149 & 0.588 \\ 
0.13 & 0.781 & 0.581 & 0.842 & 0.557 & 0.913 & 0.555 & 1.112 & 0.588 \\ 
0.14 & 0.770 & 0.586 & 0.835 & 0.560 & 0.902 & 0.558 & 1.079 & 0.588 \\ 
0.15 & 0.760 & 0.590 & 0.829 & 0.562 & 0.891 & 0.560 & 1.050 & 0.588 \\ 
\hline
0.16 & 0.750 & 0.594 & 0.823 & 0.565 & 0.882 & 0.563 & 1.024 & 0.589 \\ 
0.17 & 0.741 & 0.598 & 0.817 & 0.567 & 0.872 & 0.565 & 1.000 & 0.589 \\ 
0.18 & 0.732 & 0.601 & 0.811 & 0.569 & 0.864 & 0.567 & 0.978 & 0.590 \\ 
0.19 & 0.723 & 0.605 & 0.806 & 0.571 & 0.856 & 0.570 & 0.958 & 0.591 \\ 
0.20 & 0.714 & 0.609 & 0.800 & 0.574 & 0.848 & 0.572 & 0.940 & 0.591 \\ 
\hline 
0.21 & 0.706 & 0.612 & 0.795 & 0.576 & 0.840 & 0.574 & 0.923 & 0.592 \\ 
0.22 & 0.698 & 0.616 & 0.790 & 0.577 & 0.833 & 0.576 & 0.908 & 0.593 \\ 
0.23 & 0.690 & 0.619 & 0.786 & 0.579 & 0.827 & 0.578 & 0.893 & 0.594 \\ 
0.24 & 0.683 & 0.622 & 0.781 & 0.581 & 0.820 & 0.580 & 0.880 & 0.594 \\ 
0.25 & 0.676 & 0.625 & 0.777 & 0.583 & 0.814 & 0.582 & 0.867 & 0.595 \\ 
\hline
0.26 & 0.669 & 0.628 & 0.772 & 0.585 & 0.808 & 0.584 & 0.855 & 0.596 \\ 
0.27 & 0.662 & 0.631 & 0.768 & 0.586 & 0.802 & 0.585 & 0.844 & 0.597 \\ 
0.28 & 0.655 & 0.634 & 0.764 & 0.588 & 0.797 & 0.587 & 0.834 & 0.598 \\ 
0.29 & 0.649 & 0.637 & 0.760 & 0.590 & 0.791 & 0.589 & 0.824 & 0.599 \\ 
0.30 & 0.643 & 0.640 & 0.756 & 0.591 & 0.786 & 0.591 & 0.814 & 0.600 \\ 
\hline 
0.31 & 0.636 & 0.643 & 0.752 & 0.593 & 0.781 & 0.592 & 0.805 & 0.601 \\ 
0.32 & 0.631 & 0.645 & 0.749 & 0.594 & 0.776 & 0.594 & 0.797 & 0.601 \\ 
0.33 & 0.625 & 0.648 & 0.745 & 0.596 & 0.772 & 0.595 & 0.789 & 0.602 \\ 
0.34 & 0.619 & 0.650 & 0.741 & 0.597 & 0.767 & 0.597 & 0.781 & 0.603 \\ 
0.35 & 0.614 & 0.653 & 0.738 & 0.599 & 0.763 & 0.598 & 0.774 & 0.604 \\ 
\hline
0.36 & 0.609 & 0.655 & 0.735 & 0.600 & 0.758 & 0.600 & 0.767 & 0.605 \\ 
0.37 & 0.604 & 0.657 & 0.731 & 0.602 & 0.754 & 0.601 & 0.760 & 0.606 \\ 
0.38 & 0.599 & 0.660 & 0.728 & 0.603 & 0.750 & 0.603 & 0.754 & 0.607 \\ 
0.39 & 0.594 & 0.662 & 0.725 & 0.604 & 0.746 & 0.604 & 0.748 & 0.608 \\ 
0.40 & 0.590 & 0.664 & 0.722 & 0.606 & 0.743 & 0.605 & 0.742 & 0.609 \\ 
\hline 
0.41 & 0.585 & 0.666 & 0.719 & 0.607 & 0.739 & 0.607 & 0.736 & 0.610 \\ 
0.42 & 0.581 & 0.668 & 0.716 & 0.608 & 0.735 & 0.608 & 0.731 & 0.610 \\ 
0.43 & 0.577 & 0.670 & 0.713 & 0.609 & 0.732 & 0.609 & 0.726 & 0.611 \\ 
0.44 & 0.573 & 0.672 & 0.710 & 0.611 & 0.728 & 0.610 & 0.720 & 0.612 \\ 
0.45 & 0.569 & 0.673 & 0.707 & 0.612 & 0.725 & 0.612 & 0.716 & 0.613 \\ 
\hline
0.46 & 0.565 & 0.675 & 0.704 & 0.613 & 0.722 & 0.613 & 0.711 & 0.614 \\ 
0.47 & 0.562 & 0.677 & 0.702 & 0.614 & 0.719 & 0.614 & 0.706 & 0.615 \\ 
0.48 & 0.558 & 0.678 & 0.699 & 0.615 & 0.715 & 0.615 & 0.702 & 0.616 \\ 
0.49 & 0.555 & 0.680 & 0.696 & 0.616 & 0.712 & 0.616 & 0.698 & 0.617 \\ 
0.50 & 0.552 & 0.681 & 0.694 & 0.618 & 0.709 & 0.617 & 0.694 & 0.618 \\ 
\hline 

\hline \hline
$d$ & $c_1$ & $q_1$ & $c_2$ & $q_2$ & $c_3$ & $q_3$ & $c_4$ & $q_4$ \\  
\hline
\end{tabular}
\begin{tabular}{||c||c|c||c|c||c|c||c|c||}
\hline
$d$ & $c_1$ & $q_1$ & $c_2$ & $q_2$ & $c_3$ & $q_3$ & $c_4$ & $q_4$ \\  
\hline \hline 
0.51 & 0.549 & 0.683 & 0.691 & 0.619 & 0.707 & 0.619 & 0.692 & 0.619 \\ 
0.52 & 0.546 & 0.684 & 0.689 & 0.620 & 0.704 & 0.620 & 0.691 & 0.620 \\ 
0.53 & 0.544 & 0.685 & 0.686 & 0.621 & 0.701 & 0.621 & 0.690 & 0.622 \\ 
0.54 & 0.541 & 0.687 & 0.684 & 0.622 & 0.698 & 0.622 & 0.688 & 0.623 \\ 
0.55 & 0.538 & 0.688 & 0.681 & 0.623 & 0.695 & 0.623 & 0.687 & 0.624 \\ 
\hline
0.56 & 0.536 & 0.689 & 0.679 & 0.624 & 0.693 & 0.624 & 0.685 & 0.626 \\ 
0.57 & 0.533 & 0.690 & 0.677 & 0.625 & 0.690 & 0.625 & 0.684 & 0.627 \\ 
0.58 & 0.531 & 0.691 & 0.675 & 0.626 & 0.688 & 0.626 & 0.682 & 0.628 \\ 
0.59 & 0.529 & 0.692 & 0.672 & 0.627 & 0.685 & 0.627 & 0.681 & 0.629 \\ 
0.60 & 0.526 & 0.694 & 0.670 & 0.628 & 0.683 & 0.628 & 0.679 & 0.630 \\ 
\hline 
0.61 & 0.524 & 0.695 & 0.668 & 0.629 & 0.680 & 0.628 & 0.678 & 0.632 \\ 
0.62 & 0.522 & 0.696 & 0.666 & 0.630 & 0.678 & 0.629 & 0.676 & 0.633 \\ 
0.63 & 0.520 & 0.697 & 0.664 & 0.631 & 0.676 & 0.630 & 0.675 & 0.634 \\ 
0.64 & 0.518 & 0.698 & 0.662 & 0.631 & 0.674 & 0.631 & 0.673 & 0.635 \\ 
0.65 & 0.516 & 0.699 & 0.659 & 0.632 & 0.671 & 0.632 & 0.672 & 0.636 \\ 
\hline
0.66 & 0.514 & 0.700 & 0.657 & 0.633 & 0.669 & 0.633 & 0.670 & 0.637 \\ 
0.67 & 0.512 & 0.701 & 0.655 & 0.634 & 0.667 & 0.634 & 0.669 & 0.638 \\ 
0.68 & 0.510 & 0.701 & 0.653 & 0.635 & 0.665 & 0.635 & 0.667 & 0.639 \\ 
0.69 & 0.508 & 0.702 & 0.652 & 0.636 & 0.663 & 0.635 & 0.665 & 0.640 \\ 
0.70 & 0.506 & 0.703 & 0.650 & 0.637 & 0.661 & 0.636 & 0.664 & 0.641 \\ 
\hline 
0.71 & 0.505 & 0.704 & 0.648 & 0.638 & 0.659 & 0.637 & 0.662 & 0.642 \\ 
0.72 & 0.503 & 0.705 & 0.646 & 0.638 & 0.657 & 0.638 & 0.661 & 0.643 \\ 
0.73 & 0.501 & 0.706 & 0.644 & 0.639 & 0.655 & 0.639 & 0.659 & 0.644 \\ 
0.74 & 0.499 & 0.707 & 0.642 & 0.640 & 0.653 & 0.639 & 0.658 & 0.645 \\ 
0.75 & 0.498 & 0.707 & 0.640 & 0.641 & 0.651 & 0.640 & 0.656 & 0.646 \\ 
\hline
0.76 & 0.496 & 0.708 & 0.639 & 0.642 & 0.649 & 0.641 & 0.655 & 0.646 \\ 
0.77 & 0.494 & 0.709 & 0.637 & 0.642 & 0.647 & 0.642 & 0.653 & 0.647 \\ 
0.78 & 0.493 & 0.710 & 0.635 & 0.643 & 0.645 & 0.643 & 0.652 & 0.648 \\ 
0.79 & 0.491 & 0.711 & 0.633 & 0.644 & 0.644 & 0.643 & 0.650 & 0.649 \\ 
0.80 & 0.490 & 0.711 & 0.632 & 0.645 & 0.642 & 0.644 & 0.649 & 0.650 \\ 
\hline 
0.81 & 0.488 & 0.712 & 0.630 & 0.645 & 0.640 & 0.645 & 0.647 & 0.651 \\ 
0.82 & 0.487 & 0.713 & 0.628 & 0.646 & 0.638 & 0.645 & 0.646 & 0.651 \\ 
0.83 & 0.485 & 0.714 & 0.627 & 0.647 & 0.637 & 0.646 & 0.644 & 0.652 \\ 
0.84 & 0.484 & 0.714 & 0.625 & 0.648 & 0.635 & 0.647 & 0.643 & 0.653 \\ 
0.85 & 0.483 & 0.715 & 0.623 & 0.648 & 0.633 & 0.647 & 0.641 & 0.654 \\ 
\hline
0.86 & 0.481 & 0.716 & 0.622 & 0.649 & 0.632 & 0.648 & 0.640 & 0.654 \\ 
0.87 & 0.480 & 0.716 & 0.620 & 0.650 & 0.630 & 0.649 & 0.638 & 0.655 \\ 
0.88 & 0.479 & 0.717 & 0.619 & 0.651 & 0.629 & 0.649 & 0.637 & 0.656 \\ 
0.89 & 0.477 & 0.718 & 0.617 & 0.651 & 0.627 & 0.650 & 0.635 & 0.657 \\ 
0.90 & 0.476 & 0.718 & 0.616 & 0.652 & 0.626 & 0.651 & 0.634 & 0.657 \\ 
\hline 
0.91 & 0.475 & 0.719 & 0.614 & 0.653 & 0.624 & 0.651 & 0.632 & 0.658 \\ 
0.92 & 0.473 & 0.720 & 0.613 & 0.653 & 0.622 & 0.652 & 0.631 & 0.659 \\ 
0.93 & 0.472 & 0.720 & 0.611 & 0.654 & 0.621 & 0.653 & 0.629 & 0.659 \\ 
0.94 & 0.471 & 0.721 & 0.610 & 0.655 & 0.620 & 0.653 & 0.628 & 0.660 \\ 
0.95 & 0.470 & 0.721 & 0.608 & 0.655 & 0.618 & 0.654 & 0.626 & 0.661 \\ 
\hline
0.96 & 0.469 & 0.722 & 0.607 & 0.656 & 0.617 & 0.655 & 0.625 & 0.661 \\ 
0.97 & 0.467 & 0.723 & 0.605 & 0.657 & 0.615 & 0.655 & 0.623 & 0.662 \\ 
0.98 & 0.466 & 0.723 & 0.604 & 0.657 & 0.614 & 0.656 & 0.622 & 0.663 \\ 
0.99 & 0.465 & 0.724 & 0.602 & 0.658 & 0.612 & 0.656 & 0.621 & 0.663 \\ 
& & & & & & & & \\ 
\hline \hline
$d$ & $c_1$ & $q_1$ & $c_2$ & $q_2$ & $c_3$ & $q_3$ & $c_4$ & $q_4$ \\  
\hline
\end{tabular}
}
\end{center}
\label{tab:DelCapData}
\end{table}

\end{document}